%% file: main.tex
\documentclass
[11pt]
{article}

\usepackage{preamble}

\usepackage{verbatim}
\usepackage{xspace}
\usepackage{enumitem}
\usepackage[dvipsnames]{xcolor}
\usepackage[T1]{fontenc}
\usepackage[full]{textcomp}
\usepackage[american]{babel}
\usepackage{mathtools}

\usepackage{textcomp} %
\usepackage[scr=rsfso]{mathalfa}%
\usepackage{cancel} %
\usepackage{microtype}

\let\originalleft\left
\let\originalright\right
\renewcommand{\left}{\mathopen{}\mathclose\bgroup\originalleft}
\renewcommand{\right}{\aftergroup\egroup\originalright}
\usepackage{turnstile}
\usepackage{mdframed}
\usepackage{tikz}
\usetikzlibrary{shapes,arrows}
\usetikzlibrary{positioning}
\usetikzlibrary{decorations.pathreplacing}
\usetikzlibrary{decorations.text}
\usetikzlibrary{decorations.pathmorphing}
\usepackage{caption}
\DeclareCaptionType{Algorithm}
\usepackage{newfloat}
\usepackage{array}
\usepackage{subfig}
\usepackage{xparse}

\makeatletter
\let\latexparagraph\paragraph
\RenewDocumentCommand{\paragraph}{som}{%
  \IfBooleanTF{#1}
    {\latexparagraph*{#3}}
    {\IfNoValueTF{#2}
       {\latexparagraph{\maybe@addperiod{#3}}}
       {\latexparagraph[#2]{\maybe@addperiod{#3}}}%
  }%
}
\newcommand{\maybe@addperiod}[1]{%
  #1\@addpunct{.}%
}
\makeatother

\newcommand{\Authornote}[2]{{\sffamily\small\color{red}{}}}

\newcommand{\Authorcomment}[2]{{\sffamily\small\color{blue}{[#1: #2]}}}

\newif\ifdraft
\draftfalse

\ifdraft
    \newcommand{\Jnote}{\Authornote{J}}
    \newcommand{\Snote}{\Authorcomment{sitan}}
    \newcommand{\Wnote}{\Authornote{W}}
    \newcommand{\Mnote}{\Authornote{M}}
    \newcommand{\sitan}{\Snote}
    \newcommand{\walt}[1]{{\color{green}[walt: #1]}}
    \newcommand{\mahbod}[1]{{\color{red}[Mahbod: #1]}}
\else
    \newcommand{\Jnote}[1]{}
    \newcommand{\Snote}[1]{}
    \newcommand{\Wnote}[1]{}
    \newcommand{\Mnote}[1]{}
    \newcommand{\sitan}[1]{}
    \newcommand{\walt}[1]{}
    \newcommand{\mahbod}[1]{}
\fi

\usepackage{boxedminipage}
\newcommand{\paren}[1]{(#1)}
\newcommand{\Paren}[1]{\Bigl(#1\Bigr)}

\newcommand{\Brac}[1]{\left[#1\right]}

\newcommand{\Abs}[1]{\left\lvert#1\right\rvert}

\newcommand{\Bigabs}[1]{\Big\lvert#1\Big\rvert}

\renewcommand{\set}[1]{\{#1\}}
\renewcommand{\Set}[1]{\left\{#1\right\}}

\newcommand{\Bigset}[1]{\Big\{#1\Big\}}
\renewcommand{\norm}[1]{\lVert#1\rVert}
\newcommand{\Norm}[1]{\left\lVert#1\right\rVert}

\newcommand{\normt}[1]{\norm{#1}_2}

\newcommand{\Snormt}[1]{\Norm{#1}^2_2}

\renewcommand{\iprod}[1]{\langle#1\rangle}

\newcommand{\Esymb}{\mathbb{E}}
\newcommand{\Psymb}{\mathbb{P}}

\DeclareMathOperator*{\E}{\Esymb}

\DeclareMathOperator*{\ProbOp}{\Psymb}
\renewcommand{\Pr}{\ProbOp}

\newcommand{\defeq}{\stackrel{\mathrm{def}}=}
\newcommand{\seteq}{\mathrel{\mathop:}=}
\newcommand{\from}{\colon}

\newcommand{\mper}{\,.}
\newcommand{\mcom}{\,,}

\newcommand\bdot\bullet

\DeclareMathOperator{\Tr}{Tr}

\DeclareMathOperator{\poly}{poly}

\DeclareMathOperator{\supp}{supp}

\newcommand{\cA}{\mathcal A}
\newcommand{\cB}{\mathcal B}

\newcommand{\cD}{\mathcal{D}}

\newcommand{\cL}{\mathcal L}
\newcommand{\cM}{\mathcal M}
\newcommand{\cN}{\mathcal N}
\newcommand{\cO}{\mathcal O}

\newcommand{\cR}{\mathcal R}
\newcommand{\cS}{\mathcal S}

\newcommand{\cX}{\mathcal X}

\renewcommand{\leq}{\leqslant}
\renewcommand{\le}{\leqslant}
\renewcommand{\geq}{\geqslant}
\renewcommand{\ge}{\geqslant}
\let\epsilon=\varepsilon
\newcommand\MYcurrentlabel{xxx}
\newcommand{\MYstore}[2]{%
  \global\expandafter \def \csname MYMEMORY #1 \endcsname{#2}%
}
\newcommand{\MYload}[1]{%
  \csname MYMEMORY #1 \endcsname%
}
\newcommand{\MYnewlabel}[1]{%
  \renewcommand\MYcurrentlabel{#1}%
  \MYoldlabel{#1}%
}
\newcommand{\MYdummylabel}[1]{}
\newcommand{\torestate}[1]{%
  \let\MYoldlabel\label%
  \let\label\MYnewlabel%
  #1%
  \MYstore{\MYcurrentlabel}{#1}%
  \let\label\MYoldlabel%
}
\newcommand{\restatetheorem}[1]{%
  \let\MYoldlabel\label
  \let\label\MYdummylabel
  \begin{theorem*}[Restatement of \cref{#1}]
    \MYload{#1}
  \end{theorem*}
  \let\label\MYoldlabel
}
\newcommand{\restatelemma}[1]{%
  \let\MYoldlabel\label
  \let\label\MYdummylabel
  \begin{lemma*}[Restatement of \cref{#1}]
    \MYload{#1}
  \end{lemma*}
  \let\label\MYoldlabel
}
\newcommand{\restateprop}[1]{%
  \let\MYoldlabel\label
  \let\label\MYdummylabel
  \begin{proposition*}[Restatement of \cref{#1}]
    \MYload{#1}
  \end{proposition*}
  \let\label\MYoldlabel
}
\newcommand{\restatefact}[1]{%
  \let\MYoldlabel\label
  \let\label\MYdummylabel
  \begin{fact*}[Restatement of \cref{#1}]
    \MYload{#1}
  \end{fact*}
  \let\label\MYoldlabel
}
\newcommand{\restate}[1]{%
  \let\MYoldlabel\label
  \let\label\MYdummylabel
  \MYload{#1}
  \let\label\MYoldlabel
}

\newcommand{\e}{\epsilon}
\newcommand{\eps}{\epsilon}

\allowdisplaybreaks
\newcommand*{\normop}[1]{\norm{#1}_{\mathrm{op}}}
\newcommand*{\Normop}[1]{\Norm{#1}_{\mathrm{op}}}

\newcommand*{\normf}[1]{\norm{#1}_{\mathrm{F}}}

\newcommand{\Ber}{\mathrm{Ber}}

\newenvironment{algorithmbox}{\begin{mdframed}[nobreak=true]
\begin{algorithm}}{\end{algorithm}\end{mdframed}}

\providecommand{\post}{\mathrm{post}}
\providecommand{\wpost}{w_{\mathrm{out}}}
\providecommand{\out}{\mathrm{out}}
\providecommand{\Xinput}{X_{\mathrm{input}}}
\providecommand{\yinput}{y_{\mathrm{input}}}
\providecommand{\cAres}{\cA_{\mathrm{res}}}

\providecommand{\wnull}{{w^*}}
\providecommand{\Xnull}{{X^\circ}}
\providecommand{\ynull}{{y^\circ}}

\providecommand{\ynullp}{\ynull^{\prime}}

\providecommand{\LS}{\text{LS}}
\providecommand{\init}{\text{init}}

\newcommand{\pE}{\tilde{\mathbb{E}}}

\newcommand*{\transpose}[1]{{#1}{}^{\mkern-1.5mu\mathsf{T}}}

\newcommand{\ball}[2]{\mathbb{B}_2^{#1}(#2)}

\newcommand{\Adv}{\mathrm{Adv}}
\newcommand{\tE}{\tilde{\E}}

\renewcommand{\triangleq}{\coloneqq}
\renewcommand{\defeq}{\coloneqq}

\title{Computation-Utility-Privacy Tradeoffs in Bayesian Estimation}

\author{
    Sitan Chen\thanks{Email: \texttt{sitan@seas.harvard.edu}, supported in part by NSF CAREER award CCF-2441635} \\
    Harvard
        \and 
    Jingqiu Ding \thanks{Email: \texttt{jingqiu.ding@inf.ethz.ch}, supported by European Research Council (ERC) under the European Union’s Horizon 2020 research and innovation programme (grant agreement No 815464)} \\
    ETH Zurich
        \and
    Mahbod Majid \thanks{Email: \texttt{mahbod@mit.edu}} \\
    MIT
        \and
    Walter McKelvie \thanks{Email: \texttt{wmckelvie@fas.harvard.edu}, supported in part by NSF GRFP award  2140743 and CAREER award CCF-2441635} \\
    Harvard}

\date{March 18, 2026}

\DeclareMathOperator{\tr}{tr}

\begin{document}

\pagestyle{empty}

\maketitle

 \input{content/abstract}

\pagestyle{plain}
\setcounter{page}{1}

\newpage
\tableofcontents
\newpage

\input{content/introduction}

\section{Outlook}

In this work we took the first steps towards elucidating the computational and statistical complexity of private Bayesian estimation in high dimensions. We developed efficient algorithms and matching computational hardness under the low-degree framework, showing for the first time that private Bayesian estimation exhibits an intriguing information-computation gap even in the simplest settings of mean estimation and linear regression with Gaussian prior. Our work opens up a wealth of future directions to explore:

\paragraph{Beyond Gaussian prior.} There is a rich literature on algorithms and computational-statistical gaps for Bayesian inference in high dimensions for general priors (see, e.g., \cite{zdeborova2016statistical, krzakala2024statistical, montanari2024friendly} for overviews). Mapping out the phase diagram of when computationally efficient estimation is possible vs. any nontrivial estimation is possible draws upon a sophisticated toolbox of ideas originating from statistical physics, and characterizing how the phase diagram changes when one imposes privacy or robustness constraints is a fascinating direction for future study.

Relatedly, whereas in the case of simple exponential family priors like Gaussian, posterior sampling follows immediately from posterior mean estimation, this is not the case in general. For other priors, can one develop private or robust \emph{posterior sampling} algorithms in these more challenging settings?

\paragraph{Other inference tasks.} There is a wealth of other natural Bayesian problems that could be studied in the spirit of our work, e.g. linear quadratic estimation, community detection, matrix and tensor estimation, group testing, etc. As discussed in \cref{sec:relatedwork}, there have already been some works on robustness for some of these tasks, although the guarantees therein do not appear to be applicable yet to obtaining privacy guarantees. The intriguing phenomenology and technical subtleties surfaced in this work even for mean estimation and linear regression suggest that there are many more interesting insights to be uncovered for these other tasks.

\paragraph{Approximate DP} While both of our robust estimators would yield approx-DP estimators by \cite{hopkins2023robustness}, the same accuracy gap exists between these approx-DP estimators and the optimal statistical rate. However, it becomes much less clear how to prove a matching computational lower bound for the efficient approx-DP estimator, since the privacy-to-robustness reduction of \cite{Georgiev2022PrivacyIR} no longer applies. A natural follow-up step would be to tightly characterize the computational cost of Bayes estimation under approx-DP, as well as alternative notions of privacy.

\paragraph{Tight computational lower bound.} Our low-degree lower bound for private empirical mean estimation is tight only in the regime $\beta = \exp(-d)$, so in principle it does not preclude a better estimator with constant failure probability (or failure probability that shrinks only polynomially fast in $d$). It would be interesting to see whether a better efficient private estimator exists in one of these regimes, or whether our lower bound can be strengthened to preclude such an improvement.

\paragraph{Beyond bucketing.} We briefly remark upon the $\log d$ dependence hidden in the $\tilde{O}(\cdot)$ in \cref{thm:anisotropic-private-est}, which is not obviously necessary. It remains an open question whether empirical mean estimation---or even traditional parametric mean estimation---with anisotropic (known) covariance can be done optimally without $\log d$ dependence coming from bucketing, even information-theoretically. \cite{dagan2024dimension} achieved dimension-independence for parameter mean estimation in the \emph{approx}-DP setting using, in our notation,
\[
    n = \tilde O \biggl(\frac{\tr(\Lambda^2) + \|\Lambda^2\|_\mathrm{op} \log \frac{1}{\beta}}{\alpha^2} + \frac{(\tr(\Lambda) + \|\Lambda\|_\mathrm{op}\log \frac{1}{\beta}) \sqrt{\log \frac{1}{\delta}}}{\alpha \varepsilon} + \frac{\log \frac{1}{\delta}}{\varepsilon}\biggr)
\]
samples, and in fact it can be shown that their algorithm estimates the empirical mean with sample complexity given by the last two terms above.
However, there is a suboptimal coupling between $\tr(\Lambda)$ and $\sqrt{\log \frac{1}{\delta}}$ in the second term, and it appears difficult to make their trimming-based approach work in the pure-DP setting.
In the isotropic case, these issues are solvable by going through the robustness-to-privacy reduction of \cite{hopkins2023robustness}. Unfortunately, that approach requires reasoning about volume ratios, which appears to be incompatible with the anisotropic setting (fundamentally, the issue is that the volume of a scaled $\Sigma$-ball will scale as a power of $\text{rank}(\Sigma)$ rather than $\tr(\Sigma)$, which introduces a dependence on $d$).

\input{content/preliminaries}
\input{content/mean-estimation}

\input{content/online-streaming}
\input{content/LinearRegression}

\subsection*{Acknowledgments}

The authors thank Cynthia Dwork, Pranay Tankala, and Linjun Zhang for a helpful conversation at an early stage of this project, Sam Hopkins, Gautam Kamath, Adam Smith, and Salil Vadhan for insightful discussions as well as generous feedback on prior versions of this manuscript.

\phantomsection
\addcontentsline{toc}{section}{References}
\bibliographystyle{amsalpha}
\bibliography{bib/mathreview,bib/dblp,bib/custom,bib/scholar,bib/dpbayes}

\appendix

\input{content/appendix-mean-estimation}

\input{content/appendix-frequentist-implications.tex}
\input{content/appendix_LR}

\end{document}

%% file: content/abstract.tex
\begin{abstract}
    Bayesian methods lie at the heart of modern data science and provide a powerful scaffolding for estimation in data-constrained settings and principled quantification and propagation of uncertainty. Yet in many real-world use cases where these methods are deployed, there is a natural need to preserve the privacy of the individuals whose data is being scrutinized. While a number of works have attempted to approach the problem of \emph{differentially private Bayesian estimation} through either reasoning about the inherent privacy of the posterior distribution or privatizing off-the-shelf Bayesian methods, these works generally do not come with rigorous utility guarantees beyond low-dimensional settings. In fact, even for the prototypical tasks of Gaussian mean estimation and linear regression, it was unknown how close one could get to the Bayes-optimal error with a private algorithm, even in the simplest case where the unknown parameter comes from a Gaussian prior.
    
    In this work, we give the first polynomial-time algorithms for both of these problems that achieve mean-squared error $(1 + o(1))\mathrm{OPT}$ and additionally show that both tasks exhibit an intriguing computational-statistical gap. For Bayesian mean estimation, we prove that the excess risk achieved by our method is optimal among all efficient algorithms within the low-degree framework, yet is provably worse than what is achievable by an exponential-time algorithm. For linear regression, we prove a qualitatively similar such lower bound. Our algorithms draw upon the privacy-to-robustness framework introduced by~\cite{hopkins2023robustness}, but with the curious twist that to achieve private Bayes-optimal estimation, we need to design sum-of-squares-based robust estimators for \emph{inherently non-robust} objects like the empirical mean and OLS estimator. Along the way we also add to the sum-of-squares toolkit a new kind of constraint based on short-flat decompositions.
\end{abstract}

%% file: content/introduction.tex
\section{Introduction}

Given data $x$ generated by a process characterized by an unknown parameter $\theta$, how do we update our prior beliefs about $\theta$? This is the goal of Bayesian inference, and there is a rich algorithmic and statistical toolkit for doing so optimally by extracting as much information about $\theta$ as possible from the data $x$. But in many real-world use cases where this toolkit could be -- and is -- applied, there is an inherent tension between optimal inference and a second, orthogonal desideratum: \emph{privacy}. Bayesian methods are routinely deployed by epidemiologists to estimate the rate of spread of a pandemic~\cite{epinow2,flaxman2020estimating}, by the US Census Bureau to infer statistics within small sub-populations~\cite{fay1979estimates,chung2020bayesian}, and by geneticists to predict complex disease risk in individuals~\cite{vilhjalmsson2015modeling}. Is there a way to retain the statistical power of these techniques yet protect the privacy of the individuals whose data is being scrutinized? 

While \emph{differential privacy (DP)}~\cite{dwork2006calibrating} offers the ideal formalism for studying this question, the bulk of the work on DP statistics has focused on the \emph{frequentist} setting where one does not have access to prior information about $\theta$. An exception to this was the early work of Dimitrakakis et al.~\cite{dimitrakakis2017differential} and Wang et al.~\cite{wang2015privacy}, who put forth a tantalizing proposal. They posited that the object that Bayesian inference targets, namely the \emph{posterior distribution} $p(\theta \mid x)$, is fundamentally a \emph{probabilistic} object, so perhaps a sample from this posterior has enough noise to be differentially private ``for free.'' This was the basis for a productive line of work on \emph{One Posterior Sample (OPS) mechanisms}~\cite{zheng2015bayes,dimitrakakis2017differential,wang2015privacy,foulds2016theory,mir2013differential,bassily2014revisited,zhang2016differential,geumlek2017renyi} which sought to output a sample from some annealed version of the posterior, with the rationale that under (very strong) assumptions on the prior and log-likelihoods $\ln p(x\mid \theta)$, this was already differentially private.

Unfortunately, there are many natural settings for which the true posterior is not private, even for canonical testbeds like mean estimation and linear regression. While numerous efforts have been made to remedy this with alternative algorithms like computing a suitable noising of the sufficient statistics~\cite{foulds2016theory,bernstein2018differentially,bernstein2019differentially,ferrando2024private} or running DP analogues of standard Bayesian methods~\cite{wang2015privacy,foulds2016theory, heikkila2019differentially, zhang2023dp,jalko2017differentially,honkela2018efficient,park2020variational}, the utility guarantees for all of these methods are either unknown or only quantitatively meaningful in low dimensions.

Another serious limitation of the OPS paradigm is that in typical applications of Bayesian inference, having a \emph{single} sample from the posterior is not particularly useful. If further downstream, one wants to carry forward the posterior as the full summary of one's uncertainty up to that point, then what one really wants is to privately estimate a \emph{description} of the posterior, or alternatively, a \emph{posterior sampling oracle} that can subsequently be used to generate an \emph{arbitrary} number of draws from the posterior. This would be possible if, for instance, we could privately estimate the \emph{sufficient statistics} of the data.

We therefore ask:
\begin{center}
    \emph{Can we rigorously establish privacy-utility tradeoffs for Bayesian posterior sampling in high dimensions?}
\end{center}
In this work, we answer this in the affirmative for the fundamental tasks of \emph{Gaussian mean estimation} and \emph{linear regression with Gaussian design}. In a departure from prior methods in DP Bayesian inference, we leverage the powerful \emph{sum-of-squares (SoS) hierarchy},
which was recently used to obtain state-of-the-art theoretical guarantees for mean estimation, regression, and related problems in the \emph{frequentist} DP setting for the aforementioned tasks~\cite{private-sbm-and-mixtures,anderson2025sample,hopkins2023robustness}. While we will build upon these techniques, as we will see, the Bayesian setting introduces a host of subtle phenomena and technical hurdles not present in the frequentist setting.

\vspace{0.3em}

\noindent \textbf{Bayesian mean estimation.} In the most basic incarnation of this problem, the unknown parameter $\theta$ corresponds to an unknown mean in $\R^d$ that is sampled from some prior $\pi$, and one observes data $x$ in the form of i.i.d. samples $x_1,\ldots,x_n$ from $\mathcal{N}(\theta,\Id)$. Throughout our discussion of mean estimation, we will use the notation $\mu$ instead of $\theta$ to denote the mean vector. The goal is to design a (randomized) estimator $\hat{\mu}(x)$ such that the \emph{mean squared error (MSE)}
\begin{equation}
    \mathrm{MSE}(\hat{\mu}) \coloneqq \E_{\mu\sim\pi}\,\E_{x\sim \mathcal{N}(\mu,\Id)^{\otimes n}}\|\mu - \hat{\mu}(x)\|^2_2
\end{equation}
is as small as possible, while ensuring that $\hat{\mu}$ is differentially private in the sense that for worst-case neighboring datasets $x,x'$, the distributions over outputs $\hat{\mu}(x)$ and $\hat{\mu}(x')$ are close. Without the constraint of privacy, the \emph{Bayes-optimal estimator} that minimizes MSE is the \emph{posterior mean} $\E[\mu\mid x]$. Note that the prior $\pi$ has to be known a priori in order for the posterior mean (or even MSE) to be well-defined---in this paper we will consider Gaussian priors of known covariance $\Sigma$.

For this task, we give the first estimator that simultaneously preserves privacy and provably (in fact, near-optimally) competes with the Bayes-optimal estimator when the prior $\pi$ is Gaussian. First, by an elementary variance decomposition, the mean-squared error achieved by any estimator $\hat{\mu}$ can be written as the sum of the \emph{minimum} mean-squared error (MMSE), which is achieved by the $\E[\mu\mid x]$, plus the expected squared difference between $\hat{\mu}(x)$ and $\E[\mu\mid x]$ (\cref{fact:error-decomposition}). Our goal is thus to estimate $\E[\mu\mid x]$ as well as possible. Note also that because the prior is Gaussian, the posterior is also Gaussian with \emph{known covariance}\footnote{We use swap adjacency, so $n$ is not private}, so once we have an estimate of $\E[\mu\mid x]$, we have a statistically close description of the full posterior \emph{distribution} and can subsequently generate as many samples from it as we please without additional cost to privacy.

\begin{theorem}[Upper bounds for private Bayesian mean estimation]\torestate{\label{thm:mean_estimation_upper_informal}
    Given $\mu$ drawn from prior $\pi = \mathcal{N}(0,\Sigma)$, if the data $x$ consists of $n \ge \tilde{\Omega}(\frac{d}{\epsilon}\log(\sqrt{\Tr(\Sigma)}/\alpha))$  i.i.d. samples from $\mathcal{N}(\mu,\Id)$, there is
    \begin{enumerate}
        \item A computationally inefficient $\epsilon$-DP estimator $\hat{\mu}(x)$ such that with probability at least $1 - \beta$,
        \begin{equation}
            \norm{\hat{\mu}(x) - \E[\mu\mid x]}_2 \le \tilde{O}\biggl(\frac{\tr(\Lambda) + \norm{\Lambda}_{\sf op} \log(1/\beta)}{\epsilon n}\biggr) \coloneqq \tilde{O}(\alpha_{\rm stat})
        \end{equation}
        \item A \emph{polynomial-time} $\epsilon$-DP estimator $\hat{\mu}(x)$ such that with probability at least $1 - \beta$,
        \begin{equation}
            \norm{\hat{\mu}(x) - \E[\mu\mid x]}_2 \le \tilde{O}\biggl(\frac{\norm{\Lambda}_{4/3} + \norm{\Lambda}_{\sf op} \log^{3/4} (1/\beta)}{\epsilon^{1/2} n^{3/4}} + \alpha_{\rm stat}\biggr) \coloneqq \tilde{O}(\alpha_{\rm comp} + \alpha_{\rm stat})\,.
        \end{equation}
    \end{enumerate}
    where $\Lambda \coloneqq (\Id + \frac{1}{n}\Sigma^{-1})^{-1}$ and $\norm{\Lambda}_{4/3}$ denotes that Schatten $4/3$-norm of $\Lambda$. Here the $\tilde{O}$ conceals factors of $\log1/\alpha$, $\log\log1/\beta$, and $\log d$. When $\Sigma = \Id$, we do not lose $\log d$ factors.
    }
\end{theorem}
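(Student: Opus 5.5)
The first step is to reduce to private estimation of an empirical mean. With prior $\mathcal{N}(0,\Sigma)$ and likelihood $\mathcal{N}(\mu,\Id)^{\otimes n}$, the posterior mean is linear in the sample mean: $\E[\mu\mid x] = \Lambda \bar{x}$ with $\bar{x} = \frac1n\sum_i x_i$ and $\Lambda = (\Id + \frac1n\Sigma^{-1})^{-1}$. Writing $y_i \coloneqq \Lambda x_i$, we get $\E[\mu\mid x] = \frac1n\sum_i y_i$. So both parts of the theorem amount to estimating the \emph{empirical} mean of $y_1,\dots,y_n$ under $\epsilon$-DP. Swap adjacency keeps $n$ public, and $\Lambda$ is public, so $\Lambda$ can be applied without any privacy cost.

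Marginally the $y_i$ are $\Lambda\mu + \Lambda g_i$ with $g_i\sim\mathcal{N}(0,\Id)$. They are therefore concentrated around a common center, with covariance $\Lambda^2$ and "typical radius" $\sqrt{\tr(\Lambda^2)}$. I would then pass from robustness to privacy via the framework of \cite{hopkins2023robustness}. Concretely, I would build a score function over candidate outputs $m$: the score is the minimum number of points of the dataset that must be changed so that the modified dataset $z$ satisfies a list of deterministic "goodness" conditions and has $\frac1n\sum z_i$ close to $m$. The exponential mechanism run on this score is $\epsilon$-DP automatically. Utility then needs two ingredients. First, the true dataset is good with probability $1-\beta$; Gaussian concentration gives this. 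Second, a robustness statement: any two good datasets differing in $\eta n$ points have empirical means within $f(\eta)$. The volume argument of \cite{hopkins2023robustness} then yields error roughly $f(\eta)$ at $\eta \approx d\log(\cdot)/(\epsilon n)$. This is where the sample-size requirement $n \ge \tilde\Omega(\frac d\epsilon\log(\sqrt{\tr\Sigma}/\alpha))$ comes from: the log term accounts for the initial range $\|\mu\|\lesssim\sqrt{\tr\Sigma}$ relative to the target accuracy.

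For part 1 (the inefficient estimator), the goodness conditions would be pointwise norm bounds. Each $\|y_i - c\|_2 \lesssim \sqrt{\tr(\Lambda^2)} + \|\Lambda\|_{\sf op}\sqrt{\log(n/\beta)}$. Finer versions would be tail-count conditions of the form $\#\{i: |\langle v, y_i-c\rangle| > t\}\le$ Gaussian-tail bounds for all directions $v$. Changing $\eta n$ points moves the empirical mean by at most $\eta$ times the $\ell_2$-mass of the corrupted deviations. For a Gaussian-like good set this is roughly $\eta(\tr\Lambda + \|\Lambda\|_{\sf op}\log(1/\beta))$. The $\tr\Lambda$ (rather than $\sqrt{d}$) scaling requires bucketing coordinates by the eigenvalue of $\Lambda$ and analyzing each bucket separately, which costs the $\log d$ factors. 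With $\eta\approx 1/(\epsilon n)$ per bucket, summing gives $\alpha_{\rm stat}$. In the isotropic case there is a single bucket, so no $\log d$ is lost.

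For part 2 (the efficient estimator), the goodness conditions must be certifiable in SoS. Here I would impose an SoS-certifiable $4$th-moment bound on $\{y_i - c\}$ in the $\Lambda$-geometry, together with the short-flat decomposition constraint mentioned in the abstract. The idea is that each $y_i-c$ splits into a "short" bounded part and a "flat" part with bounded fourth moment; this is what allows a $4$th-moment identifiability argument. Hölder with $4$th moments gives a deviation of empirical means of order $\eta^{3/4}$ times the fourth-moment scale, which in the $\Lambda$-geometry is $\|\Lambda\|_{4/3}$-type. Balancing this against the $\eta\approx 1/(\epsilon n)$ rate of the pointwise truncation yields $\frac{\|\Lambda\|_{4/3}+\|\Lambda\|_{\sf op}\log^{3/4}(1/\beta)}{\epsilon^{1/2}n^{3/4}}$. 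Efficiency follows by running the exponential mechanism with SoS-relaxed scores, as in \cite{hopkins2023robustness}, over a discretized ball.

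I expect the main obstacle to be part 2's identifiability step. The empirical mean is not robust, so the argument cannot target the true $\Lambda\mu$; it has to bound the difference of empirical means of two sets that both satisfy the SoS constraints. This requires a degree-4 SoS proof, in the anisotropic $\Lambda$-norm, that the $\eta n$ changed points contribute at most the stated amount. Obtaining the sharp Schatten-$4/3$ dependence is the delicate part: I would try bucketing the eigenvalues and applying SoS Hölder bucket by bucket, then optimizing the per-bucket truncation thresholds.
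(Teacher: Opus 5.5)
\textbf{Verdict.} Your reduction to private estimation of the empirical mean of the $\Lambda x_i$, your use of the score/volume framework of \cite{hopkins2023robustness}, and eigenvalue bucketing all match the paper. However, both halves of your analysis have a genuine gap, and the one in part 2 is the more serious.

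\textbf{Part 2: wrong moment order.} The paper handles the anisotropy entirely outside SoS. It forms dyadic eigenvalue buckets of scale $\sigma_i$, runs an isotropic estimator on each bucket with budget $\epsilon/(M+1)$, and uses concavity of $t\mapsto t^{3/4}$ to turn $\sum_i \sigma_i (d_i/n)^{3/4}$ into $\|\Lambda\|_{4/3}/n^{3/4}$. In the isotropic case it runs the program of \cite{kothari2022polynomial} with only a \emph{second}-moment constraint $\E_i\langle x_i'-\mu',v\rangle^2\le 1+\alpha_0$, where $\alpha_0\asymp\sqrt{(d+\log(1/\beta))/n}+\eta\log(1/\eta)$. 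There are no fourth moments and no short-flat decomposition; that device is used only for regression labels.

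The identifiability idea you are missing runs as follows:
\begin{itemize}
\item the SoS constraint upper-bounds the total second moment;
\item a non-SoS Gaussian stability bound lower-bounds the second moment of every $(1-O(\eta))$-fraction of clean points by $1-\alpha_2$;
\item hence the corrupted-or-discarded points carry second moment $\lesssim \alpha_0+\alpha_2$;
\item Cauchy--Schwarz over their $O(\eta)$ mass then gives robust error $\eta\sqrt{\log(1/\eta)}+\sqrt{\eta\sqrt{(d+\log(1/\beta))/n}}$.
\end{itemize}
Plugging in $\eta\approx d/(\epsilon n)$ gives $(d/n)^{3/4}\epsilon^{-1/2}$. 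The $\epsilon^{-1/2}$ is the signature of the form $\sqrt{\eta\cdot\sqrt{d/n}}$.

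Your fourth-moment H\"older bound instead gives robust error $\eta^{3/4}$. That yields private error $(d/(\epsilon n))^{3/4}$, whose $\epsilon^{-3/4}$ dependence is strictly worse, so no ``balancing'' produces $\alpha_{\rm comp}$. Moreover, an $O(1)$ directional fourth-moment bound is false for the clean data unless $n\gtrsim d^2$: taking $v=x_i/\|x_i\|_2$ gives $\frac{1}{n}\sum_j\langle x_j,v\rangle^4\ge \|x_i\|_2^4/n\approx d^2/n$. That lies outside the linear-in-$d$ regime of the theorem.

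\textbf{Part 1: inconsistent accounting.} You correctly say the volume argument operates at $\eta\approx d\log(\cdot)/(\epsilon n)$. Indeed, $\log(V_{\eta'}/V_\eta)\approx d\log(\alpha(\eta')/\alpha(\eta))$ forces $\eta\gtrsim d/(\epsilon n)$. But you then use $\eta\approx 1/(\epsilon n)$ together with robust error $\eta\tr\Lambda$. These two mistakes cancel in your final formula. With the correct $\eta$, both that bound and the pointwise-norm bound $\eta\sqrt{\tr(\Lambda^2)}$ overshoot $\alpha_{\rm stat}$ by polynomial factors in $d$.

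What you need is a robust error that is dimension-free at leading order, namely $\sigma\bigl(\eta\sqrt{\log(1/\eta)}+\sqrt{\eta(d+\log(1/\beta))/n}\bigr)$ per bucket of scale $\sigma$. The paper gets this from a subset-resilience constraint:
\begin{itemize}
\item every $S$ with $|S|\le 2\eta n$ must satisfy $\bigl\|\frac{1}{|S|}\sum_{i\in S}(y_i-\bar y)\bigr\|_2\le C\sqrt{(d+\log(1/\beta))/(\eta n)+\log(1/\eta)}$;
\item Gaussian data satisfy this by a union bound over the $(e/\eta)^{O(\eta n)}$ subsets;
\item two feasible solutions differ on at most $2\eta n$ points, so $\|\bar y-\bar y'\|_2\le 2C\eta\sqrt{\cdots}+2\eta\|\bar y-\bar y'\|_2$;
\item this rearranges to the needed bound.
\end{itemize}
Your directional tail-count conditions would have to be strengthened to exactly this kind of uniform control of subset means.
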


\noindent Note first that in contrast to prior frequentist work, there is no "non-private" term: taking $\varepsilon \to \infty$ will allow one to estimate the posterior mean arbitrarily well. This makes sense, because given non-private access to the data the posterior mean can be computed exactly.

Most of the intuition for this result is captured in the \emph{isotropic} setting where $\Sigma = \sigma^2 \Id$ for $\sigma \asymp 1/\sqrt{n}$, in which case
\begin{equation}
    \alpha_{\rm stat} = \frac{d + \log(1/\beta)}{\epsilon n} \qquad \text{and} \qquad \alpha_{\rm comp} = \frac{d^{3/4} + \log^{3/4}(1/\beta)}{\epsilon^{1/2} n^{3/4}}\,.
\end{equation}
In this case, one can compute that 
\begin{equation}
    \E[\mu\mid x] = \frac{1}{n + 1/\sigma^2}\sum^n_{i=1} x_i\,. \label{eq:posteriormean}
\end{equation}

\noindent In other words, the Bayes-optimal estimator is given by computing the \emph{empirical mean} of the data and \emph{shrinking} it to the origin by a factor that depends on the sample size and prior variance. In this setting, one can compare the error achieved by our estimators to the error achieved by the true mean $\mu$, scaled by $\frac{n}{n + 1/\sigma^2} = \Theta(1)$, which satisfies
\begin{equation}
    \Bigl\|\frac{n}{n+1/\sigma^2}\cdot \mu - \E[\mu\mid x]\Bigr\|_2 \le \sqrt{\frac{d + 2\log(1/\beta)}{n}} \label{eq:confint}
\end{equation}
with probability at least $1 - \delta$.
Note that in the regime where $\epsilon = \Theta(1)$, this bound dominates both $\alpha_{\rm stat}$ and $\alpha_{\rm comp}$. In fact, we prove \cref{thm:mean_estimation_upper_informal} by constructing private estimators for the empirical mean that succeed with high probability for \emph{any} $\mu$ that lies in some ball of radius $R$. Since our empirical mean estimator succeeds regardless of (and is agnostic to) the value of $\mu$, it can be applied even in a \emph{frequentist} setting to yield an estimator for $\mu$ with the following property:

\begin{theorem}[Frequentist mean estimation with optimal constants]\torestate{\label{thm:mean_estimation_upper_informal_frequentist}
    Let $\Lambda$ be psd. For any $\mu \in \mathbb R^d$ with $\|\mu\| \le R$, if the data $x$ consists of $n \ge \tilde{\Omega}\left(\frac{d}{\epsilon}\log\left(\frac{R+\|\Lambda\|_F}{\alpha}\right)\right)$  i.i.d. samples from $\mathcal{N}(\mu,\Lambda^2)$, there is
    \begin{enumerate}
        \item A computationally inefficient $\epsilon$-DP estimator $\mu(x)$ such that with probability at least $1 - \beta$,
        \begin{equation*}
            \norm{\mu(x) - \mu}_2 \le (1+o(1))\sqrt{\frac{\|\Lambda\|_F^2 + \|\Lambda\|_\mathsf{op}^2\log \frac{1}{\beta}}{n}} + \tilde{O}(\alpha_{\rm stat})
        \end{equation*}
        \item A \emph{polynomial-time} $\epsilon$-DP estimator $\mu(x)$ such that with probability at least $1 - \beta$,
        \begin{equation*}
            \norm{\mu(x) - \mu}_2 \le (1+o(1)) \sqrt{\frac{\|\Lambda\|_F^2 + \|\Lambda\|_\mathsf{op}^2 \log \frac{1}{\beta}}{n}} + \tilde{O}(\alpha_{\rm comp} + \alpha_{\rm stat})\,.
        \end{equation*}
    \end{enumerate}
    Here the $\tilde{O}$ conceals factors of $\log1/\alpha$, $\log\log1/\beta$, and $\log d$. When $\Sigma = \Id$, we do not lose $\log d$ factors. The $o(1)$ in the leading term holds in regimes where $\|\Lambda\|_F/\|\Lambda\|_\op \to \infty$ or $\beta \to 0$.
    }
\end{theorem}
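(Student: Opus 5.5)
The proof reduces the frequentist statement to the empirical-mean estimators that underlie \cref{thm:mean_estimation_upper_informal}. The discussion before the theorem says those estimators privately estimate the empirical mean $\bar x = \frac1n\sum_i x_i$ for \emph{any} $\mu$ in a ball of radius $R$, agnostic to $\mu$. So the estimator I propose is simply $\hat\mu(x)$, the private (inefficient or polynomial-time) empirical-mean estimator applied directly to the data, with no shrinkage. By the triangle inequality,
\[
\|\hat\mu(x)-\mu\|_2 \le \|\hat\mu(x)-\bar x\|_2 + \|\bar x-\mu\|_2 ,
\]
and I bound the two terms separately, each with failure probability $\beta/2$.

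\textbf{Private term.} I first rescale. Since $x_i\sim\mathcal N(\mu,\Lambda^2)$, the data are $x_i=\mu+\Lambda g_i$ with $g_i$ standard Gaussian. The anisotropic empirical-mean estimator (the one behind \cref{thm:anisotropic-private-est}, used for \cref{thm:mean_estimation_upper_informal} with general $\Lambda$) takes exactly this input: Gaussian samples with known covariance $\Lambda^2$ and mean in a ball of radius $R$. It returns $\hat\mu$ with
\[
\|\hat\mu-\bar x\|_2\le\tilde O(\alpha_{\rm stat})
\]
in the inefficient case, and $\tilde O(\alpha_{\rm comp}+\alpha_{\rm stat})$ in the efficient case, with probability $1-\beta/2$. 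The rates are expressed through $\tr(\Lambda)$, $\|\Lambda\|_{4/3}$ and $\|\Lambda\|_{\sf op}$, as in the Bayesian theorem. The sample-size requirement $n\ge\tilde\Omega(\frac d\epsilon\log\frac{R+\|\Lambda\|_F}{\alpha})$ comes from the coarse localization step: the data concentrate in a ball of radius about $R+\|\Lambda\|_F$, and this must be shrunk to accuracy $\alpha$. Privacy is inherited directly, since the estimator is $\epsilon$-DP as a function of $x$. Since the reference to $\Lambda$ in \cref{thm:mean_estimation_upper_informal} is the posterior matrix rather than the data covariance, I must check that the underlying empirical-mean lemma is stated for data covariance $\Lambda^2$. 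That is how I read the reduction in the Bayesian proof, where the whitened posterior quantity plays the role of $\Lambda$.

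\textbf{Statistical term.} We have $\bar x-\mu=\Lambda \bar g$ with $\sqrt n\,\bar g\sim\mathcal N(0,\Id)$. By Gaussian concentration of the $\|\Lambda\|_{\sf op}$-Lipschitz function $z\mapsto\|\Lambda z\|_2$, whose mean is at most $\|\Lambda\|_F$,
\[
\|\bar x-\mu\|_2\le \frac{\|\Lambda\|_F+\|\Lambda\|_{\sf op}\sqrt{2\log(2/\beta)}}{\sqrt n}.
\]
To get the stated form with the leading constant $1+o(1)$, I use $(a+b)^2\le (1+\eta)a^2+(1+1/\eta)b^2$ together with the sharper Hanson--Wright/Laurent--Massart bound
\[
\|\Lambda z\|^2\le \|\Lambda\|_F^2+2\|\Lambda^2\|_F\sqrt t+2\|\Lambda\|_{\sf op}^2 t .
\]
The cross term $\|\Lambda^2\|_F\sqrt t\le \|\Lambda\|_{\sf op}\|\Lambda\|_F\sqrt t$ is lower order relative to $\|\Lambda\|_F^2+\|\Lambda\|_{\sf op}^2 t$ exactly when $\|\Lambda\|_F/\|\Lambda\|_{\sf op}\to\infty$ or $t=\log(1/\beta)\to\infty$, which are precisely the regimes stated. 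The constant in front of $\|\Lambda\|^2_{\sf op}\log(1/\beta)$ can be absorbed into the displayed form, or normalized by the convention of the statement.

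\textbf{Main obstacle.} The real content lies in the empirical-mean estimators themselves: the SoS robust estimator combined with the robustness-to-privacy reduction. I assume those from the proof of \cref{thm:mean_estimation_upper_informal}. The one genuinely delicate step here is that those guarantees must hold uniformly over $\|\mu\|\le R$ with the radius entering only logarithmically. If the Bayesian proof instead used the prior to localize $\mu$, I would replace that step with a standard private coarse-estimation step, e.g.\ a private histogram or exponential mechanism over a net of the $R$-ball. This costs only $\frac d\epsilon\log\frac{R}{\alpha}$ samples, consistent with the stated requirement on $n$.
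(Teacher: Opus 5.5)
Your decomposition is the paper's. You condition on $\bar x$ being close to $\mu$, run the empirical-mean estimator of \cref{thm:anisotropic-private-est} with an enlarged public radius, and finish with the triangle inequality and a union bound. That theorem is stated directly for data from $\mathcal N(\mu,\Lambda^2)$, so your worry about which matrix plays the role of $\Lambda$ does not arise. Its hypothesis is $\|\bar x\|_2\le R'$ rather than a bound on $\mu$. So take $R'=R+(\|\Lambda\|_F+\|\Lambda\|_{\mathrm{op}}\sqrt{2\log(2/\beta)})/\sqrt n$; the paper uses $O(R+\sqrt{\tr(\Lambda^2)+\log(1/\beta)})$. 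The extra $\log(1/\beta)$ costs only $\log\log(1/\beta)$ inside $\tilde\Omega$. No separate private localization step is needed, because the $\frac{d}{\epsilon}\log\frac{R'}{\alpha}$ term of that theorem already pays for it.

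The step that fails is the last one: turning your correct bound $\|\bar x-\mu\|_2\le(\|\Lambda\|_F+\|\Lambda\|_{\mathrm{op}}\sqrt{2\log(2/\beta)})/\sqrt n$ into the displayed leading term $(1+o(1))\sqrt{(\|\Lambda\|_F^2+\|\Lambda\|_{\mathrm{op}}^2\log(1/\beta))/n}$.

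\begin{itemize}
\item The factor $2$ on $\|\Lambda\|_{\mathrm{op}}^2\log(1/\beta)$ cannot be absorbed. Take $d=1$, $\Lambda=1$, $\beta\to0$. The $(1-\beta)$-deviation of $\bar x$ is $(1+o(1))\sqrt{2\log(1/\beta)/n}$, and no estimator does better (flat-prior argument). With, e.g., $\epsilon=1$ and $\log(1/\beta)=\sqrt n$, both privacy terms are polynomially smaller than the missing $(\sqrt2-1)\sqrt{\log(1/\beta)/n}$.
\item The cross term $2\|\Lambda^2\|_F\sqrt t$ is not lower order merely because $\|\Lambda\|_F/\|\Lambda\|_{\mathrm{op}}\to\infty$ or $t\to\infty$. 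For $\Lambda=I_d$ and $t=\log(1/\beta)=d$, both conditions hold. Yet Laurent--Massart gives $d+2d+2d$, and the true $\chi^2_d$ quantile is asymptotically about $4.5d$, versus the claimed $(1+o(1))\cdot 2d$.
\end{itemize}

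The paper's proof has the same hole: it simply asserts $\|\bar x-\mu\|_2\le\sqrt{(\|\Lambda\|_F^2+\|\Lambda\|_{\mathrm{op}}^2\log(3/\beta))/n}$. The displayed form is therefore only justified when $\|\Lambda\|_F^2\gg\|\Lambda\|_{\mathrm{op}}^2\log(1/\beta)$. What your argument, and the paper's, actually proves is the leading term $(\|\Lambda\|_F+\|\Lambda\|_{\mathrm{op}}\sqrt{2\log(2/\beta)})/\sqrt n$. That is the sample mean's own deviation, which is the right ``optimal constant'' benchmark. You should stop there rather than force the stated form.
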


\noindent  This matches the asymptotically optimal rate for mean estimation attained by \cite[Theorem 1.5]{hopkins2023robustness}, but with the optimal constant in front of the leading non-private term. It can thus be thought of as a generalization of~\cite[Theorem 1.2]{karwa-vadhan} to higher dimensions (proof in \Cref{appendix:frequentist-implications}). 

All of this highlights the first conceptual twist in our work: in the Bayesian setting, private mean estimation is about privately estimating the \emph{empirical mean} of the data, rather than the \emph{true mean}. Indeed, even though the former converges to the latter in the infinite-sample limit, the finite-sample error in Eq.~\eqref{eq:confint} is significant -- of the same order as the MMSE -- whereas our \cref{thm:mean_estimation_upper_informal} shows that it is possible to privately achieve the MMSE up to only lower-order excess error. Focusing on the empirical mean is especially counterintuitive as we will ultimately prove \cref{thm:mean_estimation_upper_informal} by using the machinery of~\cite{hopkins2023robustness} to reduce to an unconventional \emph{robust statistics} problem where we seek to estimate the empirical mean given a corrupted dataset. Whereas the empirical mean is a famously non-robust statistic, robustly estimating it is thus absolutely essential to achieving DP Bayesian estimation. Our algorithm for robust empirical mean estimation is based on a careful adaptation of a sum-of-squares analysis for robust parameter mean estimation due to~\cite{kothari2022polynomial}. We defer the formal definition to \cref{sec:definition} and guarantees for the robust problem to \cref{sec:gaussian-posterior-mean}.

One might wonder whether the extra term in the \emph{computational} rate in \cref{thm:mean_estimation_upper_informal} is an artifact of our techniques. The next surprise in our setting is that it is actually inherent: DP Bayesian mean estimation naturally exhibits a computational-statistical gap! This is perhaps unexpected given that the problem formulation is absent of the usual signatures for such gaps like sparsity and low-rank structure.

\begin{theorem}[Informal, see \cref{thm:comp-lower-bound_mean}]\label{thm:compstat_bayes}
   Under the low-degree heuristic~\cite{kunisky2019notes,wein2025computational}, no polynomial time estimator $\hat{\mu}$ can achieve $\norm{\hat{\mu}(x) - \E[\mu\mid x]}_2 \ll \alpha_{\rm comp} + \alpha_{\rm stat}$ when $\beta = \exp(-\Theta(d))$ and $\pi = \mathcal{N}(0,\sigma^2 \Id)$ for $\sigma \asymp 1/\sqrt{n}$.
\end{theorem}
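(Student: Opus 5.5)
We prove \cref{thm:compstat_bayes} by reducing the private estimation task to a robust one and then showing, within the low-degree framework, that the robust task needs an error of order $\alpha_{\rm comp}$. The reduction is the privacy-to-robustness argument of~\cite{Georgiev2022PrivacyIR}. Suppose $\hat{\mu}$ is $\epsilon$-DP and satisfies $\norm{\hat{\mu}(x) - \E[\mu\mid x]}_2 \le \alpha$ with probability $1-\beta$, where $\beta = \exp(-\Theta(d))$. Group privacy then says that changing $\eta n$ samples, with $\eta n \approx \log(1/\beta)/\epsilon \asymp d/\epsilon$, changes output probabilities by at most a factor $e^{\epsilon\eta n}$. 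This factor is small compared with $1/\beta$. Hence $\hat{\mu}$ still succeeds with constant probability on an $\eta$-corrupted dataset, where success means approximating the empirical mean, and therefore the posterior mean, of the \emph{uncorrupted} data.

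This produces an efficient robust estimator for the empirical mean at corruption rate $\eta \asymp d/(\epsilon n)$. Note that the target $\frac{1}{n+1/\sigma^2}\sum_i x_i$ is itself non-robust: the adversary can hide which samples were replaced. The remaining task is to show that no low-degree algorithm estimates $\frac1n\sum_i x_i$ to accuracy better than $\sqrt{\eta}\cdot\eta^{1/4}$-type rates. Here $\alpha_{\rm comp} = d^{3/4}/(\epsilon^{1/2}n^{3/4})$, which equals $\eta^{3/4}d^{0}\cdot(\ldots)$ after accounting for the $1/\sqrt n$ normalization.

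We would build a planted-versus-null pair of distributions. \emph{Null:} i.i.d. samples from $\mathcal{N}(\mu,\Id)$ with $\mu\sim\mathcal{N}(0,\sigma^2\Id)$. \emph{Planted:} a hidden unit direction $v$, and an $\eta$ fraction of samples drawn, in the spirit of the non-Gaussian component analysis or "parallel pancakes" constructions, from a distribution whose moments along $v$ match $\mathcal{N}(0,1)$ up to degree three (or enough degrees). These samples are shifted by $\tau$ along $v$, and the clean samples are compensated so that the marginal of the corrupted dataset is close to the null. In the planted world, the empirical mean of the clean data differs from the observed data by roughly $\eta\tau v$ in one direction, which is not detectable.

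We would then bound the low-degree advantage, i.e.\ the norm of the degree-$D$ projected likelihood ratio. Using Hermite expansions along $v$, the first nonvanishing term appears at degree $4$. The advantage is $o(1)$ whenever $n\,\eta^{2}\tau^{8}\ll d^2$, roughly. Since $\eta\tau$ equals the error scale, optimizing $\tau$ gives error $\eta\tau \asymp \eta^{3/4}(d/n)^{1/4}$. Substituting $\eta \asymp d/(\epsilon n)$ yields $d/(\epsilon^{3/4}n)\cdot(\ldots)$, and we would tune the construction so that it matches $\alpha_{\rm comp}$ exactly. Finally we convert the estimation lower bound into a detection lower bound: a low-degree estimator achieving error $\ll \eta\tau$ would distinguish planted from null, for instance by testing $\iprod{\hat\mu-\bar x, \hat\mu-\bar x}$ (done via a low-degree surrogate, or via the Schramm--Wein low-degree estimation framework, which bounds the correlation of any degree-$D$ polynomial with $\iprod{v,\bar{x}_{\rm clean}}$ directly).

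The main obstacle is calibrating the construction so that the exponents come out to exactly $3/4$ and $1/2$. This requires moment-matching to the right order and checking the cross terms in the low-degree norm, and it must be done in the Schramm--Wein correlation framework rather than by a crude detection argument, since the hidden quantity is the empirical mean rather than a parameter.
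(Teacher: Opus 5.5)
Your privacy-to-robustness step matches the paper: you take $\log(1/\beta)\asymp d$, so $\eta\asymp d/(\epsilon n)$. Comparing the estimator's output with the observed empirical mean also matches. \textbf{The gap is the hard instance, and as stated it is fatal.}

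The clean $(1-\eta)$ fraction must be exactly $\mathcal{N}(\mu',I_d)$. So along $v$ the observed law is $(1-\eta)\mathcal{N}(m,1)+\eta Q$ with $\mathbb{E}_Q t=\tau$. Matching the mean forces $m=-\eta\tau/(1-\eta)$. Matching the second moment then forces $\mathbb{E}_Q t^2=1-\eta\tau^2/(1-\eta)$. Since $\mathbb{E}_Q t^2\ge\tau^2$, this requires $\tau^2\le 1-\eta$.

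But the computational term only arises when the shift $\tau\approx\alpha/\eta$ is $\gg 1$. With $\tau\lesssim 1$ you can only prove $\alpha\gtrsim\eta$, which after substituting $\eta\asymp d/(\epsilon n)$ is just $\alpha_{\rm stat}$.

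Your scaling is a symptom of the same problem. An instance whose first nonvanishing Hermite term is at degree $4$ gives hardness up to $n\approx d^2/(\eta^2\tau^8)$. That corresponds to errors $\eta^{3/4}d^{1/4}n^{-1/8}$, not $\eta^{3/4}(d/n)^{1/4}$ as you wrote. This exceeds the $\sqrt{\eta}\,(d/n)^{1/4}$ term of the paper's polynomial-time upper bound by a factor $(n\eta^2)^{1/8}\ge 1$, throughout the range $n\eta^2\ge 1$ where the NGCA lemma applies. So the exponent calibration you defer to the end cannot be achieved with this construction. The robust target you state ($\sqrt{\eta}\cdot\eta^{1/4}$-type) is also not the right one; it should be $\sqrt{\eta}\,(d/n)^{1/4}$.

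The missing idea is to match only the \emph{first} moment. The paper's planted law is $(1-\eta)\mathcal{N}(-\eta\delta v,I_d)+\eta\,\mathcal{N}((1-\eta)\delta v,I_d)$ with $\delta=\Theta(\alpha/\eta)$. This is literally an $\eta$-corruption of clean $\mathcal{N}(-\eta\delta v,I_d)$ data. Its first nonzero Hermite coefficient is at degree $2$ and equals $\eta(1-\eta)\delta^2/\sqrt{2}$. So \cref{lem:general-ldlr} with $j_\ast=2$, $\kappa=\eta^2$, $\tau=\delta^{-4}$ gives low-degree hardness for $n\lesssim d\eta^2/\alpha^4$. This is linear in $d$ and sits exactly at $\alpha\approx\sqrt{\eta}\,(d/n)^{1/4}$. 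Substituting $\eta=d/(\epsilon n)$ gives $n^3\lesssim d^3/(\epsilon^2\alpha^4)$, i.e.\ $\alpha\lesssim d^{3/4}\epsilon^{-1/2}n^{-3/4}=\alpha_{\rm comp}$, with no tuning.

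You also do not need the Schramm--Wein framework. The observed-minus-clean empirical mean is distributed as $\mathcal{N}(\eta\delta v,\tfrac{2\eta}{n}I_d)$, whose norm exceeds $\eta\delta/10$ with probability at least $9/10$. So the plain test that checks whether $\|\hat\mu-\tilde x\|_2\le\alpha$ already distinguishes. The reduction only needs to be efficient, not low-degree.

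Finally, putting $\mu\sim\mathcal{N}(0,\sigma^2 I_d)$ into the null breaks the i.i.d.\ product structure the NGCA calculation relies on. The paper instead proves hardness for $\mu\in\{0,21\alpha v\}$ against the null $\mathcal{N}(0,I_d)$ and uses translation invariance. It then couples $X\sim\mathcal{N}(0,\sigma^2 I_d)$ with $X+21\alpha v$, whose law is $1/6$-close in TV to the prior when $\alpha\le c\sigma$, and averages to transfer the bound to the Gaussian prior.
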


\noindent We prove this by carefully reducing from a distinguishing problem that was recently introduced by~\cite{diakonikolas2025sos} and proven to be low-degree hard, to the problem of DP Bayesian mean estimation.

Stepping back from the isotropic setting, another appealing feature of \cref{thm:mean_estimation_upper_informal} is that the privacy cost reflected in $\alpha_{\rm stat}$ and $\alpha_{\rm comp}$ does not have an explicit dependence on the dimension (up to log factors) and instead depends on the prior covariance $\Sigma$: in other words, the prior informs which dimension our algorithm spends its privacy budget focusing on estimating. This is one of the key selling points of DP Bayesian estimation over DP frequentist estimation: \emph{prior information allows us to optimally allocate our privacy budget by judiciously applying different levels of noise to different directions of the data}. This is especially crucial in data-constrained settings where simply adding isotropic noise to preserve privacy can cripple utility.

Note that our setting differs from the literature on "covariance-aware" mean estimation \cite{covariance-aware-private-me} \cite{brown-hopkins-smith-affine-invariant} \cite{duchi-a-pretty-fast-algorithm} in two respects: first, in our setting the covariance is known \emph{a priori} (in a literal sense); and second, we care about $\ell_2$ rather than Mahalanobis error. Rather, our results have the closest parametric analog in \cite{PLAN-variance-aware} and \cite{dagan2024dimension}, which study private mean estimation to $\ell_2$ norm under anisotropic noise with known covariance.

Another selling point of the Bayesian perspective is that it naturally enables \emph{modular propagation of uncertainty}. Bayesian statistics is built upon post-processing estimates of the posterior distribution, and it is a standard fact that DP is immune to post-processing. In \cref{sec:online}, we illustrate this point with a simple example where we use our algorithm to privately estimate the mean of a distribution whose samples are progressively given to us in a streaming setting.

\vspace{0.3em}

\noindent \textbf{Bayesian linear regression.} Next, we turn to our guarantees for linear regression. Here, the unknown parameter $\theta$ corresponds to an unknown regressor in $\R^d$ that is again sampled from some prior. When discussing linear regression, we will always use the letter $w$ in place of $\theta$. Given $w$, one observes data $x$ now in the form of i.i.d. pairs $(x_1,y_1),\ldots,(x_n,y_n)$, where $x_i \sim \mathcal{N}(0,\Id)$ and $y_i = \langle w, x_i\rangle + \xi_i$, where $\xi_i \sim \mathcal{N}(0,\varsigma^2)$. The goal is again to minimize the MSE, defined as
\begin{equation}
    \mathrm{MSE}(\hat{w}) \coloneqq \E_{w\sim\pi} \E_{\{(x_i,y_i)\} \mid w} \|w - \hat{w}(\{x_i,y_i\}) \|^2\,,
\end{equation}
while ensuring that $\hat{w}$ is differentially private. Without the constraint of privacy, the optimal estimator is again the posterior mean $\hat{w}(x) = \E[w\mid \{(x_i,y_i)\}]$. Letting $X \in \R^{n\times d}$ denote the design matrix whose rows consist of $x_1,\ldots,x_n$, and letting $y\in\R^n$ denote the vector of labels $y_1,\ldots,y_n$, the posterior mean is given by
\begin{equation}
    w_{\rm post} \coloneqq \E[w\mid\{(x_i,y_i)\}] = (\sigma^{-2} \Id + XX^\top)^{-1} Xy\,. \label{eq:LR_MSE}
\end{equation}
For this task, we give the first estimator that simultaneously preserves privacy and provably competes with the Bayes-optimal estimator when the prior $\pi$ is isotropic Gaussian.

\begin{theorem}[Informal, see \cref{lem:exponential-bayesian-error-regression} and \cref{thm:bayesian-error-regression}]\torestate{\label{thm:regression_upper_informal}
Let $\epsilon \leq O(1)$.
    Given $w$ drawn from prior $\pi = \mathcal{N}(0,\sigma^2\Id)$, if the data consists of $n \ge \Omega(\frac{d}{\e}\log(\sigma\sqrt{d}))$ i.i.d. samples of the form $(x_i, y_i)$ where $x_i \sim \mathcal{N}(0,\Id)$ and $y_i = \langle w, x_i\rangle + \xi_i$ for $\xi_i \sim\mathcal{N}(0,1)$, there exists:
    \begin{itemize}
        \item A computationally inefficient $\epsilon$-DP estimator $\hat{w}(\{(x_i,y_i)\})$ such that with probability at least $1 - \beta$,
        \begin{equation}
            \|\hat{w}(\{(x_i,y_i)\}) - w_{\rm post}\|_2 \le \tilde{O}\biggl(\frac{d+\log(1/\beta)}{\e n}\biggr)
        \end{equation}
        \item A polynomial-time $\epsilon$-DP estimator $\hat{w}(\{(x_i,y_i)\})$ such that with probability at least $1 - \beta$,
        \begin{equation}
            \|\hat{w}(\{(x_i,y_i)\}) - w_{\rm post}\|_2 \le  \tilde{O}\biggl(\frac{\paren{d+\log(1/\beta)}^{3/4}}{\e^{1/2} n^{3/4}} + \frac{d+\log(1/\beta)}{\e n}\biggr)\,.
        \end{equation}
    \end{itemize}
    }
\end{theorem}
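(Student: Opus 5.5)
We would prove \cref{thm:regression_upper_informal} with the same robustness-to-privacy template used for \cref{thm:mean_estimation_upper_informal}. The target is the non-robust statistic $w_{\rm post} = (\sigma^{-2}\Id + XX^\top)^{-1}Xy$ itself, not the ground-truth regressor $w$. The plan is to give an estimator that recovers this statistic from an $\eta$-corrupted version of the dataset. Correctness must hold for \emph{every} $w$ in a ball of radius $R \approx \sigma(\sqrt d + \sqrt{\log(1/\beta)})$, which contains the prior draw with probability $1-\beta$. We would then feed this estimator into the~\cite{hopkins2023robustness} reduction. That reduction takes a score function built from the robust estimator and runs the exponential mechanism over a fine net of the ball. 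It outputs a point $\hat w$ such that some dataset $Z'$ within Hamming distance $\eta n$ of the true dataset $Z$, with $\eta \approx (d\log(R/\alpha) + \log(1/\beta))/(\epsilon n)$, has estimate close to $\hat w$. Both the volume ratio and the net cover only $d$ dimensions at scale $\alpha$, which is where the condition $n \gtrsim \frac{d}{\epsilon}\log(\sigma\sqrt d)$ comes from.

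The core step is a deterministic stability statement. First, we fix resilience conditions that the clean sample satisfies with probability $1-\beta$. The empirical covariance $\frac1n XX^\top$ should be $O(\sqrt{(d+\log(1/\beta))/n})$-close to $\Id$ in operator norm, and so should every subset of $(1-\eta)n$ points after reweighting. The residual-weighted sums $\frac1n\sum_{i\in S} x_i\xi_i$ and $\frac1n\sum_{i\in S} x_i x_i^\top$ over subsets $|S|\le\eta n$ should have norm $O(\eta\sqrt{\log(1/\eta)}\cdot\text{scale})$ uniformly. We also need certifiable hypercontractivity, i.e.\ SoS bounds on degree-4 moments $\frac1n\sum_i\langle v,x_i\rangle^4$ and the analogous moments of $y_i - \langle w,x_i\rangle$.

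The estimators then split by efficiency. For the inefficient estimator, we output $w_{\rm post}(Z')$ for any $Z'$ within distance $\eta n$ that is resilient. Changing $\eta n$ samples changes $Xy$ and $XX^\top$ by amounts controlled by resilience. A perturbation bound for $(\sigma^{-2}\Id+XX^\top)^{-1}$, using that its operator norm is $\approx 1/n$, then gives $\|w_{\rm post}(Z)-w_{\rm post}(Z')\| \lesssim \eta\,\mathrm{polylog}(1/\eta)$. Plugging in $\eta$ yields $\tilde O((d+\log(1/\beta))/(\epsilon n))$. For the efficient estimator, we run a degree-$O(1)$ SoS program in the style of~\cite{kothari2022polynomial}. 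It has variables for a $(1-\eta)$-fraction inlier selection and a surrogate dataset, with constraints certifying bounded fourth moments of both covariates and residuals. The SoS program only controls contamination via fourth moments. That gives error $\eta^{3/4}$ rather than $\eta$ in the normalized scale, and with $\eta \approx (d+\log(1/\beta))/(\epsilon n)$ this produces the $(d+\log(1/\beta))^{3/4}/(\epsilon^{1/2}n^{3/4})$ term. The reason is that in the relevant scale the error is $\sqrt{\eta}\cdot\eta^{1/4}/\sqrt n$-type.

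The main obstacle is making the SoS analysis target the empirical, non-robust statistic $w_{\rm post}$ with no ``non-private'' error term. Standard robust regression analyses compare to the true $w$ and incur $\sqrt{d/n}$ error. We instead need to show that the pseudo-expected surrogate $\pE[w_{\rm post}(Z')]$ differs from $w_{\rm post}(Z)$ only through the $\eta n$ swapped points. We would attempt this by writing the difference as $(\sigma^{-2}\Id + X'X'^\top)^{-1}\bigl[(X'y' - Xy) - (X'X'^\top - XX^\top)w_{\rm post}(Z)\bigr]$, which expresses it as a sum over corrupted indices of $x_i(y_i - \langle x_i, w_{\rm post}(Z)\rangle)$. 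We would then bound this sum in SoS via Hölder with the certified fourth moments of the residuals $y_i-\langle x_i,w_{\rm post}\rangle$. This requires showing that those residuals are themselves well behaved on clean data; because $w_{\rm post}$ is close to $w$, that should follow from the clean resilience conditions.
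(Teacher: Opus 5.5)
Your inefficient half is essentially sound and is a mild variant of the paper's argument. The paper first obtains a rough estimate $\tilde w$, residualizes, and runs a replacement-based resilient mean estimator on the vectors $x_i(y_i-\iprod{x_i,\tilde w})$. It then uses the identity $w_{\rm post}=u+(A^\star+\lambda\Id)^{-1}(g^\star(u)-\lambda u)$. You instead perturb $w_{\rm post}$ directly. This works, but only under two conditions. The resilience certificate you impose on $Z'$ must be stated for the residual-weighted vectors $x_i'(y_i'-\iprod{x_i',w_{\rm post}(Z')})$, not for raw $x_i'y_i'$, whose scale grows with $\norm{\wnull}\approx\sigma\sqrt d$ in the weak-prior regime. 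It must also include a sparse-spectral-norm bound on $X'$. Together these let you absorb the cross term $\sum_{i\in S}x_i'(x_i')^\top\bigl(w_{\rm post}(Z')-w_{\rm post}(Z)\bigr)$ using $n\gtrsim d/\eta$.

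The genuine gap is in the efficient half. Certifying in constant-degree SoS that $\frac1n\sum_i\iprod{v,x_i}^4\le O(\norm{v}_2^4)$ for Gaussian covariates requires $n\gtrsim d^2$ samples. The theorem, by contrast, has sample complexity linear in $d$: its formal versions assume only $n\gtrsim\max\{d\log(\sigma\sqrt d)/\epsilon,\,(d+\log(1/\beta))/\eta\}$. In that regime the clean data is not a feasible solution of your program, and the paper singles this out as the core obstacle. Even granting $n\gtrsim d^2$, fourth-moment control gives an $\eta^{3/4}$-type error. Substituting $\eta\approx(d+\log(1/\beta))/(\epsilon n)$ then yields $(d+\log(1/\beta))^{3/4}/(\epsilon^{3/4}n^{3/4})$. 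This has the wrong power of $\epsilon$, worse by a factor $\epsilon^{-1/4}$ since $\epsilon\le O(1)$. Your ``$\sqrt\eta\cdot\eta^{1/4}/\sqrt n$'' heuristic does not produce the stated term either.

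The missing idea is that the exponent $3/4$ comes not from fourth moments but from the $\sqrt{d/n}$ slack in \emph{second}-moment control. The paper's program uses only two resilience constraints:
\begin{itemize}
\item $\norm{\frac1nXX^\top-\Id}_{\mathrm{op}}\le\chi\sqrt{(d+\log(1/\beta))/n}$, which clean Gaussian data satisfies with probability $1-\beta$ at linear sample size;
\item a short-flat decomposition $y=z_1+z_2$ with $\norm{z_1}_2^2\lesssim\eta n\log(1/\eta)+\log(1/\beta)$ and $\eta n\norm{z_2}_\infty^2\lesssim\eta n\log(1/\eta)+\log(1/\beta)$, which the clean labels satisfy by Gaussian order statistics.
\end{itemize}
Let $v$ be the indicator of points that are corrupted or deselected. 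SoS Cauchy--Schwarz gives $\iprod{X(y\odot v),u}^2\le\norm{Xu\odot v}^2\,\norm{y\odot v}^2$. The second factor is $\lesssim\eta n\log(1/\eta)+\log(1/\beta)$ directly from the short-flat constraint. The first factor is $\lesssim n\bigl(\eta\log(1/\eta)+\sqrt{(d+\log(1/\beta))/n}\bigr)$, obtained by subtracting the clean second moment on the retained points from the total, as in \cite{kothari2022polynomial}. For $w_{\mathrm{out}}=(\sigma^{-2}+n)^{-1}Xy$ with $\sigma^2=\Theta(1/n)$, this gives error $\eta\log(1/\eta)+\sqrt{\eta\log(1/\eta)\sqrt{(d+\log(1/\beta))/n}}$. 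With $\eta\approx(d+\log(1/\beta))/(\epsilon n)$, the term $\sqrt{\eta\sqrt{(d+\log(1/\beta))/n}}$ is exactly $(d+\log(1/\beta))^{3/4}/(\epsilon^{1/2}n^{3/4})$. In the weak-prior regime the labels have scale $\sigma\sqrt d$. There the paper runs a rough-then-refine sequence of programs, with short-flat constraints on the residual $y-X^\top w$ and on $X^\top(w-w_{\rm init})$.
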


\noindent The same idiosyncrasies from the mean estimation setting apply here. After noting a connection between the posterior mean and the \emph{ordinary least-squares (OLS) estimator}, we will give an algorithm for private Bayesian linear regression by giving a robust estimator for an inherently non-robust object: the OLS estimator. As with the proof of \cref{thm:mean_estimation_upper_informal}, we go through the robustness-to-privacy framework of~\cite{hopkins2023robustness}, which necessitates designing an efficient algorithm for approximating the OLS estimator when an $\eta$-fraction of the data has been corrupted (see \cref{sec:approxOLS} for the formal definition and guarantees).

As with mean estimation, for linear regression our results also have an immediate application to frequentist statistics. 
Our private estimator provides an approximation for the least-squares estimator $w_{\mathrm{OLS}} \coloneqq (XX^\top)^{-1}Xy$, regardless of the distribution of $w$, and can thus be applied in a frequentist setting to yield the following guarantee:

\begin{theorem}[Frequentist linear regression with optimal constants]\label{thm:lr-freq}
For any $w \in \mathbb{R}^d$ with $\|w\| \le R$, if the data consists of
$n \ge \widetilde{\Omega}\left(\frac{d}{\epsilon}\log(R/\alpha)\right)$ i.i.d. samples of the form
$(x_1,y_1),\ldots,(x_n,y_n)$ where $x_i \sim \mathcal{N}(0,\Id)$ and
$y_i = \langle w,x_i\rangle + \xi_i$ for $\xi_i \sim \mathcal{N}(0,1)$, there is
\begin{enumerate}
\item
A computationally inefficient $\epsilon$-DP estimator
$\hat{w}(\{(x_i,y_i)\})$ such that with probability at least $1-\beta$,
\[
\|\hat{w}(\{(x_i,y_i)\}) - w\|_2
\le
(1+o(1))\sqrt{\frac{d + \log \frac{1}{\beta}}{n}}
+
\widetilde{O}\left(
\frac{d + \log(1/\beta)}{\epsilon n}
\right).
\]

\item
A polynomial-time $\epsilon$-DP estimator
$\hat{w}(\{(x_i,y_i)\})$ such that with probability at least $1-\beta$,
\[
\|\hat{w}(\{(x_i,y_i)\}) - w\|_2
\le
(1+o(1))\sqrt{\frac{d + \log \frac{1}{\beta}}{n}}
+
\widetilde{O}\left(
\frac{(d + \log(1/\beta))^{3/4}}{\epsilon^{1/2}n^{3/4}}
+
\frac{d + \log(1/\beta)}{\epsilon n}
\right).
\]
\end{enumerate}
Here the $\widetilde{O}$ conceals factors of $\log(1/\alpha)$, $\log\log(1/\beta)$, and $\log d$.
\end{theorem}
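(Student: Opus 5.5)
The plan is to reduce \cref{thm:lr-freq} to the private OLS-approximation guarantees behind \cref{thm:regression_upper_informal}, and then add a sharp non-private bound on $\|w_{\mathrm{OLS}} - w\|_2$. Write $w_{\mathrm{OLS}} = (XX^\top)^{-1}Xy$. The triangle inequality gives
\[
\|\hat w - w\|_2 \le \|\hat w - w_{\mathrm{OLS}}\|_2 + \|w_{\mathrm{OLS}} - w\|_2 .
\]
For the first term I will use the private estimators as approximators of $w_{\mathrm{OLS}}$ rather than of $w_{\rm post}$. The robust OLS estimator from \cref{sec:approxOLS}, fed through the robustness-to-privacy reduction of \cite{hopkins2023robustness}, approximates $w_{\mathrm{OLS}}$ for every $w$ in a ball of radius $R$, with no use of the prior. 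The only place the prior entered \cref{thm:regression_upper_informal} was the final shrinkage step and the choice of search radius $\sigma\sqrt d$. I will rerun that argument with radius $R$. This replaces $\log(\sigma\sqrt d)$ by $\log(R/\alpha)$ in the sample-size requirement, and gives with probability $1-\beta/2$:
\begin{itemize}
\item in the inefficient case, $\|\hat w - w_{\mathrm{OLS}}\|\le \tilde O\bigl(\frac{d+\log(1/\beta)}{\epsilon n}\bigr)$;
\item in the polynomial-time case, $\|\hat w - w_{\mathrm{OLS}}\|\le \tilde O\bigl(\frac{(d+\log(1/\beta))^{3/4}}{\epsilon^{1/2}n^{3/4}}+\frac{d+\log(1/\beta)}{\epsilon n}\bigr)$.
\end{itemize}
The hypotheses of those results (resilience and regularity of the Gaussian design, bounded noise moments) hold with probability $1-\beta/4$ whenever $n\gtrsim d+\log(1/\beta)$, which the assumed sample size covers.

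For the second term, $w_{\mathrm{OLS}} - w = (XX^\top)^{-1}X\xi$. Conditioned on $X$, this is exactly $\mathcal N(0,(XX^\top)^{-1})$. Gaussian concentration (Lipschitz concentration of the norm, or Laurent--Massart) gives, with probability $1-\beta/4$,
\[
\|w_{\mathrm{OLS}}-w\|_2 \le \sqrt{\tr((XX^\top)^{-1})} + \sqrt{2\|(XX^\top)^{-1}\|_{\mathrm{op}}\log(4/\beta)} .
\]
Standard Wishart estimates give $\|XX^\top/n - \Id\|_{\mathrm{op}}\le O\bigl(\sqrt{(d+\log(1/\beta))/n}\bigr)$ with probability $1-\beta/4$. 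Since $n\gg d+\log(1/\beta)$ (here $\epsilon\le O(1)$ and $n\ge \tilde\Omega(d/\epsilon)$), this yields $\tr((XX^\top)^{-1})\le (1+o(1))d/n$ and $\|(XX^\top)^{-1}\|_{\mathrm{op}}\le (1+o(1))/n$.

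Combining, the second term is at most $(1+o(1))\bigl(\sqrt{d/n}+\sqrt{2\log(1/\beta)/n}\bigr)$. The main obstacle is turning this into the stated constant $(1+o(1))\sqrt{(d+\log(1/\beta))/n}$. A crude $\sqrt a+\sqrt b\le\sqrt{2(a+b)}$ loses a constant. I will instead use the sharper tail $\Pr[\|g\|^2 \ge d + 2\sqrt{dt}+2t]\le e^{-t}$. In the regime $\log(1/\beta)\ll d$ this gives $\sqrt{d+o(d)}$. In the regime $\log(1/\beta)\gtrsim d$ the cross term is comparable, and one must be careful about whether the constant on the $\log(1/\beta)$ term is $2$ or $1$. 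Matching the statement's form requires reading it as asymptotic in the regime where one of $d$, $\log(1/\beta)$ dominates, exactly as in the $o(1)$ caveat of \cref{thm:mean_estimation_upper_informal_frequentist}, and I would state it with the same caveat.

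Finally, a union bound over the failure events of total probability $\beta$ finishes the proof for both estimators.
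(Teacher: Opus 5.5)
Your proposal matches the paper's proof step for step. Both go through the triangle inequality via $w_{\mathrm{OLS}}$, rerun the private OLS approximators with radius $R$ (and $\eta=0$), and bound $(XX^\top)^{-1}X\xi$ as a conditional Gaussian using a Laurent--Massart-type tail together with Wishart concentration and a union bound. One correction to your caveat: the paper only gets the constant-one form when $\log(1/\beta)=o(d)$, and otherwise proves only $\sqrt{(d+c\log(1/\beta))/n}$. So restrict your caveat to that regime rather than also allowing $\log(1/\beta)$ to dominate, since there the OLS error's constant on $\log(1/\beta)$ is genuinely $2$.
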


\noindent The result follows by combining the corresponding private approximation guarantee for
$w_{\mathrm{OLS}}$ with the standard high-probability bound
$\|w_{\mathrm{OLS}} - w\|_2
\le
(1+o(1))\sqrt{\frac{d + \log(1/\beta)}{n}}$.

Note that unlike existing SoS works on robust or private Gaussian linear regression, \cref{thm:regression_upper_informal} has linear rather than quadratic sample complexity in the dimension. This is made possibly by our relaxed choice of model: the existing SoS literature has focused on the unknown-covariance setting, where an SQ lower bound shows that $\Omega(d^2)$ samples are computationally necessary (see \cite[Remark 1.5]{anderson2025sample}. However, efficient non-SoS estimators were known to exist for the known isotropic covariance case \cite{lr-efficient-algorithms-lower-bounds}. We construct the first SoS robust estimator for Gaussian linear regression under isotropic covariance with sample complexity linear in the dimension, circumventing the previous barrier of certifying $4$-th moment hypercontractivity. 
As we outline in \cref{sec:overview}, our Bayesian linear regression result requires new ideas, such as constraining the SoS program to search over subsets of data for which the corresponding $y$ values admit a decomposition into a short vector and a flat vector.

Note that unlike existing SoS works on robust or private Gaussian linear regression, \cref{thm:regression_upper_informal} has linear rather than quadratic sample complexity in the dimension.
The existing SoS literature has focused on the unknown-covariance setting, for which there is evidence that $\Omega(d^2)$ samples might be necessary for any computationally efficient algorithm (see \cite[Remark 1.5]{anderson2025sample}). On the other hand, efficient non-SoS estimators were known to exist for the known isotropic covariance case \cite{lr-efficient-algorithms-lower-bounds}.
We construct the first SoS based estimator for Gaussian linear regression under isotropic covariance with sample complexity linear in the dimension.  
As we outline in \cref{sec:overview}, our Bayesian linear regression result requires new ideas, such as constraining the SoS program to search over subsets of data for which the corresponding $y$ values admit a decomposition into a short vector and a flat vector.

Finally, similar to the mean estimation case, we show that the gap between the exponential-time algorithm and the computationally efficient algorithm is real: there is again a computational-statistical gap that emerges once one insists on both privacy and Bayesian optimality:

\begin{theorem}[Informal, see \cref{thm:privacy-lower-bound-regression}]
    Under the low-degree heuristic~\cite{kunisky2019notes,wein2025computational}, no polynomial time estimator $\hat{w}$ can achieve $\|\hat{w}(\{(x_i,y_i)\}) - \E[w\mid \{(x_i,y_i)\}]\}\|_2 \ll \alpha$ in $O(\frac{d^{1/3}\log^{2/3}(1/\beta)}{\epsilon^{2/3}\alpha^{4/3}})$ samples.
\end{theorem}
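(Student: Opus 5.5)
The plan mirrors the mean-estimation lower bound (\cref{thm:comp-lower-bound_mean}) and has two parts. The first is a \emph{privacy-to-robustness} step in the style of~\cite{Georgiev2022PrivacyIR}. The second is a reduction from a low-degree-hard distinguishing problem of~\cite{diakonikolas2025sos} to \emph{robust} approximation of $w_{\rm post}$.

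\textbf{Step 1: privacy implies robustness.} Suppose $\hat{w}$ is $\epsilon$-DP and satisfies $\|\hat{w} - w_{\rm post}\|_2 \le \alpha$ with probability $1-\beta$. Consider any dataset $Z'$ obtained from a clean $Z$ by swapping $k = \eta n$ points. By group privacy,
\[
\Pr\bigl[\|\hat{w}(Z') - w_{\rm post}(Z)\|_2 > \alpha\bigr] \le e^{\epsilon k}\beta .
\]
Choosing $\eta \asymp \log(1/\beta)/(\epsilon n)$ makes the right-hand side at most $1/2$. So $\hat{w}$ is a polynomial-time estimator of the \emph{clean} posterior mean $w_{\rm post}(Z)$ that tolerates an $\eta$-fraction of adversarial corruptions. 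With $\sigma$ of constant order relative to $1/\sqrt{n}$, the map $w_{\rm OLS}\mapsto w_{\rm post}$ is well conditioned, so it suffices to rule out efficient robust approximation of $w_{\rm OLS}$.

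\textbf{Step 2: the target rate in robust form.} Substituting $\eta\epsilon n = \log(1/\beta)$ into $n \asymp d^{1/3}\log^{2/3}(1/\beta)\,\epsilon^{-2/3}\alpha^{-4/3}$ gives $\alpha \asymp \eta^{1/2}(d/n)^{1/4}$. The robust statement to prove is therefore: no low-degree algorithm, given $n$ samples with an $\eta$-fraction corrupted, can approximate the clean OLS estimator to error $\ll \eta^{1/2}(d/n)^{1/4}$.

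\textbf{Step 3: the planted family.} I would build two distributions and show that they are low-degree indistinguishable at the relevant $n$.
\begin{itemize}
\item \emph{Null:} clean Gaussian regression data.
\item \emph{Planted:} a uniformly random unit direction $v$ is hidden. An $\eta$-fraction of the pairs $(x_i,y_i)$ are replaced by points whose joint law along $v$ is non-Gaussian, and is Gaussian in the orthogonal complement.
\end{itemize}
The one-dimensional law is chosen to match the low-degree moments of the clean pair distribution up to the order required by the hardness theorem of~\cite{diakonikolas2025sos}. At the same time it is chosen so that the OLS estimator computed on the \emph{clean} part differs from that on the full data by roughly $\alpha$ in direction $v$.

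In the mean-estimation proof the corruption is a hidden shift of the empirical mean. Here I would instead exploit the covariance between $\langle v,x\rangle$ and the residual $y-\langle w,x\rangle$: the corrupted points carry an $O(\eta)$ bias in $X^\top y$ along $v$. This bias survives in $w_{\rm OLS}$ because $XX^\top \approx n\,\Id$. Its magnitude is set by the sample-level fluctuations of the residual covariance, which is where the $(d/n)^{1/4}$ scaling should enter.

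\textbf{Step 4: estimation implies distinguishing.} A robust estimator achieving error $\ll\alpha$ would yield a polynomial-time test, with $O(1)$ extra degree, for the hidden direction. The test compares the robust estimate with the naive $w_{\rm OLS}$ on the observed data and checks whether their difference has norm $\gtrsim\alpha$. In the null case the corrupted and clean data coincide, so this difference is small. This contradicts low-degree hardness.

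\textbf{Main obstacle.} The hardest step will be the construction in Step 3. The one-dimensional corruption must simultaneously
\begin{itemize}
\item match enough moments to inherit hardness,
\item remain an $\eta$-fraction substitution rather than an additive mixture, and
\item move the clean OLS estimator by the full $\eta^{1/2}(d/n)^{1/4}$.
\end{itemize}
The regression setting has the extra complication that the $y$-marginal is itself Gaussian with variance depending on $\|w\|$. My first attempt would be to condition on $w$ and work with the residual $y-\langle w,x\rangle$. This reduces the joint $(\langle v,x\rangle, \text{residual})$ law to the two-dimensional non-Gaussian component setting that~\cite{diakonikolas2025sos} handles. I would then track carefully how the shift in $X^\top y$ propagates through $(XX^\top)^{-1}$.
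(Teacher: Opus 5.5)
Your outer architecture matches the paper's: group privacy (equivalently \cref{thm:privacy-robustness-reduction}) with $\eta\asymp\log(1/\beta)/(\epsilon n)$, then a test comparing the robust output with the plug-in statistic $w_{\rm simp}=(n+\sigma^{-2})^{-1}Xy$ on the observed data, then the substitution turning the robust threshold $n\asymp d\eta^2/\alpha^4$ into the stated private one. Your Step~2 arithmetic is correct.

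The gap is Step~3. The hard pair of distributions is the whole substance of the lower bound, you leave it unconstructed, and the concrete choices you do indicate do not work.

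First, hardness up to $n\asymp d\eta^2/\alpha^4$ requires the planted law to match the null in its first moment along the hidden direction. The paper arranges $\E[\langle x,u\rangle\mid y]=0$, i.e.\ $\psi_1\equiv 0$. Otherwise the degree-$4$ statistic $\|\tfrac1n\sum_i x_iy_i\|_2^2$ already detects an $\alpha$-bias once $n\gg\sqrt d/\alpha^2$. So the corruption cannot add a bias to $Xy$ on top of a null-like clean sample. Instead the \emph{clean} planted sample must carry the signal (regressor $\delta v$, $\delta\asymp\alpha$), and the corruption must cancel it.

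Second, once the clean regressor is nonzero, the clean $y$'s have variance $1+\delta^2$. A planted law in which an $\eta$-fraction of pairs is replaced by draws from some $A$ then has $y$-marginal $(1-\eta)\mathcal N(0,1+\delta^2)+\eta A_y$. This can never equal the $\mathcal N(0,1)$ marginal of your clean $w=0$ null, because $\varphi_{\sqrt{1+\delta^2}}/\varphi_1$ is unbounded, and $y$ is an observed coordinate, not a hidden one. Conditioning on $w$ and passing to residuals does not fix this. The map $(x,y)\mapsto(x,y-\langle w,x\rangle)$ depends on the hidden direction, so it cannot be applied to a fixed null, and hardness must be proved for the observed law of $(x,y)$ against one fixed null.

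The missing idea is the paper's variance-matched pair (\cref{def:variance-matched-xcorruption-test}).
\begin{itemize}
\item \textbf{Null.} Take $x\sim\mathcal N(0,\Id_d)$ independent of $y\sim\mathcal N(0,s^2)$, with $s^2=1+K^2\alpha^2$. This is not clean data. However, $\varphi_s\ge(1-\eta)\varphi_1$ when $K^2\alpha^2\lesssim\eta$, so it is an $\eta$-corruption of clean $w^\star=0$ data that resamples some $y_i$ independently of $x_i$. Hence the robust guarantee applies, and $w_{\rm simp}$ moves by only $O(\sqrt{\eta(d+\log(1/\beta))/n})$. This, not ``clean and corrupted data coincide,'' is the null half of your Step~4, and it costs the side condition $n\gtrsim\eta(d+\log(1/\beta))/\alpha^2$.
\item \textbf{Planted.} Take clean data with $w^\star=K\alpha u$, and on an $\eta$-fraction replace $x_i$ by $x_i+ay_iu$ with $a=-K\alpha/(\eta s^2)$. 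This keeps the $y$-marginal exactly $\mathcal N(0,s^2)$, makes $\E[\langle x,u\rangle\mid y]=0$, and shifts $w_{\rm simp}$ by $\approx-\nu K\alpha u$ with $\nu=\Theta(1)$.
\end{itemize}
After the fixed whitening $(x,y)\mapsto(x,y/s)$, the one-sample likelihood ratio is $1+\sum_{k\ge2}h_k(\langle x,u\rangle)\psi_k(y)$ with $\|\psi_k\|^2\le\eta^2(C\alpha^2/\eta^2)^k$. A direct low-degree computation then gives $\Adv_{\le D}=O(1)$ for $n\le c\,d\eta^2/(D^2\alpha^4)$. This cannot be a black-box use of \cref{lem:general-ldlr}, because $y$ is an un-hidden coordinate. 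So the $(d/n)^{1/4}$ scaling comes from the degree-$2$ coefficient $\|\psi_2\|^2\asymp\alpha^4/\eta^2$ in this calculation, not from fluctuations setting the bias; the bias size $K\alpha$ is a free design parameter.
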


\noindent When $\beta = \exp(-\Theta(n))$, this shows that the rate achieved by the computationally efficient algorithm in \cref{thm:regression_upper_informal} is optimal up to constant factors.

\subsection{Related work}
\label{sec:relatedwork}

\paragraph{Differential private Bayesian estimation.} The literature here is extensive and we defer to, e.g., \cite[Section 7.2.2]{sarwate2024machine} for a comprehensive overview. Results in this direction fall under one of the following general themes: (A) ``One Posterior Sample'' methods that reason about the inherent privacy of a single sample from the posterior or temperings thereof, (B) private implementations of standard Bayesian methods, and (C) algorithms based on reporting noisy sufficient statistics. Works under (A) include~\cite{mir2013differential,dimitrakakis2017differential,wang2015privacy,geumlek2017renyi,zhang2016differential,zheng2015bayes}, and a common proof strategy therein is to argue that the log-likelihood $\ln p(x \mid \theta)$ is Lipschitz in $x$ with respect to dataset distance\--- in our setting, such a condition simply does not hold. Works under (B) include privatizations of MCMC~\cite{wang2015privacy,foulds2016theory, heikkila2019differentially, zhang2023dp}, variational inference \cite{jalko2017differentially,honkela2018efficient,park2020variational}, and noisy Bayesian updates in graphical model inference. In addition to lacking concrete utility guarantees, another drawback of these methods is that they incur per-iterate privacy costs that would cripple naive approaches targeting pure DP. We also note the numerous works on \emph{convergent privacy} of Langevin Monte Carlo, see e.g.,~\cite{bassily2014revisited,chourasia2021differential,altschuler2024privacy,altschuler2022privacy,bok2024shifted}, which are motivated by optimization rather than inference. Works under (C) include \cite{foulds2016theory} and the works of\cite{bernstein2018differentially,bernstein2019differentially,ferrando2024private} on inference in exponential families and Bayesian linear regression which are based on noisy sufficient statistics but do not come with provable utility guarantees. These works are also closely related to those on \emph{Bayesian inference on privatized data}, where the privacy mechanism is folded into the likelihood $p(x\mid \theta)$, see e.g., \cite{xiao2012bayesian,williams2010probabilistic,ju2022data}.
Finally, we note that there are various works that connect Bayesian statistics to \emph{definitions} of privacy like \cite{kasiviswanathan2014semantics,triastcyn2020bayesian}, though these are unrelated to our work.

\paragraph{Sum-of-squares for robustness and privacy.} The literature on using sum-of-squares for high-dimensional statistical estimation is too vast to do justice here. 
Instead, we focus on the relevant works that use sum-of-squares to connect robust estimation to differential privacy. The first sum-of-squares work to exploit this connection was that of~\cite{hopkins2022efficient}, which was subsequently extended by~\cite{hopkins2023robustness} to give the formal framework that our work builds upon. 
Those works leveraged this connection to obtain the first computationally efficient algorithms for pure DP mean estimation with optimal sample complexity. 
Other works which have exploited this tie include~\cite{anderson2025sample,private-sbm-and-mixtures}. We also mention the numerous works on robust mean estimation~\cite{hopkins2018mixture,kothari2017outlier,kothari2022polynomial} and robust regression~\cite{klivans2018efficient,bakshi2021robust,chen2022online,anderson2025sample} using sum-of-squares. 

On the robust mean estimation side, earlier sum-of-squares works such as~\cite{hopkins2018mixture,kothari2017outlier}
developed general relaxations for robust moment estimation over certifiably subgaussian / hypercontractive families, but,
when specialized to Gaussians, the resulting analysis did not recover the optimal dependence on the corruption fraction.
The contribution of~\cite{kothari2022polynomial} is to close this gap: they show that for Gaussians, essentially the same
canonical relaxation already attains the optimal $\widetilde{O}(\eta)$ error in polynomial time, via a new analysis that can exploit Gaussian moment lower bounds without requiring matching sum-of-squares certificates. 
Our estimator for robust
\emph{empirical} mean estimation in \cref{sec:apply_kmz} is a careful adaptation of this analysis, and \cref{sec:isotropic-estimation-lemma} provides the key
lemma that we use.
On the robust regression side, the closest sum-of-squares works to ours are~\cite{klivans2018efficient,bakshi2021robust}.
The former gives guarantees under certifiable $4$-hypercontractivity, while the latter extends this to certifiable
$k$-hypercontractivity and, under independent or suitably negatively correlated noise, obtains excess prediction error of
order $\widetilde{O}(\eta^{2-2/k})$, which in the isotropic Gaussian setting corresponds to parameter error
$\widetilde{O}(\eta^{1-1/k})$. These guarantees are nevertheless not directly sufficient for our setting. First, we need to robustly approximate the data-dependent OLS / posterior mean
of the uncorrupted samples. 
Second, constant degree moment assumptions give suboptimal $\eta$-dependence for us: for
example, degree $4$ leads only to an $\eta^{3/4}$-type parameter guarantee, while taking larger $k$ to improve the
$\eta$-dependence worsens the dimension dependence, since one must certify correspondingly higher moments from samples.
In particular, already degree $4$ moment certification incurs at least quadratic sample complexity in the dimension for
Gaussian design. 

For a far broader overview of modern developments in robust statistics, we refer the reader to the survey of~\cite{Diakonikolas_Kane_2023}.

\paragraph{Robust inference.} While the literature on robust \emph{Bayesian} statistics is far sparser in comparison, we note the recent work of~\cite{hopkins2024adversarially} on robust Bayesian inference for the broadcasting on trees problem and the work of~\cite{chen2022kalman} on robust Kalman filtering. Some other works have considered Bayesian-adjacent problems in a robustness context, for instance the work of~\cite{Liu2022Minimax} on achieving minimax rates for robust community detection, and the works of~\cite{ivkov2024semidefinite,ivkov2025fast} on robust spin glass optimization.

\paragraph{Comparison of frequentist application to prior work} Previous work on private estimation was roughly divided into two camps. one camp \cite{what-can-we-learn-privately, karwa-vadhan, privately-learning-high-dimiensional-distributions, hopkins2023robustness} takes a learning-theory perspective: samples drawn from some distribution, and the estimator aims to learn some property of the distribution. The other camp \cite{dwork2006calibrating, blr-smalldb, geometry-of-dp} aims to privately estimate some data-dependent statistic of arbitrary samples. \Cref{thm:mean_estimation_upper_informal_frequentist} demonstrates that the stronger distributional assumptions of the former camp can be fruitfully applied even toward the objective of the latter: without any distributional assumption on the data the best we could do would be a Laplace mechanism to get $n = \Theta(\frac{d R}{\varepsilon n})$
which is much worse for small $\Lambda$ or large $R$ than what the distributional assumption of Gaussianity allows.

\paragraph{Previous computational lower bounds in private estimation} Previous works \cite{complexity-of-dp-data-release, pcps-synthetic-data-hardness} have used strong cryptographic assumptions to obtain lower bounds for differential privacy. However, encoding a cryptographic problem requires much more structure than we have here: these hardness results generally apply only for synthetic data generation. \cite{non-interactive-dp-from-owf} applies beyond synthetic data, but requires an exponentially large query class. In contrast, mean estimation under Gaussian noise is a very simple and concrete statistical task with very little structure.

\section{Overview of Techniques}
\label{sec:overview}

\subsection{Privacy and robustness for data-dependent statistics}

Our first conceptual contribution is simple, yet novel and perhaps widely applicable to many different problems in robust and private statistics. Traditionally, papers in our line of work have focused on parameter estimation: a parameter is chosen unknown to the algorithm, data is generated i.i.d. from a distribution indexed by the parameter, and the algorithm must estimate the parameter given the data. This is the most common setting in statistics, since it encapsulates the fundamental statistical task of guessing based on noisy measurements.

Instead of estimating a property of the model, we look to estimate merely a property of the samples. From a statistical estimation point of view, this makes the problem completely trivial. However, with the imposition of robustness or privacy, it becomes even more difficult than the parameter estimation setting: no longer able to hide analytic slack behind a dominant statistical term, we must characterize and confront the cost of robustness or privacy alone.

Our main observation is that many of the techniques developed for a parameter-estimation setting are readily adaptable to a data-estimation setting, where more careful analysis gives them tighter guarantees. In particular, the ``robustness implies privacy'' result of \cite{hopkins2023robustness} still holds even when the statistics are properties of the data rather than of the model, as long as our robust estimators are ``robustly'' estimating the data-dependent property of the uncorrupted samples rather than of the underlying parameter of the model. We develop this perspective in detail in \cref{sec:robust_to_privacy_mean} for mean estimation, and \cref{sec:robust_to_private_LR} and \cref{sec:privatize} for linear regression.

\subsection{Bayesian mean estimation}

\paragraph{Reduction to robust empirical mean estimation.}

In this overview we focus on the isotropic prior case as it contains the bulk of the technical intuition. In this setting, as discussed in the introduction, the posterior mean is just a rescaling of the \emph{empirical mean}. To privately estimate this object, we will draw upon the powerful connection between privacy and robust statistics developed by~\cite{hopkins2022efficient,hopkins2023robustness}. This machinery, which we carefully adapt in \cref{sec:robust_to_privacy_mean} to our setting, distills the essence of our problem to a rather idiosyncratic question in robust statistics: given an $\eta$-corruption of a Gaussian dataset with unknown parameter mean, estimate the \emph{empirical mean} of the original uncorrupted dataset. At first glance this might appear to be paradoxical, as the empirical mean is a horribly non-robust estimator of the parameter mean. The reason there is no contradiction is that we are interested in the empirical mean of the \emph{uncorrupted} data, which by concentration of measure is in any case close to the parameter mean.

The crucial point however is that we do not want to pay in this concentration error, as it can be of the same order as the Bayes-optimal error in the original Bayesian estimation problem! Instead, in our setting it is essential that we specifically target the empirical mean in order to obtain excess risk bounds that are of lower order relative to the Bayes-optimal error.

\paragraph{Algorithm for robust empirical mean estimation.}

It turns out that in order to achieve the guarantee in \cref{thm:regression_upper_informal} by plugging into the robustness-to-privacy machinery of~\cite{hopkins2023robustness}, we must robustly estimate the empirical mean to error $O(\eta\sqrt{\log 1/\eta} + \sqrt{\eta\sqrt{d/n}})$. This is consistent with rates from prior work on robust mean estimation which showed that the \emph{parameter mean} can be estimated to error $\tilde{O}(\eta)$ in $n = O(d/\eta^2)$ samples, but in our setting we want a bound which gracefully degrades as one goes below $O(d/\eta^2)$ samples, and in particular which still tends to zero as $\eta \to 0$.

Fortunately, for mean estimation, existing work on frequentist Gaussian mean estimation can be adapted to give such an error bound. In \cref{sec:apply_kmz}, we describe how to lift ideas from a recent work of~\cite{kothari2022polynomial} giving a polynomial-time algorithm within sum-of-squares for robustly estimating the mean of a Gaussian. At a high level, the sum-of-squares program searches for a large subset of the data with bounded covariance, rather than bounded higher moments. The strategy therein is to reproduce the resilience-based approach of~\cite{diakonikolas2019robust} within sum-of-squares while sidestepping the fact that resilience is likely not efficiently certifiable. We show how to carefully account for the error terms that appear in the analysis of~\cite{kothari2022polynomial} and deduce that the same analysis actually gives an algorithm for robust \emph{empirical} mean estimation. As the technical tools already appeared in prior work, we do not dwell upon them here, deferring a formal treatment to \cref{sec:apply_kmz} and \cref{sec:isotropic-estimation-lemma}.

\paragraph{Information-computation gap for mean estimation.}

Surprisingly, the rate achieved by our sum-of-squares algorithm for robust empirical mean estimation is worse than the statistically optimal rate. Indeed, in \cref{sec:ineff-priv-iso-emp} we give a computationally inefficient algorithm that robustly estimates the empirical mean to error $O(\eta\sqrt{\log 1/\eta} + \sqrt{\eta}\sqrt{d/n})$. Roughly, the latter algorithm brute force searches for a set of points $(y_1,\ldots,y_n)$ which disagree with the corrupted dataset on at most an $\eta$ fraction of points and which, crucially, satisfies the property that the average of any $\eta$ fraction of the $y_i$'s is sufficiently close to the empirical mean $\bar{y}$ of the $y_i$'s. That is, for all $S\subseteq[n]$ of size $\eta n$, we insist that $\frac{1}{\eta n}\sum_{i\in S}y_i$ is close to $\bar{y}$. The maximum distance across all such $S$ is the so-called \emph{maximum $\eta$-sparse singular value} of the matrix $M$ whose columns consist of $\frac{1}{\eta n}(y_i - \bar{y})$. Whereas the maximum singular value of $M$ is $\sup_{v\in\mathbb{S}^{n-1}} \norm{Mv}$, the maximum $\eta$-sparse singular value is defined to be $\sup_{v\in \mathbb{S}^{n-1}: \norm{v}_0 = \eta n} \norm{Mv}$.

A simple union bound shows that if we took the $y_i$'s to be the true, uncorrupted Gaussian samples, then the maximum $\eta$-sparse singular value is at most $O(\sqrt{\eta}\sqrt{d/n})$, which ultimately gives rise to the statistically optimal rate. Unfortunately, it is far from clear how to computationally efficiently search for such $y_i$'s with this property. This turns out to be for good reason: it was shown in~\cite{diakonikolas2025sos} that the maximum sparse singular value is computationally hard to certify optimally.

While robust empirical mean estimation is not formally equivalent to sparse singular value certification, we are nevertheless able to use a similar lower bound construction as the one in~\cite{diakonikolas2025sos} to argue that the achieving the optimal rate for former problem is also computationally hard. In \cref{sec:robustmeanhard}, we provide evidence under the \emph{low-degree framework} that any computationally efficient algorithm for estimating the empirical mean to error $\alpha$ under corruption fraction $\eta$ requires $n \ge \Omega\left(\frac{d \eta^2}{\alpha^4}\right)$ samples.

At a high level, the hardness result proceeds as follows. We construct an imbalanced mixture of two Gaussians, where one of the components has only $\eta$ mass (\cref{def:eta-gaussian-mixtures}), and the other component has mean shifted $\Theta(\alpha)$ away from the origin in an unknown, random direction. By standard calculations, we show this ensemble is hard for any low-degree algorithm to distinguish from standard Gaussian using $o(d\eta^2/\alpha^4)$ samples. Importantly, this implies that no such algorithm would be able to distinguish whether their data was uncorrupted, i.i.d draws from standard Gaussian, versus whether it was $\eta$-corrupted and sampled from a Gaussian with nonzero mean. We then show that this hard problem can be reduced to robust empirical mean estimation. Indeed, to tell apart the two ensembles, we can simply compare the distance between the empirical mean and the output of the robust estimator. This reduction turns out to be strong enough to show the desired tight information-computation tradeoff for robust empirical mean estimation.

Some additional work is required to deduce a lower bound in our original Bayesian setting as the prior over the mean parameter for which the above reduction shows hardness is a mixture between $0$ and the uniform distribution over vectors of length $\Theta(\alpha)$ over vectors; we defer the details to the proof of \cref{thm:comp-lower-bound_mean} at the end of \cref{sec:privatemeanhard}. The hardness for \emph{private} Bayesian mean estimation then follows from the privacy-to-robustness reduction of~\cite{Georgiev2022PrivacyIR}.

\subsection{Bayesian linear regression}

In Bayesian linear regression, the posterior mean that we wish to estimate is given by $w_{\rm post} = (\frac{1}{\sigma^2}\Id + XX^\top)^{-1} Xy$, which is closely related to its frequentist cousin, the ordinary least squares (OLS) estimator $w_{\rm OLS} = (XX^\top)^{-1}Xy$. Just as the bulk of the difficulty with obtaining private algorithms for Bayesian mean estimation boiled down to the unconventional question of robust estimation of a simple empirical statistic of the data (the empirical mean), the crux of the analysis here is similarly centered around robustly estimating $w_{\rm OLS}$. That is, given an $\eta$-corrupted dataset $(\tilde{X},\tilde{y})$, we wish to estimate $w_{\rm OLS}$ for the original uncorrupted dataset $(X,y)$ as well as possible. Furthermore, because $XX^\top\approx n\Id$ in operator norm, the essence of our argument is really centered around estimating the vector $\frac{1}{n}Xy$.

While countless incarnations of such questions have been studied in the robust statistics literature, to our knowledge none achieve the fine-grained bound that we are aiming for. Indeed, in order to get the claimed private rate, for the corresponding robust problem it is essential to estimate $w_{\rm OLS}$ in the presence of $\eta$-corrupted data to error $\eta\sqrt{\log 1/\eta} + \sqrt{\eta\sqrt{d/n}}$, in direct analogy with the rate we targeted for robust empirical mean estimation previously. Note that this significantly ties our hands to look for algorithms that work even when the number of samples is only \emph{linear} in dimension. While it is known how to achieve error $\tilde{O}(\eta)$ using $O(d)$ samples using non-SoS methods, even as early as the work of~\cite{diakonikolas2019efficient}, it is essential in our setting to use sum-of-squares in order to draw upon the robustness-to-privacy machinery of~\cite{hopkins2023robustness}. Unfortunately, all known sum-of-squares results on robust regression which achieve error $\tilde{O}(\eta)$ at least require certifying fourth moments of the covariates, which requires $O(d^2)$ sample complexity. This is the core technical challenge, and below we describe a novel workaround.

\paragraph{Computationally efficient algorithm using short-flat decompositions.}

Note that if $\sigma \ll 1/\sqrt{n}$, then the prior is sufficiently strong that the Bayes-optimal error can be achieved simply by ignoring the data and sampling from the prior. In contrast, when $\sigma \gg 1/\sqrt{n}$ (``large variance''), the prior is sufficiently weak that it can be safely ignored, and the private Bayesian problem reduces to private OLS. The large variance regime is actually the most technically involved part of our linear regression results, requiring solving a sequence of three sum-of-squares programs that build upon each other, and we defer a discussion of this to later in the paper (\cref{sec:weakprioroverview}).

In this overview, we focus on the ``proportional variance'' regime where $\sigma \asymp 1/\sqrt{n}$, in which we can ignore neither the prior nor the data and must use both in tandem to achieve the Bayes-optimal error rate.  As discussed above, the key challenge is to devise a \emph{robust} estimator for $\frac{1}{n} Xy$. 

For this, we will employ a sum-of-squares algorithm that builds on prior approaches with one essential twist. First, as is typical in this literature, we search for a $(1 - \eta)$-fraction of the data which satisfies structural features that the true uncorrupted dataset satisfies. In our setting, one of these structural features is standard, namely enforcing that the empirical covariance of the selected points is close to the population covariance. More precisely, we search for a $n\times d$ matrix $\hat{X}$ which agrees with the corrupted dataset's $\tilde{X}$ on a $(1-\eta)$-fraction of the rows and satisfies $\frac{1}{n}\hat{X}\hat{X}^\top \approx \Id$.

The key innovation in our work is that we additionally enforce a non-standard notion of \emph{resilience in the $y$ values} that we select. More precisely, we search for a vector $\hat{y}$ which agrees with the corrupted dataset's $\tilde{y}$ values on a $(1-\eta)$-fraction and which moreover admits an additive decomposition
\begin{equation*}
    \hat{y} = y' + y''
\end{equation*}
into:
\begin{itemize}
    \item A vector $y'$ with $L_2$ norm $O(\sqrt{\eta n\log(1/\eta)})$
    \item A vector $y''$ with $L_\infty$ norm $O(\sqrt{\log 1/\eta})$.
\end{itemize}
We first explain why we expect such a $\hat{y}$ and decomposition thereof to exist, and then why this is useful for our algorithm. 

If the $\hat{y}$ actually consisted of all labels from the \emph{uncorrupted} dataset, then each is an independent scalar Gaussian with $O(1)$ variance. By standard order statistics bounds, there is at most an $\eta$ fraction of ``outliers'' with magnitude exceeding $O(\sqrt{\log 1/\eta})$ (\cref{thm:gaussian-order-stat-sharp}). The $(1 - \eta)$-fraction of remaining entries can thus be taken to be $y''$. Furthermore, it is possible to show that the sum of squares of the $\eta$ fraction of outliers is bounded by $O(\eta n \log(1/\eta))$ (\cref{thm:maximum-subset-sum-gaussian-correct}). We note that the decomposition $\hat{y} = y' + y''$ is loosely related to the notion of a \emph{short-flat decomposition}, introduced by~\cite{kelner2023semi} in a very different context (for fast semi-random matrix completion), but to our knowledge our work is the first to leverage such a notion within the sum-of-squares framework.

Finally, let us see what this short-flat decomposition buys us. The ultimate goal is to argue that $\frac{1}{n}\hat{X}\hat{y}$, that is, the OLS estimator computed with the objects $\hat{X}$ and $\hat{y}$ that we have searched for via sum-of-squares, is close to the OLS estimator $\frac{1}{n}Xy$ that uses the true uncorrupted data. $\hat{X}$ and $\hat{y}$ were already chosen to agree with the true $X$ and $y$ on all but $O(\eta n)$ points, so the discrepancy between $\frac{1}{n}\hat{X}\hat{y}$ and $\frac{1}{n}Xy$ really only comes from the small set of $O(\eta n)$ data points which were either corrupted or which our sum-of-squares program identified as corrupted.

Denote this small set of ``bad points'' by $S$, and consider its contribution to $\frac{1}{n}\hat{X}\hat{y}$, namely 
\begin{equation}
    \frac{1}{n}\sum_{i\in S} \hat{y}_i \hat{X}_i\,.
\end{equation}
We want to show that this contribution is small, i.e., that its inner product with any unit direction $u$ is small. By Cauchy-Schwarz, this can be bounded via
\begin{equation}
    \biggl|\frac{1}{n}\sum_{i\in S} \hat{y}_i \langle \hat{X}_i, u\rangle\biggr| \le \biggl(\frac{1}{n}\sum_{i\in S} \hat{y}_i^2\biggr)^{1/2} \cdot \biggl(\frac{1}{n}\sum_{i\in S} \langle \hat{X}_i, u\rangle^2 \biggr)^{1/2}\,.
\end{equation}
The second term on the right-hand side can be bounded using standard tools (\cref{lem:sos-subset-sum}) recently introduced in~\cite{anderson2025sample}  which allow us to control the covariance of any $O(\eta n)$-sized subset of the $\hat{X}_i$'s. The main innovation in our proof is instead in how to handle the first term on the right-hand side. This is where we exploit the fact that $\hat{y}$ admits a short-flat decomposition. Indeed, we have
\begin{equation}
    \frac{1}{n}\sum_{i\in S} \hat{y}_i^2 = \frac{1}{n}\sum_{i\in S} (y'_i + y''_i)^2 \le \frac{2}{n}\sum_{i\in S} y'^2_i + \frac{2}{n}\sum_{i\in S} y''^2_i \le \frac{2}{n}\norm{y'}^2_2 + 2\eta \norm{y''}^2_\infty \le \tilde{O}(\eta)\,,
\end{equation}
which turns out to sufficiently small for our purposes.

\paragraph{Information-computation gap for linear regression.}

Just as with mean estimation, for linear regression there is a computationally inefficient algorithm which achieves strictly superior rate compared to our sum-of-squares estimator (\cref{sec:inefficientreg}). The algorithm proceeds similarly to the one for mean estimation, involving a brute force search for a certain matrix with small maximum sparse singular value (\cref{algo:completion-mean}).

In \cref{sec:private-lr-lb} we then give low-degree evidence that the gap between our polynomial-time
estimator and the statistically optimal one is inherent. The main challenge compared to the mean estimation setting is that it is far less clear how to design an appropriate pair of ensembles which are hard to distinguish from each other. In particular, while there is a natural choice for one of the ensembles --- simply take the dataset of $(x,y)$'s to be independent standard Gaussians --- it is unclear how to design the other ensemble so as to be (1) computationally indistinguishable from the former ensemble, and (2) behave differently than the former given an oracle for robust linear regression. For the former condition, similar to the lower bound instance in the mean estimation setting, we want the joint distribution over $(x,y)$'s to look indistinguishable from fully Gaussian except in a single hidden direction. It turns out that the correct choice is to take an imbalanced \emph{mixture of linear regressions}, where the regression vectors are scalings of an unknown, random direction (\cref{def:variance-matched-xcorruption-test}). 

By a more involved calculation than in the mean estimation setting, these two ensembles can be shown to be hard to distinguish by any low-degree estimator with  $n = o(d\eta^2/\alpha^{4})$ samples (\cref{lem:ldlr-xcorruption}). By a similar argument as in the mean estimation case, this hard distinguishing task can then be reduced to robust Bayesian linear regression (\cref{lem:reduction-xcorruption}), by comparing the output of the robust estimator to the output of the non-robust posterior mean. By the privacy-to-robustness reduction of~\cite{Georgiev2022PrivacyIR}, we then obtain the claimed hardness result for private Bayesian regression. (\cref{thm:privacy-lower-bound-regression}).

%% file: content/preliminaries.tex
\section{Preliminaries}

\Jnote{Put the definition of the $k$-sparse norms}
\subsection{Bayesian estimation under robustness and privacy}
\label{sec:definition}

We begin by reviewing the standard setting of (non-robust, non-private) Bayesian estimation. Given a parameter $\theta$ drawn from some prior distribution $\mathcal P$, and given $n$ samples drawn from i.i.d. from a distribution $\mathcal D_\theta$ parameterized by $\theta$, \emph{Bayes estimation} aims to output an estimate $\hat \theta$ with minimal \emph{Bayes risk}, defined as (for some loss function $\ell$)
\[
    \mathop{\mathbb E}_{\theta \sim \mathcal P} \mathop{\mathbb E}_{x_1,...,x_n \sim \mathcal D_\theta} [\ell(\hat \theta(x_1,...,x_n), \theta)]
\]
In the case of mean estimation (where $\mathcal D_\theta := \mathcal N(\theta, \Id_d)$, for example), it is typical to take the loss function to be the the squared-error
\[
    \ell_\text{mse} := \|\hat \theta - \theta\|_2^2\,,
\]
in which case the Bayes risk is called the \emph{mean-squared error}, denoted as $\mathrm{MSE}(\hat{\theta})$. Denoting by $\mathbb E[\theta | x]$ the posterior mean of $\theta$ given the samples, we see that 
\begin{align}%
    \mathrm{MSE}(\hat{\theta}) &= \mathop{\mathbb E}_{\theta \sim \mathcal P} \mathop{\mathbb E}_{x_1,...,x_n \sim \mathcal D_\theta} [\|\hat \theta - \mathbb E[\theta | x] + \mathbb E[\theta | x] - \theta\|_2^2] \nonumber\\
    &= \mathbb E [\norm{\hat{\theta}(x)-\E[\theta|x]}_2^2]+\mathbb E[\norm{\E[\theta|x]-\theta}^2] +2\mathbb E [(\hat{\theta}(x)-\E[\theta|x])(\E[\theta|x]-\theta)] \nonumber\\
    &= \mathbb E [\norm{\hat{\theta}(x)-\E[\theta|x]}_2^2]+\mathbb E[\norm{\E[\theta|x]-\theta}^2] \label{fact:error-decomposition}
\end{align}
where the last equality is because $\E[(\E[\theta|x]-\theta) \mid x]=0$; so, the MSE is minimized by letting $\hat \theta$ be the mean of the posterior distribution given the samples one has seen. One justification for the choice of MSE as a loss function is that
\[
    \mathop{\mathbb E}_{z \sim \mathcal D_\theta} \|\hat \theta - z \|_2^2 = \mathop{\mathbb E}_{z \sim \mathcal D_\theta} \|\hat \theta - \theta \|_2^2 + \text{Var}_\theta(z)
\]
so the estimator that minimizes MSE also minimizes expected squared-error on a fresh sample. The ``fresh-sample perspective'' on Bayes risk motivates our choice of loss in the linear regression setting: if $\mathcal D_\theta$ is $(x,y)$ where $x \sim \mathcal N(0, \Id)$ and $y \sim  \mathcal N(\theta^\top x,\varsigma^2)$, then we have
\begin{align*}
    \mathbb E_{(x,y) \sim \mathcal D_\theta} \|\hat \theta^\top X - y\|_2^2 &= \varsigma^2 + \mathbb E_{(x,y) \sim \mathcal D_\theta} \|\hat \theta^\top x - \theta^\top x\|_2^2 \\
    &= \varsigma^2 + \mathbb E_{(x,y) \sim \mathcal D_\theta} \|\hat \theta - \theta\|_2^2
\end{align*}
and so if we would like to have a linear model that is near-optimally predictive given fresh data, we care about estimating the regression vector to $\ell_2$-norm.

Note importantly that these accuracy guarantees are average-case in nature. In this work however, we will consider other constraints like privacy and robustness; the former is entirely a worst-case guarantee, while the latter is a hybrid between average- and worst-case. We formally define these desiderata next.

\paragraph{Differential privacy.}

We will work in the standard setting of ``pure'' $\varepsilon$-DP originally proposed by \cite{dmns06}. 
\begin{definition}[Pure differential privacy]
    We call two possible datasets $X, X' \in \R^{n\times m}$ \emph{adjacent} if $X'$ can be obtained by changing a single row in $X$. 
    A randomized mechanism $\mathcal M$ is said to be \emph{$\varepsilon$-differentially private} if, for all adjacent $X, X'$ and all measurable events $E$:
    \[
        \mathbb P[\mathcal M(X) \in E] \le e^\varepsilon  \cdot \mathbb P[\mathcal M(X) \in E]\,.
    \]
\end{definition}

\noindent In the context of private statistical estimation, the mechanism $\mathcal{M}$ is the estimator $\hat{\theta}(x_1,\ldots,x_n)$, and the dataset $X$ has rows consisting of the samples $x_1,\ldots,x_n$. Importantly, privacy is a constraint that the learning algorithm must respect for \emph{all} datasets $X$, not just ``in-distribution'' datasets sampled from $\mathcal{D}_\theta$.

\paragraph{Robustness to adaptive corruptions.}

Whenever we say robustness in this paper, we will mean robustness in the \emph{strong contamination model}~\cite{diakonikolas2019robust,lai2016agnostic}:

\begin{definition}[Strong contamination model]
    Let $0 \le \eta < 1/2$. Let $\mathcal{D}_{\theta}$ be some parametrized distribution, and let $x_1,\ldots,\tilde{x}_n$ denote i.i.d. draws from $\mathcal{D}_\theta$. In the \emph{strong contamination model}, an adversary sees all of these samples and can select any $\eta n$ of them and replace with arbitrary points. Let $(\tilde{x}_1,\ldots,\tilde{x}_n)$ denote the corrupted dataset.
\end{definition}

\noindent Previous works that studied this model focused on estimating the parameter $\theta$ defining the distribution generating the uncorrupted data, given access to the corrupted data. In this work, we instead consider estimating a \emph{data-dependent} statistic of the samples (namely the posterior mean).

\begin{definition}[Robust estimation of data dependent statistics]
    Let $0 \le \eta < 1/2$. Let $\mathcal{D}_\theta$ be some parametrized distribution over $\R^d$; let $(x_1,\ldots,x_n)$ denote a dataset of i.i.d. draws from $\mathcal{D}_\theta$, and let $(\tilde{x}_1,\ldots,\tilde{x}_n)$ denote an $\eta$-corruption thereof under the strong contamination model. Let $f: (\R^d)^n\to \R^m$ denote some data-dependent statistic. We say that a learning algorithm $\mathcal{M}$ robustly estimates $f$ to error $\alpha$ with probability $1 - \beta$ if $\Pr[\norm{\mathcal{M}(\tilde{x}_1,\ldots,\tilde{x}_n) - f(x_1,\ldots,x_n)}_2 \le \alpha] \ge 1 - \beta$.
\end{definition}

\noindent Note that robustness is an average-case notion in the sense that the uncorrupted dataset $(x_1,\ldots,x_n)$ is assumed to come from the true distribution $\mathcal{D}_\theta$, but also a worst-case notion in the sense that the corruptions to the $\eta$ fraction of the dataset are arbitrary.

\subsection{Low-degree hardness}

In this work we will prove computational hardness results under the \emph{low-degree framework} introduced in~\cite{barak2019nearly,hopkins2017efficient,hopkins2017power}. Here we provide a brief overview of this framework, deferring a thorough treatment to the surveys of~\cite{kunisky2019notes,wein2025computational}. Roughly, this framework predicts that for ``well-behaved'' statistical estimation tasks, computationally hardness coincides with the failure of estimators that can be implemented using low-degree polynomials. For hypothesis testing problems, this is formalized as follows:

\begin{definition}[Degree-$D$ Advantage]\label{def:ldadv}
    Let $(\mathcal{P}_N)_{N\in\mathbb{N}}$ and $(\mathcal{Q}_N)_{N\in\mathbb{N}}$ be two sequences of distributions. Consider the sequence of hypothesis testing problems of distinguishing between whether a dataset $X$ was sampled from $\mathcal P_N$ or $\mathcal Q_N$. For a given $N$, the \emph{degree-$D$ advantage} for this problem is the quantity
    \[
        \mathrm{Adv}_{\le D, N} := \sup_{f \in \mathbb R[X]_{\le D}} \frac{\mathbb E_{X \sim \mathcal P_N}[f(X)]}{\sqrt{\mathbb E_{X \sim \mathcal Q_N}[f(X)^2]}}\,,
    \]
    where $\mathbb{R}[X]_{\le D}$ denotes the ring of polynomials in $X$ of total degree at most $D$. A standard calculation (see \cite[Proposition 1.15]{kunisky2019notes}) shows that if $L_{\le D, N}$ denotes the \emph{low-degree likelihood ratio} given by projecting the likelihood ratio $L_N \triangleq \frac{\mathrm{d}\mathcal{P}_N}{\mathrm{d}\mathcal{Q}_N}$ to the space of degree-$D$ polynomials with respect to $L_2(\mathcal{Q}_N)$, then
    \begin{equation}
        \mathrm{Adv}_{\le D, N} = \norm{L_{\le D, N}}_{L_2(\mathcal{Q}_N)}\,.
    \end{equation}
    We say that this sequence of hypothesis testing problems is \emph{low-degree hard} if for $D = \omega(\log N)$,
    \[
        \text{Adv}_{\le D, N} = O_N(1)
    \]
    When $N$ is clear from context, we omit it from the subscripts above. 
\end{definition}

\noindent While there is currently no formal conjecture that low-degree hardness implies computational hardness under a concrete set of conditions on the distributions $\mathcal{P}_N, \mathcal{Q}_N$,\footnote{See  \cite{Holmgren2020CounterexamplesTT,buhai2025quasi} for counterexamples to prior attempts at formulating such a conjecture.} it has been remarkably predictive of computational thresholds for average-case problems lacking algebraic structure and is generally consistent with a host of complementary forms of evidence for hardness (e.g., statistical query lower bounds~\cite{brennan2020statistical}, landscape analysis~\cite{bandeira2022franz}, sum-of-squares lower bounds~\cite{barak2014sum,hopkins2017power}, failure of message passing~\cite{montanari2025equivalence}). In cases where the low-degree framework incorrectly predicts computational hardness, there is generally algebraic~\cite{Holmgren2020CounterexamplesTT,buhai2025quasi} or lattice~\cite{diakonikolas2022non,zadik2022lattice} structure in the problem; given that this structure does not appear in the problems we consider, we believe low-degree hardness is a suitable proxy for computational hardness in our setting.

More precisely, we will establish computational hardness of private Bayesian estimation by exhibiting computationally efficient (but not necessarily low-degree) \emph{reductions} from low-degree hard problems to private Bayesian estimation.

Below, we collect some standard tools for proving low-degree lower bounds.

\input{content/prelims-ldlr}

\input{content/prelims-sos}

\subsection{Connections between robustness and privacy}

We will use the robustness to privacy framework \cite{hopkins2023robustness} to transform robust algorithms into private ones. We will use both the inefficient and efficient implementations of this reduction. 

\begin{lemma}[Inefficient Robustness to Privacy (Lemmas~2.1 and 2.2 in \cite{hopkins2023robustness})]
\label{lemma:inefficient-reduction-meta}
Let $\eta_0 < \eta^\star \in [0,1]$ be such that $\eta^\star n$ is a sufficiently large constant.
For every $\varepsilon,\delta>0$, there is an $(O(\varepsilon),\,O(e^{2\varepsilon}\delta))$-DP mechanism which,
for any $\theta^\ast \in \mathbb R^d$, takes $x_1,\ldots,x_n \sim p_{\theta^\ast}$ and, with probability $1-\beta$, outputs
$\theta$ such that $\|\theta-\theta^\ast\|\le 2\alpha(\eta_0)$, if
\[
n \;\ge\; O\Biggl(
\max_{\eta_0 \le \eta \le \eta^*}
\frac{ d \cdot \log\!\Big(\frac{2\alpha(\eta)}{\alpha(\eta_0)}\Big)
      + \log(1/\beta) + \log(\eta n)}{\eta\varepsilon}
\;+\;
\frac{\log(1/\delta)}{\eta^*\varepsilon}
\Biggr) \mper
\]
Additionally, there is a pure $\eps$-DP mechanism obtaining the same guarantees if 

\[
n \;\ge\; O\Biggl(\max_{\eta_0 \le \eta \le 1}
\frac{\,d\cdot \log\!\Big(\frac{2\alpha(\eta)}{\alpha(\eta_0)}\Big) + \log(1/\beta)}
{\eta\varepsilon}\Biggr) \mper
\]
\end{lemma}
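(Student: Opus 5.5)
This lemma is quoted from \cite{hopkins2023robustness}, so the plan is to reconstruct their argument: build a score function from the robust estimator, apply the exponential mechanism over a fine net of the parameter space, and use a volume argument for utility. For each candidate $\theta\in\mathbb R^d$, define the score
\[
s(\theta;X) \;=\; \min\{\, k : \exists X' \text{ differing from } X \text{ in at most } k \text{ rows with } \|\cA(X')-\theta\|\le \alpha(k/n)\,\}.
\]
Here $\cA$ is the (possibly inefficient) robust estimator, which succeeds at corruption level $\eta$ with error $\alpha(\eta)$. One can equivalently phrase this in terms of the robust guarantee holding for every $k$-corruption. Changing one row of $X$ changes $s$ by at most $1$, so $s$ has sensitivity $1$. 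The mechanism samples $\theta$ with density proportional to $\exp(-\varepsilon\, s(\theta;X))$, suitably truncated to a ball of radius $R$ (or with the score capped at $n$ so the measure is finite). Pure $2\varepsilon$-DP then follows from the standard exponential-mechanism argument, since both the numerator and the normalizer change by a factor of at most $e^{\varepsilon}$.

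For utility, fix the good event, which has probability $1-\beta$ over the samples: for every $\eta\in[\eta_0,1]$ and every $\eta$-corruption $X'$ of $X$, we have $\|\cA(X')-\theta^*\|\le \alpha(\eta)$. Two consequences follow.
\begin{itemize}
\item Every $\theta$ with $s(\theta)\le \eta n$ satisfies $\|\theta-\theta^*\|\le 2\alpha(\eta)$, by the triangle inequality through $X'$.
\item The ball of radius $\alpha(\eta_0)$ around $\cA(X)$ has score at most $\eta_0 n$.
\end{itemize}

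Partition the outputs by score level $k=\eta n$ for $\eta\ge\eta_0$. The mass of each level set is at most $\mathrm{vol}(B(\theta^*,2\alpha(\eta)))\,e^{-\varepsilon\eta n}$. The normalizer is at least $\mathrm{vol}(B(\alpha(\eta_0)))\,e^{-\varepsilon\eta_0 n}$. The ratio for level $\eta$ is therefore at most
\[
\Bigl(\frac{2\alpha(\eta)}{\alpha(\eta_0)}\Bigr)^d e^{-\varepsilon(\eta-\eta_0)n}.
\]
This is at most $\beta/n$ exactly when $n\gtrsim \bigl(d\log(2\alpha(\eta)/\alpha(\eta_0))+\log(n/\beta)\bigr)/(\eta\varepsilon)$, where I assume $\eta\ge 2\eta_0$. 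Summing over the at most $n$ levels gives a total failure probability of $O(\beta)$. The output therefore lies in a level with $\eta$ close to $\eta_0$, and hence within $2\alpha(\eta_0)$ (after adjusting constants), matching the stated pure-DP condition. The extra $\log(\eta n)$ can be absorbed into the $\log(1/\beta)$ term by summing a geometric series over levels instead of applying a union bound.

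For the approximate-DP version, I would restrict to $\eta\le\eta^*$. I would do this by setting the score to $+\infty$ (density $0$) whenever $s>\eta^* n$, and handling datasets where no point has small score with the standard ``propose-test-release''-style check. That check fails with probability $\delta$, and its cost is exactly the $\log(1/\delta)/(\eta^*\varepsilon)$ term. The $(O(\varepsilon),O(e^{2\varepsilon}\delta))$ guarantee comes from comparing the truncated densities on neighboring datasets: the truncated mass differs by at most $\delta$ once the normalizer is bounded below as above.

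I expect the main obstacle to be the utility step. There, the volume of the sublevel sets must be controlled uniformly over all $\eta$, and measurability and finiteness must be ensured when the parameter space is unbounded. I would handle these by restricting to a ball of radius $R$ containing $\theta^*$ and making the level-set bound depend only on the local volume.
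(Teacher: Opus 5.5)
The paper gives no proof of this lemma; it imports it from \cite{hopkins2023robustness}. Your pure-DP half follows their route and is fine up to small repairs. That route is an exponential mechanism over a ball of radius $R$, with a sensitivity-one ``points to change'' score, compared against a low-score ball of radius $\alpha(\eta_0)$.

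\textbf{Small repairs to the pure-DP half.} With your $k$-dependent threshold $\alpha(k/n)$, the ball $B(\cA(X),\alpha(\eta_0))$ only gets score $\eta_0 n$. So you can only rule out levels $\eta\ge 2\eta_0$, and you end with accuracy $2\alpha(2\eta_0)$, which is $O(\alpha(\eta_0))$ only when $\alpha$ is doubling. Using the fixed threshold $\alpha(\eta_0)$ in the score keeps sensitivity one, gives that ball score $0$, and yields exactly $2\alpha(\eta_0)$ under the stated condition. Capping the score at $n$ does not make the measure finite on $\mathbb R^d$, since the density is then at least $e^{-\varepsilon n}$ everywhere. Only the restriction to a ball of radius $R$ works, with $\alpha(\eta)\approx R$ near $\eta=1$. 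Finally, a geometric sum over unit levels still leaves a $\log(1/\varepsilon)$; to remove the extra logarithm, bound each dyadic block of levels by the volume at its top level.

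\textbf{The gap: the approximate-DP half.} Every dataset $\mathcal Y$ assigns score $0$ to $\cA(\mathcal Y)$; with the fixed threshold, it assigns score $0$ to all of $B(\cA(\mathcal Y),\alpha(\eta_0))$. So the lower bound on the normalizer $Z(\mathcal Y)$ holds for every dataset, and a check for datasets with no small-score point is vacuous.

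What fails on worst-case data is the other side. For neighbors $\mathcal Y,\mathcal Y'$, the mechanism truncated at level $T$ loses pure DP only on the boundary set $\{s_{\mathcal Y}\le T<s_{\mathcal Y'}\}$. Its probability is at most $e^{-\varepsilon T}\mathrm{vol}\{s_{\mathcal Y}\le T\}/Z(\mathcal Y)$. Your bound $\mathrm{vol}\{s\le T\}\le\mathrm{vol}(B(\theta^*,2\alpha(\eta^*)))$ came from robustness on the good event for clean samples. For an arbitrary $\mathcal Y$ this volume is uncontrolled: it is infinite on $\mathbb R^d$, or of order $R^d$ in a ball, which would reintroduce the $d\log(R/\alpha(\eta_0))$ term the approximate-DP bound avoids. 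So ``the truncated mass differs by at most $\delta$'' is unjustified.

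\textbf{The missing idea.} Run propose-test-release on the distance from the input to the set of \emph{unsafe} datasets, meaning those for which this boundary ratio exceeds $\delta$ for the dataset or one of its neighbors. Then use robustness to show that good clean data $\mathcal X$ is $\Omega(\eta^* n)$ away from them. If $\mathcal Y$ differs from $\mathcal X$ in $t$ points, then $|s_{\mathcal Y}-s_{\mathcal X}|\le t$, so $\{s_{\mathcal Y}\le T\}\subseteq\{s_{\mathcal X}\le T+t\}\subseteq B(\theta^*,2\alpha(\eta^*))$ as long as $T+t\le\eta^* n$.

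This forces the truncation to sit at a constant fraction of $\eta^* n$, rather than at $\eta^* n$ as you propose. Take, say, $T=\eta^* n/2$, allowing $t$ up to $\eta^* n/2$. The boundary ratio is then at most $e^{-\varepsilon\eta^* n/2}(2\alpha(\eta^*)/\alpha(\eta_0))^d\le\delta$ under the stated condition. The Laplace-noised distance test with threshold $\log(1/\delta)/\varepsilon$ is what produces the $\log(1/\delta)/(\eta^*\varepsilon)$ term and the $(O(\varepsilon),O(e^{2\varepsilon}\delta))$ guarantee.
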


\begin{theorem}[Efficient Robustness to Pure Differential Privacy (Theorem~4.1 in \cite{hopkins2023robustness})]
\label{thm:efficient-reduction-meta-pure}
Let \(0<\eta, r<1<R\) be fixed parameters. Suppose we have a score
function \(S(\theta,\mathcal{Y})\in[0,n]\) that takes as input a dataset
\(\mathcal{Y}=\{y_1,\ldots,y_n\}\) and a parameter
\(\theta\in\Theta\subset \mathbb{B}(\mathbb{R})^{d}\) (where \(\Theta\) is
convex and contained in a ball of radius \(R\)), with the following
properties:
\begin{itemize}
  \item \textbf{(Bounded Sensitivity)} For any two adjacent datasets
  \(\mathcal{Y},\mathcal{Y}'\) and any \(\theta\in\Theta\),
  \(\lvert S(\theta,\mathcal{Y})-S(\theta,\mathcal{Y}')\rvert \le 1\).

  \item \textbf{(Quasi-Convexity)} For any fixed dataset \(\mathcal{Y}\),
  any \(\theta,\theta'\in\Theta\), and any \(0\le \lambda\le 1\), we have
  \[
    S\big(\lambda\theta+(1-\lambda)\theta',\mathcal{Y}\big)
    \le \max\!\big(S(\theta,\mathcal{Y}),\,S(\theta',\mathcal{Y})\big).
  \]

  \item \textbf{(Efficiently Computable)} For any given \(\theta\in\Theta\)
  and dataset \(\mathcal{Y}\), we can compute \(S(\theta,\mathcal{Y})\) up to
  error \(\gamma\) in
  \(\operatorname{poly}\!\big(n,d,\log \tfrac{R}{r},\log \gamma^{-1}\big)\)
  time for any \(\gamma>0\).

  \item \textbf{(Robust algorithm finds low-scoring point)} For a given
  dataset \(\mathcal{Y}\), let
  \(T=\min_{\theta_0\in\Theta} S(\theta_0,\mathcal{Y})\).
  Then we can find some point \(\theta\) such that for all \(\theta'\)
  within distance \(r\) of \(\theta\), \(S(\theta',\mathcal{Y})\le T+1\), in
  time \(\operatorname{poly}\!\big(n,d,\log \tfrac{R}{r}\big)\).

  \item \textbf{(Volume)} For any given dataset \(\mathcal{Y}\) and
  \(\eta'\ge \eta\), let \(V_{\eta'}(\mathcal{Y})\) represent the
  \(d\)-dimensional Lebesgue volume of points \(\theta\in\Theta\subset\mathbb{R}^d\)
  with score at most \(\eta' n\). (Note that \(V_{1}(\mathcal{Y})\) is the
  full volume of \(\Theta\).)
\end{itemize}
Then, we have a pure \(\varepsilon\)-DP algorithm \(\mathcal{M}\) on datasets
of size \(n\), that runs in
\(\operatorname{poly}\!\big(n,d,\log \tfrac{R}{r}\big)\) time, with the
following property. For any dataset \(\mathcal{Y}\), if there exists
\(\theta\) with \(S(\theta,\mathcal{Y})\le \eta n\) and if
\[
  n \ge
  \Omega\Biggl(
  \max_{\eta':\,\eta\le \eta'\le 1}
  \frac{
    \log\bigl(\tfrac{V_{\eta'}(\mathcal{Y})}{V_{\eta}(\mathcal{Y})}\bigr)
    + \log\bigl(\tfrac{1}{\beta\cdot\eta}\bigr)
  }{\varepsilon\cdot \eta'}
  \Biggr)\,,
\]
then \(\mathcal{M}(\mathcal{Y})\) outputs some \(\theta\in\Theta\) of score
at most \(2\eta n\) with probability \(1-\beta\).
\end{theorem}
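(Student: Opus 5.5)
The plan is to implement an exponential mechanism with score $S(\cdot,\mathcal{Y})$ and then make it efficient using the convexity of its sublevel sets. The idealized mechanism outputs $\theta\in\Theta$ with density proportional to $\exp(-\varepsilon S(\theta,\mathcal{Y}))$. Bounded sensitivity changes the unnormalized density by a factor of at most $e^{\varepsilon}$ and the normalizing constant by at most $e^{\varepsilon}$, so this mechanism is $2\varepsilon$-DP. Rescaling $\varepsilon$ by a constant gives the claimed guarantee. Since $S$ is only quasi-convex, the density is not log-concave and cannot be sampled directly. I would instead decompose it over its sublevel sets
\[
K_t=\{\theta\in\Theta: S(\theta,\mathcal{Y})\le t\},
\]
which are convex by quasi-convexity. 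Discretizing scores to the integer grid $t\in\{0,1,\dots,n\}$, the exponential mechanism is equivalent to a two-stage procedure. First, pick a level $t$ with probability proportional to $\mathrm{vol}(K_t\setminus K_{t-1})e^{-\varepsilon t}$. Second, output a uniform point of $K_t\setminus K_{t-1}$. Equivalently, one can use the layer-cake weights $\mathrm{vol}(K_t)(e^{-\varepsilon t}-e^{-\varepsilon(t+1)})$ and then output a uniform point of $K_t$. Both stages then only require volume estimation and uniform sampling for convex bodies.

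\textbf{Utility.} Assume some $\theta$ has $S(\theta,\mathcal{Y})\le\eta n$, so $T\le\eta n$. The probability of outputting a point with score at least $2\eta n$ is at most
\[
\sum_{t\ge 2\eta n}\frac{V_{t/n}(\mathcal{Y})\,e^{-\varepsilon t}}{V_{\eta}(\mathcal{Y})\,e^{-\varepsilon\eta n}}.
\]
Write $t=\eta' n$ with $\eta'\ge 2\eta$, so that $\eta' n-\eta n\ge\eta' n/2$. Each summand is then at most $\exp\bigl(\log\tfrac{V_{\eta'}}{V_\eta}-\varepsilon\eta' n/2\bigr)$. Under the assumed lower bound on $n$, this is at most $\beta\eta/n$ times a geometric factor. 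Summing over the at most $n$ levels gives failure probability at most $\beta$. The $\log(1/\eta)$ term in the hypothesis pays for this union over levels.

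\textbf{Efficient implementation.} Efficient computability of $S$ gives an approximate membership oracle for each $K_t$. Every $K_t$ lies in a ball of radius $R$. The ``robust algorithm finds low-scoring point'' property supplies a point $\theta_0$ whose $r$-ball lies in $K_{T+1}$, hence in every $K_t$ with $t\ge T+1$. Levels $t\le T$ either have zero volume or are dominated and can be folded in. So each relevant $K_t$ is a convex body sandwiched between $B(\theta_0,r)$ and $B(0,R)$. The standard Lov\'asz--Vempala and Dyer--Frieze--Kannan machinery then gives $(1\pm\gamma)$-multiplicative volume estimates and $\gamma$-TV-close uniform samples in time $\mathrm{poly}(n,d,\log\frac{R}{r},\log\frac1\gamma)$. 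Using these estimates in the two-stage procedure yields a distribution within TV distance $O(\gamma)$ of a sampler whose level weights are within $(1\pm\gamma)$ factors of the ideal ones.

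\textbf{Main obstacle: pure DP.} Approximate samplers only give TV-closeness, which yields $(\varepsilon,\delta)$-DP rather than pure DP. I would restore pure DP in two steps.
\begin{enumerate}
\item Make the level-selection step exactly private. Its output is a discrete distribution over at most $n+1$ levels whose weights are computed from volume estimates that are multiplicatively accurate. These multiplicative errors can be absorbed into the privacy parameter, provided the estimator's randomness is handled carefully, for instance by fixing the estimates per run and noting that multiplicative closeness is preserved.
\item Handle the conditional uniform sampling. Mix the output with a tiny weight $\gamma'$ of the uniform distribution on $\Theta$. Then discretize $\Theta$ to a fine grid of spacing about $r\gamma$, whose size is roughly $(R/(r\gamma))^d$. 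Since every cell has probability at least $\gamma'/\mathrm{poly}$ under the mixture, an additive TV error of $\gamma\ll\gamma'\cdot(r\gamma/R)^{d}$ becomes a multiplicative $e^{O(\varepsilon)}$ error on every event.
\end{enumerate}
Choosing $\log\frac1\gamma=\mathrm{poly}(d,\log\frac Rr,\log n,\log\frac1\varepsilon)$ keeps the running time polynomial. The uniform mixing costs only an extra $\gamma'\le\beta/2$ in failure probability. I expect this conversion of approximate sampling and volume estimation into exact pure DP to need the most care; the privacy and utility calculations for the exponential mechanism itself are routine.
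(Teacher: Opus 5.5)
The paper does not prove this statement; it imports Theorem~4.1 of \cite{hopkins2023robustness}. Your overall route is essentially the one used there: run the exponential mechanism on the quasi-convex score, pick a level from estimated volumes of the convex sublevel sets, then sample uniformly from the chosen set, with accuracy governed by volume ratios. Two steps, however, fail as written.

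\textbf{Utility bookkeeping.} The hypothesis never gives $\beta\eta/n$ per integer level. At $\eta'=t/n$ it says $\varepsilon\eta' n\ge C\bigl(\log\frac{V_{\eta'}}{V_\eta}+\log\frac{1}{\beta\eta}\bigr)$, so the level-$t$ term is at most $\beta\eta\,e^{-\Omega(\varepsilon t)}$. Summing this over unit levels $t\ge 2\eta n$ loses a factor of order $\min(n,1/\varepsilon)$, which $\log\frac{1}{\beta\eta}$ does not cover when $\varepsilon$ is small. A union over $n$ levels would cost $\log n$, not $\log(1/\eta)$. The union should instead be over at most $1/\eta$ bins of width $\eta n$. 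Set $\eta'_k:=\min\{(k+1)\eta,1\}$. For $k\ge 2$, the scores in $[k\eta n,(k+1)\eta n)$ contribute at most
\[
\frac{V_{\eta'_k}\,e^{-\varepsilon k\eta n}}{V_{\eta}\,e^{-\varepsilon\eta n}}
\le \exp\Bigl(\log\frac{V_{\eta'_k}}{V_\eta}-\frac{\varepsilon\eta'_k n}{3}\Bigr)\le \beta\eta ,
\]
using $k-1\ge (k+1)/3$ and the hypothesis at $\eta'=\eta'_k$ with $C\ge 3$. Summing over the bins gives $\beta$; this is where $\log(1/\eta)$ comes from.

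\textbf{Pure DP by discretization.} First, under the continuous uniform distribution on $\Theta$, a cell meeting $\partial\Theta$ can have arbitrarily small mass. So the claim that every cell gets probability at least $\gamma'/\mathrm{poly}$ fails exactly where the sampler's TV error can put mass. You should mix instead with a fixed distribution on the grid points of $\Theta$.

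Second, and more important, rounding is not free for accuracy. $S$ is only quasi-convex, so a rounded point can have any score, and the extra failure probability is not just $\gamma'$. What rescues the argument is that every level $t\ge T+1$ contains an $r$-ball. Hence the part of $K_t$ within distance $s$ of $\partial K_t$ has relative volume at most $1-(1-s/r)^d\le ds/r$. Points deeper than $s=\sqrt d\,w/2$ round to a grid point of $K_t$. Spacing $w\lesssim r\beta/d^{3/2}$ therefore costs only $O(\beta)$, while keeping $\log(\#\text{grid points})=O(d\log\frac{Rd}{r\beta})$.

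Two further points are glossed over.
\begin{itemize}
\item Computing $S$ to within $\gamma$ does not give a membership oracle for the convex set $K_t$, since $\{|S-t|\le\gamma\}$ may be a large plateau. You need $\gamma$ exponentially small, which the $\log(1/\gamma)$ running time allows, together with something like a random offset of the level grid.
\item ``Folding in'' the levels below $T+1$ must itself be a private modification. For example, replace $S$ by $\max(S,\widehat T+2)$, where $\widehat T$ is the approximate score of the robust algorithm's point; this $\widehat T$ has $O(1)$ sensitivity.
\end{itemize}
Only with these fixes is the implemented pipeline exponentially close in TV to an $O(\varepsilon)$-DP reference mechanism, which your mixing step needs.
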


\begin{remark}[Applying Robustness to Privacy to Bayesian Estimation]
We note that the robustness to privacy lemmas and theorems in \cite{hopkins2023robustness} are stated for fixed $\theta^*$ that is independent from the data. However, upon close inspection, the same lemmas and theorems hold for estimation of any fixed functions of the input variables as well. More precisely, let $f$ denote any fixed function and $\set{x_1^*, \ldots, x_n^*}$ denote the uncorrupted samples. Then the same lemmas and theorems hold if $\theta^*$ is replaced with any fixed function of the input variables $f(x_1^*, \ldots, x_n^*)$.
\end{remark}

\noindent For proving computational lower bounds for differential privacy, we also need the following theorem reduction in the other direction, from privacy to robustness.

\begin{theorem}[Privacy to Robustness Reduction, Theorem 3.1 in \cite{Georgiev2022PrivacyIR}]\label{thm:privacy-robustness-reduction}
Let $\cM : \cX^* \to \cO$ be an $\epsilon$-private map from datasets $\cX^*$ to outputs $\cO$.
For every dataset $X_1,\ldots,X_n$, let $G_{X_1,\ldots,X_n} \subseteq \cO$ be a set of good outputs.
Suppose that $\cM(X_1,\ldots,X_n) \in G_{X_1,\ldots,X_n}$ with probability at least $1-\beta$ for some
$\beta = \beta(n)$. Then, for every $n \in \mathbb{N}$, on $n$-element datasets $\cM$ is robust to adversarial
corruption of any $\eta(n)$-fraction of inputs, where
\[
  \eta(n)
  = O\biggl(  \frac{\log (1/\beta)}{\e n}
  \biggr),
\]
meaning that for every $X_1,\ldots,X_n$ and $X_1',\ldots,X_n'$ differing on only $\eta n$ elements,
$\cM(X_1',\ldots,X_n') \in G_{X_1,\ldots,X_n}$ with probability at least $1 - \beta^{\Omega(1)}$.
\end{theorem}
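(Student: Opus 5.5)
The plan is to derive this from \emph{group privacy}, the standard consequence of pure $\epsilon$-DP for datasets that differ in several rows. Fix $n$ and two datasets $X = (X_1,\ldots,X_n)$ and $X' = (X_1',\ldots,X_n')$ that differ on a set $T\subseteq[n]$ of indices with $|T| = k \le \eta(n) n$. Let $B \coloneqq \cO \setminus G_{X_1,\ldots,X_n}$ be the bad event defined relative to the \emph{uncorrupted} dataset $X$. Since $G$ is indexed by $X$ alone, $B$ is a fixed measurable event, and it does not depend on $X'$.

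First I will build a hybrid chain $X = Z^{(0)}, Z^{(1)}, \ldots, Z^{(k)} = X'$. Here $Z^{(j)}$ is obtained from $Z^{(j-1)}$ by replacing the row at the $j$-th index of $T$ with the corresponding row of $X'$. Consecutive datasets are adjacent in the swap sense used in the definition of pure DP. Applying the $\epsilon$-DP inequality $k$ times to the fixed event $B$ gives
\[
\Pr[\cM(X') \in B] \le e^{\epsilon k}\,\Pr[\cM(X)\in B] \le e^{\epsilon k}\,\beta,
\]
where the last step is the hypothesis that $\cM(X)\in G_{X_1,\ldots,X_n}$ with probability at least $1-\beta$.

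Second, I fix the constant hidden in the $O(\cdot)$ by setting $\eta(n) n = \lfloor c\log(1/\beta)/\epsilon\rfloor$ for a small constant such as $c = 1/2$. Then $e^{\epsilon k} \le \beta^{-c}$, and the failure probability is at most $\beta^{1-c} = \beta^{1/2} = \beta^{\Omega(1)}$. This is exactly the claim $\cM(X') \in G_{X_1,\ldots,X_n}$ with probability at least $1-\beta^{\Omega(1)}$.

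There is no real obstacle here; the argument is a two-line group-privacy calculation. The only points to handle carefully are bookkeeping. The first is that the good set is defined by $X$ rather than $X'$, which is why we need the bound for a single fixed event across the whole chain rather than anything about $X'$'s own good set. The second is to state it for every pair $X, X'$ simultaneously, which holds because DP is a worst-case guarantee; in particular no distributional assumption on $X$ is used, and $\beta$ may be taken to be the worst-case failure probability on $n$-element datasets. Integrality of $\eta n$ is handled by the floor above. If $c\log(1/\beta)/\epsilon < 1$, the statement is vacuous with $\eta n = 0$.
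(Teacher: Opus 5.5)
Your argument is correct. Chaining $k$ applications of the $\epsilon$-DP inequality along the hybrid sequence to the fixed bad event $\cO\setminus G_{X_1,\ldots,X_n}$ gives failure probability at most $e^{\epsilon k}\beta\le\beta^{1-c}$. Choosing $\eta n=\lfloor c\log(1/\beta)/\epsilon\rfloor$ then yields the claim. The paper does not reprove this result; it imports it from \cite{Georgiev2022PrivacyIR}, whose proof is this same group-privacy calculation, so your route is the standard one.
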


\subsection{Sparse spectral norm of matrices}
For the analysis of algorithm for Bayesian linear regression, we use the sparse spectral norm of matrices.
\begin{definition}[Sparse spectral norm of matrices]\label{def:sparse-spectral-norm}
Let $X\in\R^{d\times n}$.
For an index set $S\subseteq[n]$, let $X_S\in\R^{d\times |S|}$ denote the submatrix formed by the
columns indexed by $S$.  For $k\in[n]$ define the $k$-sparse spectral norm
\[
\|X\|_{\mathrm{sp}}(k)\;\defeq\;\max_{\substack{S\subseteq[n]\\ |S|\le k}}\ \|X_S\|_{\op}\,.
\]
\end{definition}

%% file: content/prelims-ldlr.tex
\begin{definition}[Non-Gaussian component analysis]
    Let $A$ be a distribution over $\mathbb{R}$, and let $d(N)$ and $n(N)$ be increasing functions of the parameter $N$ corresponding to dimension and sample complexity. \emph{Non-Gaussian component analysis (NGCA)} is the problem of distinguishing between samples from $\mathcal{P}_N$ and $\mathcal{Q}_N$ given as follows. Define $\mathcal{Q}_N = \mathcal{N}(0,\Id_{d(N)})$. To sample from $\mathcal{P}_N$, a unit vector $v$ is sampled uniformly from $\mathbb{S}^{d(N)-1}$, and then $y_1,\ldots,y_{n(N)}$ are sampled i.i.d. from the product distribution whose orthogonal projection onto $v$ is $A$, and whose orthogonal projection onto the orthogonal complement of $v$ is $\mathcal{N}(0,\Id_{d(N)-1})$.  
\end{definition}

\begin{lemma}[Theorem 7.2 in \cite{diakonikolas2025sos}, Degree-$D$ Advantage and Growth of Hermite Coefficients; implicit in~\cite{mao2021optimal}]\label{lem:general-ldlr}
Suppose the univariate distribution $A$ satisfies the following conditions for an even
$j_\star\in\mathbb{N}$ and parameters $\kappa,\tau\in\mathbb{R}_+$:
\begin{itemize}
    \item $\E_{X\sim A}\, h_j(X)=0 \quad \text{for } j\in [j_\star-1]$.
    \item For $j\ge j_\star$, \; $\bigl|\E_{X\sim A} h_j(X)\bigr|^2 \le j^{\,j}\,\kappa\,\tau^{-\,{j}/{j_\star}}$.
\end{itemize}
Then if $n\kappa\ge 1$ and
\[
  n \;\le\; \frac{1}{\poly(D)}\,\frac{d^{\,j_\star/2}\tau}{\kappa}\,,
\]
the degree-$D$ advantage of the NGCA problem with the hidden distribution $A$
satisfies $\Adv_{\le D,n}\le O(1)$.
\end{lemma}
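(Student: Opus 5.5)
The plan is the standard Hermite expansion of the low-degree likelihood ratio for NGCA, with the two hypotheses on $A$ plugged in term by term. Write $a_j \coloneqq \E_{X\sim A} h_j(X)$ for the normalized Hermite coefficients, with $a_0=1$. For a fixed hidden direction $v$, the single-sample likelihood ratio of $\mathcal{P}$ against $\mathcal{N}(0,\Id_d)$ is $L_v(y)=\sum_{j\ge 0} a_j h_j(\langle v,y\rangle)$. Since $\E_{y\sim\mathcal{N}(0,\Id)}[h_j(\langle u,y\rangle)h_k(\langle v,y\rangle)]=\delta_{jk}\langle u,v\rangle^j$, independent replicas $u,v$ of the direction give the identity
\[
\Adv_{\le D,n}^2=\E_{u,v}\Bigl[\sum_{\substack{(j_1,\dots,j_n)\\ \sum_i j_i\le D}}\ \prod_{i=1}^n a_{j_i}^2\,\langle u,v\rangle^{j_i}\Bigr].
\]
This follows by projecting $\prod_i L_v(y_i)$ onto the degree-$\le D$ part of the multivariate Hermite basis and taking the inner product in $L_2(\mathcal{Q})$.

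By the first hypothesis, every nonzero $j_i$ is at least $j_\star$. Group the terms by $k$, the number of samples with $j_i\neq 0$, and by $m=\sum_i j_i$, which satisfies $k j_\star\le m\le D$. The $k=0$ term contributes exactly $1$. Now bound each piece in turn.
\begin{enumerate}
\item Choosing which samples are active costs at most $\binom{n}{k}\le n^k$.
\item The number of compositions of $m$ into $k$ parts is at most $2^m$.
\item By the second hypothesis,
\[
\prod_i a_{j_i}^2\le \kappa^k\,\tau^{-m/j_\star}\prod_i j_i^{j_i}\le \kappa^k\,\tau^{-m/j_\star}\,D^{m}.
\]
\item For $t=\langle u,v\rangle$ with $u,v$ uniform on the sphere, odd moments vanish and $\E[t^m]\le (Cm/d)^{m/2}\le (CD/d)^{m/2}$.
\end{enumerate}
Since $nk\ge$ is not needed but $n\kappa\ge 1$ and $k\le m/j_\star$, we have $(n\kappa)^k\le (n\kappa)^{m/j_\star}$. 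Collecting the four bounds, each $(k,m)$ group is at most
\[
\Bigl(\frac{n\kappa\, D^{O(j_\star)}}{\tau\, d^{j_\star/2}}\Bigr)^{m/j_\star}.
\]

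Under the assumption $n\le \frac{1}{\poly(D)}\frac{d^{j_\star/2}\tau}{\kappa}$, with the polynomial chosen to absorb $D^{O(j_\star)}$, the constant $C$, and the factor $2^{m}$ (taken per unit of $m/j_\star$), the base is at most $1/2$. There are at most $D$ values of $k$ for each $m$, and this extra factor is absorbed into the same polynomial. Summing the resulting geometric series over $m\ge j_\star$ gives $\Adv^2\le 1+O(1)$, which is the claim.

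The main obstacle is the bookkeeping that turns the factor $(n\kappa)^k$ into a power of $m/j_\star$. This is exactly where $n\kappa\ge1$ is needed: it lets us replace $k$ by its upper bound $m/j_\star$ in the exponent. Alongside this, one must check that the combinatorial factors ($\binom{n}{k}$, composition counts, $j^j$) only cost $D^{O(m)}$, so that the threshold is lowered by no more than the $\poly(D)$ factor in the statement. I would treat $j_\star$ as a constant so that $D^{O(j_\star)}$ is genuinely $\poly(D)$. The sphere-moment bound is routine: write $t$ as a ratio of a Gaussian coordinate to a $\chi$ variable, or use the Beta density.
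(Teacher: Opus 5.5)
Your argument is correct and is the standard Hermite-expansion proof of this result. The paper does not prove the lemma itself but imports it from \cite{diakonikolas2025sos} (implicit in \cite{mao2021optimal}). Your bookkeeping matches the computation the paper carries out in the proof of \cref{lem:ldlr-xcorruption}:
\begin{itemize}
\item orthogonality, reducing $\Adv_{\le D}^2$ to $\sum_{\sum_i j_i\le D}\prod_i a_{j_i}^2\,\E[\langle u,v\rangle^{\sum_i j_i}]$;
\item the overlap-moment bound $(Cm/d)^{m/2}$;
\item the $2^m$ count of compositions;
\item the use of $n\kappa\ge 1$ to replace $(n\kappa)^k$ by $(n\kappa)^{m/j_\star}$.
\end{itemize}

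As you note, the resulting loss is $D^{O(j_\star)}$. The final sum over $m$ is also $O(j_\star)$ unless you shrink the base to $c^{-j_\star}$. So the $1/\poly(D)$ threshold should be read with $j_\star$ constant, which is how the lemma is used in the paper ($j_\star=2$).
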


\begin{fact}[Fact 3.10 in \cite{diakonikolas2025sos}, Properties of Hermite Polynomials]\label{fact:hermite-props}
For the normalized probabilist's Hermite polynomials $h_j(X)$, we have the following properties:
\begin{itemize}
\item For any $x$, 
\(\E_{X\sim \mathcal N(x,1)}[h_j(X)] \;=\; x^j/\sqrt{j!}\).
\item For $\sigma^2\le 1$, let $\rho^2 := 1-\sigma^2$. Then
\(
\E_{X\sim \mathcal N(x,\sigma^2)}[h_j(X)] \;=\; \rho^{j}\, h_j(x/\rho).
\)

\item There exists a universal constant $C$ such that for any $|x|^2\le 2i$,
\(
h_i(x)^2 \;\le\; C e^{x^2/2}/\sqrt{i} .
\)
In particular, for all $|x|\le 1$,
\(
h_i(x)^2 \;\le\; 3C/\sqrt{i}
\). 
\end{itemize}
\end{fact}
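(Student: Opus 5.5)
The first two properties follow from the exponential generating function of the Hermite polynomials. The third follows from a classical pointwise bound on Hermite functions in the oscillatory region. Let $He_j$ denote the unnormalized probabilist's Hermite polynomials, so that $h_j = He_j/\sqrt{j!}$. We use the identity
\[
  e^{tX - t^2/2} \;=\; \sum_{j\ge 0} He_j(X)\,\frac{t^j}{j!}\,.
\]
The series converges absolutely, and its absolute value is dominated by $e^{|t||X| + t^2/2}$, which is integrable against any Gaussian. Dominated convergence therefore lets us take expectations term by term and compare coefficients of $t^j$.

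\emph{First item.} For $X\sim\mathcal N(x,1)$ the Gaussian moment generating function gives $\E e^{tX - t^2/2} = e^{tx + t^2/2 - t^2/2} = e^{tx} = \sum_j x^j t^j/j!$. Comparing coefficients yields $\E He_j(X) = x^j$, and dividing by $\sqrt{j!}$ gives $\E h_j(X) = x^j/\sqrt{j!}$.

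\emph{Second item.} For $X\sim\mathcal N(x,\sigma^2)$ we get
\[
  \E e^{tX - t^2/2} \;=\; e^{tx - (1-\sigma^2)t^2/2} \;=\; e^{(\rho t)(x/\rho) - (\rho t)^2/2} \;=\; \sum_j He_j(x/\rho)\,\frac{\rho^j t^j}{j!}\,.
\]
The last step applies the generating function at the point $x/\rho$ with parameter $\rho t$. Comparing coefficients gives $\E He_j(X) = \rho^j He_j(x/\rho)$, and normalizing gives the claim. The degenerate case $\rho=0$ (that is, $\sigma=1$) is read as the limit and agrees with the first item.

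\emph{Third item.} This is the only non-routine step. I would rescale to the $L^2$-normalized Hermite functions. With the usual conventions,
\[
  h_i(x)\,e^{-x^2/4} \;=\; (2\pi)^{1/4}\,\psi_i\bigl(x/\sqrt2\bigr),
\]
where $\psi_i$ is the physicists' Hermite function with oscillatory region $|y| < \sqrt{2i+1}$. The hypothesis $|x|^2\le 2i$ gives $y^2 = x^2/2 \le i$. Hence $2i+1-y^2 \ge i+1$, so we stay a constant factor away from the turning point. I would then invoke the classical uniform bound
\[
  |\psi_i(y)| \;\le\; C\,(2i+1-y^2)^{-1/4},
\]
valid for $y^2 < 2i+1$. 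This follows from the Plancherel--Rotach asymptotics, or from the Liouville--Green/Sturm comparison argument applied to $\psi'' + (2i+1-y^2)\psi = 0$ (as in Szegő, or Bonan--Clark). It gives $\psi_i(y)^2 \le C'/\sqrt{i}$, and squaring the rescaling identity yields $h_i(x)^2 \le C e^{x^2/2}/\sqrt i$.

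The main obstacle is citing or deriving this bulk bound with uniform constants. If a self-contained argument is preferred, I would use the energy-type functional $E(y) = \psi_i(y)^2 + \psi_i'(y)^2/(2i+1-y^2)$. Differentiating with the ODE shows $E$ is monotone in $|y|$ on the bulk. Comparing $E$ at $y$ against its average over an interval of length $\asymp 1/\sqrt i$ near the origin, where $\int\psi_i^2\le 1$ controls the average, gives the $1/\sqrt i$ bound.

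\emph{The ``in particular'' statement.} For $i\ge1$ and $|x|\le 1$ we have $|x|^2 \le 2i$. Since $e^{1/2}<3$, the general bound gives $h_i(x)^2 \le 3C/\sqrt i$. The case $i=0$, where $h_0\equiv 1$, is trivial after enlarging $C$.
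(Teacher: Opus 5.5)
The paper gives no proof of this statement. It is imported verbatim as Fact~3.10 of \cite{diakonikolas2025sos}, so there is no in-paper argument to compare against. Your proposal supplies one, and its main line is correct.

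Items~1 and~2 follow from the generating function exactly as you say. Dominated convergence needs control of the partial sums, and this holds because the coefficients of $He_j$ alternate in sign, giving $\sum_j |He_j(X)|\,|t|^j/j!\le e^{|t||X|+t^2/2}$.

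For item~3, reducing to the classical bulk bound $|\psi_i(y)|\le C(2i+1-y^2)^{-1/4}$ is the standard route (e.g.\ the Askey--Wainger estimate $|\psi_i(y)|\lesssim (N^{1/3}+|N-y^2|)^{-1/4}$ with $N=2i+1$). Your restriction $y^2\le i$ keeps you in a fixed fraction of the oscillatory region, which is what makes the Plancherel--Rotach asymptotics uniform. The constant in your rescaling identity should be $\pi^{1/4}$, not $(2\pi)^{1/4}$; this is immaterial.

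Your optional self-contained argument, however, does not work as written. Write $q=2i+1-y^2$ and use $\psi_i''=-q\psi_i$. Then $E'(y)=2y\,\psi_i'(y)^2/q(y)^2$, so $E$ is \emph{increasing} in $|y|$ on the bulk. This is the Sonine-type fact that the local maxima of $|\psi_i|$ grow toward the turning points. Hence an upper bound on $E(y)\ge\psi_i(y)^2$ must come from points \emph{farther} from the origin than $y$, not from an average near the origin.

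The interval length is also off. Averaging over an interval of length $1/\sqrt i$ with only $\int\psi_i^2\le 1$ gives $O(\sqrt i)$, not $O(1/\sqrt i)$; you need length $\asymp\sqrt i$.

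There are two correct versions:
\begin{itemize}
\item For $0\le y\le\sqrt i$ (negative $y$ by parity), bound $E(y)$ by its average over $I=[\sqrt i,\sqrt{3i/2}]$. There $q\ge i/2$ and $|I|\asymp\sqrt i$. Since $\int_{\mathbb{R}}\psi_i^2=1$ and $\int_{\mathbb{R}}(\psi_i')^2=\int_{\mathbb{R}}q\,\psi_i^2=i+\tfrac12$, that average is $O(1/\sqrt i)$.
\item Use $F=q\psi_i^2+(\psi_i')^2$, which satisfies $F'=-2y\,\psi_i^2$ and so decreases in $|y|$. Then $q(y)\psi_i(y)^2\le F(0)=(2i+1)\psi_i(0)^2+\psi_i'(0)^2=O(\sqrt i)$, using the explicit values of $H_i$ and $H_i'$ at $0$. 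With $q(y)\ge i+1$ this gives $\psi_i(y)^2=O(1/\sqrt i)$.
\end{itemize}
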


%% file: content/prelims-sos.tex
\subsection{Sum-of-squares background}
In this paper, we use the sum-of-squares (SoS) semidefinite programming hierarchy 
\cite{barak2014sum,raghavendra2018high} 
for both algorithm design and analysis.
The sum-of-squares proofs-to-algorithms framework has been proven useful in many optimal or state-of-the-art results in algorithmic statistics.
We provide here a brief introduction to pseudo-distributions, sum-of-squares proofs, and sum-of-squares algorithms.

\subsubsection{Sum-of-squares proofs and algorithms}
\paragraph{Pseudo-distribution.} 

We can represent a finitely supported probability distribution over $\R^n$ by its probability mass function $\mu\from \R^n \to \R$ such that $\mu \geq 0$ and $\sum_{z\in\supp(\mu)} \mu(z) = 1$.
We define pseudo-distributions as generalizations of such probability mass distributions by relaxing the constraint $\mu\ge 0$ to only require that $\mu$ passes certain low-degree non-negativity tests.

\begin{definition}[Pseudo-distribution]
  \label{def:pseudo-distribution}
  A \emph{level-$\ell$ pseudo-distribution} $\mu$ over $\R^n$ is a finitely supported function $\mu:\R^n \rightarrow \R$ such that $\sum_{z\in\supp(\mu)} \mu(z) = 1$ and $\sum_{z\in\supp(\mu)} \mu(z)f(z)^2 \geq 0$ for every polynomial $f$ of degree at most $\ell/2$.
\end{definition}

\noindent We can define the expectation of a pseudo-distribution in the same way as the expectation of a finitely supported probability distribution.

\begin{definition}[Pseudo-expectation]
  Given a pseudo-distribution $\mu$ over $\R^n$, we define the \emph{pseudo-expectation} of a function $f:\R^n\to\R$ by
  \begin{equation}
    \tE_\mu f \seteq \sum_{z\in\supp(\mu)} \mu(z) f(z) \,.
  \end{equation}
\end{definition}

\noindent The following definition formalizes what it means for a pseudo-distribution to satisfy a system of polynomial constraints.

\begin{definition}[Constrained pseudo-distributions]
  Let $\mu:\R^n\to\R$ be a level-$\ell$ pseudo-distribution over $\R^n$.
  Let $\cA = \{f_1\ge 0, \ldots, f_m\ge 0\}$ be a system of polynomial constraints.
  We say that \emph{$\mu$ satisfies $\cA$} at level $r$, denoted by $\mu \sdtstile{r}{} \cA$, if for every multiset $S\subseteq[m]$ and every sum-of-squares polynomial $h$ such that $\deg(h)+\sum_{i\in S}\max\{\deg(f_i),r\} \leq \ell$,
  \begin{equation}
    \label{eq:constrained-pseudo-distribution}
    \tE_{\mu} h \cdot \prod_{i\in S}f_i \ge 0 \,.
  \end{equation}
  We say $\mu$ satisfies $\cA$ and write $\mu \sdtstile{}{} \cA$ (without further specifying the degree) if $\mu \sdtstile{0}{} \cA$.
\end{definition}

\noindent We remark that if $\mu$ is an actual finitely supported probability distribution, then we have  $\mu\sdtstile{}{}\cA$ if and only if $\mu$ is supported on solutions to $\cA$.

\paragraph{Sum-of-squares proof.} 

We introduce sum-of-squares proofs as the dual objects of pseudo-distributions, which can be used to reason about properties of pseudo-distributions.
We say a polynomial $p$ is a sum-of-squares polynomial if there exist polynomials $(q_i)$ such that $p = \sum_i q_i^2$.

\begin{definition}[Sum-of-squares proof]
  \label{def:sos-proof}
  A \emph{sum-of-squares proof} that a system of polynomial constraints $\cA = \{f_1\ge 0, \ldots, f_m\ge 0\}$ implies $q\ge0$ consists of sum-of-squares polynomials $(p_S)_{S\subseteq[m]}$ such that\footnote{Here we follow the convention that $\prod_{i\in S}f_i=1$ for $S=\emptyset$.}
  \[
    q = \sum_{\text{multiset } S\subseteq[m]} p_S \cdot \prod_{i\in S} f_i \,.
  \]
  If such a proof exists, we say that \(\cA\) \emph{(sos-)proves} \(q\ge 0\) within degree \(\ell\), denoted by $\mathcal{A}\sststile{\ell}{} q\geq 0$.
  In order to clarify the variables quantified by the proof, we often write \(\cA(z)\sststile{\ell}{z} q(z)\geq 0\).
\end{definition}

\noindent The following lemma shows that sum-of-squares proofs allow us to deduce properties of pseudo-distributions that satisfy some constraints.
\begin{lemma}
  \label[lemma]{lem:sos-soundness}
  Let $\mu$ be a pseudo-distribution, and let $\cA,\cB$ be systems of polynomial constraints.
  Suppose there exists a sum-of-squares proof $\cA \sststile{r'}{} \cB$.
  If $\mu \sdtstile{r}{} \cA$, then $\mu \sdtstile{r\cdot r' + r'}{} \cB$.
\end{lemma}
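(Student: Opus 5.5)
We unfold the definitions and use the multiplicative structure of sum-of-squares proofs: expanding a product of $\cB$-constraints turns each term into a sum-of-squares multiplier times a product of $\cA$-constraints. Write $\cA = \{f_1 \ge 0,\ldots,f_m\ge 0\}$ and $\cB = \{g_1\ge 0,\ldots,g_k \ge 0\}$. The hypothesis $\cA \sststile{r'}{} \cB$ means that for each $j$ there are sum-of-squares polynomials $(p_{S,j})_S$ with
\[
g_j = \sum_{\text{multiset } S\subseteq[m]} p_{S,j}\prod_{i\in S} f_i ,
\]
and every term satisfies $\deg(p_{S,j}) + \sum_{i\in S}\deg(f_i) \le r'$. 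This is the standard convention for a degree-$r'$ proof; we adopt it here because the definition above leaves it implicit. We may also assume every $f_i$ appearing in a term has $\deg f_i \ge 1$. Any nonnegative constant constraint can be absorbed into the sum-of-squares multiplier, and a negative one would make $\cA$ unsatisfiable by any pseudo-distribution. Consequently each multiset $S$ that occurs satisfies $|S| \le r'$.

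To show $\mu \sdtstile{rr'+r'}{} \cB$, fix a multiset $T$ of indices of $\cB$ and a sum-of-squares polynomial $h$ with $\deg h + \sum_{j\in T}\max\{\deg g_j, rr'+r'\} \le \ell$. Since $\deg g_j \le r'$, this condition reads $\deg h + |T|(rr'+r') \le \ell$. We must show $\tE_\mu\, h\prod_{j\in T} g_j \ge 0$. Substitute the representation of each $g_j$ and expand the product using linearity of $\tE_\mu$. The expression becomes a sum over choices $(S_j)_{j\in T}$ of terms
\[
\tE_\mu \Bigl[ \Bigl(h\prod_{j\in T} p_{S_j,j}\Bigr)\cdot \prod_{j\in T}\prod_{i\in S_j} f_i \Bigr].
\]
In each term the first factor is a product of sum-of-squares polynomials, hence itself sum-of-squares. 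The second factor is a product of $\cA$-constraints over the multiset union $U=\biguplus_j S_j$. So it suffices to check the degree budget needed to invoke $\mu\sdtstile{r}{}\cA$ on each term.

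For this budget, for each $j$ we use $\max\{\deg f_i, r\}\le \deg f_i + r$ to get
\[
\deg p_{S_j,j} + \sum_{i\in S_j}\max\{\deg f_i, r\} \le \Bigl(\deg p_{S_j,j}+\sum_{i\in S_j}\deg f_i\Bigr) + r|S_j| \le r' + rr'.
\]
Summing over $j\in T$ and adding $\deg h$ gives at most $\deg h + |T|(rr'+r') \le \ell$. Hence each term is nonnegative by the definition of $\mu \sdtstile{r}{}\cA$, and so is their sum. We expect the only delicate step to be this bookkeeping. In particular, we need the per-term degree bound $\deg p_{S,j}+\deg\prod_{i\in S} f_i\le r'$, rather than merely $\deg g_j\le r'$, since cancellations could otherwise hide high-degree terms. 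We also need the bound $|S_j|\le r'$, which requires discarding degree-zero constraints; both are handled by the conventions fixed in the first paragraph.
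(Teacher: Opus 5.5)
The paper states this lemma without proof, as a standard sum-of-squares fact. Your argument is the standard one and is correct: substitute each certificate for $g_j$, expand $h\prod_{j\in T}g_j$ by linearity, and check the degree budget term by term using the per-term degree convention and $|S_j|\le r'$.

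One small precision concerns discarding a negative constant constraint via $S=\{i\}$, $h=1$. That step is only admissible when $r\le\ell$. For $r>\ell$, the hypothesis $\mu\sdtstile{r}{}\cA$ says nothing beyond $\mu$ being a pseudo-distribution. This is harmless when $r'\ge 1$, since then $rr'+r'>\ell$ and the conclusion is equally vacuous. When $r'=0$, however, the statement genuinely fails: take $\{-1\ge 0\}\sststile{0}{}\{-1\ge 0\}$. So it is worth recording $r\le\ell$ as a standing assumption.
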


\paragraph{Sum-of-squares algorithm.}

Given a system of polynomial constraints, the \emph{sum-of-squares algorithm} uses semidefinite programming to search through the space of pseudo-distributions that satisfy this polynomial system.

Since semidefinite programming can only be solved approximately, we can only find pseudo-distributions that approximately satisfy a given polynomial system.
We say that a level-$\ell$ pseudo-distribution \emph{approximately satisfies} a polynomial system, if the inequalities in \cref{eq:constrained-pseudo-distribution} are satisfied up to an additive error of $2^{-n^\ell}\cdot \norm{h}\cdot\prod_{i\in S}\norm{f_i}$, where $\norm{\cdot}$ denotes the Euclidean norm\footnote{The choice of norm is not important here because the factor $2^{-n^\ell}$ swamps the effects of choosing another norm.} of the coefficients of a polynomial in the monomial basis.

\begin{theorem}[Sum-of-squares algorithm]
  \label{theorem:SOS_algorithm}
  There is an $(n+ m)^{O(\ell)} $-time algorithm that, given any explicitly bounded\footnote{A system of polynomial constraints is \emph{explicitly bounded} if it contains a constraint of the form $\|z\|^2 \leq M$.} and satisfiable system\footnote{Here we assume that the bit complexity of the constraints in $\cA$ is $(n+m)^{O(1)}$.} $\cA$ of $m$ polynomial constraints in $n$ variables, outputs a level-$\ell$ pseudo-distribution that satisfies $\cA$ approximately.
\end{theorem}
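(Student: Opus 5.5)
The plan is the standard reduction of Theorem~\ref{theorem:SOS_algorithm} to semidefinite programming, solved with the ellipsoid method.

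\textbf{Step 1: encode pseudo-distributions as moment matrices.} A level-$\ell$ pseudo-distribution is determined, as far as $\tE$ is concerned, by its pseudo-moments $\tE[z^\alpha]$ for $|\alpha|\le \ell$. There are at most $(n+1)^{\ell}$ such monomials. Conversely, any linear functional $L$ on $\R[z]_{\le \ell}$ with $L(1)=1$ is realized by some finitely supported signed function $\mu$. To see this, choose finitely many points on which the evaluation functionals span the dual of $\R[z]_{\le \ell}$ (for instance a unisolvent interpolation grid) and solve a linear system. So it suffices to find such an $L$.

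The condition $\tE f^2\ge 0$ for $\deg f\le \ell/2$ says that the moment matrix $\mathcal{M}(L)=\bigl(L(z^{\alpha+\beta})\bigr)_{|\alpha|,|\beta|\le \ell/2}$ is PSD. Likewise, for each multiset $S\subseteq[m]$ with $\sum_{i\in S}\max\{\deg f_i,r\}\le \ell$, the condition in \eqref{eq:constrained-pseudo-distribution} says that the localizing matrix $\bigl(L(z^{\alpha+\beta}\prod_{i\in S}f_i)\bigr)$ is PSD, with $\alpha,\beta$ ranging over degrees up to $(\ell-\deg\prod_S f_i)/2$. Each of these matrices is linear in the unknown vector $L$. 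The number of such multisets is $m^{O(\ell)}$ and each matrix has dimension $n^{O(\ell)}$, so the whole feasibility problem is an SDP of size $(n+m)^{O(\ell)}$.

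\textbf{Step 2: check the hypotheses needed by the ellipsoid method.} Satisfiability supplies a point $z^*$ satisfying $\cA$, and the point mass at $z^*$ gives a feasible $L$.

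Explicit boundedness $\|z\|^2\le M$ lets us bound every pseudo-moment by $M^{O(\ell)}$, using an SoS-style argument: multiply the localizing inequality for $M-\|z\|^2$ by suitable squares and induct on degree. This confines the feasible region to a ball of radius $R=M^{O(\ell)}$. The bit-complexity assumption keeps $\log R$ at $(n+m)^{O(1)}\cdot \ell$.

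A separation oracle is immediate. Given a candidate $L$, compute the minimum eigenvector $v$ of each moment or localizing matrix. If one of these matrices has a negative eigenvalue, the linear functional $L\mapsto v^\top \mathcal{M}(L) v$ separates. Eigenvectors only need to be computed to polynomially many bits.

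\textbf{Step 3: handle non-full-dimensionality (the main obstacle).} The exact feasible set may be lower-dimensional; a point mass gives rank-one moment matrices, for example. Then the ellipsoid method has no inner ball whose volume it can shrink down to. This is exactly why the theorem only promises approximate satisfaction.

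I would relax every PSD constraint $A(L)\succeq 0$ to $A(L)\succeq -\tau\,\|\cdot\|$-scaled identity, with $\tau = 2^{-n^{\ell}}$ times the coefficient-norm factors $\norm{h}\prod_{i\in S}\norm{f_i}$ appearing in the definition of approximate satisfaction. Around the point-mass solution, the relaxed set contains a ball of radius $2^{-\poly(n^{\ell})}$ (after scaling by the entry bounds), so its volume is bounded below. The ellipsoid method then finds a relaxed-feasible $L$ within
\[
\poly\bigl((n+m)^{\ell}\bigr)\cdot \log(R/\tau)=(n+m)^{O(\ell)}
\]
iterations.

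It remains to check that $\tE_\mu h\prod_S f_i\ge -2^{-n^\ell}\norm{h}\prod_S\norm{f_i}$ for every SoS polynomial $h$. Write $h=\sum_j q_j^2$; then the relaxed localizing matrix yields the required error bound after relating $\sum_j\|q_j\|^2$ to $\norm{h}$. This conversion between Gram-matrix norms and coefficient norms loses only $n^{O(\ell)}$ factors, which the $2^{-n^\ell}$ slack absorbs.
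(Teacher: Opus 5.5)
Your proposal is correct and is the standard argument for this result: you encode the level-$\ell$ relaxation as a moment/localizing-matrix SDP of size $(n+m)^{O(\ell)}$, get a bounded search region from explicit boundedness and a point-mass witness from satisfiability, and run the ellipsoid method on a slightly relaxed, hence full-dimensional, feasible set. The paper gives no proof of its own; it imports the theorem from the cited sum-of-squares references, whose proof is this SDP-plus-ellipsoid reduction. The one step you should actually justify is the Gram-trace conversion. Every PSD Gram matrix $Q$ of an SoS polynomial $h$ of degree at most $\ell$ satisfies $\mathrm{tr}\,Q\le (n\ell)^{O(\ell)}\norm{h}$, which you can show, e.g.\ by averaging $h$ against a standard Gaussian and writing $Q$ in the orthonormal Hermite basis. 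Then take $\tau$ smaller than $2^{-n^\ell}$ by this factor, rather than claiming the $2^{-n^\ell}$ slack absorbs it. Finally, if you want the output to be a genuine level-$\ell$ pseudo-distribution with an exactly PSD moment matrix, mix in a tiny multiple of the Gaussian moment functional.
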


\begin{remark}[Approximation error and bit complexity]
  \label{remark:sos-numerical-issue}  
  For a pseudo-distribution that only approximately satisfies a polynomial system, we can still use sum-of-squares proofs to reason about it in the same way as \cref{lem:sos-soundness}.
  In order for approximation errors not to amplify throughout reasoning, we need to ensure that the bit complexity of the coefficients in the sum-of-squares proof are polynomially bounded.  
\end{remark}

\subsubsection{Useful sum-of-squares inequalities}

In this part, we provide some basic SoS proofs that are useful in our paper.

\begin{lemma}
\label[lemma]{lem:preliminary-SOS-abs-to-square}
Given constant $C \geq 0$, we have
    \begin{equation*}
         \Bigset{-C \leq z \leq C} \sststile{2}{z} z^2 \leq C^2  \,.
    \end{equation*}
\end{lemma}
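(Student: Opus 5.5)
The plan is to write $C^2 - z^2$ explicitly as a combination of the two constraint polynomials $C - z \ge 0$ and $C + z \ge 0$ with sum-of-squares multipliers, all of degree at most $2$. The natural starting point is the factorization $C^2 - z^2 = (C-z)(C+z)$. However, a product of two constraints is only usable when that multiset appears in the certificate, and here a product of two linear constraints has degree $2$, which is within the allowed budget. So one immediate certificate is $p_{\{1,2\}} = 1$ applied to the product $(C-z)(C+z)$.

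If one prefers to avoid products of constraints and use only linear combinations with SoS multipliers, I would instead aim for a certificate using only degree-$1$ multiplications. For $C > 0$ I would use the identity
\[
C^2 - z^2 = \frac{1}{2C}\Bigl[(C+z)^2 (C-z) + (C-z)^2 (C+z)\Bigr]\,.
\]
Expanding, $(C+z)(C-z)\bigl[(C+z)+(C-z)\bigr] = 2C(C^2-z^2)$, so the identity holds. This has degree $3$, however, which is why the product-of-constraints form is preferable for staying within degree $2$ as stated.

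The degenerate case $C = 0$ needs separate handling. The constraints become $z \ge 0$ and $-z \ge 0$, and the product certificate gives $-z^2 = z\cdot(-z)$, which is exactly a product of the two constraints with multiplier $1$. So the product certificate covers all $C \ge 0$ uniformly, with no division by $C$.

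In summary, the proof is the single identity $C^2 - z^2 = (C-z)(C+z)$, read as a degree-$2$ SoS proof under \cref{def:sos-proof} with multiset $S = \{1,2\}$ and $p_S = 1$. There is no real obstacle. The only point to check is that the degree convention in \cref{def:sos-proof} counts the product of the two linear constraints as degree $2$, which it does.
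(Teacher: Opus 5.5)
Your proof is correct. The paper states this lemma without proof, and your identity $C^2 - z^2 = (C-z)(C+z)$, read as the multiset $S=\{1,2\}$ with multiplier $p_S = 1$ in the paper's definition of a sum-of-squares proof, is the standard degree-$2$ certificate and handles $C = 0$ uniformly. You are also right that the product of constraints is genuinely needed: without products, degree $2$ allows only constant multipliers on $C \pm z$, so the $-z^2$ term would have to come from a sum-of-squares polynomial, which is impossible.
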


\begin{lemma}[SoS selector lemma] \label{lem:sos-subset-sum}
    Let $a_1, \dots, a_n \in \R$ and $B \in \R$.
    Suppose for any subset $S \subseteq [n]$ with $|S| \le k$, we have $\abs{\sum_{i \in S} a_i} \le B$.
    Then 
    \begin{equation*}
        \Bigset{
            0 \le x_1, \dots, x_n \le 1, \;
            \sum_i x_i \le k
        }
        \; \sststile{1}{x_1, \dots, x_n} \;
        {
            \Bigabs{\sum_{i=1}^{n} a_i x_i} \le B
        } \,.
    \end{equation*}
\end{lemma}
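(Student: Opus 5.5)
The statement is a linear inequality over the polytope $P=\{x\in\R^n : 0\le x_i\le 1,\ \sum_i x_i\le k\}$, and a degree-$1$ sum-of-squares proof from linear constraints is just a nonnegative combination of those constraints plus a nonnegative constant. So the plan is to write down an explicit Farkas/LP-duality certificate. This shows $B-\sum_i a_ix_i\ge 0$, and the same argument applied to $(-a_i)$ gives $B+\sum_i a_ix_i\ge 0$; together these are the two sides of $\bigl|\sum_i a_ix_i\bigr|\le B$. We may assume $k$ is an integer, replacing $k$ by $\lfloor k\rfloor$ in the hypothesis if necessary. If $k$ is fractional, one instead uses vertices of $P$ with one fractional coordinate, which are convex combinations of sets of size $\le\lceil k\rceil$; I would state the lemma for integer $k$, which is how it is applied.

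\textbf{Step 1: the certificate.} Let $t\ge 0$ be the $k$-th largest value among $\max(a_1,0),\ldots,\max(a_n,0)$, and let $T=\{i: a_i>t\}$, so $|T|<k$. Then we have the polynomial identity
\begin{align*}
B-\sum_i a_ix_i \;=\;& \sum_{i\in T}(a_i-t)(1-x_i)\;+\;\sum_{i\notin T}(t-a_i)\,x_i\;+\;t\Bigl(k-\sum_i x_i\Bigr)\\
&+\Bigl(B-\sum_{i\in T}a_i - t\,(k-|T|)\Bigr).
\end{align*}
This can be checked by comparing the coefficient of each $x_i$ and the constant term. Each of the first three terms is a nonnegative multiple of one of the axioms $1-x_i\ge 0$, $x_i\ge 0$, or $k-\sum_i x_i\ge 0$, since $a_i-t>0$ on $T$, $t-a_i\ge 0$ off $T$, and $t\ge 0$.

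\textbf{Step 2: the constant is nonnegative.} The quantity $\sum_{i\in T}a_i+t(k-|T|)$ is the sum of the $k$ largest values of $\max(a_i,0)$. If $t>0$, it equals $\sum_{i\in S}a_i$ for the set $S$ consisting of $T$ together with $k-|T|$ indices where $a_i=t$. If $t=0$, it equals $\sum_{i\in T}a_i$ with $|T|\le k$. In either case it is a subset sum over a set of size at most $k$, so by hypothesis it is at most $B$. Hence the last term is a nonnegative constant, and the whole identity is a degree-$1$ sum-of-squares proof.

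\textbf{Step 3: the other side.} Apply Steps 1 and 2 to $(-a_i)_i$. The hypothesis is symmetric under $a\mapsto -a$, because it bounds $\bigl|\sum_{i\in S}a_i\bigr|$. This yields $B+\sum_i a_ix_i\ge 0$ in degree $1$.

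The only delicate point is the tie-handling at the threshold $t$ in Step 2, namely verifying that the constant term is exactly a size-$\le k$ subset sum. Everything else is a direct coefficient check.
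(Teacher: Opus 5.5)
Your certificate is the one implicit in the paper's proof, so the argument is correct. The paper sorts $a_1\ge\cdots\ge a_n$ and splits into two cases: $a_k\ge 0$ (threshold $a_k$, compared against the subset $\{1,\dots,k\}$) and $a_k<0$ (threshold $0$, compared against the nonnegative entries). Your choice of $t$ as the $k$-th largest of the $\max(a_i,0)$, with $T=\{i: a_i>t\}$, merges these two cases into a single identity.

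One caveat on your aside: for non-integer $k$ the statement is false, e.g.\ $a=(1,1)$, $B=1$, $k=3/2$, $x=(1,\tfrac12)$. So integrality of $k$ is a genuine hypothesis, not something you may assume without loss of generality; this matches your final choice to state the lemma for integer $k$, as the paper implicitly does.
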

\begin{proof}
    We first show $\sum_{i=1}^{n} a_i x_i \le B$.
    Without loss of generality, assume $a_1 \ge \dots \ge a_n$.
    
    \noindent \textbf{Case 1:} $a_k \ge 0$.
    It is straightforward to see $\{0 \le x_1, \dots, x_n \le 1, \; \sum_i x_i \le k\} \vdash_{1}$
    \begin{align*}
        B - \sum_{i=1}^{n} a_i x_i
        &\ge \sum_{i=1}^{k} a_i - \sum_{i=1}^{n} a_i x_i \tag{$\sum_{i=1}^{k} a_i \le B$} \\
        &= \sum_{i=1}^{k} a_i (1 - x_i) - \sum_{i=k+1}^{n} a_i x_i \\
        &\ge \sum_{i=1}^{k} a_k (1 - x_i) - \sum_{i=k+1}^{n} a_k x_i \tag{$0 \le x_i \le 1$} \\
        &= a_k \cdot \Paren{k - \sum_{i=1}^{n} x_i} \\
        &\ge 0 \,. \tag{$\sum_i x_i \le k$}
    \end{align*}
    
    \noindent \textbf{Case 2:} $a_k < 0$.
    Let $\ell$ be the largest index such that $a_\ell \ge 0$. (Note $\ell \in \{0, 1, \dots, k-1\}$.)
    Then
    \begin{align*}
        B - \sum_{i=1}^{n} a_i x_i
        &\ge \sum_{i=1}^{\ell} a_i - \sum_{i=1}^{n} a_i x_i \tag{$\sum_{i=1}^{\ell} a_i \le B$} \\
        &= \sum_{i=1}^{\ell} a_i (1 - x_i) + \sum_{i=\ell+1}^{n} (-a_i) x_i \\
        &\ge 0 \,,
    \end{align*}
    where in the last inequality we used $a_1, \dots, a_\ell \ge 0$ and $a_{\ell+1}, \dots, a_n < 0$, as well as $0 \le x_i \le 1$ for all $i$.

    Observe that we can apply the same argument above to $-a_1, \dots, -a_n$ and conclude $\sum_{i=1}^{n} (-a_i) x_i \le B$, or equivalently, $\sum_{i=1}^{n} a_i x_i \ge -B$.

    Therefore,
    \begin{align*}
        \Bigset{
            0 \le x_1, \dots, x_n \le 1, \;
            \sum_i x_i \le k
        }
        \; &\sststile{1}{x_1,\ldots,x_n} \;
        \Bigset{
            -B \le \sum_{i=1}^{n} a_i x_i \le B
        } \,. \qedhere
    \end{align*}
\end{proof}
\subsubsection{Approximately satisfying linear operators}
Similar to \cite{hopkins2023robustness}, we give the definition of approximately satisfying linear operators, for dealing with the numerical issues in sum-of-squares exponential mechanism.
\begin{definition}[$\tau$-approximately satisfying linear operators]
\label{def:tau-approximately-satisfying}
Let \(n\in\mathbb{N}\) and \(T\in[n]\). Let \(\mathcal{A}\) be a system of
polynomial inequalities in variables \(w=(w_1,\ldots,w_n)\) and \(z\)
(potentially many) such that
\[
  \mathcal{A}_T
  = \{\, q_1(w,z)\ge 0,\ldots,q_m(w,z)\ge 0 \,\}
    \;\cup\;
    \left\{\, \sum_{i=1}^{n} w_i \ge n-T \,\right\}.
\]
Let \(\tau>0\). We say that a linear operator \(\mathcal{L}\)
\(\tau\)-approximately satisfies \(\mathcal{A}_T\) at degree \(D\) if the
following hold:
\begin{enumerate}
  \item \(\mathcal{L}1=1\).
  \item For all polynomials \(p\) such that \(\deg(p^2)\le D\) and
  \(\|\mathcal{R}(p)\|_2\le 1\) (where \(\mathcal{R}(p)\) is the vector
  representation of the coefficients of \(p\)), it holds that
  \(\mathcal{L}\,p^2 \ge -\tau T\).
  \item For all \(i=1,\ldots,m\) and polynomials \(p\) such that
  \(\deg(p^2\cdot q_i)\le D\) and \(\|\mathcal{R}(p)\|_2\le 1\) (where
  \(\mathcal{R}(p)\) is the vector representation of the coefficients of \(p\)),
  it holds that \(\mathcal{L}\,p^2 q_i \ge -\tau T\).
  \item For every polynomial \(p\) such that
  \(\deg\bigl(p^2\cdot\sum_{i=1}^n w_i-n+T\bigr)\le D\) and
  \(\|\mathcal{R}(p)\|_2\le 1\) (where \(\mathcal{R}(p)\) is the vector
  representation of the coefficients of \(p\)), it holds that
  \[
    \mathcal{L}\biggl(p^2\sum_{i=1}^n w_i-n+T\biggr)
    \ge -5\tau T n .
  \]
\end{enumerate}
\end{definition}

%% file: content/mean-estimation.tex
\section{Robust and Private Bayesian Mean Estimation}\label{sec:gaussian-posterior-mean}

In this section we prove our main results for Bayesian Gaussian mean estimation. 
The model is as follows:
\begin{definition}[Bayesian mean estimation under Gaussian prior]\label{def:bayesian-meanestimation-under-gaussian-prior}
    A parameter $\mu$ is drawn from a prior $\mathcal N(0, \Sigma)$, and then $n$ samples $x = (x_1,\ldots,x_n)$ are drawn i.i.d. from $\mathcal N(\mu, \Id_d)$. Given these samples, our goal is to privately estimate the posterior mean $\mu_{\mathrm{post}} :=\E[\mu | x]$ to $\ell_2$ norm.
\end{definition}

\noindent Our main positive result for this problem is \cref{thm:mean_estimation_upper_informal}, restated here for convenience:

\restatetheorem{thm:mean_estimation_upper_informal}

\noindent We will also show that the gap between the statistical and computational rates is fundamental, under the low-degree hardness framework:

\begin{theorem}[Information-computation gap for Bayesian posterior mean estimation]\label{thm:comp-lower-bound_mean}
    Consider the case of an isotropic prior $\mu \sim \mathcal N(0, \sigma^2 \Id_d)$, and suppose that $\beta < \frac{1}{12}$, $d > 2$, and $\alpha \le \frac{\sqrt{2\pi}}{63} \sigma$. There is a distinguishing problem $\Pi$ which is degree-$n^{o(1)}$ hard such that, given a polynomial-time algorithm for private posterior mean estimation under prior $\Id_d$ with
    \[
        n = o\left(\frac{\min(d, \log \frac{1}{\beta})}{\alpha^{4/3} \varepsilon^{2/3}}\right)
    \]
    samples, there would be a polynomial-time distinguishing algorithm for $\Pi$.
\end{theorem}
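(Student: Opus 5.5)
We will prove the lower bound by reducing from an NGCA-type distinguishing problem $\Pi$, following the outline in \cref{sec:overview}. Throughout, the privacy-to-robustness step uses \cref{thm:privacy-robustness-reduction}.

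\textbf{Step 1: the hard problem.} Take $\Pi$ to be NGCA (in the sense of the definition preceding \cref{lem:general-ldlr}) with the following null and alternative.
\begin{itemize}
\item Null $\mathcal{Q}$: $n$ i.i.d.\ samples from $\mathcal N(0,\Id_d)$.
\item Alternative $\mathcal{P}$: a unit vector $v$ is drawn uniformly, and the samples are drawn from the imbalanced mixture $(1-\eta)\mathcal N(c\alpha v,\Id)+\eta\,\mathcal N(-\tfrac{1-\eta}{\eta}c\alpha v,\Id)$, for a suitable constant $c$.
\end{itemize}
The one-dimensional hidden distribution $A$ is a two-point mixture of unit-variance Gaussians with mean zero. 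Its first Hermite coefficient therefore vanishes, and $j_\star=2$ after a small variance adjustment. To achieve this, replace the minority component by $\mathcal N(m,1-s^2)$ so that the second moment matches; \cref{fact:hermite-props} handles the shifted variance.

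Using \cref{fact:hermite-props}, we bound $|\E_A h_j|^2\lesssim j^j\kappa\tau^{-j/2}$ with $\kappa\approx \eta\cdot(\alpha^2/\eta)^2$ roughly and $\tau\approx \eta/\alpha^2$. We verify this by computing $\E h_j$ for a Gaussian with shifted mean: the majority component contributes $(c\alpha)^j/\sqrt{j!}$, and the minority component contributes $\eta (\alpha/\eta)^j$-type terms. \cref{lem:general-ldlr} then gives low-degree hardness whenever $n\ll d\eta^2/\alpha^4$.

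\textbf{Step 2: robust empirical-mean estimation solves $\Pi$.} Under $\mathcal P$, the samples are distributed as an $\eta$-corruption of a clean dataset: condition on the mixture labels, and the $\approx\eta n$ minority samples are the corrupted points, while the remaining samples look like draws from $\mathcal N(c\alpha v,\Id)$. The empirical mean of that clean data is at distance $\approx c\alpha$ from the empirical mean of the full (corrupted) dataset, since the full data has population mean $0$.

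Under $\mathcal Q$ the data is uncorrupted, so a robust estimator must land within $\alpha' \ll c\alpha$ of the empirical mean. The test is therefore: run the estimator and reject when its output is far from the empirical mean.

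To pass to posterior means, note that with prior $\mathcal N(0,\sigma^2\Id)$ the posterior mean is the empirical mean times a constant $\frac{n}{n+1/\sigma^2}$. This requires handling two things:
\begin{itemize}
\item the prior on $\mu$ under $\mathcal P$ (a point mass at $c\alpha v$ rather than Gaussian), and
\item the fact that the guarantee in \cref{thm:privacy-robustness-reduction} is per-dataset.
\end{itemize}
Since privacy-derived robustness holds for \emph{every} dataset, we only need the private estimator's accuracy guarantee to hold with probability $1-\beta$ over datasets drawn from the Bayesian model. We argue via a rejection/absolute-continuity step: a dataset of $\mathcal N(\mu,\Id)$ samples with $\|\mu\|=c\alpha\le\sigma$-scale has density comparable to the Bayesian marginal. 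This is where the condition $\alpha\le\frac{\sqrt{2\pi}}{63}\sigma$ and the constants $\beta<1/12$ enter: failure probability is inflated by a constant factor but stays below $1/3$.

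\textbf{Step 3: parameters.} \cref{thm:privacy-robustness-reduction} converts an $\epsilon$-DP estimator with failure probability $\beta$ into one robust to $\eta=\Theta(\log(1/\beta)/(\epsilon n))$ corruptions. We substitute this into the hardness threshold $n\ll d\eta^2/\alpha^4$, which gives
\[ n^3\ll \frac{d\log^2(1/\beta)}{\epsilon^2\alpha^4}. \]
Since $\eta\le1$ is needed, we cap $\log(1/\beta)$ and take $\min(d,\log(1/\beta))$. Setting both to the minimum $m$, this yields $n=o(m/(\alpha^{4/3}\epsilon^{2/3}))$, as claimed.

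\textbf{Main obstacle.} The hardest part is the Hermite-coefficient bound with correct moment matching, needed to obtain $j_\star=2$ with the right $\kappa,\tau$. A close second is making the Bayesian-prior transfer rigorous with the explicit constants.
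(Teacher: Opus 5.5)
Your skeleton matches the paper's: the same imbalanced two-Gaussian NGCA instance, the same test comparing the estimator's output with the observed empirical mean, and the same substitution $\eta=\Theta(\log(1/\beta)/(\epsilon n))$ into $n\ll d\eta^2/\alpha^4$. The gap is in Step~2. There you transfer the Bayesian guarantee to the fixed-mean instances $\mu=0$ and $\mu=c\alpha v$ by claiming that their data laws have density within a constant factor of the Bayesian marginal. This is false. Given $\mu$, the sufficient statistic is $\bar x\sim\mathcal N(\mu,\Id/n)$, whereas under the Bayesian model $\bar x\sim\mathcal N(0,(\sigma^2+1/n)\Id)$, and
\[
\sup_{\bar x}\ \frac{d\mathcal N(\mu,\Id/n)}{d\mathcal N(0,(\sigma^2+1/n)\Id)}(\bar x)=(1+n\sigma^2)^{d/2}\exp\Bigl(\frac{\|\mu\|^2}{2\sigma^2}\Bigr).
\]
The hypothesis $\alpha\lesssim\sigma$ only controls the second factor. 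The first factor is $e^{\Theta(d)}$ when $n\sigma^2=\Theta(1)$. In fact the two laws are at total variation $1-o(1)$ once $n\sigma^2\gg d^{-1/2}$: datasets with a fixed $\|\mu\|\lesssim\sigma$ have $\|\bar x\|^2\approx d/n$, while Bayesian ones have $\|\bar x\|^2\approx d(\sigma^2+1/n)$. So an estimator with Bayesian failure probability $\beta<1/12$ may fail with probability $1-o(1)$ when $\mu=0$ or $\mu=c\alpha v$. There is no constant-factor inflation, and your change of measure can only be rescued when $\log(1/\beta)\gtrsim d\log(1+n\sigma^2)$, which the hypotheses do not give you.

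The missing idea, which is how the paper argues, is to plant the perturbation on top of a base point drawn from the prior. That way you only ever compare two laws of $\mu$ that differ by a small shift. Couple $X\sim\mathcal N(0,\sigma^2\Id)$ with $Y=X+21\alpha v$. Since the shift has norm $21\alpha\lesssim\sigma$, the law of $Y$ is within a small constant of the prior in total variation, \emph{independently of $d$}; this is what $\alpha\le\frac{\sqrt{2\pi}}{63}\sigma$ is for.
\begin{itemize}
\item Under $\mu\sim\frac12(\mathrm{law}(X)+\mathrm{law}(Y))$ the success probability therefore drops only by a constant. In the paper's accounting this constant is $1/12$, which is where $\beta<1/12$ enters.
\item By averaging, there is a fixed $x$ for which the estimator succeeds with probability $>5/6$ when $\mu=x+21\alpha v\cdot\mathrm{Ber}(1/2)$.
\item The posterior mean is a fixed linear function of $\bar x$, so translating the data by $x$ gives an estimator for the centered pair $\mu\in\{0,21\alpha v\}$. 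It succeeds with probability at least $2/3$ in each case, which is exactly what \cref{lem:robust-me-hard-distributions} (via \cref{cor:private-comp-lb-average-case}) consumes.
\end{itemize}

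Separately, your Hermite bookkeeping needs fixing. The choice $j_\star=2$ only requires $\E_A h_1=0$, which the mean-zero mixture already satisfies, so no variance adjustment is needed. Your $\kappa\approx\alpha^4/\eta$ and $\tau\approx\eta/\alpha^2$ violate the hypothesis of \cref{lem:general-ldlr} already at $j=2$, where $|\E_A h_2|^2\asymp\alpha^4/\eta^2$, and they would yield the threshold $d\eta^2/\alpha^6$. The right choice is $\kappa=\eta^2$ and $\tau=\delta^{-4}$ with $\delta=\Theta(\alpha/\eta)$, which gives $d\tau/\kappa\asymp d\eta^2/\alpha^4$.
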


\subsection{Reformulation as private empirical mean estimation}
\label{subsec:private-me-main-results}
 
To prove these results, we first give a straightforward reduction to the task of private \emph{empirical mean} estimation.

Suppose without loss of generality that $\Sigma$ is invertible (otherwise, we can simply change basis to exclude its kernel). By a standard calculation, there is a simple linear correspondence between the posterior mean and the empirical mean $\bar{x}\coloneqq \frac{1}{n}\sum_i x_i$:
\[
    \mu_\mathrm{post} = \Bigl(\Id + \frac{1}{n} \Sigma^{-1}\Bigr)^{-1} \bar x =: \Lambda \bar x\,.
\]
Obtaining an estimate $\hat \mu$ of the posterior mean in $L_2$ norm thus corresponds to obtaining an estimate $\hat x$ of the empirical mean in a certain Mahalanobis norm:
\[
    \|\hat \mu - \mu_\mathrm{post}\|_2 = \|\Lambda(\hat x - \bar x)\| = \|\hat x - \bar x\|_{\Lambda^{-2}}
\]
Furthermore, note that after applying (known) linear transformation $\Lambda$ to the dataset, this is equivalent to estimating the empirical mean of a dataset of samples from $\mathcal{N}(0,\Lambda^2)$ in $L_2$ norm.

The remainder of this section will thus focus on this latter problem of empirical mean estimation for Gaussians of covariance $\Lambda^2$. \Cref{thm:mean_estimation_upper_informal} follows directly (see \cref{app:bayesian-me-from-emp-me} for details) from the following upper bound:

\begin{theorem}[Upper bounds for private empirical mean estimation]
\label{thm:anisotropic-private-est}
    Fix a public parameter $R > 1$. For all $\mu \in \mathbb R^d$, given $n$ i.i.d. samples from $\mathcal N(\mu, \Lambda^2)$  with empirical mean $\|\bar x\|_2$, if $\|\bar x\|_2 \le R$,
    \begin{enumerate}
        \item There is a (computationally inefficient) $\varepsilon$-private estimator which uses at most
            \[
            n = \tilde O\biggl(\frac{\mathrm{tr}(\Lambda) + \|\Lambda\|_\mathrm{op} \log(1/\beta)}{\alpha \varepsilon} + \frac{d \log(R/\alpha)}{\varepsilon}\biggr)\,, 
        \]
        then the estimator will output $\hat x$ such that $\|\hat x - \bar x\|_2 \le \alpha$ with probability at least $1 - \beta$.
        \item There is a polynomial-time $\varepsilon$-private estimator which uses at most
        \[
            n = \tilde O\biggl(\frac{\tr(\Lambda^{4/3}) + \|\Lambda^{4/3}\|_\mathrm{op} \log(1/\beta)}{\alpha^{4/3} \varepsilon^{2/3}} + \frac{\tr(\Lambda) + \|\Lambda\|_\mathrm{op} \log(1/\beta)}{\alpha \varepsilon} + \frac{d \log(R/\alpha)}{\varepsilon}\biggr)\,,
        \]
        then the same conclusion holds.
    \end{enumerate}
    Here, the $\tilde O$ hides log factors of $\alpha, n, d$. If $\Lambda$ is proportional to $\Id_d$, then it only hides log factors of $\alpha, n$.
\end{theorem}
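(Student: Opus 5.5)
The plan is to go through the robustness-to-privacy framework of \cite{hopkins2023robustness} (\cref{lemma:inefficient-reduction-meta} for the inefficient estimator, \cref{thm:efficient-reduction-meta-pure} for the efficient one), with the target statistic taken to be the empirical mean $\bar x$ rather than $\mu$, as licensed by the remark on Bayesian estimation following those statements. The first step is the isotropic case $\Lambda = \lambda \Id$; after rescaling we may take $\lambda = 1$. The core task is a robust estimator for the empirical mean of the uncorrupted samples under $\eta$-corruption, with an error rate $\alpha(\eta)$ that tends to zero as $\eta \to 0$.

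\textbf{Isotropic robust estimators.}
\begin{itemize}
\item \emph{Inefficient.} Brute-force search for $y_1,\dots,y_n$ that agree with the corrupted data on a $(1-\eta)$ fraction and have small $\eta$-sparse singular value. This means every $\eta n$-subset average is within $O(\sqrt{\log 1/\eta} + \sqrt{d/(\eta n)})$ of $\bar y$, which a union bound shows holds for true Gaussian data. Then $\bar y - \bar x$ involves only $O(\eta n)$ points on each side, giving $\alpha(\eta) = O(\eta\sqrt{\log 1/\eta} + \sqrt{\eta d/n})$.
\item \emph{Efficient.} Adapt the sum-of-squares relaxation of \cite{kothari2022polynomial}, which searches for a $(1-\eta)$-subset with bounded covariance, tracking error terms against $\bar x$ instead of $\mu$. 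Here I aim for $\alpha(\eta) = O(\eta\sqrt{\log 1/\eta} + \sqrt{\eta}(d/n)^{1/4})$, since bounded covariance only controls $\eta$-subset averages through $\sqrt{\eta}\,\|\mathrm{Cov}-\Id\|_{\mathrm{op}}^{1/2}$ and $\|\mathrm{Cov}-\Id\|_{\mathrm{op}} \approx \sqrt{d/n}$.
\end{itemize}

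\textbf{Privatization.} Plug $\alpha(\eta)$ into the reductions, with score function $S(\theta,\mathcal Y)$ equal to the minimum number of points to change so the SoS program certifies $\theta$. We have $\log(2\alpha(\eta)/\alpha(\eta_0)) = O(\log(1/\alpha))$ for $\eta \le \eta^*$, and for large $\eta$ the $\log(R/\alpha)$ volume factor comes from confining $\Theta$ to a ball of radius $R$; this yields the $\frac{d\log(R/\alpha)}{\varepsilon}$ term. Then choose $\eta_0 \approx (d+\log(1/\beta))/(\varepsilon n)$ up to logs.
\begin{itemize}
\item Solving $\alpha \gtrsim \eta_0$ gives the $\frac{d + \log 1/\beta}{\alpha\varepsilon}$ term.
\item Solving $\alpha \gtrsim \sqrt{\eta_0 \sqrt{d/n}}$ gives $n \gtrsim \frac{d^{3/4}\cdot d^{1/4}\ldots}{\ldots}$, that is, $n^{3/2} \gtrsim \frac{(d+\log 1/\beta)\sqrt d}{\varepsilon\alpha^2}$, which yields the $\frac{d + \log 1/\beta}{\alpha^{4/3}\varepsilon^{2/3}}$ term.
\end{itemize}
One subtlety is that the $\log(1/\beta)$ should enter the sparse-norm bounds with $\|\Lambda\|_{\mathrm{op}}$ and not with $\tr$. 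To get this, I would replace $d$ by $d + \log(1/\beta)$ in the high-probability concentration bounds.

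\textbf{Anisotropic case (main obstacle).} The volume-ratio argument is inherently rank-based, so simply running the isotropic reduction on $\Lambda$ loses a factor. Instead, I would bucket the eigenvalues of $\Lambda$ into $O(\log(nd/\alpha))$ dyadic groups $\lambda \in (2^{-k-1}, 2^{-k}]\cdot\|\Lambda\|_{\mathrm{op}}$; eigenvalues below $\alpha/\mathrm{poly}(d)$ are negligible and can be handled by outputting $0$ or the coarse estimate. On bucket $k$, of dimension $d_k$, the data are nearly isotropic with scale $\lambda_k$. Run the isotropic private estimator there with privacy $\varepsilon/K$ and target error $\alpha_k$, then compose over buckets and combine errors in $\ell_2$.

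Optimizing the allocation $\alpha_k$ subject to $\sum_k \alpha_k^2 \le \alpha^2$ is a Lagrange/Hölder computation. For the inefficient estimator the requirement $n \gtrsim \lambda_k(d_k + \log 1/\beta)/(\alpha_k\varepsilon)$ leads to $\sum_k \lambda_k d_k \approx \tr\Lambda$. For the efficient one, $n \gtrsim \lambda_k^{4/3}(d_k+\log 1/\beta)/(\alpha_k^{4/3}\varepsilon^{2/3})$ leads to $\tr(\Lambda^{4/3})$. The $\log(1/\beta)$ terms get multiplied by $\|\Lambda\|_{\mathrm{op}}$ because their contributions form a geometric sum dominated by the top bucket. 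The union bound over $K$ buckets and the $\varepsilon/K$ split cost only the hidden $\log d$ factors, which disappear when $\Lambda \propto \Id$.

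The step I expect to be hardest is verifying the efficient $\sqrt{\eta}(d/n)^{1/4}$ empirical-mean bound within the \cite{kothari2022polynomial} analysis, particularly that their Gaussian moment lower-bound argument still goes through when measured against $\bar x$.
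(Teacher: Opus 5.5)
Your plan follows the paper's proof. In the isotropic case you use a brute-force resilience estimator and a \cite{kothari2022polynomial}-style SoS estimator for $\bar x$, privatized through \cite{hopkins2023robustness} with $\bar x$ as the target. You then bucket the spectrum of $\Lambda$ dyadically and split the privacy budget as $\varepsilon/K$. Combining the bucket errors in $\ell_2$ with an optimized allocation $\alpha_k$ is valid, since the buckets are orthogonal. It is slightly tighter than the paper's $\alpha\le\sum_k\alpha_k$, though it gives the same $\tilde O$ bound. The step you expect to be hardest is handled in the paper by a short Cauchy--Schwarz argument (\cref{lem:technicallem-unit-variance}). That argument only uses first- and second-moment stability of reweightings centered at $\bar x$ (\cref{lem:stability-Gaussian}); at the target rate $\sqrt{\eta\sqrt{d/n}}$ nothing beyond that is needed.

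One step fails as written, and the paper's bucketing argument contains the same step: outputting $0$ on the directions whose eigenvalue is below the threshold. The theorem allows arbitrary $\mu$ subject only to $\|\bar x\|_2\le R$. So the projection $P\bar x=P\mu+P(\bar x-\mu)$ onto that block can have norm of order $R$ however small the eigenvalues are; take $\Lambda=\mathrm{diag}(1,10^{-100})$ and $\mu=(0,R/2)$. A tiny variance makes $P\bar x$ easy to estimate, but it does not make it close to $0$.

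The repair is to treat the whole low block as one more bucket at a scale $\lambda_{\rm thr}$. For the resilience estimator, a covariance bound $\preceq\lambda_{\rm thr}^2 I$ suffices. For the SoS estimator, do the following:
\begin{enumerate}
\item Add fresh noise $\zeta_i\sim\mathcal N(0,\lambda_{\rm thr}^2I-\Lambda_{\rm low}^2)$ to each sample.
\item Run the isotropic estimator on $x_i+\zeta_i$.
\item Subtract $\bar\zeta$ from its output.
\end{enumerate}
This is still $\varepsilon/K$-DP. For each fixed $\zeta$, neighboring datasets stay neighboring and the subtraction is post-processing; mixing over independent $\zeta$ preserves this. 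The error equals the isotropic estimator's error on data that is exactly isotropic at scale $\lambda_{\rm thr}$. That is $\lambda_{\rm thr}$ times the isotropic rate, negligible for polynomially small $\lambda_{\rm thr}$. The price is one extra privacy share and a logarithmic change in the $d\log(R/\alpha)/\varepsilon$ term.

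Relatedly, ``nearly isotropic'' within a dyadic bucket is enough for the resilience estimator but not for the SoS one. Its analysis needs $\frac1n\sum_i a_i\langle x_i-\bar x,v\rangle^2\ge 1-\alpha_2$ in every direction, with $\alpha_2$ of order $\sqrt{(d+\log(1/\beta))/n}+\eta\log(1/\eta)$. A factor-$2$ spread of eigenvalues forces $\alpha_2=\Omega(1)$. Then an adversary can shift the mean by order $\lambda_k\sqrt{\eta}$ while the corrupted set still satisfies the covariance constraint. So you should whiten bucket $k$ by $\Lambda_k^{-1}$ and target error $\alpha_k/\lambda_k$ (or reuse the noise-and-subtract trick) before invoking the isotropic bound. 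This changes only constants and logarithms.
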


\noindent A simple bucketing argument on the eigenvalues of $\Lambda$ shows that to prove these bounds for general $\Lambda$, it suffices to prove them in the special case where $\Lambda$ is proportional to $\Id_d$, at the loss of a logarithmic factor in $d$ (see \cref{app:aniso-bucketing-argument} for details).

The statistical rate in \cref{thm:anisotropic-private-est} is optimal up to $\log 1/\alpha$ factors, as improving this would violate existing lower bounds for private parameter mean estimation. More precisely, the last term in the rate is necessary by a standard packing argument, the $\tr(\Lambda)$ term is necessary by~\cite[Theorem 1.3]{dagan2024dimension}, and the $\norm{\Lambda}_{\sf op} \log(1/\beta)$ term is necessary by the univariate lower bound of~\cite[Theorem 6.2]{karwa-vadhan} applied to the top eigenspace of $\Lambda$.

Additionally, we provide low-degree evidence that the computational rate in \cref{thm:anisotropic-private-est} is optimal.

\begin{theorem}[Information-computation gap for private empirical mean estimation]\label{thm:isotropic-lbd_mean}
    Let $\beta < \frac{1}{3}$ and $d>1$. There is a distinguishing problem $\Pi$ (see \cref{def:eta-gaussian-mixtures}) which is degree-$d^{o(1)}$ hard such that if there were a polynomial-time $\epsilon$-DP algorithm for empirical mean estimation for isotropic Gaussian data which achieves error $\alpha$ with probability $1 - \beta$ using
    \[
        n = o\left(\frac{\min(d, \log \frac{1}{\beta})}{\alpha^{4/3} \varepsilon^{2/3}}\right)
    \]
    samples, then there would be an efficient distinguishing algorithm for $\Pi$.
\end{theorem}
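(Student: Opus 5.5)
The argument has three steps: convert the private algorithm into a robust one, build a low-degree-hard mixture distinguishing problem, and use the robust estimator to solve it.

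\textbf{Step 1: from privacy to robustness.} Suppose $\mathcal{M}$ is an $\epsilon$-DP empirical mean estimator with error $\alpha$ and success probability $1-\beta$. Apply \cref{thm:privacy-robustness-reduction} with good set $G_{X}=\{\hat x:\|\hat x-\bar x(X)\|_2\le\alpha\}$. This shows that $\mathcal{M}$ is also a robust empirical mean estimator: on any $\eta$-corruption of $X$ it outputs a point within $\alpha$ of $\bar x(X)$, with probability $1-\beta^{\Omega(1)}$, where $\eta=\Theta(\log(1/\beta)/(\epsilon n))$. By reducing the effective $\beta$ if needed, we may assume $\log(1/\beta)\le d$. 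The target sample bound then rewrites as $n=o(d\eta^2/\alpha^4)$; substituting $\eta\asymp \log(1/\beta)/(\epsilon n)$ gives $n^3=o(d\log^2(1/\beta)/(\epsilon^2\alpha^4))$, which is the claimed rate in the regime $\min(d,\log 1/\beta)=\log 1/\beta$. So it suffices to show that robust empirical mean estimation to error $\alpha$ at corruption $\eta$ is low-degree hard when $n=o(d\eta^2/\alpha^4)$.

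\textbf{Step 2: the hard distinguishing problem.} Let $\Pi$ be NGCA (\cref{def:eta-gaussian-mixtures}) with the following choices. The null is $\mathcal{N}(0,\Id_d)$. Under the planted distribution, the hidden one-dimensional law is a mixture $A=(1-\eta)\mathcal{N}(c\alpha,\sigma_1^2)+\eta\mathcal{N}(-(1-\eta)c\alpha/\eta,\sigma_2^2)$. The constant $c$ is chosen large, and the parameters are set so that $A$ matches the first three moments of $\mathcal{N}(0,1)$. Concretely, the mean cancels by construction, and a variance adjustment in the major component handles the second moment; by symmetry-breaking choices of $\sigma_2$ we hope the third moment can be matched as well, or else we settle for $j_\star=2$ with a small variance mismatch. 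We then bound the Hermite coefficients using \cref{fact:hermite-props}. The major component contributes roughly $(c\alpha)^j/\sqrt{j!}$ plus a variance correction. The minor component contributes $\eta\rho^j h_j(\cdot)$ evaluated at a point of size $\alpha/\eta$, which stays $O(1)$ in the relevant regime $\alpha\lesssim\eta$.

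With $j_\star=4$ we want $|\E_A h_j|^2\le j^j\kappa\tau^{-j/4}$ with $\kappa\approx\alpha^4/\eta^2$ and $\tau=\Theta(1)$. \cref{lem:general-ldlr} then gives $\Adv_{\le D}=O(1)$ whenever $n\le d^2\eta^2/(\poly(D)\alpha^4)$. That bound would be even stronger than needed. If only $j_\star=2$ is achievable, with $\kappa\approx\alpha^4/\eta^2$, we get exactly $n=o(d\eta^2/\alpha^4)$, which is what we want. The moment-matching and coefficient bookkeeping is the main obstacle: we must get the $\alpha^4/\eta^2$ scaling with a growth condition valid for all $j$. I would tune the variances of the two components so that the first nonvanishing coefficient is the one of order $\alpha^2/\eta$.

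\textbf{Step 3: reduction from $\Pi$ to robust estimation.} A planted sample is an $\eta$-corruption, up to binomial fluctuations of the mixture weights, of i.i.d. draws from a Gaussian whose mean is $c\alpha v$ (with covariance close to identity). The clean part therefore has empirical mean within $o(\alpha)$ of $c\alpha v$, and the robust estimator outputs $\hat x$ with $\|\hat x-c\alpha v\|\ge(c-2)\alpha$ away from the full-sample mean $\bar x$. The full-sample mean $\bar x$ concentrates near $0$, since the mixture has mean zero; its norm is $\sqrt{d/n}$ in either case, so the comparison must be done carefully. Under the null, the data are uncorrupted, so $\|\hat x-\bar x\|\le\alpha$. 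The test is to output ``planted'' iff $\|\hat x-\bar x\|>2\alpha$. This test is polynomial time and succeeds with probability at least $1-2\beta^{\Omega(1)}>2/3$, contradicting low-degree hardness. The composed test is not itself low-degree, but that is acceptable under the paper's convention that reductions need only be efficient. Finally, if the covariance mismatch in the planted clean component is of order $\alpha^2/\eta$, it is handled by the fact that the estimator's guarantee is for the empirical mean of whatever Gaussian generated the clean points, after rescaling.
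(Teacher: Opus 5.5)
Your architecture matches the paper's. Step~1 reproduces the computation in the proof of \cref{thm:isotropic-lbd_mean}. Your mixture with $\sigma_1=\sigma_2=1$ is \cref{def:eta-gaussian-mixtures} with $\delta=c\alpha/\eta$ and $v\mapsto -v$, and your test is the one in \cref{lem:robust-me-hard-distributions}. However, two steps do not go through as written.

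\textbf{Step 3.} Your planted-case argument relies on the clean empirical mean being within $o(\alpha)$ of $c\alpha v$. That is false in the regime that matters. The deviation is of order $\sqrt{d/n}$, and near the threshold $n\asymp d\eta^2/\alpha^4$ this is $\asymp\alpha^2/\eta\gg\alpha$ as soon as $\alpha\gg\eta$. You notice that the observed mean also has norm $\sqrt{d/n}$ and say the comparison ``must be done carefully''. But that comparison is the entire content of the reduction, and it is never carried out.

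The missing observation is that you should not compare either mean to a fixed point. The observed and clean samples share every unswapped point. With $B$ the set of swapped indices, $\tilde x-\bar x_{\mathrm{clean}}=\frac1n\sum_{i\in B}(Y_i-X_i)$, which for $|B|=\eta n$ is distributed as $\mathcal N(-c\alpha v,\frac{2\eta}{n}\Id_d)$. The $\sqrt{d/n}$ fluctuations of the shared points cancel exactly. A small-ball bound uniform in the variance, as in \cref{lem:robust-me-hard-distributions}, gives $\|\tilde x-\bar x_{\mathrm{clean}}\|\ge c\alpha/10$ with probability at least $9/10$. Hence $\|\hat x-\tilde x\|\ge c\alpha/10-\alpha>2\alpha$ once $c>30$.

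\textbf{Step 2.} Your Hermite bookkeeping with $\tau=\Theta(1)$ and $\kappa\approx\alpha^4/\eta^2$ needs the minor component's mean $\delta=c\alpha/\eta$ to be $O(1)$, and you call $\alpha\lesssim\eta$ the relevant regime. It is the irrelevant one. Robust empirical mean estimation has an $\Omega(\eta)$ error floor, so for $\alpha\lesssim\eta$ the hypothesized estimator essentially cannot exist. At the theorem's threshold, with $\log(1/\beta)\asymp d$, one has $\eta\asymp\alpha^{4/3}\epsilon^{-1/3}$, i.e.\ $\delta\asymp(\epsilon/\alpha)^{1/3}\to\infty$ for constant $\epsilon$. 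For large $\delta$, already at $j=3$ one has $|\E_A h_3|^2\asymp\eta^2\delta^6$, which exceeds $27\kappa\tau^{-3/2}\asymp\eta^2\delta^4$ by a factor $\asymp\delta^2$.

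The fix is the parametrization of \cref{lem:distinguish_hard}: take $\kappa=\eta^2$, $\tau=\delta^{-4}$, $j_\star=2$. Then $|\E_A h_j|^2\le 2(\eta\delta)^{2j}/j!+2\eta^2\delta^{2j}/j!\le 4\eta^2\delta^{2j}/j!$ for every $j\ge 2$ and every $\delta$. Meanwhile $d\tau/\kappa\asymp d\eta^2/\alpha^4$, which is exactly the ratio you wanted.

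You should also drop the moment matching. If you change the major component's variance along the hidden $v$, the clean points are no longer $\mathcal N(\mu,\Id_d)$ samples, so the estimator's guarantee does not apply; rescaling would require knowing $v$. If you keep the major component isotropic, matching the second moment forces $\eta\sigma_2^2=\eta-(1-\eta)c^2\alpha^2/\eta\ge 0$, i.e.\ $c\alpha\lesssim\eta$, which is again the vacuous regime. So $j_\star=2$ with unit variances is forced, and the hoped-for $d^2\eta^2/\alpha^4$ threshold is not available.
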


\noindent Although interesting in its own right, the above is a worst-case lower bound that does not suffice to prove \Cref{thm:comp-lower-bound_mean}, which holds even in the average case over $\mu \sim \mathcal N(0, \Id_d)$. We show an average-case variant of this lower bound that holds for sufficiently small $\alpha$ and $\beta$. A full proof is contained in 
\Cref{app:bayesian-me-from-emp-me}.

\subsection{Statistical rate for private isotropic empirical mean estimation}
\label{sec:ineff-priv-iso-emp}

In this section, we prove the statistical bound in \cref{thm:anisotropic-private-est} in the case where $\Lambda$ is isotropic. Below, we first prove a statistical upper bound for \emph{robust} empirical mean estimation for isotropic Gaussians, and then apply the privacy-to-robustness reduction of~\cite{hopkins2023robustness}.

\begin{theorem}[Statistical rate for robust empirical mean estimation]
\label{thm:robust-empirical-mean-stat}
    Suppose $0 \le \eta < 1/2$ is bounded away from $1/2$. There exists a (computationally inefficient) $\eta$-robust estimator that, for all $\mu \in \mathbb R^d$, given $n$ i.i.d. samples from $\mathcal N(\mu, \Id_d)$ with empirical mean $\bar x$, outputs $\hat x$ such that $\|\hat x - \bar x\|_2 \le \alpha$, with probability at least $1 - \beta$, for
    \begin{equation}
        \alpha = O\biggl(\eta\sqrt{\log(1/\eta)} + \sqrt{\eta}\cdot \sqrt{\frac{d + \log(1/\beta)}{n}}\biggr)\,.\label{eq:alphabound}
    \end{equation}
\end{theorem}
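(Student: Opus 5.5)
The estimator is the brute-force search described in the overview. Given the corrupted dataset $\tilde x_1,\ldots,\tilde x_n$, we search over all datasets $y_1,\ldots,y_n$ that agree with $\tilde x$ on at least $(1-\eta)n$ indices. Among these we keep only those that are \emph{resilient}: for every $T\subseteq[n]$ with $|T|\le 2\eta n$,
\[
\Bigl\|\tfrac1n\sum_{i\in T}(y_i-\bar y)\Bigr\|_2\le \rho, \qquad \rho\coloneqq C\Bigl(\eta\sqrt{\log(1/\eta)}+\sqrt{\eta}\sqrt{\tfrac{d+\log(1/\beta)}{n}}\Bigr).
\]
We output $\hat x\coloneqq\bar y$ for any feasible $y$. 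Equivalently, the maximum $2\eta$-sparse singular value of the matrix with columns $(y_i-\bar y)/n$ must be at most $\rho$.

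\textbf{Step 1 (feasibility).} We show that the uncorrupted sample $x$ is itself feasible with probability $1-\beta$. Since $\bar x-\mu$ has norm $O(\sqrt{(d+\log(1/\beta))/n})$, it contributes at most $2\eta\cdot O(\sqrt{(d+\log 1/\beta)/n})$ to the constraint, which is dominated by $\rho$ because $\eta\le\sqrt\eta$. So it suffices to bound $\sup_{|T|\le k}\sup_{u\in\mathbb{S}^{d-1}}\frac1n\sum_{i\in T}\langle g_i,u\rangle$ with $g_i=x_i-\mu\sim\mathcal N(0,\Id_d)$ and $k=2\eta n$.

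We argue by a union bound. For fixed $T$ and fixed $u$ in a $1/2$-net of the sphere, the sum $\sum_{i\in T}\langle g_i,u\rangle$ is $\mathcal N(0,|T|)$. Hence with probability $1-\beta$ it is bounded by $\sqrt{k}\,\sqrt{d+\log\binom{n}{k}+\log(1/\beta)}$ simultaneously over all $T$ and net points. Using $\log\binom nk\le k\log(e/\eta)$ and dividing by $n$ gives
\[
\sqrt{\eta}\sqrt{\tfrac{d+\log(1/\beta)}{n}}+\eta\sqrt{\log(1/\eta)},
\]
which is exactly $\rho$ up to constants. The standard net argument then removes the restriction to net points.

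\textbf{Step 2 (accuracy).} Let $y$ be any feasible point and let $G$ be the set of indices where $y_i=x_i$. Since both $y$ and $x$ agree with $\tilde x$ outside at most $\eta n$ indices, we have $|G|\ge(1-2\eta)n$; write $B=[n]\setminus G$. Both $x$ and $y$ are resilient, so
\[
\bar y-\bar x=\tfrac1n\sum_{i\in B}(y_i-\bar y)-\tfrac1n\sum_{i\in B}(x_i-\bar x)+\tfrac{|B|}{n}(\bar y-\bar x).
\]
This rearrangement uses that $\sum_i(y_i-x_i)=\sum_{i\in B}(y_i-x_i)$. The first two terms are each bounded by $\rho$, so $(1-2\eta)\|\bar y-\bar x\|\le 2\rho$. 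Since $\eta$ is bounded away from $1/2$, this gives $\|\hat x-\bar x\|\le O(\rho)$, as claimed in \eqref{eq:alphabound}.

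The main obstacle is Step 1, specifically getting the sharp $\eta\sqrt{\log(1/\eta)}$ term rather than $\eta\log(1/\eta)$ or $\sqrt{\eta\log(1/\eta)}$. One must check that the $\log\binom nk$ entropy term enters multiplied by $\sqrt k$ and under the square root, so that it produces $\sqrt{k\cdot k\log(1/\eta)}/n=\eta\sqrt{\log(1/\eta)}$ rather than a worse quantity. If the plain union bound turns out to be lossy here, I would instead use the Gaussian order-statistic bound for each fixed $u$ (the top-$k$ sum of $n$ standard Gaussians is $O(k\sqrt{\log(n/k)})$) together with concentration of this Lipschitz functional over the net.
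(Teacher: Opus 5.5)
Your proof is correct and is essentially the paper's argument. The paper also uses a brute-force search over $\eta n$-replacements of the input with a small-subset resilience constraint. It proves feasibility of the clean sample by a union bound over the at most $(en/k)^k$ subsets, which does give $\eta\sqrt{\log(1/\eta)}$, so no order-statistic refinement is needed. It then uses the same rearrangement $\bar y-\bar x=\frac1n\sum_{i\in B}(y_i-\bar y)-\frac1n\sum_{i\in B}(x_i-\bar x)+\frac{|B|}{n}(\bar y-\bar x)$, absorbed using $\eta$ bounded away from $1/2$.

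Your normalization by $1/n$ over all $|T|\le 2\eta n$ is actually the cleaner formulation. The paper normalizes by $1/|S|$, which the clean sample violates for tiny $S$ (e.g.\ $|S|=1$, where $\|x_i-\bar x\|_2\approx\sqrt d$). So the paper's version implicitly requires padding $S$ to size exactly $2\eta n$, as its term $2\eta n(\bar y-\bar y')$ suggests.
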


\noindent Note that this implies that to achieve error $O(\eta\sqrt{\log(1/\eta)})$ in robustly estimating the empirical mean, it suffices to use $\tilde{O}(d/\eta)$ samples, rather than the usual ``parametric'' rate of $\tilde{O}(d/\eta^2)$. Additionally, observe that the error, as expected, tends to zero as $\eta \to 0$.

The proof of \cref{thm:robust-empirical-mean-stat} is a basic resilience-type argument~\cite{steinhardt2017resilience}: 

\begin{proof}
    We will construct our estimator in the following way: let $x_1,...,x_n$ be the uncorrupted input samples with mean $\bar x$, and let $\tilde x_1,...,\tilde x_n$ be the corrupted input samples. Our estimator will search over vector-valued variables $y_1,\ldots,y_n$ satisfying the following constraints:
    \begin{itemize}
        \item $|\{i: \tilde x_i \neq y_i\}| \le \eta n$
        \item For all subsets $S \subseteq [n]$ of size at most $2\eta n$:
        \[
            \left\|\frac{1}{|S|} \sum_{i \in S} (y_i - \bar y)\right\|_2 \le C\sqrt{\frac{d + \log \frac{1}{\beta}}{\eta n} + \log \frac{1}{\eta}}
        \]
    \end{itemize}
    for sufficiently large absolute constant $C > 0$, where $\overline{y} = \frac{1}{n}\sum_i y_i$.
    If the above conditions are infeasible, the estimator will output $\bot$. However, by standard concentration and a union bound over all $O(1/\eta)^{\eta n}$ subsets $S$ of size at most $2\eta n$, the constraints are satisfied by the ground truth $(y_1,\ldots,y_n) = (x_1,\ldots,x_n)$.
    Next, let $(y_1,...,y_n)$ and $(y_1',...,y_n')$ be two assignments which both satisfy the above constraints, and let $S = \{i: y_i \neq y_i'\}$ (where we note $|S| \le 2 \eta n$). Define $\overline{y}' = \frac{1}{n}\sum_i y'_i$. We will prove that $\|\bar{y} - \bar{y}'\|_2 \le O(\alpha)$ for $\alpha$ given in Eq.~\eqref{eq:alphabound}, which will imply that any assignment satisfying the above set of constraints will achieve the bound claimed in the theorem. Then
    \begin{align*}
        \|\bar y - \bar y'\|_2 &=\frac{1}{n}\Biggr\|\sum_{i \in S} (y_i - y'_i)\Biggl\|_2 \\
        &= \frac{1}{n} \Biggl\|\sum_{i \in S} (y_i - \bar y) - \sum_{i \in S} (y_i' - \bar y') + 2\eta n(\bar y - \bar y')\Biggr\|_2 \\
        &\le 2C\eta \sqrt{\frac{d + \log \frac{1}{\beta}}{\eta n} + \log \frac{1}{\eta}} + 2 \eta \cdot \|\bar y - \bar y'\|_2\,.
    \end{align*}
    After rearranging and using the assumption $\eta$ is bounded away from $1/2$, we conclude that $\|\bar{y} - \bar{y}'\|_2 \le O(\alpha)$ as desired.
\end{proof}

\noindent We now prove the statistical rate for private isotropic empirical mean estimation (first part of \cref{thm:anisotropic-private-est} when $\Lambda = \Id$) by combining \cref{thm:robust-empirical-mean-stat} and \cref{lemma:inefficient-reduction-meta}.

\begin{proof}[Proof of first part of \cref{thm:anisotropic-private-est}]
Let $\eta^* \coloneqq 1/3$, and let $\alpha(\eta)$ be the error term in \cref{thm:robust-empirical-mean-stat} for $\eta < \eta^*$ for robust empirical mean estimation, and for larger $\eta$ let $\alpha(\eta) = R$.
Let $\alpha$ be the target accuracy of the private algorithm, which we assume to be smaller than a sufficiently small constant, and let $\eta_0$ be such that $\alpha(\eta_0) = \alpha$.
Now from \cref{lemma:inefficient-reduction-meta}, there is a computationally inefficient $\epsilon$-DP algorithm that takes $n$ samples from $\cN(\mu, \Id)$ and outputs $\hat{x}$ such that $\normt{\hat{x} - \bar{x}} \le 2 \alpha(\eta_0) = 2 \alpha$ with probability $1-2\beta$, as long as 
\begin{equation}
n \ge \max_{\eta_0 \le \eta' \le 1}
\frac{d \cdot \log \frac{2\alpha(\eta')}{\alpha(\eta_0)} + \log(1/\beta)}{\eta' \epsilon} \mper \label{eq:nbound}
\end{equation}
This maximum is at most the sum of the maximums over $\eta_0 \le \eta' \le \eta^*$ and $\eta^*\le \eta' < 1$, that is,
\begin{equation*}
\frac{d \log \frac{2\alpha(\eta^*)}{\alpha} + \log(1/\beta)}{\eta_0 \eps} + \frac{d \log\frac{2R}{\alpha} + \log(1/\beta)}{\eps} \mper
\end{equation*}
Note that because $\eta_0$ is such that $\alpha = O(\eta_0\sqrt{\log 1/\eta_0} + \sqrt{\frac{\eta_0 d}{n}})$, we must have $\eta_0 = \tilde{\Omega}(\min\{\alpha, \alpha^2n/d\})$. Substituting this into Eq.~\eqref{eq:nbound} yields the claimed bound. %
\end{proof}

\subsection{Computational rate for private isotropic empirical mean estimation}
\label{sec:eff-priv-iso-emp}

Here we will prove the computational bound in Theorem~\ref{thm:anisotropic-private-est} in the case where $\Lambda$ is isotropic. We will first develop an efficient sum-of-squares algorithm for robust empirical mean estimation, and then apply the privacy-to-robustness reduction.

\subsubsection{Efficient robust empirical mean estimation}
\label{sec:apply_kmz}
\begin{theorem}
\label{thm:robust-empirical-mean-efficient}
    Let $\eta > 0$ be smaller than a sufficiently small constant.  There is an efficient $\eta$-robust estimator such that, for all $\mu \in \mathbb R^d$, given $n$ i.i.d. samples from $\mathcal N(\mu, \Id_d)$ such that the uncorrupted samples have empirical mean $\bar{x}$, the estimator will output an $\hat x$ such that $\|\hat x - \bar x\|_2 \le \alpha$ except with probability at most $\beta$ for
    \[
        \alpha \le O\Biggr(\eta\sqrt{\log 1/\eta} + \sqrt{\eta \sqrt{\frac{d + \log \frac{1}{\beta}}{n}}}\Bigg)\,.
    \]
\end{theorem}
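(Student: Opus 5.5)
We adapt the sum-of-squares program of \cite{kothari2022polynomial}, keeping careful track of every error term so that the analysis bounds the distance to the \emph{empirical} mean $\bar x$ of the uncorrupted samples rather than to $\mu$.

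\textbf{The program.} The variables are selectors $w_1,\ldots,w_n\in\{0,1\}$ (encoded by $w_i^2=w_i$) with $\sum_i w_i \ge (1-\eta)n$, together with points $y_1,\ldots,y_n$ satisfying $w_i y_i = w_i \tilde x_i$. Writing $\bar y=\frac1n\sum_i y_i$, we impose the degree-two covariance constraint
\[
\frac1n\sum_i (y_i-\bar y)(y_i-\bar y)^\top \preceq \Bigl(1+O\bigl(\eta\log(1/\eta)+\gamma\bigr)\Bigr)\Id ,
\qquad \gamma := \sqrt{\tfrac{d+\log(1/\beta)}{n}} .
\]
This is imposed in SoS form as $\frac1n\sum_i\langle y_i-\bar y,v\rangle^2\le(1+\delta)\|v\|^2$ via an auxiliary PSD slack matrix. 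The estimator outputs $\hat x=\tE[\bar y]$ for a pseudo-distribution satisfying these constraints.

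\textbf{Feasibility.} Take $y=x$ and $w=\mathbf 1_{\text{uncorrupted}}$. Standard Gaussian covariance concentration gives
\[
\Bigl\|\frac1n\sum_i(x_i-\bar x)(x_i-\bar x)^\top-\Id\Bigr\|_{\op}\le O(\gamma)
\]
with probability $1-\beta$, so this assignment is feasible.

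\textbf{Analysis.} For a direction $v$, we compare $\bar y$ with $\bar x$ using the indicator $w_i'=w_i\cdot\mathbf 1[i\text{ uncorrupted}]$. The set $S$ of indices with $w_i'=0$ has (pseudo-)size at most $2\eta n$, and
\[
\langle \bar y-\bar x,v\rangle=\frac1n\sum_{i}(1-w_i')\Bigl(\langle y_i-\bar y,v\rangle-\langle x_i-\bar x,v\rangle\Bigr)+2\eta\,\langle\bar y-\bar x,v\rangle
\]
up to the deviation $\frac1n\sum_i(1-w_i')-2\eta$. The last term is absorbed into the left-hand side, as in the proof of \cref{thm:robust-empirical-mean-stat}.

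The true-sample term $\frac1n\sum_i(1-w_i')\langle x_i-\bar x,v\rangle$ is handled by the SoS selector lemma (\cref{lem:sos-subset-sum}). It requires the fact that every subset of size at most $2\eta n$ has $\bigl|\frac1n\sum_{i\in S}\langle x_i-\bar x,v\rangle\bigr|\le O(\eta\sqrt{\log 1/\eta})$ for each fixed $v$. Handling $v$ varying inside SoS requires the \cite{kothari2022polynomial} device: take $v$ proportional to $\tE[\bar y]-\bar x$, which is a fixed vector after the pseudo-expectation, so only scalar subset-sum bounds are needed. A union bound then uses a net or a single direction chosen a posteriori. The cost of the union over subsets is where the $\sqrt{\eta\gamma}$ term enters.

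The $y$-term is the hard part. Cauchy--Schwarz gives
\[
\frac1n\sum_i(1-w_i')\langle y_i-\bar y,v\rangle\le\sqrt{2\eta}\cdot\Bigl(\frac1n\sum_i(1-w_i')\langle y_i-\bar y,v\rangle^2\Bigr)^{1/2}.
\]
The covariance constraint alone gives only $\sqrt\eta$, which is too weak. \cite{kothari2022polynomial} instead subtract the true samples' second moment on the common support: the constraint upper bound $1+\delta$ minus the Gaussian \emph{lower} bound $\frac1n\sum_i w_i'\langle x_i-\bar x,v\rangle^2\ge 1-2\eta-O(\eta\log(1/\eta)+\gamma)$ on the common part. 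This lower bound is a non-certifiable fact used only for the fixed direction $v$. It leaves a tail mass of $O(\eta\log(1/\eta)+\gamma)$ plus a term $\langle\bar y-\bar x,v\rangle^2$, giving
\[
|\langle\bar y-\bar x,v\rangle|\lesssim\sqrt{\eta\bigl(\eta\log(1/\eta)+\gamma\bigr)}+\sqrt\eta\,|\langle\bar y-\bar x,v\rangle|.
\]
Rearranging yields $\alpha\lesssim\eta\sqrt{\log(1/\eta)}+\sqrt{\eta\gamma}$, which is the claimed bound.

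\textbf{Main obstacle.} I expect the obstacle to be this step: making the recentering around $\bar y$ versus $\bar x$ SoS-compatible, and ensuring that the Gaussian lower-tail fact is invoked for the single fixed direction. Every $O(\gamma)$ concentration error must enter only under the square root multiplied by $\eta$, never additively. Finally, rounding is done by taking $\tE[\bar y]$ and applying $\|\tE[\bar y]-\bar x\|^2\le\tE\|\bar y-\bar x\|^2$-type reasoning along $v$.
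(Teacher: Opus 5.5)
Your proposal is correct and is essentially the paper's argument: the same \cite{kothari2022polynomial}-style program with covariance slack $O(\gamma+\eta\log(1/\eta))$ and output $\tE[\bar y]$, analyzed as in \cref{lem:technicallem-unit-variance} (Cauchy--Schwarz on the at most $2\eta n$ bad indices, subtracting the good samples' second-moment lower bound, then rearranging), with the Gaussian stability bounds of \cref{lem:stability-Gaussian} taken around $\bar x$. Two corrections, neither of which changes the final rate. First, because the direction $v$ depends on the data, the first-moment subset bound must hold uniformly over all $v$ rather than for a single $v$; this adds a $\sqrt{\eta}\,\gamma$ term to your $O(\eta\sqrt{\log(1/\eta)})$ (this is the paper's $\alpha_1$), which is harmless because $\sqrt{\eta}\,\gamma\le\sqrt{\eta\gamma}$ when $\gamma\le 1$. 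Second, the $\sqrt{\eta\gamma}$ term in the final rate comes from the covariance slack $\gamma$ multiplied by $\eta$ under the square root, not from the union bound over subsets.
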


\begin{proof}
The proof follows from a careful analysis of the algorithm in \cite{kothari2022polynomial}.
Let $\alpha_0, \alpha_1, \alpha_2$ be values to be determined later.
Let $x_1,\ldots,x_n$ denote the observed samples, an $\eta$ fraction of which have been corrupted. Let $x^{\circ}_1,\ldots,x^{\circ}_n$ denote the original, uncorrupted samples, which we do not have access to and which have empirical mean $\bar{x}$.
We consider the following set of polynomial constraints in the indeterminates $w_1, \dots, w_n$ (indicating our guesses for which points are uncorrupted), and $x_1', \dots, x_n'$:
\begin{equation*}
\mathcal{A} =
\left\{
\begin{aligned}
&\forall i\in [n]: &w_i &= w_i^2 \\
&\forall i\in[n]: &w_i x_i &= w_i x_i' \\
& &\sum_{i = 1}^n w_i &\ge (1-\eta) n \\
&\forall v \in \mathbb{S}^{d-1}: &\E_i \iprod{x_i' - \mu', v}^2&\le (1 + \alpha_0)
\end{aligned}
\right\}
\end{equation*}
where $\mu' = \E_i x_i'$.
Then the algorithm is as follows:
\begin{algorithmbox}[Robust algorithm for estimating the empirical mean]\label{algo:efficient-emp-mean-est}
    \mbox{}\\
    \textbf{Input:} $\eta$-corrupted samples $x_1, \ldots x_n \in \R^d$ \\
    \textbf{Output:} a polynomial time estimate $\hat{x} \in \R^d$ for the empirical mean
\begin{enumerate}
    \item Solve the degree-$O(1)$ SoS relaxation of the constraint system $\mathcal{A}(x', w ; x)$, and obtain a pseudo-expectation operator $\tE$. 
    \item Set $\hat{x} \gets \tE[\mu']$.
    \item \textbf{Return} $\hat{x}$.
\end{enumerate}
\end{algorithmbox}
Suppose we know that for all vectors  $v \in \mathbb{S}^{d-1}$ and $b \in [0,1]^n$ such that $\E_i b_i \ge 1- \eta$, the bounds 
\begin{align*}
    \Abs{\E_i b_i \iprod{x_i^\circ - \bar{x}, v}} &\le \alpha_1 \\
    \Abs{\E_i b_i \iprod{x_i^\circ - \bar{x}, v}^2- 1} &\le \alpha_2    
\end{align*}
hold.
Then tracking the error terms in the proof of \cite[Lemma 4.1]{kothari2022polynomial} gives us that if $\pE$ is a degree-$O(1)$ pseudoexpectation that satisfies $\cA$, then 
\begin{equation*}
    \normt{\pE \mu' - \bar{x}} \le O(\alpha_1 + \sqrt{\eta \cdot (\alpha_0 + \alpha_2) + \eta^2})\,.
\end{equation*}
We include a proof of this statement in \Cref{sec:isotropic-estimation-lemma} (see \cref{lem:technicallem-unit-variance}).
Now it suffices to set $\set{\alpha_0, \alpha_1, \alpha_2}$. Applying \Cref{lem:stability-Gaussian}, since $n \ge d + \log(1/\beta)$, we can take 
\begin{align*}
\alpha_0 = \alpha_2 &= O\Paren{\sqrt{\frac{d + \log(1/\beta)}{n}} + \eta \log(1/\eta)}  \mcom
\\
\text{ and } \alpha_1 &= O\Paren{\sqrt{\frac{\eta(d + \log(1/\beta))}{n}} + \eta \sqrt{\log(1/\eta)}} \mper
\end{align*}
Substituting these choices of $\alpha_0, \alpha_1, \alpha_2$, finishes the proof. 
For feasibility, note that pseudoexpectations generalize expectations which generalize masses on single points. In order to verify feasibility it suffices to take $\pE$ to be the pseudoexpectation that assigns each $x_i'$ to $x_i^\circ$ and sets each $w_i$ equal to $1$ if that sample is uncorrupted and to $0$ otherwise.
\end{proof}

\begin{lemma}[Stability]\torestate{
\label{lem:stability-Gaussian}
Let $x_1,\ldots,x_n$ be i.i.d. samples from $\cN(\mu, \Id_d)$ and $\bar{x}$ be their empirical mean. Then with probability $1-\beta$, for all vectors $v \in \mathbb{S}^{d-1}$ and $b \in [0,1]^n$ such that $\E_i b_i \ge 1-\eta$, we have 
\begin{align*}
\Abs{\E_i b_i \iprod{x_i - \bar{x}, v}} &\le O\Paren{\sqrt{\frac{\eta(d + \log(1/\beta))}{n}}
+
\eta\sqrt{\log(1/\eta)}
} 
\\
\Abs{\E_i b_i \iprod{x_i - \bar{x}, v}^2 - 1}
&\le 
O\Paren{
\sqrt{\frac{d + \log(1/\beta)}{n}} +\frac{d + \log(1/\beta)}{n}  + \eta \log(1/\eta)}
\end{align*}
}
\end{lemma}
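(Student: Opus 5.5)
Without loss of generality take $\mu = 0$; since both quantities are centered at $\bar{x}$, they are translation invariant. Write $z_i = x_i - \bar{x}$, so $\sum_i z_i = 0$. The plan is to reduce both inequalities to two standard facts about i.i.d.\ Gaussians. The first is a global concentration bound: $\|\bar{x}\|_2 \le O(\sqrt{(d+\log(1/\beta))/n})$, and the empirical covariance satisfies $\|\E_i x_i x_i^\top - \Id\|_{\mathrm{op}} \le O(\sqrt{(d+\log(1/\beta))/n} + (d+\log(1/\beta))/n)$. This follows from standard matrix concentration for Gaussian vectors with an $\epsilon$-net over $\mathbb{S}^{d-1}$. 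The second is a bound on small subsets: for every $T \subseteq [n]$ with $|T| \le \eta n$ and every unit $v$,
\begin{equation*}
\frac{1}{n}\sum_{i\in T}\iprod{x_i,v}^2 \le O\Paren{\eta\log(1/\eta) + \frac{d+\log(1/\beta)}{n}}\,.
\end{equation*}
This comes from a union bound over the $\binom{n}{\eta n} \le (e/\eta)^{\eta n}$ subsets and a net over $v$. For fixed $T$ and $v$, the sum $\sum_{i\in T}\iprod{x_i,v}^2$ is $\chi^2_{\eta n}$, and it exceeds $\eta n + O(\sqrt{\eta n\, t} + t)$ with probability $e^{-t}$. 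Taking $t = \eta n\log(e/\eta) + O(d) + \log(1/\beta)$ gives the bound.

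\textbf{Reducing to extreme weights.} Both expressions are affine in $b$ for fixed $v$. The first is $\E_i b_i\iprod{z_i,v} = -\E_i (1-b_i)\iprod{z_i,v}$, using $\sum_i z_i = 0$. The second is $\E_i\iprod{z_i,v}^2 - \E_i(1-b_i)\iprod{z_i,v}^2 - 1$. Set $c_i = 1-b_i \in [0,1]$ with $\sum_i c_i \le \eta n$. By the SoS selector lemma (\cref{lem:sos-subset-sum}), used here only as an ordinary linear inequality, it is enough to bound $|\sum_{i\in T} a_i|$ over subsets $|T| \le \eta n$, where $a_i = \iprod{z_i,v}$ or $a_i = \iprod{z_i,v}^2$.

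\textbf{First inequality.} By Cauchy--Schwarz, $\frac1n|\sum_{i\in T}\iprod{z_i,v}| \le \sqrt{\eta}\,(\frac1n\sum_{i\in T}\iprod{z_i,v}^2)^{1/2}$. Use $\iprod{z_i,v}^2 \le 2\iprod{x_i,v}^2 + 2\iprod{\bar{x},v}^2$ and apply the subset bound together with the bound on $\|\bar{x}\|_2$. This gives $\sqrt{\eta}\cdot O(\sqrt{\eta\log(1/\eta)} + \sqrt{(d+\log(1/\beta))/n})$, which is exactly the claimed rate.

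\textbf{Second inequality.} Write $\E_i\iprod{z_i,v}^2 = \E_i\iprod{x_i,v}^2 - \iprod{\bar{x},v}^2$. By the covariance bound and the bound on $\|\bar{x}\|_2$, this is within $O(\sqrt{(d+\log(1/\beta))/n} + (d+\log(1/\beta))/n)$ of $1$. For the subtracted term, the subset bound (again with the $\bar{x}$ correction) bounds $\frac1n\sum_{i\in T}\iprod{z_i,v}^2$ by $O(\eta\log(1/\eta) + (d+\log(1/\beta))/n)$, which fits within the stated error.

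\textbf{Main obstacle.} The delicate step is the uniform subset bound. One has to check that the union bound over subsets costs only $\eta n\log(1/\eta)$ in the exponent, and that the net over $v$ costs only $O(d)$, so that after dividing by $n$ the bound is $O(\eta\log(1/\eta) + (d+\log(1/\beta))/n)$ with no cross term worse than what the lemma states. The cross term $\sqrt{\eta n\,t}/n$ is handled by AM--GM, which absorbs it into $\eta + t/n$. The net argument passes from the net to all $v$ because the supremum over $v$ of a quadratic form is controlled by its value on a $1/4$-net, up to a factor of $2$.
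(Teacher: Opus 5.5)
Your proof is correct, and for the second inequality it is essentially the paper's argument. Global covariance and mean concentration control $\E_i\langle x_i-\bar x,v\rangle^2-1$. The removed mass $\E_i(1-b_i)\langle x_i-\bar x,v\rangle^2$ is controlled by a uniform small-subset second-moment bound, obtained by a union bound over the $\binom{n}{\eta n}$ extreme weight vectors. Your explicit $\chi^2$-tail, net and union-bound accounting is a more careful version of the paper's step. The paper's displayed fixed-$b$ bound uses the $n$-sample covariance deviation, whereas a subset of $\eta n$ points has the $\eta n$-sample deviation; the final rate is unaffected.

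The genuine difference is in the first inequality. The paper writes $\E_i b_i\langle x_i-\bar x,v\rangle=\E_i(b_i-\E_j b_j)\langle x_i-\mu,v\rangle$. For each fixed $b$, the vector $\E_i(b_i-\E_j b_j)(x_i-\mu)$ is exactly Gaussian with covariance at most $(2\eta/n)I_d$. One norm bound therefore handles all $v$ at once, with no net and no separate $\|\bar x-\mu\|$ term, and the paper then union bounds over extreme $b$.

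You instead use $\sum_i(x_i-\bar x)=0$ to pass to the removed mass, and deduce first-moment resilience from the second-moment subset bound by Cauchy--Schwarz. This loses nothing here, since $\sqrt{\eta\,(\eta\log(1/\eta)+(d+\log(1/\beta))/n)}$ is exactly the target rate.

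The paper's route uses exact Gaussianity of linear combinations and is self-contained for that part. Yours is more economical: a single uniform bound on $\sup_{|T|\le\eta n,\,v}\frac1n\sum_{i\in T}\langle x_i-\mu,v\rangle^2$, plus global concentration, drives both inequalities. It is the standard ``small-subset second moments imply first-moment resilience'' argument, so it carries over to any data satisfying that subset bound.
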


\noindent We prove the bound in \cref{sec:mean_concentration}.

\subsubsection{Robustness-to-privacy reduction}
\label{sec:robust_to_privacy_mean}

In this section we apply the robustness to privacy reduction and prove the efficient isotropic case of \Cref{thm:anisotropic-private-est}. Our goal is to apply \Cref{thm:efficient-reduction-meta-pure}. We follow the reduction in Section~5 of \cite{hopkins2023robustness} closely. The program we used in the proof of \Cref{thm:robust-empirical-mean-efficient} is the same as the program used by \cite{hopkins2023robustness}, the only difference being the accuracy guarantee we aim to obtain from the program. Their focus is on estimating the parameter mean of the distribution, while our focus is estimating the empirical mean of the uncorrupted samples. Despite this distinction, we can still instantiate the same reduction by using the same score function. Most of the properties required by \Cref{thm:efficient-reduction-meta-pure} --- bounded sensitivity, quasi-convexity, efficient computability, efficient computability, and efficient finding of low-scoring points --- follow automatically. It suffices for us to verify the accuracy guarantees and the volume ratios.

We now define the score function. First we need to define what it means for a candidate point to be a certifiable mean:

\begin{definition}[$(\alpha, \tau, T)$-Certifiable Mean]
Let $x$ be a set of inputs, we call a candidate point $\tilde{x}$ an \emph{$(\alpha, \tau, T)$-certifiable mean} for the this set of inputs if there exists a linear functional $\cL$ that takes as inputs polynomials of degree $O(1)$ in indeterminates $(x', w, M)$ with the following properties: $\cL$ satisfies
$\normt{\cL \mu' - \tilde{x}}\le \alpha$ and $\normt{\cL \mu'} \le 2R + \tau T$ where $\mu' = \E_i x_i'$, $\normf{\cR(\cL)} \le R' + n \tau$, for $R' = \poly(n, d, R)$ sufficiently large, where $\cR$ maps $\cL$ to its matrix representation.
Moreover, $\cL$ must 
$\tau$-approximately (as in \Cref{def:tau-approximately-satisfying}) satisfy the following system of polynomial inequalities up to degree $6$.
\begin{equation*}
\left\{
\begin{aligned}
&\forall i\in [n]: &w_i &= w_i^2 \\
&\forall i\in[n]: &w_i x_i &= w_i x_i' \\
& &\sum_{i = 1}^n w_i &\ge n - T \\
& &\E_i (x_i' - \mu')\transpose{(x_i' - \mu')} + M\transpose{M} &= (1 + \alpha_0) I
\end{aligned}
\right\}
\end{equation*}
We view that last equality as $k^2$ many equalities.
Moreover, we choose $\eta$ to be such that $\alpha = \alpha(\eta) = \Theta(\eta\sqrt{\log 1/\eta} + \sqrt{\eta\sqrt{\frac{d + \log(1/\beta)}{n}}})$ (as in \Cref{thm:robust-empirical-mean-efficient}), and we choose $\alpha_0 = \alpha_0(\eta)$ to be as in the proof of \Cref{thm:robust-empirical-mean-efficient}, as a function of $\eta$ and consequently $\alpha$, $\alpha_0(\eta) = \Theta(\eta\log(1/\eta) + \sqrt{\frac{d + \log(1/\beta)}{n}})$.
\end{definition}

\noindent Given the above definition the score function is defined as follows:
\begin{definition}[Score Function]
Let $\alpha, \tau > 0$.
Let $\mathbb{B}_2^d(2R + n\tau + \alpha)$ denote the $\ell_2$ ball of radius $2R + n\tau + \alpha$ in $d$ dimensions.
Given $x$ be a set of inputs, and $\tilde{x} \in \R^d$, we define the \emph{score function} $\cS$ over $\mathbb{B}_2^d(2R + n\tau + \alpha)$ as
\begin{equation*}
\cS(\tilde{x}; x, \alpha, \tau) = \min\{T\ge 0: \tilde{x} \text{ is an } (\alpha, \tau, T) \text{-certifiable mean with respect to } x\} \mper
\end{equation*}
\end{definition}

\noindent Next we bound the volume ratios.

\begin{lemma}[Volume of High and Low Scoring Points]
\label{lem:volume-ratio-bound}
Assume $R = \Omega\paren{\alpha}$ is sufficiently large.
Let $\eta \ge 0$ be at most a sufficiently small constant $\eta^*$.
Assume the set of input points $x$ satisfy the properties of \Cref{lem:stability-Gaussian}. Then:
\begin{enumerate}
    \item Every point in a ball of radius $\alpha(\eta)$ around $\bar{x}$ has score at most $\eta n$
    \item Every point of score at most $\eta^*n$ is contained in a ball of radius $O(\alpha(\eta^*))$.
\end{enumerate}
\end{lemma}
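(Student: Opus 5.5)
For part 1, we will exhibit an explicit certificate. Fix a point $\tilde{x}$ with $\normt{\tilde{x} - \bar{x}} \le \alpha(\eta)$. Let $\cL$ be the (actual) expectation for the point mass that sets $w_i = 1$ exactly on the $n - \eta n$ (or fewer corrupted) uncorrupted indices, $x_i' = x_i^\circ$, and $M$ equal to the psd square root of $(1+\alpha_0)I - \E_i (x_i^\circ - \bar{x})\transpose{(x_i^\circ - \bar{x})}$. This square root exists because, on the stability event of \Cref{lem:stability-Gaussian} with $b \equiv 1$, the empirical covariance is at most $1 + \alpha_0(\eta)$ in operator norm. This holds once $\alpha_0(\eta)$ is chosen with the same constants as in that lemma, noting that for $b\equiv 1$ the $\eta\log(1/\eta)$ term is not even needed.

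Since $\cL$ is a genuine distribution satisfying every constraint exactly, it $0$-approximately (hence $\tau$-approximately) satisfies the system with $T = \eta n$. We then check the side conditions. First, $\cL \mu' = \bar{x}$, so $\normt{\cL\mu' - \tilde{x}} \le \alpha(\eta)$. Second, $\normt{\cL \mu'} = \normt{\bar{x}} \le R \le 2R$. Third, $\normf{\cR(\cL)}$ is polynomially bounded in $n, d, R$ once we use that the samples are within $\poly$ radius, which also holds on the concentration event and is absorbed into $R'$. Hence $\cS(\tilde{x}) \le \eta n$. This step is routine.

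For part 2, suppose $\tilde{x}$ has score at most $\eta^* n$, witnessed by $\cL$ that $\tau$-approximately satisfies the system with $T = \eta^* n$. Then $\normt{\tilde{x} - \cL\mu'} \le \alpha(\eta^*)$. It therefore suffices to show $\normt{\cL \mu' - \bar{x}} \le O(\alpha(\eta^*))$, which places every such point in the ball of radius $O(\alpha(\eta^*))$ around $\bar{x}$.

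This is exactly the conclusion of the analysis behind \Cref{thm:robust-empirical-mean-efficient}. The plan is to invoke \cref{lem:technicallem-unit-variance} with $\eta^*$ in place of $\eta$, taking $\alpha_1, \alpha_2$ from \Cref{lem:stability-Gaussian} at corruption level $\eta^*$. That lemma requires an SoS-consistent $\pE$, so two translations are needed:
\begin{enumerate}
\item The equality constraint with slack $M\transpose{M}$ implies the SoS constraint $\E_i\iprod{x_i'-\mu',v}^2 \le 1+\alpha_0$ for all $v$ in degree $2$, since $\iprod{v, M\transpose{M} v} = \norm{\transpose{M}v}^2$ is a sum of squares.
\item The $\tau$-approximate satisfaction only perturbs every constraint by $\tau T \cdot \poly(n,d,R)$, using the bound on $\normf{\cR(\cL)}$. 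Choosing $\tau$ inverse-exponentially small, as in Section~5 of \cite{hopkins2023robustness}, makes this an additive $o(\alpha(\eta^*))$ error throughout the degree-$6$ SoS argument.
\end{enumerate}

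The main obstacle is this robustness of the degree-bounded SoS proof of \cref{lem:technicallem-unit-variance} under approximate constraint satisfaction. I would handle it as \cite{hopkins2023robustness} do, by noting that every SoS certificate used has coefficients of polynomially bounded bit-size. Each application of a constraint then costs at most $\tau T \poly(n,d,R)$, and these costs sum to something negligible. A secondary point is that $\alpha_0$ in the score was fixed at level $\eta$ rather than $\eta^*$. Since $\alpha_0$ is monotone in $\eta$, the bound $\alpha_0(\eta) \le \alpha_0(\eta^*)$ lets us plug in $\alpha_0(\eta^*)$, giving error $O(\alpha_1 + \sqrt{\eta^*(\alpha_0+\alpha_2)+\eta^{*2}}) = O(\alpha(\eta^*))$.
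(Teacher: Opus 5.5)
Your proposal is correct and follows the same route as the paper. Part 1 uses a point-mass witness on the uncorrupted assignment. Part 2 combines the robust identifiability guarantee for $\cL\mu'$ (via \cref{lem:technicallem-unit-variance}) with the closeness constraint and the triangle inequality.

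Your version is more careful than the paper's: you explicitly construct the slack matrix $M$, address $\tau$-approximate satisfaction, and note that $\alpha_0(\eta)\le\alpha_0(\eta^*)$. One small point: the weights $\pE[w_i']$ can have deficiency up to $(\eta+\eta^*)n$, so the stability bounds should be taken at level $2\eta^*$ rather than $\eta^*$, which changes only constants.
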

\begin{proof}
Note that since the set of the samples satisfy the properties of \Cref{lem:stability-Gaussian}. The original program is feasible: consider the solution given by the point mass that assigns each $x_i'$ to the uncorrupted values. Therefore it suffices to take $\cL$ to be the pseudoexpectation corresponding to that assignment. Then the closeness constraint on $\cL$, namely $\normt{\cL\tilde{\mu} - \mu} \le \alpha$, will be  satisfied, and since $R \ge \alpha$, the constraint $\normt{\cL \mu} \le 2R + \tau T$, is satisfied. Therefore, every point in radius $\alpha(\eta)$ must have score at most $\eta n$.

For the second part, note that since $\cL$ satisfies the constraint system, then from the guarantees of the robust algorithm we must have that $\normt{\cL \mu' - \bar{x}} \le O(\alpha(\eta^*))$. Additionally, by the closeness constraint on $\cL$ we know that $\normt{\cL \mu' - \tilde{\mu}} \le \alpha(\eta) \le \alpha(\eta^*)$. Triangle inequality gives us the desired result.
\end{proof}

\noindent As mentioned earlier, the rest of the requirements of \Cref{thm:efficient-reduction-meta-pure} are satisfied as the score function is the same as the one in \cite{hopkins2023robustness}. Now we may prove the computationally efficient version of \Cref{thm:anisotropic-private-est} in the case where the prior is isotropic. 

\begin{proof}[Proof of second part of \Cref{thm:anisotropic-private-est}]
Let $\eta$ be such that $\alpha = \alpha(\eta)$, where $\alpha(\eta)$ is as in \Cref{thm:robust-empirical-mean-efficient}.
From \Cref{thm:efficient-reduction-meta-pure} we know that it suffices for $n$ to be such that
\begin{equation*}
n \ge \Omega\Paren{\max_{\eta': \eta \le \eta' \le 1} \frac{\log V_\eta' / V_\eta + \log(1/\beta \eta)}{\eps \eta'}} \mper
\end{equation*}
We divide above into two parts, $[\eta, \eta^*] \cup [\eta^*, 1]$. 
Now we apply \Cref{lem:volume-ratio-bound}.
From the first part it suffices to have
\begin{equation*}
n \ge \Omega\Paren{\frac{d \log (\alpha(\eta^*)/\alpha(\eta)) + \log(1/\beta \eta)}{\eps \eta}} \mper
\end{equation*}
From the second part it suffices to have 
\begin{equation*}
n \ge \Omega\Paren{\frac{d \log(R/\alpha(\eta^*)) + \log(1/\beta \eta)}{\eps}} \mper
\end{equation*}
Plugging in the definition of $\eta$ and $\alpha(\eta^*)$ as a function of $\alpha, d, n, \beta$, it suffices to take $n$ such that 
\begin{equation*}
n \ge \tilde{\Omega}\Paren{\frac{d + \log(1/\beta)}{\alpha^{4/3} \eps^{2/3}} + \frac{d + \log(1/\beta)}{\alpha \eps} + \frac{d \log(R/\alpha) + \log(1/\beta)}{\eps}} \mcom
\end{equation*}
hiding logarithmic factors in $\alpha$.
\end{proof}

\subsection{Information-computation gap for Bayesian mean estimation}
\label{sec:priv-iso-lb}

We might hope for a computationally efficient sum-of-squares algorithm that even matches the statistically optimal rate of \cref{thm:robust-empirical-mean-stat} for robust empirical mean estimation, so that we can efficiently apply the reduction of \cite{hopkins2023robustness} to attain a statistically optimal rate for private empirical mean estimation. In this section, we will prove that this is not possible. We will first show a computational lower bound for the robust problem using the low-degree framework, and then a computational lower bound for the private problem via the privacy-to-robustness reduction proposed in \cite{Georgiev2022PrivacyIR}.

\subsubsection{Hardness of robust empirical mean estimation}
\label{sec:robustmeanhard}

For simplicity, assume $\eta n$ is an integer. To prove an information-computation gap for the robust problem, we will exhibit an efficient reduction from the following problem:

\begin{definition}[$(\eta, \delta)$-Gaussian mixtures distinguishing problem]\label{def:eta-gaussian-mixtures}
    Given $\eta, \delta>0$ and $n, d \in \mathbb N$, consider the following two distributions:
    \begin{itemize}
        \item Null distribution: $\mathcal N(0,\Id_d)$
        \item Planted distribution: Sample $v$ uniformly from the unit sphere. Then the planted distribution is $(1-\eta) \mathcal N(-\eta \delta v, \Id_d) + \eta \mathcal N((1-\eta)\delta v, \Id_d)$.
    \end{itemize}
    The problem is to distinguish between the two distributions to constant advantage, given $n$ samples.
\end{definition}

\noindent This is closely related to a distinguishing problem considered in~\cite{diakonikolas2025sos}, and in \cref{lem:distinguish_hard} at the end of this section, we verify that this problem is computationally hard for $\delta = \Theta(\alpha/\eta)$ under the low-degree framework.

We now proceed to reduce from this problem to obtain a hardness result for empirical mean estimation.

\begin{theorem}[Worst-case hardness of empirical mean estimation]
\label{thm:low-degree-robustness-lower-bound}
    Let $\eta \in [0, 1/3)$. Consider any polynomial-time estimator which uses $n \le o(d\eta^2/\alpha^4)$ samples from $\mathcal N(\mu, \Id_d)$ to perform $\eta$-robust empirical mean estimation to error $\alpha$. If any such estimator succeeds with worst-case probability $\frac{2}{3}$ for all $\|\mu\| \le 21 \alpha$, then there is a polynomial-time algorithm for the $(\eta,\Theta(\alpha/\eta))$-Gaussian mixtures distinguishing problem.
\end{theorem}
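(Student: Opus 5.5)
Given a robust empirical mean estimator $\mathcal{A}$ with the stated guarantee, we build a tester for the $(\eta,\delta)$-Gaussian mixtures problem with $\delta = c\alpha/\eta$ for a suitable constant $c$ (say $c=20$). The test statistic compares the output of $\mathcal{A}$ to the plain empirical mean: on input samples $z_1,\ldots,z_n$, compute $\bar z = \frac1n\sum_i z_i$ and $\hat x = \mathcal{A}(z_1,\ldots,z_n)$. Output ``null'' if $\|\hat x - \bar z\|_2 \le \alpha$, and ``planted'' otherwise. This runs in polynomial time whenever $\mathcal{A}$ does.

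\textbf{Null case.} The samples are i.i.d.\ $\mathcal N(0,\Id_d)$, so $\mu = 0$ satisfies $\|\mu\|\le 21\alpha$. We view the data as an uncorrupted ($0$-corruption, which is allowed) dataset. Then $\mathcal{A}$ outputs $\hat x$ with $\|\hat x-\bar z\|\le\alpha$ with probability at least $2/3$, and the tester correctly says ``null''.

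\textbf{Planted case.} We couple the planted samples with an $\eta$-corruption of Gaussian data as follows.
\begin{enumerate}
\item Sample the labels $\ell_i\in\{0,1\}$, with $\Pr[\ell_i=1]=\eta$. Let $B = \{i : \ell_i = 1\}$ be the minority-component points. By concentration, $|B| \approx \eta n$.
\item The clean dataset is $x_i^\circ \sim \mathcal N(-\eta\delta v,\Id_d)$ for all $i$, with $\mu = -\eta\delta v$. Then $\|\mu\| = c\alpha \le 21\alpha$, so the estimator's guarantee applies.
\item The corrupted dataset $z$ replaces $x_i^\circ$ by $x_i^\circ + \delta v$ for $i\in B$. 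The resulting $z$ is distributed exactly as planted: $(1-\eta)\mathcal N(-\eta\delta v,\Id) + \eta\mathcal N((1-\eta)\delta v,\Id)$.
\end{enumerate}
The count $|B|$ may slightly exceed $\eta n$. We handle this either by running $\mathcal{A}$ with parameter $\eta' = 1.1\eta$, which stays below $1/3$ after adjusting constants, or by conditioning on $|B|\le \eta n$ and absorbing the deviation into constants. Since $\eta n$ is large in the relevant regime, this is fine.

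On the success event, $\|\hat x - \bar x^\circ\|\le\alpha$. Meanwhile $\bar z - \bar x^\circ = \frac{|B|}{n}\delta v$, so
\[
\|\bar z-\bar x^\circ\| = \tfrac{|B|}{n}\delta \ge \tfrac{\eta}{2}\cdot c\alpha/\eta = \tfrac{c}{2}\alpha
\]
with high probability by a Chernoff bound on $|B|$. Hence
\[
\|\hat x-\bar z\| \ge \tfrac c2\alpha-\alpha > \alpha
\]
for $c\ge 5$, and the tester says ``planted''. The success probabilities in both cases are at least roughly $2/3 - o(1)$, giving constant advantage.

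\textbf{Main obstacle.} The key parameter check is that the sample bound matches the hardness regime. We need $n = o(d\eta^2/\alpha^4)$ to lie inside the low-degree-hard regime of \cref{lem:distinguish_hard} for $\delta=\Theta(\alpha/\eta)$. That regime, via NGCA with $j_\star = 2$ after matching the first moment (the mixture has mean zero along $v$ by construction), should be
\[
n = o\bigl(d/(\eta\delta^2)^2\bigr) = o\bigl(d\eta^2/\alpha^4\bigr)\cdot\Theta(1).
\]
This agrees with the statement.

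One further subtlety is that the stated guarantee is worst case over $\mu$, while in the planted case $\mu$ depends on the random direction $v$. This causes no difficulty, because the guarantee holds for each fixed $v$, and we can average over $v$.
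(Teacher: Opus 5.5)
Your proof is correct and is essentially the paper's reduction. It uses the same distinguisher, comparing the robust estimate with the naive mean of the observed samples. It uses the $0$-corruption guarantee under the null in the same way, and it views the planted mixture as an $\eta$-corruption of $\mathcal N(-\eta\delta v,\Id_d)$ data with $\|\mu\|=\eta\delta\le 21\alpha$.

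The one real variation is the coupling. The paper replaces the minority points by fresh draws from $\mathcal N((1-\eta)\delta v,\Id_d)$. That gives $\tilde x-\bar x\sim\mathcal N(\eta\delta v,\tfrac{2\eta}{n}\Id_d)$, so the paper needs a small-ball estimate to show $\|\tilde x-\bar x\|\ge \eta\delta/10$. Your shift coupling makes $\bar z-\bar x^\circ=\tfrac{|B|}{n}\delta v$ exactly, which reduces that step to a Chernoff bound on $|B|$; this is slightly cleaner.

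You are also more careful than the paper about $|B|\sim\mathrm{Bin}(n,\eta)$, since the paper silently sums over exactly $\eta n$ minority points. However, only one of your two fixes works.
\begin{itemize}
\item Conditioning on $|B|\le\eta n$ fails. That event has probability only about $1/2$, so you can guarantee only $\Pr_P[\text{planted}]\gtrsim 2/3-1/2$. This does not beat the null's false-alarm bound of $1/3$.
\item The right fix is to reduce from the mixture problem with weight a constant factor below $\eta$, say $\eta/2$, taking $\delta=2c\alpha/\eta$. This is equivalent in spirit to your $1.1\eta$ suggestion but needs only the given $\eta$-robust estimator. Here $\delta$ is still $\Theta(\alpha/\eta)$, the instance stays in the hard regime of \cref{lem:distinguish_hard} up to constants, and $\|\mu\|=c\alpha\le 21\alpha$.
\item With this choice, $\eta n/4\le |B|\le \eta n$ except with probability $e^{-\Omega(\eta n)}$. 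Hence $\|\bar z-\bar x^\circ\|\ge c\alpha/2$, and any $4<c\le 21$ suffices.
\end{itemize}
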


\noindent The same reduction will give us the following average-case lower-bound, which will be crucial in proving the hardness of private Bayesian mean estimation as the latter is inherently an average-case problem over the draw of $\mu$ from the prior:

\begin{lemma}[Average-case hardness of robust empirical mean estimation]
\label{lem:low-degree-robustness-lb-average-case}
    Let $\eta \in [0, \frac{1}{3})$ and \linebreak $\mu \sim 21 \alpha v \cdot \text{Bern}(1/2)$ where $v \sim \mathbb S^{d-1}$ uniformly. Consider any polynomial-time estimator which uses $n = o\left(\frac{d \eta^2}{\alpha^4}\right)$ samples from $\mathcal N(\mu, \Id_d)$ to perform $\eta$-robust empirical mean estimation to error $\alpha$. If any such estimator succeeds with average-case probability $\frac{5}{6}$ over the draw of $\mu$ as described, then there is a polynomial-time algorithm for the $(\eta,\Theta(\alpha/\eta))$-Gaussian mixtures distinguishing problem.
\end{lemma}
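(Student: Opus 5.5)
We reduce the $(\eta,\delta)$-Gaussian mixtures distinguishing problem of \cref{def:eta-gaussian-mixtures}, with $\delta = c\alpha/\eta$ and $c$ a suitable constant (e.g.\ $c = 21$), to robust empirical mean estimation. The distinguisher uses the hypothesized estimator $\mathcal{M}$ as a black box. Given samples $z_1,\ldots,z_n$, it first draws a fresh uniform $u\sim\mathbb S^{d-1}$ and a bit $b\sim\mathrm{Bern}(1/2)$, and shifts every sample to $z_i' = z_i + 21\alpha b u$. It then computes $\hat x = \mathcal M(z_1',\ldots,z_n')$ and the non-robust empirical mean $\bar z'$. It outputs ``null'' if $\|\hat x - \bar z'\|_2 \le \alpha$ and ``planted'' otherwise. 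The random shift matters because it makes the input look like samples from $\mathcal N(\mu,\Id_d)$ with $\mu \sim 21\alpha v\cdot\mathrm{Bern}(1/2)$, which is exactly the average-case ensemble in the statement. The hypothesis is used only on such ensembles, so no worst-case guarantee is needed.

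\textbf{Null case.} The $z_i'$ are i.i.d.\ $\mathcal N(\mu,\Id_d)$ with $\mu$ distributed as in the lemma, and the data are uncorrupted. The corruption set is empty, which is a valid ($\le\eta$) corruption. By hypothesis, with probability at least $5/6$ over $\mu$ and the samples, $\|\hat x - \bar z'\|\le\alpha$, so we answer correctly with probability at least $5/6$.

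\textbf{Planted case.} Couple the data as follows. Each sample comes from the $\eta$-component with probability $\eta$; by Chernoff, with probability $1-o(1)$ about $\eta n$ samples (say at most $\eta n(1+o(1))$; we may apply the estimator with $\eta$ slightly inflated, which changes only constants) are ``outliers'' with mean $(1-\eta)\delta v$. Replace each outlier by a fresh draw from $\mathcal N(-\eta\delta v,\Id_d)$. The resulting ``clean'' dataset $x^\circ$ is i.i.d.\ $\mathcal N(-\eta\delta v + 21\alpha b u,\Id_d)$, and the observed data is an $\eta$-corruption of it.

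Two problems arise here, and handling them is the main obstacle. First, the clean mean $-c\alpha v + 21\alpha b u$ does not follow the lemma's law exactly. Second, the success guarantee is only average-case, yet the adversary's corruption is tied to $v$. I would handle both by choosing the random shift to cancel: take $u = v$ is impossible since $v$ is unknown, so instead set $c$ so that $\eta\delta = c\alpha$. Then note that the lemma's guarantee quantifies over the clean distribution with \emph{any} corruption rule. We therefore need the clean mean to have the prior's law. To arrange this, modify the reduction to shift by $21\alpha b u - (\text{nothing})$ and directly compare total variation. Concretely, I would pick $c$ small (e.g.\ $c=1$) and argue as follows. The clean mean $-\alpha v + 21\alpha b u$ conditioned on $b=1$ has norm in $[20\alpha,22\alpha]$ with random direction. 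Conditioned on $b=0$, it is $-\alpha v$, a uniformly random direction at radius $\alpha$. I would then restate the average-case hypothesis as holding for the rotationally invariant mixture over radii in $\{\alpha, \approx 21\alpha\}$. If this adjustment cannot be arranged, I fall back to applying the worst-case \cref{thm:low-degree-robustness-lower-bound} hypothesis directly, since all these means have norm at most $22\alpha$.

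\textbf{Planted case, conclusion.} Granting the estimator's success on the clean data, $\|\hat x - \bar x^\circ\|\le\alpha$. Meanwhile $\bar z' - \bar x^\circ = \frac1n\sum_{\text{outliers}}(\text{outlier}-\text{replacement})$. This has $v$-component about $\eta\cdot\delta = c\alpha$ plus fluctuation $O(\sqrt{\eta d/n})$. Since $n = o(d\eta^2/\alpha^4)$, the fluctuation has to be controlled along $v$ only: the projection onto $v$ concentrates at $c\alpha \pm O(\sqrt{\eta/n})$, which is $o(\alpha)$ in the relevant regime. Hence $\|\bar z' - \bar x^\circ\|\ge c\alpha - o(\alpha)$, and the triangle inequality gives $\|\hat x - \bar z'\|\ge (c-1)\alpha - o(\alpha) > \alpha$ for $c\ge 3$. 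So we answer ``planted'' with probability at least $5/6 - o(1)$. The two cases are separated by a constant advantage, which contradicts the low-degree hardness of \cref{lem:distinguish_hard} for $n=o(d\eta^2/\alpha^4)$ with $\delta=\Theta(\alpha/\eta)$.
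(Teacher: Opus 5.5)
Your reduction has a genuine gap in the planted case, and the random shift causes it. After shifting by $21\alpha b u$, the clean samples underlying the planted data are i.i.d.\ $\mathcal N(-\eta\delta v+21\alpha b u,\Id_d)$. For no choice of $c$ is this mean distributed as $21\alpha v\cdot\mathrm{Bern}(1/2)$. With $c=21$, the $b=1$ half has norm about $21\sqrt2\,\alpha$. With $c=1$, almost surely neither half lies in $\{0\}\cup 21\alpha\,\mathbb S^{d-1}$. The lemma grants only an average-case guarantee over exactly that prior, so it says nothing about the estimator on these inputs. Your proposed repairs do not close this. ``Restating'' the hypothesis for a mixture over radii $\{\alpha,\approx 21\alpha\}$ changes the assumption. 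Falling back to a worst-case guarantee proves \cref{thm:low-degree-robustness-lower-bound}, not this lemma. (The choice $c=1$ also contradicts your final step, which needs $c\ge 3$.)

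The missing idea is a one-line conditioning argument that makes the shift unnecessary. Let $p_0,p_1$ be the success probabilities conditioned on $\mu=0$ and on $\mu=21\alpha v$. Then $\tfrac12(p_0+p_1)\ge 5/6$ and $p_0,p_1\le 1$ force $p_0,p_1\ge 2/3$. Now run the estimator on the unshifted samples. Under the null they are an uncorrupted draw from the $\mu=0$ branch. With $\delta=21\alpha/\eta$, the planted samples are an $\eta$-corruption of i.i.d.\ draws from $\mathcal N(-21\alpha v,\Id_d)$, which is exactly the $\mu=21\alpha v'$ branch with $v'=-v$ uniform. This is \cref{lem:robust-me-hard-distributions}, to which the paper reduces the statement.

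Your Chernoff treatment of the number of outliers is fine; the paper glosses over it. In the separation step, however, do not claim that the fluctuation of $\bar z'-\bar x^\circ$ along $v$ is $o(\alpha)$. That does not follow from $n=o(d\eta^2/\alpha^4)$; for example, it fails when $n\eta^2=\Theta(1)$ and $\alpha\ll\eta^{3/2}$. Instead use, as the paper does, that $\mathcal N(\eta\delta v,\tfrac{2\eta}{n}\Id_d)$ has norm at least $\eta\delta/10=2.1\alpha$ with probability at least $9/10$, uniformly in the variance. Then $\|\hat x-\bar z'\|\ge 1.1\alpha>\alpha$ on the planted side with probability at least $2/3-1/10$.
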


\noindent Both of these results are immediate from the following reduction:

\begin{lemma}[Estimation implies distinguishability]
\label{lem:robust-me-hard-distributions}
    Assume there is a polynomial-time $\eta$-robust empirical mean estimator that takes in $n$ samples from $\mathcal N(\mu, \Id)$ where $\mu$ is generated as either
    \begin{itemize}
        \item $\mu = 0$
        \item $\mu = 21 \alpha v$ where $v \sim \mathbb S^{d-1}$ uniformly.
    \end{itemize}
    and in either case outputs an estimate of the empirical mean to $\ell_2$ accuracy $\alpha$ with probability $1-\beta > 2/3$. Then there is a polynomial-time algorithm that solves the $(\eta, \delta)$-Gaussian mixtures distinguishability problem from \cref{def:eta-gaussian-mixtures} with $n$ samples to constant advantage, for $\delta = 21\alpha/\eta$.
\end{lemma}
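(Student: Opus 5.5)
The plan is to build the distinguisher directly from the estimator. Given $n$ samples $z_1,\dots,z_n$ from either the null or the planted distribution of \cref{def:eta-gaussian-mixtures}, run the robust empirical mean estimator on them to get $\hat x$. Compute the observed empirical mean $\bar z=\frac1n\sum_i z_i$, and output ``null'' if and only if $\|\hat x-\bar z\|_2\le\alpha$. This is clearly polynomial time. What remains is to show that this test accepts with probability at least $1-\beta>2/3$ under the null and with probability bounded away from this under the planted distribution.

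\textbf{Null case.} Here the $z_i$ are i.i.d.\ from $\mathcal N(0,\Id)$. This is the instance $\mu=0$ with no corruption, so the uncorrupted empirical mean is exactly $\bar z$. By assumption $\|\hat x-\bar z\|_2\le\alpha$ with probability at least $1-\beta$.

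\textbf{Planted case, coupling.} I will reinterpret the samples as an $\eta$-corruption of a clean Gaussian dataset. Sample each point by first drawing a label: ``majority'' with probability $1-\eta$, giving $z_i\sim\mathcal N(-\eta\delta v,\Id)=\mathcal N(-21\alpha v,\Id)$, or ``minority'' with probability $\eta$. Couple each minority point with a fresh $x_i^\circ\sim\mathcal N(-21\alpha v,\Id)$, and set $x_i^\circ=z_i$ on majority points. Then $x^\circ$ is an i.i.d.\ sample from $\mathcal N(\mu,\Id)$ with $\mu=21\alpha(-v)$, and $-v$ is again uniform on the sphere. The observed data $z$ differs from $x^\circ$ exactly on the minority points.

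\textbf{Planted case, the binomial obstacle.} The number of minority points is $\mathrm{Bin}(n,\eta)$, which exceeds $\eta n$ with probability about $1/2$. This is what would destroy the constant advantage if handled naively, and I expect it to be the main technical nuisance. I would handle it by slack in constants. Run the reduction from the mixture problem with weight $\eta'=(1-c)\eta$ and $\delta'=21\alpha/\eta'$, so that $\eta'\delta'=21\alpha$ is unchanged. Since $\eta n$ is large, Chernoff gives that the count is at most $\eta n$ with probability $1-o(1)$. This only changes $\delta=\Theta(\alpha/\eta)$ by a constant, which is all \cref{thm:low-degree-robustness-lower-bound} needs. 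On this event, with probability at least $1-\beta-o(1)$ the estimator returns $\hat x$ with $\|\hat x-\bar x^\circ\|_2\le\alpha$.

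\textbf{Planted case, separation.} It remains to show that $\|\bar x^\circ-\bar z\|_2>2\alpha$ with probability $1-o(1)$; the triangle inequality then gives $\|\hat x-\bar z\|_2>\alpha$, so the test rejects. Write
\[\bar x^\circ-\bar z=\frac1n\sum_{i\ \text{minority}}(x_i^\circ-z_i).\]
It suffices to lower bound the projection onto $v$, since the orthogonal part only increases the norm. Along $v$, each summand has mean $-21\alpha-(1-\eta)\delta$. Summing over roughly $\eta' n$ terms and dividing by $n$ gives a projection of magnitude about $\eta'\delta'(1/\eta'\cdot\eta')\approx 21\alpha(1+\Theta(1))$, hence at least $21\alpha$ up to constants. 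The fluctuation is a one-dimensional Gaussian with standard deviation $O(\sqrt{\eta/n})$. Since $n\ge d\ge 1$ and $\eta\le1$, this is $o(\alpha)$ whenever $\alpha\gg\sqrt{\eta/n}$. In the regime $n=o(d\eta^2/\alpha^4)$ of interest this holds, or can be arranged by the constant $21$. Binomial fluctuations in the minority count change the mean by a factor $1\pm o(1)$.

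\textbf{Conclusion.} Under the planted distribution the test accepts with probability at most $\beta+o(1)<1/3+o(1)$, while under the null it accepts with probability at least $1-\beta>2/3$. This gives constant advantage, as claimed in \cref{lem:robust-me-hard-distributions}.
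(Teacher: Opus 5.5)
Your distinguisher and coupling are the paper's: accept iff $\|\hat x-\bar z\|_2\le\alpha$, treat minority points as corruptions of clean $\mathcal N(-21\alpha v,\Id)$ samples, then apply the triangle inequality. However, the separation step has a gap.

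You need $\|\bar x^\circ-\bar z\|_2>2\alpha$. You get it by concentrating $\langle\bar x^\circ-\bar z,v\rangle$ around its mean, of magnitude $21\alpha$, which requires its standard deviation (about $\sqrt{2\eta/n}$) to be $o(\alpha)$. The lemma assumes no relation between $\alpha$, $\eta$ and $n$, and your justification runs the wrong way. $n=o(d\eta^2/\alpha^4)$ is an \emph{upper} bound on $n$, hence a \emph{lower} bound on $\sqrt{\eta/n}$, so it cannot give $\alpha\gg\sqrt{\eta/n}$. The constant $21$ is also fixed by the statement. When $\sqrt{\eta/n}\gtrsim\alpha$ the projection does not concentrate, and your argument gives nothing.

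The paper avoids this with anti-concentration that is uniform in the variance. It writes $\tilde x-\bar x\sim\mathcal N(\eta\delta v,\tfrac{2\eta}{n}\Id_d)$ and shows $\sup_{\sigma>0}\Pr_{X\sim\mathcal N(v,\sigma^2\Id_d)}[\|X\|_2\le\tfrac{1}{10}]<\tfrac{1}{10}$. After rescaling, $\|\tilde x-\bar x\|_2>\eta\delta/10=2.1\alpha$ except with probability $1/10$, for every $n$ and $\eta$. Your one-dimensional projection can be handled the same way, since $\sup_{s>0}\Pr[|\mathcal N(m,s^2)|\le m/10]$ is about $0.05$.

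A constant failure probability $p_{\rm sep}$ is all you need, not $o(1)$. The advantage is at least $1-2\beta-p_{\rm sep}-o(1)$, and $1-2\beta>1/3$. Alternatively, the hypothesis itself forces $\alpha\gtrsim\sqrt{\eta d/n}$: an adversary that resamples $\eta n$ clean points leaves an independent $\mathcal N(0,\tfrac{\eta}{n}\Id_d)$ component of $\bar x$ unobserved. You would have to prove that, though.

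On the binomial count you are right, and more careful than the paper. Its proof takes exactly $\eta n$ minority samples, whereas under \cref{def:eta-gaussian-mixtures} the count is $\mathrm{Bin}(n,\eta)$ and exceeds $\eta n$ with probability about $1/2$. State explicitly, though, that your fix proves the lemma for the $\bigl((1-c)\eta,\,21\alpha/((1-c)\eta)\bigr)$ problem rather than the stated $(\eta,21\alpha/\eta)$ one. Also note that the Chernoff step needs $\eta n\to\infty$. Both are harmless downstream, because \cref{lem:distinguish_hard} only needs $\delta=O(\alpha/\eta)$ and absorbs constant-factor changes in $\eta$.

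Finally, the mean of the projection is $\eta'\delta'=21\alpha$ up to the binomial fluctuation. Your expression ``$\eta'\delta'(1/\eta'\cdot\eta')\approx 21\alpha(1+\Theta(1))$'' is garbled.
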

\begin{proof}
    Our algorithm will be as follows: run the robust empirical mean estimator on the $n$ samples to get an estimate $\hat \mu$, and let $\tilde x$ be the mean of the input samples. We will guess ``null'' if $\|\hat \mu - \tilde x\| \le \alpha$ and guess ``planted'' otherwise. The idea is that, if we are drawing from the null distribution, clearly
    \begin{equation*}
        \|\hat \mu - \tilde x\| \le \alpha
    \end{equation*}
    with probability $1-\beta > 2/3$ (by the accuracy guarantee in the $\eta=0$ case). However, if we are drawing from the planted distribution, we can view the ``planting'' process as an $\eta$-corruption of clean samples drawn from $\mathcal N(-\eta \delta v,\Id_d)$; denote the empirical mean of these ``clean samples'' by $\bar x$. Furthermore, letting $X_i \sim \mathcal N(-\eta \delta v, \Id_d)$, $Y_i \sim \mathcal N((1-\eta) \delta v, \Id_d)$, and $Z_i \sim \mathcal N(\delta v, 2 \Id_d)$, we see that $\tilde x - \bar x$ is distributed as
    \[
        \frac{1}{n} \sum_{i=1}^{\eta n} [Y_i-X_i] = \frac{1}{n} \sum_{i=1}^{\eta n} Z_i \sim \mathcal N(\eta \delta v, \frac{2 \eta}{n} \Id_d)
    \]
    It is not hard to find that
    \[
        \sup_{\sigma > 0} \mathop{\mathbb P}_{X \sim \mathcal N(v, \sigma^2\Id_d)}\left[\|X\|_2 \le \frac{1}{10}\right] \le \frac{1}{9^d \sqrt{\pi d}} < \frac{1}{10}
    \]
    (from upper bounding the PDF of the Gaussian on the ball and the ball's volume, plugging in the $\sigma$ from first-order optimality, and then using Stirling's approximation), so in particular we can rescale to get
    \[
        \mathbb P\left[\|\tilde x - \bar{x}\|_2 \le \frac{1}{10} \eta \delta\right] < \frac{1}{10}
    \]
    Thus in the second case, a reverse triangle inequality between the above and the algorithm's accuracy guarantee suffices to give
    \begin{align*}
        \|\hat \mu - \tilde x\| &\ge \|\bar x - \tilde x\| - \|\hat \mu - \bar x\| \\
        &\ge \frac{1}{10} \eta \delta - \alpha \\
        &> \alpha
    \end{align*}
    except with probability $\frac{1}{3} + \frac{1}{10} < \frac{1}{2}$ by a union bound, giving our distinguisher constant advantage.
\end{proof}

\noindent Note that an empirical mean estimator that beats the lower bound stated in \cref{thm:low-degree-robustness-lower-bound} would clearly meet the criteria for the estimator used in \cref{lem:robust-me-hard-distributions}. Additionally, an estimator that beats the average-case lower bound in \cref{lem:low-degree-robustness-lb-average-case} must succeed with probability $2/3$ in each of the two cases, and thus can also be reduced to the testing problem by \cref{lem:robust-me-hard-distributions}.

It remains to verify low-degree hardness of the $(\eta,\delta)$-Gaussian mixtures distinguishing problem:

\begin{lemma}\label{lem:distinguish_hard}
    The degree-$D$ advantage for the $(\eta, \delta)$-Gaussian mixtures distinguishability problem for $\delta = O(\frac{\alpha}{\eta})$ given $n$ samples is at most $O(1)$ if $n\eta^2 \ge 1$ and $n \le \Omega\left(\frac{d \eta^2}{\alpha^4 \mathrm{poly}(D)}\right)$.
\end{lemma}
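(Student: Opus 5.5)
We cast the planted distribution as an NGCA instance and apply \cref{lem:general-ldlr}. Along $v$, the planted law is $A = (1-\eta)\mathcal N(-\eta\delta,1) + \eta\,\mathcal N((1-\eta)\delta,1)$. Orthogonal to $v$ it is exactly $\mathcal N(0,\Id_{d-1})$. So the problem is NGCA with hidden distribution $A$, and it suffices to check the two Hermite-moment conditions with suitable $j_\star,\kappa,\tau$.

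\textbf{Step 1: Hermite moments of $A$.} By the first item of \cref{fact:hermite-props},
\[
\E_{X\sim A} h_j(X) = \frac{1}{\sqrt{j!}}\Bigl((1-\eta)(-\eta\delta)^j + \eta\bigl((1-\eta)\delta\bigr)^j\Bigr) = \frac{\delta^j\eta(1-\eta)}{\sqrt{j!}}\Bigl((-1)^j\eta^{j-1} + (1-\eta)^{j-1}\Bigr).
\]
For $j=1$ this vanishes, since the mixture mean is $0$. For $j \ge 2$ the bracket is at most $2$ in absolute value, so $|\E_A h_j|^2 \le 4\eta^2\delta^{2j}/j!$. We therefore take $j_\star = 2$, even though the second moment is nonzero. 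This matches the target scaling: a nonzero variance perturbation along $v$ is exactly what gives the $d/\alpha^4$-type threshold $n \approx d\eta^2/\alpha^4$.

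\textbf{Step 2: choice of parameters.} We want $|\E_A h_j|^2 \le j^j\kappa\tau^{-j/2}$ for all $j\ge 2$. Set $\tau = \delta^{-4}$, so that $\tau^{-j/2} = \delta^{2j}$, and set $\kappa = \Theta(\eta^2)$. Since $1/j! \le j^j$, the condition holds with room to spare. The condition of \cref{lem:general-ldlr} at $j=1$ is trivially satisfied because the first moment vanishes.

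The lemma then gives $\Adv_{\le D} = O(1)$ whenever $n\kappa \ge 1$, i.e.\ $n\eta^2 \gtrsim 1$, and
\[
n \le \frac{1}{\poly(D)}\frac{d\tau}{\kappa} = \frac{d}{\poly(D)\,\eta^2\delta^4}.
\]
Substituting $\delta = \Theta(\alpha/\eta)$ gives $d/(\eta^2\delta^4) = \Theta(d\eta^2/\alpha^4)$, which is the claimed threshold. The hypothesis $n\eta^2\ge 1$ is precisely the $n\kappa\ge1$ condition, with constants absorbed into $\kappa$.

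\textbf{Main obstacle.} The only delicate point is making sure the choice $j_\star=2$ is legitimate. We must confirm that the lemma as stated only requires vanishing of moments $j\in[j_\star-1]=\{1\}$, and that the $j=2$ term, of size $\approx \eta^2\delta^4$, is covered by the bound with $\kappa\tau^{-1}$. We also need the constant in $\kappa$ to absorb the factor $4$ and the $(1-\eta)$ factors uniformly in $\eta<1/2$. Everything else is routine substitution.
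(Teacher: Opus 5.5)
Your proposal is correct and is essentially the paper's proof. Both cast the planted law as NGCA with the one-dimensional mixture $A$ as hidden distribution, and both check the Hermite conditions of the general low-degree lemma via the shifted-Gaussian identity $\E_{X\sim\mathcal N(x,1)}h_j(X)=x^j/\sqrt{j!}$. Both then take $j_\star=2$, $\tau=\delta^{-4}$, $\kappa=\eta^2$, which gives the threshold $d\tau/\kappa=d\eta^2/\alpha^4$ up to $\poly(D)$.

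Your factored expression for $\E_A h_j$ is slightly cleaner than the paper's squared triangle-inequality bound. For $j\ge 2$ the bracket is in fact at most $\eta^{j-1}+(1-\eta)^{j-1}\le 1$. So $\kappa=\eta^2$ works with no constant adjustment, and $n\kappa\ge 1$ is exactly the hypothesis $n\eta^2\ge 1$.
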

\begin{proof}
    We will apply \cref{lem:general-ldlr}, setting $\kappa := \eta^2$, $\tau := \frac{1}{\delta^4}$, and $j_* := 2$ therein.
    The first condition in \cref{lem:general-ldlr} holds by construction, and for the second condition we need to show that for all $j \ge j_*$,
    \begin{align*}
        \bigl|\mathop{\mathbb E}_{X \sim A} h_j(X) \bigr|^2 &\le j^j \kappa \tau^{-j/j^*}
    \end{align*}
    Using \cref{fact:hermite-props},
    \begin{align*}
        \bigl|\mathop{\mathbb E}_{X \sim A} h_j(X) \bigr| &= \bigl|(1-\eta) \mathop{\mathbb E}_{X \sim \mathcal N(- \eta \delta,1) } [h_j(X)] + \eta \mathop{\mathbb E}_{X \sim \mathcal N(\delta - \eta \delta,1) } [h_j(X)]\bigr| \\
        &\le \bigl|(1-\eta) \cdot (-\eta \delta)^j/\sqrt{j!}\bigr| + \bigl|\eta \cdot (\delta-\eta \delta)^j/\sqrt{j!}\bigr|
    \end{align*}
    and squaring both sides gives
    \begin{align*}
        \bigl|\mathop{\mathbb E}_{X \sim A} h_j(X)\bigr|^2 &\le 2\bigl|(1-\eta) \cdot (-\eta \delta)^j/\sqrt{j!}\bigr|^2 + 2\bigl|\eta \cdot (\delta-\eta \delta)^j/\sqrt{j!}\bigr|^2 \\
        &\le 2\bigl|(-\kappa^{1/2} \tau^{-1/4})^j/\sqrt{j!}\bigr|^2 + 2\kappa \bigl|\tau^{-j/4}/\sqrt{j!}\bigr|^2 \\
        &\le 2 \kappa^j \tau^{-j/2}/j! + 2 \kappa \tau^{-j/2}/j! \\
        &\le 4 \kappa \tau^{-j/j^*}/j! \\
        &\le j^j \kappa \tau^{-j/j^*}
    \end{align*}
    as desired. This yields a degree-$D$ lower bound for sample complexity
    \[
        n \ge d^{j^*/2} \frac{\tau}{\kappa} \cdot \frac{1}{\poly(D)} = \Omega\left(\frac{d \eta^2}{\alpha^4}\right)\cdot \frac{1}{\poly(D)}
    \]
    as claimed.
\end{proof}

\subsubsection{Hardness of private empirical mean estimation}
\label{sec:privatemeanhard}

We can use the privacy-to-robustness reduction of \cite{Georgiev2022PrivacyIR} along with \cref{thm:low-degree-robustness-lower-bound} to prove the computational lower bound of \cref{thm:isotropic-lbd_mean}:

\begin{proof}[Proof of \cref{thm:isotropic-lbd_mean}]
Let $p,q>0$ to be set later, and suppose that we could perform private empirical mean estimation with $n_{\sf priv}$ isotropic samples, where
\[
    n_{\sf priv} = o\left(\frac{\min(d, \log \frac{1}{\beta})}{\alpha^p \varepsilon^q}\right)
\]
Without loss of generality, say $\log 1/\beta= d$ (otherwise we could decrease the dimension or increase $\beta$, which should make the problem easier). By the privacy-to-robustness reduction from \cite{Georgiev2022PrivacyIR}, restated as \cref{thm:privacy-robustness-reduction} above, there is an $\eta$-robust empirical mean estimator with sample complexity $n_{\sf rob} = o\left(\frac{d}{\alpha^p \varepsilon^q}\right)$ for corruption fraction $\eta = \frac{d}{\varepsilon n_{\sf rob}}$.

Solving for $\epsilon$ and reorganizing variables, there is an $\eta$-robust empirical mean estimator with sample complexity
\begin{equation*}
    n_{\sf rob} = o\left(\frac{d}{\alpha^p} \cdot \left(\frac{\eta n_{\sf rob}}{d}\right)^q\right) = o\left(\frac{d^{1-q} \eta^q}{\alpha^p} n_{\sf rob}^q\right) = o\left(\frac{d \eta^{q/(1-q)}}{\alpha^{p/(1-q)}}\right)
\end{equation*}
For $p = 4/3$ and $q = 2/3$, this would contradict the computational hardness result of \cref{thm:low-degree-robustness-lower-bound} for robust empirical mean estimation.
\end{proof}

\noindent We can apply the same reduction to the average-case robust lower bound stated in \cref{lem:low-degree-robustness-lb-average-case} to get an average-case private lower bound:
\begin{corollary}
\label{cor:private-comp-lb-average-case}
     Let $\beta < 1/6$ and assume $\mu = 21 \alpha v\cdot \mathrm{Ber}(1/2)$ where $v\sim \mathbb S^{d-1}$. If there is a polynomial-time $\varepsilon$-DP estimator of the empirical mean that succeeds to error $\alpha$ and confidence $\beta$ given $n$ samples from $\mathcal N(\mu, \Id_d)$ and has sample complexity
     \[o\Bigg(\frac{\min\big(d, \log \frac{1}{\beta}\big)}{\alpha^{4/3} \varepsilon^{2/3}}\Bigg) \mcom
     \]
     then there is a polynomial-time algorithm for the $(\eta,\delta)$-Gaussian mixture distinguishing problem for $\eta = \Theta(\alpha^{4/3}/\epsilon^{1/3})$ and $\delta = \Theta(\epsilon^{1/3}/\alpha^{1/3})$ with $o(\frac{d}{\alpha^{4/3}\varepsilon^{2/3}})$ samples, which is low-degree-hard.
\end{corollary}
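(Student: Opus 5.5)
The plan is to mirror the proof of \cref{thm:isotropic-lbd_mean}, replacing the worst-case robust lower bound with the average-case one from \cref{lem:low-degree-robustness-lb-average-case}. Suppose $\mathcal{M}$ is a polynomial-time $\varepsilon$-DP empirical mean estimator that, when $\mu = 21\alpha v\cdot\mathrm{Ber}(1/2)$, achieves error $\alpha$ with probability $1-\beta$ using $n = o(\min(d,\log\frac1\beta)/(\alpha^{4/3}\varepsilon^{2/3}))$ samples. As before, we may assume $\log(1/\beta) = d$, by lowering the dimension or raising $\beta$, which only makes the problem easier.

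First step: apply \cref{thm:privacy-robustness-reduction}, taking the good set $G_{X_1,\dots,X_n}$ to be the $\alpha$-ball around the empirical mean of the clean dataset $X_1,\dots,X_n$. This requires some care, because the theorem is stated for every dataset, while our guarantee holds only with probability $1-\beta$ over the draw of $(\mu,X)$. I would handle this by conditioning on the clean data. Let $p(X)$ be the failure probability of $\mathcal M$ on a fixed $X$. By Markov's inequality, $p(X)\le\sqrt\beta$ except on a set of clean datasets of mass at most $\sqrt\beta$. On the good datasets, the reduction, whose proof works dataset-by-dataset via group privacy, shows that for any $\eta n$-corruption $X'$ of $X$ we have $\Pr[\mathcal M(X')\in G_X]\ge 1-\beta^{\Omega(1)}$. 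This holds with
\[
\eta = \Theta\Bigl(\tfrac{\log(1/\beta)}{\varepsilon n}\Bigr) = \Theta\Bigl(\tfrac{d}{\varepsilon n}\Bigr).
\]
Averaging, the resulting $\eta$-robust estimator succeeds with average-case probability at least $1-\sqrt\beta-\beta^{\Omega(1)}$. Since $\beta<1/6$, and shrinking $\beta$ further by a constant power if needed (which changes $\log(1/\beta)$ only by a constant), this is at least $5/6$, as \cref{lem:low-degree-robustness-lb-average-case} requires. \textbf{This bookkeeping of success probabilities is the step I expect to be the main obstacle.} If the constants in the reduction do not cooperate, I would fall back to invoking \cref{lem:robust-me-hard-distributions} directly. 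That lemma only needs success probability $>2/3$ in each of the two cases $\mu=0$ and $\mu=21\alpha v$, and average success $1-O(\beta^{c})$ over the uniform Bernoulli mixture implies success $\ge 1-2\cdot O(\beta^c)$ in each case.

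Second step: the parameter algebra, exactly as in the proof of \cref{thm:isotropic-lbd_mean} with $p=4/3$, $q=2/3$. Substituting $\varepsilon = d/(\eta n)$ into $n = o(d/(\alpha^{4/3}\varepsilon^{2/3}))$ gives $n^{1/3} = o(d^{1/3}\eta^{2/3}/\alpha^{4/3})$, that is, $n = o(d\eta^2/\alpha^4)$. This is precisely the sample regime of \cref{lem:low-degree-robustness-lb-average-case}. To identify the parameters, set $n \asymp d/(\alpha^{4/3}\varepsilon^{2/3})$ at the threshold and solve $\eta = d/(\varepsilon n)$, which gives $\eta = \Theta(\alpha^{4/3}/\varepsilon^{1/3})$. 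Then $\delta = 21\alpha/\eta = \Theta(\varepsilon^{1/3}/\alpha^{1/3})$, as claimed.

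Final step: feed the robust estimator into \cref{lem:robust-me-hard-distributions}. Every stage of the composition is polynomial time, so this yields an efficient distinguisher for the $(\eta,\delta)$-Gaussian mixtures problem using $o(d/(\alpha^{4/3}\varepsilon^{2/3}))$ samples. It remains to check low-degree hardness in this regime via \cref{lem:distinguish_hard}, with $\delta = O(\alpha/\eta)$ and $n = o(d\eta^2/\alpha^4)$. The side condition $n\eta^2\ge 1$ either holds in the relevant regime or can be arranged by padding with extra null samples.
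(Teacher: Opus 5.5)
Your proposal is correct and follows the paper's route: apply the Georgiev--Hopkins privacy-to-robustness reduction to obtain an average-case robust estimator, then feed it into the average-case robust lower bound (\cref{lem:low-degree-robustness-lb-average-case}, i.e.\ \cref{lem:robust-me-hard-distributions} together with \cref{lem:distinguish_hard}), redoing the $p=4/3$, $q=2/3$ parameter algebra from the proof of \cref{thm:isotropic-lbd_mean}. Your explicit averaging of the group-privacy bound over the clean data fills in a step the paper leaves implicit. Two small clarifications: the success-probability bookkeeping is automatic once $\log(1/\beta)=d\to\infty$, so no ``shrinking of $\beta$'' is needed (and it would not be free in any case). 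Also, the corruption fraction can simply be taken to be $\Theta(\alpha^{4/3}/\varepsilon^{1/3})\le d/(\varepsilon n)$, instead of being read off at the sample-complexity threshold.
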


\noindent Finally, we complete the proof of \cref{thm:comp-lower-bound_mean}, which establishes an information-computation gap for the original question of private \emph{Bayesian} mean estimation. The proof requires adapting the average-case lower bound of \cref{cor:private-comp-lb-average-case} so that the distribution over $\mu$ matches the Gaussian prior in the setting of \cref{thm:comp-lower-bound_mean}.

\begin{proof}[Proof of \Cref{thm:comp-lower-bound_mean}]
    We construct a coupling of $(X, Y)$ with the following properties:
    \begin{itemize}
        \item $X$ has marginal distribution $\mathcal D_1 :=\mathcal N(0, \sigma^2 I)$.
        \item $Y$ has marginal distribution $\mathcal D_2$ that is $1/6$-close in TV distance to the prior.
        \item $(X,Y)$ are jointly distributed as follows: first $X$ is drawn according to $\mathcal{D}_1$. Then, $Y$ is set to be equal to $X + 21 \alpha v$ where $v \sim \mathbb S^{d-1}$ uniformly and independently and $\alpha$ is chosen appropriately.
    \end{itemize}

    \noindent Under this construction, by standard bounds relating TV to parameter distance for Gaussians, there is an absolute constant $c > 0$ such that the second property is satisfied provided $\alpha \le c\sigma$.
    
    The third property combined with \cref{cor:private-comp-lb-average-case} shows that no $\varepsilon$-private estimator of the empirical mean using 
    \begin{equation}
        o\Bigg(\frac{\min(d, \log \frac{1}{\beta})}{\alpha^{4/3} \varepsilon^{2/3}}\Bigg) \label{eq:toofew}
    \end{equation}
    samples can be $\alpha$-accurate with probability more than $5/6$ when $\mu$ is drawn from $\frac{1}{2}(\mathcal D_1 + \mathcal D_2)$. Indeed, suppose to the contrary that some efficient estimator achieves success probability $5/6$. Then by averaging, there must be some $x$ such that conditioned on $X=x$, it succeeds with probability $5/6$ over $\mu = x + 21\alpha \nu \cdot \mathrm{Ber}(1/2)$. The guarantee in \cref{cor:private-comp-lb-average-case} is clearly translation-invariant, so the existence of this estimator implies an efficient algorithm for the $(\eta,\delta)$-Gaussian mixture distinguishing problem, contradicting low-degree hardness. 
    
    The TV distance between $\frac{1}{2}(\cD_1 + \cD_2)$ and $\cD_1$ is at most $1/12$, so by triangle inequality, no $\varepsilon$-private estimator of the empirical mean with sample complexity given by Eq.~\eqref{eq:toofew} can be $\alpha$-accurate with probability more than $\frac{11}{12}$ when $\mu$ is drawn from the prior. Therefore we require $\beta < \frac{1}{12}$, and the connection between Bayesian mean estimation and empirical mean estimation elucidated in \cref{subsec:private-me-main-results} yields the desired lower bound for the former.
\end{proof}

%% file: content/online-streaming.tex
\subsection{Application to private Gaussian mean estimation under continual observation}
\label{sec:online}

As an application of our algorithm for private Bayesian mean estimation with Gaussian prior, we consider an online model of mean estimation under continual observation.:

\begin{definition}[Private online mean estimation]
    Let $\mu \in \R^d$ be an unknown parameter with $|\mu| \le R$. Suppose we see a stream of $k$ batches each of size $n$, where the samples are independently sampled from $\mathcal{N}(\mu,\Id)$. After each successive batch, we must report an estimate of $\mu$ given what we have seen so far, and the overall mechanism must be differentially private with respect to all $nk$ samples encountered in the stream.
\end{definition}

If the streaming algorithm can safely keep non-private state hidden from the public, the solution becomes simple: take
\[
    \varepsilon_i = \frac{\varepsilon}{i \log k}
\]
and after each iteration publish an $\varepsilon_i$-private empirical mean estimate of the samples you have seen so far. By a simple calculation, if $n \ge \frac{d \log \frac{R}{\alpha}}{\varepsilon}$ then with probability $1-\beta$, for all $i \in [k]$
\[
    \|\hat \mu_i - \mu\|_2 \le \tilde O\left(\sqrt{\frac{d + \log \frac{k}{\beta}}{n}} + \frac{d + \log \frac{k}{\beta}}{\varepsilon n} \cdot \log k\right)
\]
However, the stronger notion of \emph{pan-privacy} is often studied in the the streaming setting. In this model, in addition to seeing all of the outputs, the adversary can view the hidden state kept by the algorithm at $m$ times (between processing of consecutive batches) of her choice. The commonly-studied problem is to estimate the empirical mean when the batch size is $1$ and $x_i$ are chosen arbitrarily from some bounded domain; in this setting, pan-privacy against $m$ intrusions can be shown to require accuracy loss that degrades with $\sqrt{m}$ \cite{dwork2010differential}. In contrast, we will show that in a stochastic setting---when $x_i$ are drawn i.i.d. from a Gaussian rather than chosen arbitrarily, and for sufficiently large batch sizes---pan-privacy can be assured against \emph{any} number of intrusions, since no hidden state is kept whatsoever between batches.

Although this task is frequentist as there is no prior on $\mu$, we will use our private Bayesian mean estimation algorithm through the following natural strategy. 
We will start with the uniform improper prior over $\mathbb R^d$ so that after each batch, conditioned on all samples seen thus far the posterior mean is simply the maximum likelihood estimate (MLE). At the start of every new batch, we have an estimate for the posterior conditioned on everything up to and including the previous batch, and we use this posterior as a prior upon which we perform a private Bayesian update using the samples in the new batch. Composition of the privacy-utility guarantee from Section~\ref{sec:gaussian-posterior-mean} across batches yields the following guarantee for private online mean estimation:

\begin{theorem}
    Let $\mu \in \mathbb R^d$ be a mean parameter, unknown to the algorithm, satisfying $\|\mu\| \le R$. Given a steam of $k$ batches of i.i.d. samples $x_i \sim \mathcal N(\mu, \Id_d)$ each of size $n>\frac{d \log \frac{R}{\alpha}}{\varepsilon}$, there is a computationally inefficient (resp. computationally efficient) $\varepsilon$-DP mechanism that outputs an estimate of $\mu$ after every batch such that, with probability $1-\beta$, for all $t \in [k]$, the $t$-th estimate has error at most
    \[
        \tilde O\left(\sqrt{\frac{d + \log \frac{k}{\beta}}{n t}} + \frac{d + \log \frac{k}{\beta}}{\varepsilon n}\right)\,,\ \ \ \mathrm{resp.} \ \ \ \tilde O\left(\sqrt{\frac{d + \log \frac{k}{\beta}}{n t}} + \left(\frac{d + \log \frac{k}{\beta}}{\varepsilon^{2/3} n}\right)^{3/4} + \frac{d + \log \frac{k}{\beta}}{\varepsilon n}\right)\,.
    \]
\end{theorem}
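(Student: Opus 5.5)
The plan is to run the private Bayesian mean estimator of \cref{thm:anisotropic-private-est} (equivalently \cref{thm:mean_estimation_upper_informal}) once per batch, each time on fresh data only. Composition is then parallel rather than sequential: every sample is touched by exactly one invocation. At step $t$ the prior is the Gaussian $\mathcal N(\hat\mu_{t-1}, \frac{1}{n(t-1)}\Id)$, centered at the previous published estimate; for $t=1$ it is the improper flat prior, i.e.\ we simply release a private empirical mean. Write $\bar x^{(t)}$ for the empirical mean of batch $t$. The exact posterior mean given the prior and batch $t$ is
\[
\frac{t-1}{t}\hat\mu_{t-1} + \frac{1}{t}\bar x^{(t)}.
\]
So the mechanism privately estimates $\bar x^{(t)}$ to error $\alpha_t$ with failure probability $\beta/k$ and outputs
\[
\hat\mu_t = \frac{t-1}{t}\hat\mu_{t-1} + \frac1t \widehat{\bar x}^{(t)}.
\]
Privacy follows because $\hat\mu_t$ depends on batch $t$ only through an $\varepsilon$-DP release, and on earlier batches only through already-published outputs. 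Under swap adjacency a neighboring dataset changes one sample in one batch, so by post-processing and adaptive parallel composition the whole transcript is $\varepsilon$-DP. No hidden state persists between batches, which also gives the pan-privacy remark.

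For utility, unroll the recursion: $\hat\mu_t = \frac1t\sum_{s\le t}\widehat{\bar x}^{(s)}$. Hence
\[
\hat\mu_t - \mu = \Bigl(\frac1t\sum_{s\le t}\bar x^{(s)} - \mu\Bigr) + \frac1t\sum_{s\le t}\bigl(\widehat{\bar x}^{(s)} - \bar x^{(s)}\bigr).
\]
The first term is the mean of $nt$ Gaussian samples. By a union bound over $t\in[k]$ it is at most $O\bigl(\sqrt{(d+\log(k/\beta))/(nt)}\bigr)$ with probability $1-\beta/2$. The second term is at most $\max_s \alpha_s$ by the triangle inequality. Each invocation needs $\|\bar x^{(s)}\|\le R'$, which holds for $R'=R+O(\sqrt{d})$ with high probability, so the condition $n\gtrsim d\log(R/\alpha)/\varepsilon$ covers the range term of \cref{thm:anisotropic-private-est}.

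It remains to invert the sample complexity of \cref{thm:anisotropic-private-est} with $\Lambda=\Id$ and $\beta$ replaced by $\beta/k$ to get $\alpha_s$. In the inefficient case, $\alpha_s = \tilde O\bigl((d+\log(k/\beta))/(\varepsilon n)\bigr)$. In the efficient case, setting $n \asymp (d+\log(k/\beta))/(\alpha^{4/3}\varepsilon^{2/3})$ gives $\alpha = \bigl((d+\log(k/\beta))/(\varepsilon^{2/3}n)\bigr)^{3/4}$, plus the $\alpha_{\rm stat}$ term. A union bound over all $k$ invocations and the concentration event completes the proof.

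The main subtlety is that averaging independent private errors does not shrink them in the worst case. This is why the private term in the bound is the single-batch rate, not one improved with $t$; the claimed bound indeed does not decay in $t$, so the triangle inequality suffices. A second point to check is that using the exact Bayesian update with a data-dependent (but published) prior center is legitimate. The reduction in \cref{subsec:private-me-main-results} only needs the prior covariance to be known, and a shifted prior mean is handled by translation, since the empirical-mean estimator is translation-equivariant.
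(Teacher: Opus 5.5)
Your proposal is correct and is essentially the paper's argument. The paper uses the same Bayesian update with the previous posterior as the new prior (posterior precision $nt$), and privately estimates each batch's empirical mean to error $C$ via \cref{thm:anisotropic-private-est} with confidence $\beta/k$. It then controls the deviation from the MLE through the recursion $E_{t+1}\le \frac{t}{t+1}E_t+\frac{1}{t+1}C$, which is just your unrolled bound $\frac1t\sum_s\alpha_s\le\max_s\alpha_s$. Your explicit parallel-composition argument for privacy fills in a step the paper leaves implicit, while your closing translation-equivariance remark is unnecessary, since your mechanism only ever privately estimates $\bar x^{(t)}$ itself.
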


\begin{proof}
    In order to get a frequentist guarantee, we will start with the uniform improper prior (so after each iteration, the true posterior mean we are trying to estimate is just the maximum likelihood estimator). After each batch, we will use our previous posterior as a prior, and do a posterior mean estimate using the batch data. Assume as always that $n \ge \tilde \Omega\left(\frac{d \log \frac{R}{\alpha}}{\varepsilon}\right)$. We note that the posterior variance at step $t$ is independent of the samples, and is defined by
    \[
        \sigma_{t+1}^2 = \bigg(n+\frac{1}{\sigma_t^2}\bigg)^{-1}
    \]
    which (subject to $\sigma_0^{-2} = 0$) has solution
    \[
        \sigma_t^{-2} = n t
    \]
    The posterior mean update can be written as
    \begin{align*}
        \mu_{t+1} &= \mu_t + \frac{1}{1 + \frac{1}{\sigma_t^2 n}}(\bar x_t - \mu_t) \\
        &= \frac{1}{1 + \sigma_t^2 n} \mu_t + \frac{1}{1 + \frac{1}{\sigma_t^2 n}} \bar x_t
    \end{align*}
    Let $E_t$ be an upper bound on the error of the posterior mean (or equivalently in this case, maximum likelihood) estimate at time $t$, and suppose that we can estimate the empirical mean of $n$ samples in $d$ dimensions with $\varepsilon$-privacy and $\beta$-confidence to error $C$.
    Then we note that
    \begin{align*}
        E_{t+1} &\le \frac{1}{1 + \sigma_t^2 n}  E_t + \frac{1}{1 + \frac{1}{\sigma_t^2 n}} \cdot C \\
        &= \frac{t}{1 + t} E_t + \frac{1}{1 + t}  \cdot C
    \end{align*}
    which implies that $E_t \le C$ for all $t \ge 1$.
    Hence, using Theorem \ref{thm:anisotropic-private-est}, we can inefficiently (resp. computationally efficiently) estimate the MLE to error
    \[
        \alpha \le \tilde O\left(\frac{d + \log \frac{k}{\beta}}{\varepsilon n}\right)\,, \ \ \ \mathrm{resp.} \ \ \ \alpha \le \tilde O\left(\frac{d + \log \frac{k}{\beta}}{\varepsilon n} + \left(\frac{d + \log \frac{k}{\beta}}{\varepsilon^{2/3} n}\right)^{3/4}\right)\,.
    \]
    The theorem then follows by the convergence of the MLE to the parameter mean in the Gaussian model.
\end{proof}

%% file: content/LinearRegression.tex
\section{Robust and Private Bayesian Linear Regression}\label{sec:linear-regression}

In this section, we prove our main results for Bayesian linear regression. The model is as follows:
\begin{definition}[Bayesian regression under Gaussian prior]\label{def:bayesian-regression-under-gaussian-prior}
    A parameter $\wnull \in \R^d$ is drawn from a prior $\mathcal{N}(0,\Sigma)$, and then data is generated as follows. The columns of design matrix $\Xnull\in\R^{d\times n}$ are sampled i.i.d. from the distribution $\mathcal{N}(0,\Id_d)$, and the response vector $\ynull\sim \mathcal{N}(\Xnull^\top \wnull, \Id_n)$. Each sample is given by $(\Xnull(\cdot,i), \ynull(i))$, where $\Xnull(\cdot,i)$ denotes the $i$-th column of the design matrix $\Xnull$. Given these samples, our goal is to privately estimate the posterior mean $w_{\rm post} \defeq \E[\wnull\mid \Xnull, \ynull]$ to $\ell_2$ norm.

    We will also consider the setting in which an $\eta$-fraction of the samples have been arbitrarily corrupted. In this case, privacy is defined with respect to the corrupted samples, but the posterior mean that we wish to estimate is with respect to the uncorrupted samples, even though the algorithm only gets access to the corrupted ones.
\end{definition}

\noindent Our main positive results for this problem are algorithms in the special case where $\Sigma = \sigma^2\Id_d$. Specifically, we will prove the following slight strengthenings of \cref{thm:regression_upper_informal}.

\begin{theorem}[Statistical rate for private, robust Bayesian linear regression]\torestate{\label{lem:exponential-bayesian-error-regression}
 Let $\Sigma=\sigma^2 \Id_d$ for known parameter $\sigma > 0$.
 Let $\beta \in [0,1], \epsilon\leq O(1)$.
 Suppose $n\geq \max\Set{d\log(\sigma\sqrt{d})/\e, (d+\log(1/\beta))/\eta}$.
 Then there is a computationally inefficient algorithm that, given $\eta$-corrupted samples $(\Xinput,\yinput)$ generated as in \cref{def:bayesian-regression-under-gaussian-prior}, outputs an $\e$-DP estimator $\hat{w}$ such that with probability at least $1-\beta$,
 \begin{equation*}
     \norm{\hat{w}-\E[\wnull \mid \Xnull,\ynull]}_2 \leq \tilde{O}(\eta+\frac{d+\log(1/\beta)}{n\e})\,,
 \end{equation*}
 where $(\Xnull,\ynull)$ are the uncorrupted samples.
 }
\end{theorem}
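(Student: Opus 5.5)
The plan mirrors the mean estimation proof. First, we build a computationally inefficient robust estimator for $w_{\rm post} = (\sigma^{-2}\Id + \Xnull\Xnull^\top)^{-1}\Xnull\ynull$ from an $\eta$-corrupted dataset, with error $\alpha(\eta)=\tilde O(\eta + \sqrt{\eta}\sqrt{(d+\log(1/\beta))/n})$. Then we privatize it using the inefficient robustness-to-privacy reduction, \cref{lemma:inefficient-reduction-meta}, in its pure-DP form, together with the remark that the reduction applies to any fixed function of the uncorrupted samples.

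\textbf{Robust estimator.} The estimator brute-force searches for $(\hat X,\hat y)$ that agrees with $(\Xinput,\yinput)$ on all but $\eta n$ columns and satisfies the following resilience constraints.
\begin{enumerate}
\item The design is well conditioned: $\normop{\frac1n\hat X\hat X^\top-\Id}\le O(\sqrt{(d+\log(1/\beta))/n})$.
\item For every $S\subseteq[n]$ with $|S|\le 2\eta n$, the sparse spectral norm (\cref{def:sparse-spectral-norm}) satisfies $\|\hat X_S\|_{\op}\le O(\sqrt{d+\log(1/\beta)}+\sqrt{\eta n\log(1/\eta)})$.
\item For every such $S$, the residual part of the labels is resilient: $\|\sum_{i\in S}\hat X_i(\hat y_i-\langle \hat w,\hat X_i\rangle)\|_2\le O(\sqrt{\eta n (d+\log(1/\beta))}+\eta n\sqrt{\log(1/\eta)})$, where $\hat w$ denotes the OLS estimator of $(\hat X,\hat y)$.
\end{enumerate}
The estimator outputs the posterior mean computed from $(\hat X,\hat y)$.

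Feasibility of the ground truth follows from Gaussian concentration and a union bound over the $\binom{n}{2\eta n}\le (e/2\eta)^{2\eta n}$ subsets. This is where the hypothesis $n\ge (d+\log(1/\beta))/\eta$ is used to absorb the $\log(1/\beta)$ and net terms.

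For accuracy, compare two feasible solutions $(\hat X,\hat y)$ and $(\hat X',\hat y')$ that differ on a set $S$ with $|S|\le 2\eta n$. Write each solution's normal equations, $(\sigma^{-2}\Id+\hat X\hat X^\top)\hat w_{\rm post}=\hat X\hat y$, and subtract the two. Only the columns in $S$ contribute to the difference, and they enter through $\hat X_S\hat y_S$ and $\hat X_S\hat X_S^\top$. Controlling these terms with constraints 2 and 3, and inverting $(\sigma^{-2}\Id+\hat X\hat X^\top)\succeq (n/2)\Id$ via constraint 1, gives $\|\hat w_{\rm post}-\hat w'_{\rm post}\|\le O(\alpha(\eta))+O(\eta)\|\hat w_{\rm post}-\hat w'_{\rm post}\|$. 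We then rearrange, as in \cref{thm:robust-empirical-mean-stat}. There is also an extra term of the form $\frac1n\hat X_S\hat X_S^\top \hat w$, bounded by $\tilde O(\eta)\|\hat w\|$. To make this lower order, I would add the constraint $\|\hat w\|\le O(\sigma\sqrt d+1)$, or recenter the analysis on residuals, and then note that after the reduction this contributes only to the $\log(\sigma\sqrt d)$ term in the sample complexity.

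\textbf{Privatization.} Set $\alpha(\eta)$ as above for $\eta\le\eta^*$ and $\alpha(\eta)=O(\sigma\sqrt d+\dots)$ for larger $\eta$. This is a crude prior-radius bound that holds with high probability, since $\|w_{\rm post}\|\lesssim\|\wnull\|+1$. Plugging into the pure-DP bound of \cref{lemma:inefficient-reduction-meta} gives two requirements: $n\gtrsim d\log(\sigma\sqrt d/\alpha)/\epsilon$ from the large-$\eta$ range, and $n\gtrsim (d\log(1/\alpha)+\log(1/\beta))/(\eta_0\epsilon)$ from the small-$\eta$ range. Solving for $\eta_0$ yields a private error $\tilde O((d+\log(1/\beta))/(n\epsilon))$. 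Corruptions of the input add $\eta$ to the effective corruption level, giving the stated $\tilde O(\eta+\frac{d+\log(1/\beta)}{n\epsilon})$.

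\textbf{Main obstacle.} The hard step is constraint 3: bounding the label contribution $\sum_{i\in S}\hat X_i\hat y_i$ without a heavy dependence on $\|w\|$ or on the label scale. Simply using $\|\hat y_S\|\cdot\|\hat X_S\|_{\op}$ loses too much. I would first try decomposing $y=\Xnull^\top w+\xi$ and imposing resilience separately on the noise part and the signal part. If that is too loose, I would instead use the short-flat decomposition of the labels sketched in \cref{sec:overview} to bound $\sum_{i\in S}\hat y_i^2$ by $\tilde O(\eta n)$.
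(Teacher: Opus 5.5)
Your route differs from the paper's. It is viable, but one step fails as written.

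\textbf{Comparison with the paper.} The paper's robust estimator (\cref{thm:robust-inefficient-linear-regression}) has two stages. It first takes a rough $\tilde w$ from the polynomial-time estimator. It then residualizes at that fixed $\tilde w$, setting $v_i=x_i(y_i-\langle x_i,\tilde w\rangle)$, and runs the brute-force resilient mean estimator \cref{algo:completion-mean} with the resilience level from \cref{lem:resilience-vstar}. It outputs $\tilde w+\frac{1}{1+\lambda}(\hat g-\lambda\tilde w)$. This costs a resolvent correction $\|A^\star-\Id\|_{\mathrm{op}}\|w_{\rm post}-\tilde w\|$, which is $O(\eta)$ only because $n\gtrsim(d+\log(1/\beta))/\eta$.

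Your single search instead uses the candidate's own OLS in place of $\tilde w$ and compares candidates through their ridge normal equations. This needs no preliminary estimator and inverts the candidate design exactly, so there is no $\|A^\star-\Id\|$ correction. The price is OLS-versus-ridge bookkeeping, addressed below. Privatization is the same in both routes.

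Feasibility of constraint 3, which you list as the main obstacle, is routine for an exponential-time search. Writing $\xi$ for the noise vector,
\[
\sum_{i\in S}x_i(y_i-\langle w_{\rm OLS},x_i\rangle)=\Xnull_S\xi_S+\Xnull_S\Xnull_S^\top(\wnull-w_{\rm OLS}).
\]
This is controlled by \cref{lem:sparse-spectral-norm}, \cref{thm:maximum-subset-sum-gaussian-correct} and $\|\wnull-w_{\rm OLS}\|\lesssim\sqrt{(d+\log(1/\beta))/n}$, exactly as in \cref{lem:resilience-vstar}. No short-flat decomposition is needed. However, the second term of constraint 3 must be $\eta n\log(1/\eta)$, not $\eta n\sqrt{\log(1/\eta)}$. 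The products $\langle x_i,u\rangle\xi_i$ have exponential tails, so the top $2\eta n$ of them sum to order $\eta n\log(1/\eta)$. With your constant the ground truth is infeasible; the fix is harmless inside $\tilde O$.

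\textbf{The step that fails: the extra term.} With $A=\sigma^{-2}\Id+\hat X\hat X^\top$, subtracting the normal equations gives
\[
A(\hat w_{\rm post}-\hat w'_{\rm post})=\hat X_S(\hat y_S-\hat X_S^\top\hat w'_{\rm post})-\hat X'_S\bigl(\hat y'_S-(\hat X'_S)^{\top}\hat w'_{\rm post}\bigr).
\]
Swapping in the OLS solutions so that constraint 3 applies leaves $\hat X_S\hat X_S^\top(\hat w-\hat w'_{\rm post})$ and $\hat X'_S(\hat X'_S)^{\top}(\hat w'-\hat w'_{\rm post})$. It does not leave $\hat X_S\hat X_S^\top\hat w$.

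If you bound the leftover by $\tilde O(\eta)\|\hat w\|$ with $\|\hat w\|\approx\sigma\sqrt d$, you add $\tilde O(\eta\sigma\sqrt d)$ to $\alpha(\eta)$. The reduction outputs error $2\alpha(\eta_0)$, so this term lands in the final error. It does not become a $\log(\sigma\sqrt d)$ sample-complexity term, as you claim.

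What makes the term small is the shrinkage identity $\hat w-\hat w_{\rm post}=\sigma^{-2}A^{-1}\hat w$. It gives $\|\hat w-\hat w_{\rm post}\|\le\min\{1,2/(n\sigma^2)\}\|\hat w\|$. If $\|\hat w\|\lesssim\sigma\sqrt{d+\log(1/\beta)}+\sqrt{(d+\log(1/\beta))/n}$, this is $\lesssim\sqrt{(d+\log(1/\beta))/n}$. Even your $O(\sigma\sqrt d+1)$ bound suffices once the factor is included: the difference is then $O(1)$, so the extra term is only $\tilde O(\eta)$.

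You can impose the norm bound as a constraint, since it holds for the ground truth w.h.p. Alternatively, derive it by first comparing OLS solutions. There $\hat X\hat X^\top(\hat w-\hat w')$ equals the two constraint-3 sums plus $\hat X_S\hat X_S^\top(\hat w-\hat w')$. The last term is absorbed after rearranging, giving $\|\hat w-w_{\rm OLS}\|\le\tilde O\bigl(\eta+\sqrt{\eta(d+\log(1/\beta))/n}\bigr)$. With this fix, your rearrangement and the privatization go through.
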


\begin{theorem}[Computational rate for private, robust Bayesian linear regression]\label{thm:bayesian-error-regression}
Let $\Sigma=\sigma^2 \Id_d$ for known parameter $\sigma > 0$. 
Let $\beta \in [0,1]$, $\epsilon \leq O(1)$.
Suppose $n\geq \max\Set{d\log(\sigma\sqrt{d})/\e, (d+\log(1/\beta))/\eta}$.
There is a polynomial-time algorithm that, given $\eta$-corrupted samples $(X,y)$ generated as in \cref{def:bayesian-regression-under-gaussian-prior}, outputs an $\e$-DP estimator $\hat{w}$ such that with probability at least $1-\beta$,
 \begin{equation*}
     \norm{\hat{w}-\E[\wnull \mid \Xnull,\ynull]}_2 \leq \tilde{O}\biggl(\sqrt{\eta\sqrt{\frac{d+\log(1/\beta)}{n}}}+\eta+\frac{\Paren{d+\log(1/\beta)}^{3/4}}{n^{3/4}\e^{1/2}}+\frac{d+\log(1/\beta)}{n\e}\biggr) \,,
 \end{equation*}
 where $(\Xnull, \ynull)$ are the uncorrupted samples.
\end{theorem}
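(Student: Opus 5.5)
The plan is to mirror the mean-estimation pipeline: build an efficient SoS robust estimator for the OLS-type statistic, convert it to pure DP with \Cref{thm:efficient-reduction-meta-pure}, and treat the prior by cases on $\sigma$ relative to $1/\sqrt n$.

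\textbf{Reduction to a robust problem.} Write $w_{\rm post} = (\sigma^{-2}\Id + \Xnull\Xnull^\top)^{-1}\Xnull\ynull$. With probability $1-\beta$ we have $\Xnull\Xnull^\top \approx n\Id$ in operator norm, up to relative error $O(\sqrt{(d+\log(1/\beta))/n})$. Hence it suffices to robustly estimate the pair $(\frac1n \Xnull\Xnull^\top, \frac1n \Xnull\ynull)$: the first to operator-norm accuracy, the second to $\ell_2$ accuracy. We then apply the known map $(A,b)\mapsto (\frac{1}{n\sigma^2}\Id + A)^{-1}b$, which is Lipschitz near $(\Id,\cdot)$.

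The case $\sigma \ll 1/\sqrt n$ is handled separately: there $\|w_{\rm post}\|$ is tiny, so outputting $0$ (or a crude private estimate) already meets the claimed bound. In the regime $\sigma \gtrsim 1/\sqrt n$, since $\|\wnull\|\lesssim \sigma\sqrt d$, the labels can have scale $\sigma\sqrt d$. To fix this, I would first compute a coarse private estimate $w_0$ and recenter $y \leftarrow y - X^\top w_0$, so that the residuals are $O(1)$-subgaussian. The coarse estimate is obtained via the inefficient/efficient reduction using $n \ge d\log(\sigma\sqrt d)/\epsilon$, which accounts for that term in the sample complexity. This may need to be iterated a constant number of times, which I expect is the ``sequence of SoS programs'' alluded to.

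\textbf{Robust SoS estimator.} Use variables $w_i$ with $w_i^2 = w_i$ and $\sum_i w_i \ge (1-\eta)n$, together with $\hat X, \hat y$ agreeing with the input where $w_i = 1$. Add the constraint $\frac1n \hat X\hat X^\top \preceq (1+\alpha_0)\Id$ (encoded with a slack $MM^\top$ as before), and the short-flat decomposition constraint $\hat y = y'+y''$ with $\|y'\|_2^2 \le C\eta n\log(1/\eta)$ and $y''^2_i \le C\log(1/\eta)$. Feasibility follows from Gaussian order-statistics bounds on the recentered clean labels (\cref{thm:gaussian-order-stat-sharp}, \cref{thm:maximum-subset-sum-gaussian-correct}) and matrix concentration.

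For the analysis, let $S$ be the $O(\eta n)$ indices where either the adversary corrupted or $w_i=0$. Then
\[
\frac1n\bigl(\hat X\hat y - \Xnull\ynull\bigr) = \frac1n\sum_{i\in S}\bigl(\hat y_i\hat X_i - \ynull_i\Xnull_i\bigr).
\]
For a unit $u$, the Cauchy--Schwarz argument from the overview bounds each term by $\sqrt{\tilde O(\eta)}\cdot\bigl(\frac1n\sum_{i\in S}\langle \hat X_i,u\rangle^2\bigr)^{1/2}$. The second factor is controlled via \cref{lem:sos-subset-sum} plus the covariance constraint, as in the Kothari--Manohar--Zhang-style analysis: it is $O(\sqrt{\eta\log(1/\eta)} + \sqrt{\alpha_0})$. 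Naively this gives error $\sqrt{\eta}\cdot\sqrt{\eta+\alpha_0}$, i.e. $\tilde O(\eta + \sqrt{\eta\sqrt{(d+\log(1/\beta))/n}})$, which is exactly the target rate $\alpha(\eta)$. The clean-side term is bounded the same way using true Gaussian resilience. Everything must be written as a degree-$O(1)$ SoS proof so that it transfers to $\pE$, and the output is $\pE[\frac1n\hat X\hat y]$ together with the estimate of the design covariance.

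\textbf{Privatization.} Define a score function in the same way as the certifiable-mean score. Its bounded sensitivity, quasi-convexity and computability are inherited from \cite{hopkins2023robustness}. Prove the volume lemma analogous to \cref{lem:volume-ratio-bound}: a ball of radius $\alpha(\eta)$ has score at most $\eta n$, and scores at most $\eta^* n$ lie within radius $O(\alpha(\eta^*))$. Then plug into \Cref{thm:efficient-reduction-meta-pure}, and solve $\alpha(\eta)$ for $\eta$ exactly as in the mean case. This yields the $\frac{(d+\log 1/\beta)^{3/4}}{n^{3/4}\epsilon^{1/2}} + \frac{d+\log 1/\beta}{n\epsilon}$ terms, while the given corruption level $\eta$ contributes $\sqrt{\eta\sqrt{\cdot}}+\eta$ additively.

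\textbf{Main obstacle.} The delicate step is making the short-flat Cauchy--Schwarz bound a genuine low-degree SoS argument. The set $S$ depends on the indeterminates $w_i$, so $\sum_{i\in S}\hat y_i^2$ must be bounded via products like $(1-w_i)(y_i'^2 + y_i''^2)$ with the $\ell_\infty$ constraint applied termwise, and the selector lemma must be used without fourth-moment certification. A second, milder issue is handling the large-$\sigma$ recentering while keeping the overall mechanism pure DP via composition.
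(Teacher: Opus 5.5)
Your critical-regime argument matches the paper: short-flat constraint on the labels, Cauchy--Schwarz with \cref{lem:sos-subset-sum}, then the score and volume argument. The weak-prior regime $\sigma=\omega(1/\sqrt n)$, which you fold into a single recentering, has genuine gaps.

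\textbf{First gap: no robust coarse estimator.} You never supply a robust estimator for the coarse $w_0$. Your only SoS program puts the short-flat constraint on the raw labels, which before recentering have scale $s=\sqrt{1+\sigma^2 d}$. Feasibility then forces both budgets to scale with $s^2$, and the identifiability argument gives error $s\cdot\alpha(\eta)$ rather than $O(1)$ once $\sigma\sqrt d$ is large. The theorem pays only $\log(\sigma\sqrt d)$ for this. Obtaining $w_0$ ``via the reduction'' presupposes exactly this missing robust estimator, and the inefficient reduction would break polynomial time. The paper's Stage~1 (\cref{sec:stage1}) uses a different program. It imposes the short-flat decomposition on the residual $y-X^\top w$, with $w$ an indeterminate, together with $\|y-X^\top w\|_2^2\le 2n$. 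Combined with the covariance constraint, this certifies $\|w-\wnull\|_2=O(1)$ independently of $\sigma$.

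\textbf{Second gap: feasibility after recentering.} Feasibility does not follow as you claim. $w_0$ is computed from the same samples, and must be, since $w_{\rm post}$ depends on all of them. So the clean recentered labels $\Xnull^\top(\wnull-w_0)+\xi$ are not i.i.d.\ Gaussian, and \cref{thm:gaussian-order-stat-sharp} and \cref{thm:maximum-subset-sum-gaussian-correct} only handle the $\xi$ part. You need a bound, uniform over all $u$ with $\|u\|_2=O(1)$, showing that $\Xnull^\top u$ admits a short-flat decomposition. The paper gets this from $\|\Xnull\|_{\mathrm{sp}}(\eta n)\lesssim\sqrt{\eta n\log(e/\eta)}$ (\cref{lem:sparse-spectral-norm}). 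This is precisely where the hypothesis $n\ge(d+\log(1/\beta))/\eta$ enters (\cref{lem:program-satisfiability-fine-regression}).

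\textbf{Third gap: one recentering is not enough.} A single recentering at an $O(1)$-accurate $w_0$ cannot reach the claimed rate, and a robust estimate of the design covariance does not rescue it. Take $A=\frac1n\Xnull\Xnull^\top$, $\lambda=1/(n\sigma^2)$, and output $w_0+(\hat A+\lambda\Id)^{-1}(\hat g-\lambda w_0)$. The error then contains $(\hat A+\lambda\Id)^{-1}(A-\hat A)(w_{\rm post}-w_0)$, of size about $\|\hat A-A\|_{\mathrm{op}}\cdot O(1)$.

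The covariance constraint in your program certifies only $\|\hat A-A\|_{\mathrm{op}}\lesssim\sqrt{(d+\log(1/\beta))/n}$, which is no better than taking $\hat A=\Id$. Its slack cannot be tightened, because the clean data themselves only satisfy $\|A-\Id\|_{\mathrm{op}}\lesssim\sqrt{(d+\log(1/\beta))/n}$. So you are left with error of order $\sqrt{d/n}$. This exceeds the claimed bound whenever $n\ll d/\eta^2$ and $\epsilon\gg\sqrt{d/n}$; for example, with $\eta=(d+\log(1/\beta))/n$ and $\epsilon=\Theta(1)$ the target is $\tilde O((d/n)^{3/4})$.

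The missing step is an intermediate refinement. The paper solves a program with the $\mathcal{A}_{\mathrm{LS}}$ constraint \eqref{eq:ls-constraint} to get $\hat w_1$ with $\|\hat w_1-\wnull\|_2\lesssim\sqrt{\eta\log(1/\eta)}+\sqrt{(d+\log(1/\beta))/n}$ (\cref{lem:regression-sos-identifiability-certifying}). It then recenters at $\hat w_1$ and simply uses $\Id$ in place of $A$. The cross term becomes $\sqrt{d/n}\,\bigl(\sqrt{\eta\log(1/\eta)}+\sqrt{d/n}\bigr)$, which is lower order. A second pass of your own label program, recentered at the first-pass output, would serve equally well. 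However, your remark about iterating is attached to the coarse estimate rather than to this cross term, and you never say why a second pass is needed.
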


\noindent We note that the Bayes-optimal estimator achieves $\mathrm{MSE}(\hat{w})^{1/2}$, as defined in Eq.~\eqref{eq:LR_MSE}, of the order $(1+\frac{1}{n\sigma^2})^{-1}\sqrt{d/n}$.
It follows as a corollary that, under the condition that $\sqrt{d/n}\ll \eta+d/n\epsilon$ and $n\geq \Omega(d\log(\sigma\sqrt{d})/\epsilon)$, one achieves the Bayesian optimal error rate asymptotically.

There are three different regimes for $\sigma$:
\begin{itemize}[itemsep=0pt,topsep=0pt]
    \item \emph{Overly strong prior}: When $\sigma\ll 1/\sqrt{n}$, private and robust estimation in this setting is trivial. The prior is sufficiently strong that one can ignore the samples and still asymptotically achieve the Bayes-optimal MSE simply by outputting the mean of the prior. 
    \item \emph{Overly weak prior}: When $\sigma\gg 1/\sqrt{n}$, private and robust Bayesian estimation reduces to private and robust \emph{frequentist} estimation. The prior is sufficiently weak that the Bayes-optimal MSE is achievable asymptotically by ordinary least squares, though proving that this can be done optimally under privacy and robustness still takes some work.
    \item \emph{Critical prior}: The most interesting regime is when $\sigma=\Theta(1/\sqrt{n})$, in which case one needs to combine information from the prior with information from the samples to compete with the Bayes estimator.
\end{itemize}

In this section we will focus on computationally efficient algorithms, deferring our proof of the statistical rate in \cref{lem:exponential-bayesian-error-regression} to \cref{app:exponential-regression}. In \cref{sec:comp_LR}, we consider the critical prior regime, and in \cref{sec:approxOLS}, we consider the weak prior regime.

Additionally, we provide low-degree evidence that the computational rate is optimal. 

\begin{theorem}[Information-computation gap for private Bayesian linear regression]\torestate{\label{thm:privacy-lower-bound-regression}
Suppose $\e\leq O(1)$ and $\alpha \leq \min\Set{\eta d^{1/4}/D,\eta^2}$.
    There is a distinguishing problem $\Pi$ (see \cref{def:variance-matched-xcorruption-test}) which is degree-$D$ hard such that if there were a polynomial-time $\epsilon$-DP algorithm for Bayesian linear regression with standard Gaussian design and $\mathcal{N}(0,\sigma^2\Id)$ prior that estimates the posterior mean to error $\alpha$ with probability $1 - \beta$ using
    \begin{equation*}
        n = o\biggl(\frac{d^{1/3}\log^{2/3}(1/\beta)}{\alpha^{4/3}\epsilon^{2/3}\cdot D^{1/3}}\biggr)
    \end{equation*}
    samples, then there would be an efficient distinguishing algorithm for $\Pi$.
    }
\end{theorem}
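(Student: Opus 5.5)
The proof mirrors the mean-estimation lower bound: first a low-degree-hard distinguishing problem, then a reduction from it to \emph{robust} Bayesian linear regression, and finally the privacy-to-robustness transfer of \cref{thm:privacy-robustness-reduction}.

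\textbf{Step 1: the hard problem $\Pi$.} I would take the null to be $x_i\sim\mathcal N(0,\Id_d)$ and $y_i\sim\mathcal N(0,1+\sigma^2 d)$ independent, i.e.\ a variance-matched Gaussian in $d+1$ dimensions. The planted distribution is an imbalanced mixture of linear regressions. I would draw a hidden unit $v$ and take, with probability $1-\eta$, $y=\langle a_0 v,x\rangle+\xi$, and with probability $\eta$, $y=\langle a_1 v,x\rangle+\xi'$. The scalings $a_0,a_1$ with $|a_1-a_0|\approx\alpha/\eta$ and the noise variances are chosen so that the first two moments of $(x,y)$ match the null. The joint law is then Gaussian outside the two-dimensional span of $(v,e_y)$.

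\textbf{Step 2: hardness of $\Pi$.} I would bound the low-degree likelihood ratio in the spirit of \cref{lem:general-ldlr}, now for a hidden two-dimensional component with one known direction $e_y$. Expanding in bivariate Hermite polynomials, degree-$1$ and degree-$2$ coefficients vanish by moment matching. The $j$-th coefficient should be of size $\sqrt{\eta}\,(\alpha/\eta)^{j/2}$-ish times combinatorial factors. With $\kappa=\eta^2$ and $\tau\approx\eta^4/\alpha^4$, summing over $v$ gives advantage $O(1)$ for $n=o(d\eta^2/(\alpha^4\poly(D)))$. The extra assumptions $\alpha\le\eta d^{1/4}/D$ and $\alpha\le\eta^2$ are what keep the higher-order terms controlled, and they are where the $D$-dependence enters. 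I expect this computation to be the main obstacle. The $y$-coordinate is not a nuisance direction, so one cannot directly quote \cref{lem:general-ldlr}. I would first try conditioning on $x$ and treating $y\mid\langle v,x\rangle$ as a Gaussian channel, computing $\E[h_j(\langle v,x\rangle)h_k(y)]$ explicitly via \cref{fact:hermite-props}.

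\textbf{Step 3: reduction.} This is analogous to \cref{lem:robust-me-hard-distributions}. Given an $\eta$-robust estimator $\hat w$ of the posterior mean, compute the non-robust posterior mean $(\sigma^{-2}\Id+XX^\top)^{-1}Xy$ on the input. Output ``null'' iff the two are within $\alpha$. Under the null there is no corruption, so they agree w.h.p. Under the planted distribution, view the $\eta$-fraction component as corruptions of clean samples from the majority regression. The input's posterior mean differs from the clean one by roughly $\frac1n\sum_{i\in\text{bad}}(a_1-a_0)x_i\langle v,x_i\rangle$. This has mean $\eta(a_1-a_0)v$ of norm $\gg\alpha$ and fluctuations $O(\sqrt{\eta d/n}\,|a_1-a_0|)$, so a small-ball bound as before separates them. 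One must also arrange that the clean parameter $a_0 v$ is plausible under the $\mathcal N(0,\sigma^2\Id)$ prior. I would handle this by a TV-coupling argument exactly like the proof of \cref{thm:comp-lower-bound_mean}, with averaging over a shift.

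\textbf{Step 4: privacy.} Apply \cref{thm:privacy-robustness-reduction} with $\eta=\Theta(\log(1/\beta)/(\epsilon n))$. The robust lower bound $n\gtrsim d\eta^2/(\alpha^4 D)$ becomes $n\gtrsim d\log^2(1/\beta)/(\epsilon^2n^2\alpha^4D)$, i.e.\ $n^3\gtrsim d\log^2(1/\beta)/(\alpha^4\epsilon^2D)$. This yields the stated $n=o(d^{1/3}\log^{2/3}(1/\beta)\alpha^{-4/3}\epsilon^{-2/3}D^{-1/3})$.
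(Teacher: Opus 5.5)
Your Step~3 breaks on the null side. You assert that under the null ``there is no corruption,'' but your null ($x_i\sim\mathcal N(0,\Id_d)$ independent of $y_i\sim\mathcal N(0,1+\sigma^2 d)$) is not a clean sample from \cref{def:bayesian-regression-under-gaussian-prior} for any $w^\star$. Independence of $x$ and $y$ forces $w^\star=0$, and then $y\sim\mathcal N(0,1)$. So the robust estimator's guarantee says nothing about its output on null data, and your test is unjustified.

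Nor can you fix this by making the null clean. Because $e_y$ is a \emph{known} direction, pure-$y$ Hermite coefficients get no $d^{-k/2}$ damping from averaging over $v$. The planted $y$-marginal must therefore coincide with the null's; matching two moments is not enough, since a scale mixture is detected by $\sum_i h_4(y_i)$ with a dimension-free number of samples. But your clean branch has $\mathrm{Var}(y)=1+a_0^2>1$, and $(1-\eta)\mathcal N(0,1+a_0^2)+\eta\, r=\mathcal N(0,1)$ has no nonnegative solution $r$. Hence the null must carry the \emph{larger} variance $s^2=1+a_0^2$.

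The missing idea is \cref{lem:basic-properties-xcorruption}(iv). Since $1-1/s\lesssim\alpha^2\lesssim\eta$, one has $\varphi_s\ge(1-\eta)\varphi_1$ pointwise. So the null is itself an $\eta$-corruption of a clean $w^\star=0$ sample, in which an $\eta$-fraction of the responses is redrawn from $r_0=(\varphi_s-(1-\eta)\varphi_1)/\eta$. You then still have to show that these replacements move the non-robust statistic by only $O(\sqrt{\eta(d+\log(1/\beta))/n})\le\alpha$. This is where the side condition $n\gtrsim\eta(d+\log(1/\beta))/\alpha^2$ of \cref{lem:reduction-xcorruption} comes from, and why its test threshold is $2\alpha$ rather than $\alpha$.

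Relatedly, the variance $1+\sigma^2 d$ in Step~1 is off. The constraint $\E[xy]=0$ forces $(1-\eta)a_0+\eta a_1=0$, so $|a_0|=\eta|a_1-a_0|\asymp\alpha$. You should take unit noise in the clean branch, noise variance $s^2-a_1^2$ in the bad branch, and $s^2=1+a_0^2$. With this calibration your response-side mixture of regressions is a legitimate alternative to the paper's covariate corruption $x_i=g_i+ay_iu$ in \cref{def:variance-matched-xcorruption-test}. By Mehler's formula the one-sample likelihood ratio is exactly diagonal, $\sum_{k}c_k\,h_k(\langle v,x\rangle)\,h_k(y/s)$ with $c_k=((1-\eta)a_0^k+\eta a_1^k)/s^k$. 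Thus $c_1=0$ and $c_k^2\lesssim\eta^2(C\alpha^2/\eta^2)^k$ for $k\ge 2$, which is precisely what the summation in the proof of \cref{lem:ldlr-xcorruption} consumes. The step you expected to be the main obstacle becomes routine.

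The price is paid on the planted side of the reduction. Your shift has $d$-dimensional fluctuations: your $\alpha\sqrt{d/(\eta n)}$ term plus an omitted $\sqrt{\eta d/n}$ noise term. This costs an extra $n\gtrsim d/\eta$, whereas the paper's covariate corruption makes the shift exactly parallel to $u$ with only scalar fluctuations.

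Your TV coupling to the prior needs $\alpha\lesssim\sigma$. This is not a hypothesis of the theorem, and for $\sigma^2=\Theta(1/n)$ it clashes with $n\gtrsim\eta d/\alpha^2$ unless $\eta d\lesssim 1$. The paper instead takes the estimator's guarantee to hold uniformly over $\|w^\star\|_2\lesssim\alpha$. Step~4 coincides with the paper's argument.
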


\noindent Note that when the failure probability $\beta$ is exponentially small in $d$, this matches the rate achieved by \cref{thm:bayesian-error-regression}, up to log factors. In fact, even for general $\beta$, we obtain a \emph{bona fide} information-computation gap (\cref{remark:gap}).

While the proof of the above lower bound also draws upon non-Gaussian component analysis like the one for Bayesian mean estimation, the former is more involved as the embedding of non-Gaussian component analysis into Bayesian linear regression is more delicate. 
We defer the details to \cref{sec:robust-lr-lb} and \cref{sec:private-lr-lb}.

\subsection{Computational rate for private Bayesian linear regression with critical prior}
\label{sec:comp_LR}

Our main guarantees in the critical prior regime are as follows:

\begin{theorem}[Computational rate under critical prior]\torestate{\label{thm:private-bayesian-linear-regression}
Let $\Sigma=\sigma^2 \Id_d$ for known parameter $\sigma = \Theta(1/\sqrt{n})$. There is a polynomial-time algorithm that, given $\eta$-corrupted samples $(X,y)$ generated as in \cref{def:bayesian-regression-under-gaussian-prior}, outputs an $\epsilon$-DP estimator $\hat{w}$ such that with probability at least $1-\beta$, \begin{equation*}
    \norm{\hat{w}-\E[\wnull\mid \Xnull,\ynull]}_2 \leq O(\alpha)\,,
 \end{equation*}
whenever $\alpha\geq \eta \log(1/\eta)$, $\e\leq O(1)$  and  \begin{equation*}
    n\geq \tilde{\Omega}\Paren{\frac{d+\log(1/\beta)}{ \alpha^{4/3}\e^{2/3}}+\frac{d+\log(1/\beta)}{\alpha\epsilon}+\frac{d\log(\sigma\sqrt{d})}{\epsilon}+\frac{(d+\log(1/\beta))\eta^2}{\alpha^4}}\,.
    \end{equation*}
    }
\end{theorem}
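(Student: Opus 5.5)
Under the critical prior $\sigma = \Theta(1/\sqrt{n})$, the posterior mean is $w_{\rm post} = (\sigma^{-2}\Id + \Xnull\Xnull^\top)^{-1}\Xnull\ynull$. Since $\sigma^{-2} = \Theta(n)$, this is a known-scale shrinkage of the OLS-type statistic $\frac{1}{n}\Xnull\ynull$, with the matrix $(\frac{1}{n\sigma^2}\Id + \frac{1}{n}\Xnull\Xnull^\top)^{-1}$ in front. The plan is to follow the template of the mean-estimation section: (i) build a sum-of-squares robust estimator for $w_{\rm post}$ with error $\alpha(\eta) = O(\eta\log(1/\eta) + \sqrt{\eta\sqrt{(d+\log(1/\beta))/n}})$; (ii) feed it into \Cref{thm:efficient-reduction-meta-pure} via a score function and volume-ratio bounds, exactly as in \Cref{lem:volume-ratio-bound}; (iii) solve for $n$.

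\textbf{Robust step.} The SoS program has indeterminates $w_i$ (with $w_i^2 = w_i$ and $\sum_i w_i \ge (1-\eta)n$), $\hat X, \hat y$ (with $w_i \tilde X_i = w_i \hat X_i$ and $w_i\tilde y_i = w_i\hat y_i$), and $M$. The constraints are:
\begin{itemize}
\item the covariance constraint $\frac{1}{n}\hat X\hat X^\top + MM^\top = (1+\alpha_0)\Id$, together with a matching lower-bound constraint $\frac1n\hat X\hat X^\top \succeq (1-\alpha_0)\Id$ (an SoS-expressible PSD constraint, again written with an auxiliary square-root matrix);
\item a short-flat decomposition $\hat y = y' + y''$ with $\|y'\|_2^2 \le C\eta n\log(1/\eta)$ and $(y''_i)^2 \le C\log(1/\eta)$ for each $i$.
\end{itemize}
The output is $\tE[(\frac{1}{n\sigma^2}\Id + \frac1n\hat X\hat X^\top)^{-1}\frac1n\hat X\hat y]$. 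To avoid an inverse inside SoS, I would instead introduce a variable $\hat w$ with the constraint $(\frac{1}{n\sigma^2}\Id + \frac1n\hat X\hat X^\top)\hat w = \frac1n\hat X\hat y$ and output $\tE\hat w$.

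Feasibility uses the ground truth. The covariance bounds come from standard Gaussian matrix concentration with $n \gtrsim d + \log(1/\beta)$. The short-flat decomposition comes from \cref{thm:gaussian-order-stat-sharp} and \cref{thm:maximum-subset-sum-gaussian-correct} applied to $\ynull$, whose entries have variance $1 + \|\wnull\|^2 = O(1)$ since $\|\wnull\|^2 \approx d\sigma^2 = O(d/n)$.

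For accuracy, let $b_i = w_i \cdot \mathbf{1}[i \text{ uncorrupted}]$ (degree-2 in SoS), so the set where $\hat X, \hat y$ may differ from $\Xnull, \ynull$ has size at most $2\eta n$. Then
\[
\tfrac1n\hat X\hat y - \tfrac1n\Xnull\ynull = \tfrac1n\textstyle\sum_i (1-b_i)\bigl(\hat y_i\hat X_i - \ynull_i\Xnull_i\bigr).
\]
For each unit $u$, I would bound this by Cauchy--Schwarz as in the overview. The $\hat y$ factor is controlled by the short-flat decomposition, giving $\tilde O(\eta)$. The factor $\frac1n\sum(1-b_i)\langle\hat X_i,u\rangle^2$ is handled by the argument of \cref{lem:technicallem-unit-variance}: combining the upper covariance constraint with the stability of the true $\Xnull$ (an analogue of \cref{lem:stability-Gaussian}) shows the removed mass carries variance at most $O(\alpha_0 + \eta\log(1/\eta))$. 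The clean part is handled by \cref{lem:sos-subset-sum} applied to the true products $\ynull_i\langle\Xnull_i,u\rangle$. A resilience bound on these products, via a union bound over $\eta n$-subsets, gives $O(\eta\log(1/\eta) + \sqrt{\eta(d+\log(1/\beta))/n})$.

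Finally, propagate through the linear system. Subtracting the two equations for $\hat w$ and $w_{\rm post}$ gives a perturbation of the matrix by $\frac1n(\hat X\hat X^\top - \Xnull\Xnull^\top)$. The two-sided covariance constraints bound this in operator norm by $O(\sqrt{\eta\alpha_0}+\eta\log(1/\eta))$, again via the subset argument. Since the matrix is $\succeq \Omega(1)\Id$ and $\|w_{\rm post}\| = O(\sqrt{d/n}) = O(1)$, the error transfers with constant loss.

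\textbf{Privacy step.} Define certifiable-posterior-mean scores as in the mean-estimation section. Points within $\alpha(\eta)$ of $w_{\rm post}$ have score at most $\eta n$, and score at most $\eta^* n$ forces distance $O(\alpha(\eta^*))$. The candidate domain has radius $\mathrm{poly}(\sigma\sqrt d)$, which yields the $d\log(\sigma\sqrt d)/\epsilon$ term. Inverting $\alpha = \alpha(\eta)$ as in the proof of the second part of \Cref{thm:anisotropic-private-est} gives the $\alpha^{-4/3}\epsilon^{-2/3}$ and $\alpha^{-1}\epsilon^{-1}$ terms. The last term $(d+\log(1/\beta))\eta^2/\alpha^4$ ensures the genuine corruptions contribute error $\sqrt{\eta\sqrt{d/n}} \le \alpha$, and $\alpha \ge \eta\log(1/\eta)$ absorbs the other robust term.

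\textbf{Main obstacle.} I expect the hardest part to be the cross-term bound in SoS. Bounding $\sum(1-b_i)\hat y_i\langle\hat X_i,u\rangle$ needs degree-4 products of the short-flat variables with the selector and covariance variables. One has to show $\frac1n\sum_i (1-b_i)\hat y_i^2 \le \tilde O(\eta)$ inside SoS, by expanding $(y'_i+y''_i)^2 \le 2y_i'^2 + 2y_i''^2$ and using $\sum_i(1-b_i) \le 2\eta n$ against the per-coordinate bound on $y''$. It must also be checked that this leaves enough room to avoid the usual fourth-moment certification, so the sample complexity stays linear in $d$.
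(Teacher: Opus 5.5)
Your robust core (the short--flat decomposition of $\hat y$, and Cauchy--Schwarz against the variance carried by the removed points) matches the paper's, and so does your privacy step. The real departure is that you target the exact posterior mean through the cubic constraint $A\hat w=\frac1n\hat X\hat y$, with $A=\frac{1}{n\sigma^2}\Id+\frac1n\hat X\hat X^\top$. The paper instead outputs $\tE[(1/\sigma^2+n)^{-1}Xy]$, which is \emph{linear} in the SoS variables, so per-direction SoS bounds directly give closeness to $(1/\sigma^2+n)^{-1}\Xnull\ynull$. The remaining $O((d+\log(1/\beta))/n)$ gap to the exact posterior is a statement about clean data only (\cref{lem:Lipshitz-least-square}). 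It is absorbed because $n\gtrsim(d+\log(1/\beta))/(\alpha\epsilon)$ with $\epsilon\le O(1)$. Your last step, ``propagate through the linear system'', does not go through as written.

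\textbf{First problem.} Your claimed bound $O(\sqrt{\eta\alpha_0}+\eta\log(1/\eta))$ on $\|E\|_{\mathrm{op}}$, where $E=\frac1n(\hat X\hat X^\top-\Xnull\Xnull^\top)$, is false when $\eta\ll\alpha_0$.
\begin{itemize}
\item $\sqrt{\eta\alpha_0}$ is the \emph{first-moment} rate of \cref{lem:technicallem-unit-variance}. For second moments, the $\eta n$ adversarial points can supply the variance $\approx\alpha_0$ missing from a direction in which the clean empirical variance is $1-\alpha_0$. This is consistent with your own earlier bound $O(\alpha_0+\eta\log(1/\eta))$.
\item Combined with $\|w_{\rm post}\|=O(1)$, this leaves an error of order $\sqrt{d/n}\gg\alpha$.
\item You must instead use $\|w_{\rm post}\|=O(\sqrt{(d+\log(1/\beta))/n})$. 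Then $Ew_{\rm post}$ costs $O((d+\log(1/\beta))/n+\eta\log(1/\eta)\sqrt{d/n})$. That is the same $d/n$ term the paper pays, so targeting the exact posterior buys nothing.
\end{itemize}

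\textbf{Second problem.} ``$A\succeq\Omega(1)\Id$, so the error transfers with constant loss'' is not automatic in SoS.
\begin{itemize}
\item Your refined bounds hold only for a \emph{fixed} direction $u$: the selector lemma needs the fixed numbers $\langle\Xnull_i,u\rangle^2$. Doing the same for an indeterminate direction would amount to certifying sparse singular values of $\Xnull$, which the program is designed to avoid.
\item Inverting the indeterminate matrix $A$ moves you to the direction $A^{-1}u$, which is not fixed.
\item Set $\delta=\hat w-w_{\rm post}$ and $r=A\delta=\frac1n(\hat X\hat y-\Xnull\ynull)-Ew_{\rm post}$. The natural move $\frac{1}{n\sigma^2}\|\delta\|^2\le\langle\delta,A\delta\rangle=\langle\delta,r\rangle$ only yields the crude bound $\|\delta\|^2\lesssim\eta\log(1/\eta)$ plus lower-order terms. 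This is because for indeterminate $\delta$ you only have $\frac1n\sum_i(1-b_i)\langle\hat X_i,\delta\rangle^2\le(1+\alpha_0)\|\delta\|^2$.
\end{itemize}

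\textbf{How to repair it.} You would need a bootstrap. Let $A^\star=\frac{1}{n\sigma^2}\Id+\frac1n\Xnull\Xnull^\top$ and take the fixed vector $u'=(A^\star)^{-1}u$. Then $\langle\delta,u\rangle=\langle r,u'\rangle-\langle E\delta,u'\rangle$. Use $\langle E\delta,u'\rangle^2\lesssim(\alpha_0+\eta\log(1/\eta))\|\delta\|^2$ together with the crude bound to recover $\alpha(\eta)^2$. This works at constant degree. The simpler repair is the paper's: drop the linear system and output the scalar-shrunk statistic.
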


We first give robust, non-private algorithms for estimating the posterior mean, and then apply the privacy-to-robustness reduction.

\subsubsection{Efficient robust regression with critical prior}
\label{sec:robust_LR_crit}

Here we give a robust, non-private algorithm for estimating the posterior mean when $\sigma=\Theta(\sqrt{1/n})$.

We will design a sum-of-squares estimator for $(\Sigma^{-1}+n \Id_d)^{-1} \Xnull \ynull$ and then show that this is close to $\wpost$ by concentration.
As discussed in \cref{sec:overview}, the key difference compared to our algorithm for Bayesian mean estimation is enforcing the existence of a short-flat decomposition for our estimate $y$ for the true responses $\ynull$.

Let $\Xinput, \yinput$ denote the observed data, an $\eta$ fraction of which have been corrupted. Let $\Xnull, \ynull$ denote the original, uncorrupted data, which we do not have access to. We consider the following set of polynomial constraints in the indeterminates $\xi = (\xi(1),\ldots,\xi(n))$ (indicating our guesses for which points are corrupted), and $X,y$. First, we have the constraints indicating that the $\xi(i)$'s try to select for corrupted points:
\begin{equation}\label{eq:corruption_constraints}
\cA_{\mathrm{corr}}(X,y,\xi;\Xinput,\yinput)
= \left\{
\begin{aligned}\,& \xi\odot\xi=\xi,\quad \|\xi\|_1 \le \eta n,\\
& (1-\xi(i))\,X(i,\cdot)=(1-\xi(i))\,\Xinput(i,\cdot),\ \forall i\in[n],\\
& (1-\xi(i))\,y(i)=(1-\xi(i))\,\yinput(i),\ \forall i\in[n]\end{aligned}\right\}\,.
\end{equation}
Next, we have the resilience constraints that enforce that our estimates $X$ for the uncorrupted covariates have bounded covariance, and that our estimates $y$ for the uncorrupted responses admit a short-flat decomposition into a vector of indeterminates $z_1$ with small $\ell_2$ norm and a vector $z_2$ with small $\ell_\infty$ norm:
\begin{equation}
\cAres(X,y)=
\left\{
\begin{aligned}
&\Normop{\frac{1}{n}XX^\top-\Id}
   \le \chi \sqrt{\frac{d+\log(1/\beta)}{n}},\\
&y =z_1 + z_2\\
&\|z_1\|^2_2 \le \chi\eta n \log(1/\eta)+\chi\log(1/\beta),\\
&\eta n\|z_2\|_\infty^2 \le \chi\eta n \log(1/\eta)+\chi\log(1/\beta)
\end{aligned}
\right\}\,.
\end{equation}
Here $\chi > 0$ is some fixed universal constant.

Finally, we have a constraint defining a variable that corresponds to our estimate for the posterior mean:

\begin{equation}\label{eq:fine-grained-post-constraint}
\cA_\post(X,y,w_{\out})=\Set{w_{\out}=(1/\sigma^2 + n)^{-1}Xy}\,.
\end{equation}
With these constraints in place, we present our algorithm in \cref{algo:posterior-mean-regression} below.
\begin{algorithmbox}[Robust algorithm for approximating posterior mean estimator]\label{algo:posterior-mean-regression}
    \mbox{}\\
    \textbf{Input:} $\eta$-corrupted design matrix $X_{\mathrm{input}}\in \R^{n\times m}$ and responses $y_{\mathrm{input}}\in \R^n$\\
    \textbf{Output:} a polynomial time estimate $\hat{w}\in \R^d$ for the posterior mean
\begin{enumerate}
    \item Solve the degree-$O(1)$ SoS relaxation of the constraints
    \[
        \cA(X,y,\xi,w;\Xinput,\yinput)
        =\cA_{\mathrm{corr}}(X,y,\xi;\Xinput,\yinput)\cup \cAres(X,y,w)\cup \cA_\post(X,y,w_\out),
    \]
    obtaining a pseudo-expectation operator $\tE$.
    \item Set $\hat{w}\gets \tE[w_\out]$.
    \item \textbf{return} $\hat{w}$.
\end{enumerate}
\end{algorithmbox}

\noindent First, we verify feasibility of the program $\cA$. The proof ingredients for this are given in \cref{sec:concentration}.

\begin{lemma}\label{lem:feasibility-proof-isotropic-regression}
    The program constraints $\cA(X,y,\xi,w;\Xinput,\yinput)$ are satisfied by the clean samples with probability at least $1-\beta$. 
\end{lemma}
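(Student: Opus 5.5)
We exhibit an explicit feasible point built from the clean data. Set $X=\Xnull$ and $y=\ynull$. Let $\xi$ be the indicator of the corrupted indices, so $\norm{\xi}_1\le\eta n$. Set $w_\out=(1/\sigma^2+n)^{-1}\Xnull\ynull$. With these choices $\cA_{\mathrm{corr}}$ and $\cA_\post$ hold deterministically: on uncorrupted indices $\Xinput,\yinput$ agree with $\Xnull,\ynull$, and $\xi\odot\xi=\xi$ is immediate for a $0/1$ vector. It then remains to verify $\cAres$, which splits into two concentration events, each to hold with probability $1-\beta/2$; a union bound finishes.

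\textbf{Covariance constraint.} The columns of $\Xnull$ are i.i.d. $\mathcal N(0,\Id_d)$. Standard Gaussian covariance concentration (an $\epsilon$-net over $\mathbb S^{d-1}$ plus Bernstein for $\chi^2$ variables) gives
\[
\Normop{\tfrac1n\Xnull\Xnull^\top-\Id}\le C\Bigl(\sqrt{\tfrac{d+\log(1/\beta)}{n}}+\tfrac{d+\log(1/\beta)}{n}\Bigr)
\]
with probability $1-\beta/2$. Since $n\ge d+\log(1/\beta)$ by assumption, the linear term is dominated, and this is at most $\chi\sqrt{(d+\log(1/\beta))/n}$ for large $\chi$.

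\textbf{Short-flat decomposition.} Conditioned on $\wnull$ and $\Xnull$, each $\ynull(i)$ is Gaussian with variance $1$ and mean $\langle\wnull,\Xnull(\cdot,i)\rangle$. Marginally over $\wnull\sim\mathcal N(0,\sigma^2\Id)$, the $\ynull(i)$ are not independent; they are exchangeable Gaussians with variance $1+\sigma^2\norm{\Xnull(\cdot,i)}^2$. Because $\sigma^2=\Theta(1/n)$, we have $\sigma^2\norm{\Xnull(\cdot,i)}^2=O(d/n)=O(1)$ with high probability, and $\norm{\wnull}_2=O(\sqrt{(d+\log(1/\beta))/n})=O(1)$.

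To handle this cleanly, condition on $\wnull$. Then the $\ynull(i)$ are i.i.d. with law $\mathcal N(0,1+\norm{\wnull}^2)$, since $\langle\wnull,x\rangle$ for $x\sim\mathcal N(0,\Id)$ is $\mathcal N(0,\norm{\wnull}^2)$. Take threshold $t=C'\sqrt{\log(1/\eta)}$. Let $z_2(i)=\ynull(i)\mathbf 1[|\ynull(i)|\le t]$ and $z_1=\ynull-z_2$. Then $\eta n\norm{z_2}_\infty^2\le C'^2\eta n\log(1/\eta)$ holds deterministically.

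For $z_1$ we need $\sum_i \ynull(i)^2\mathbf 1[|\ynull(i)|>t]\le\chi\eta n\log(1/\eta)+\chi\log(1/\beta)$. This follows from the paper's Gaussian tail bounds (\cref{thm:gaussian-order-stat-sharp} and \cref{thm:maximum-subset-sum-gaussian-correct}). First, with probability $1-\beta/4$ at most $\eta n+O(\log(1/\beta))$ entries exceed $t$. Second, the sum of squares of the largest $k$ entries, for $k=\eta n+O(\log 1/\beta)$, is at most $O(k\log(n/k)+\log(1/\beta))$. This gives the bound $O(\eta n\log(1/\eta)+\log(1/\beta))$.

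The main obstacle is this last step, the heavy-tail sum: one must get $\log(1/\eta)$ rather than $\log n$ and an additive rather than multiplicative $\log(1/\beta)$. I would prove it with a Chernoff bound on $\sum_i e^{\lambda \ynull(i)^2}\mathbf 1[|\ynull(i)|>t]$ for a small constant $\lambda$, or equivalently by a union bound over subsets of size $k$ with $\binom{n}{k}\le(en/k)^k$. Finally, the variance inflation $1+\norm{\wnull}^2=O(1)$ only changes constants, which are absorbed into $\chi$.
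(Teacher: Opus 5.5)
Your argument is correct and follows the paper's overall plan. The witness is the clean data with $\xi$ the corruption indicator, Wishart concentration handles the covariance constraint, and Gaussian tail bounds handle the short-flat decomposition. Where you differ is in how you split $y$.

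The paper puts exactly the $\eta n$ largest-magnitude entries into $z_1$. Then $\|z_1\|_2^2$ follows from \cref{thm:maximum-subset-sum-gaussian-correct} at a \emph{fixed} subset size, and $\|z_2\|_\infty^2$ is an order statistic bounded by \cref{thm:gaussian-order-stat-sharp}. The resulting $\log(1/\beta)/(\eta n)$ term is exactly what the additive $\chi\log(1/\beta)$ in the constraint $\eta n\|z_2\|_\infty^2\le\cdots$ is there to absorb.

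Your fixed threshold $t$ instead makes the $\ell_\infty$ bound deterministic and $\beta$-free. It would therefore also work for versions of the constraint without that slack, e.g.\ in \cref{def:cert2}. The price is that the support of $z_1$ is random, so you need an extra tail argument.

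Be careful with the chain you first wrote: count at most $\eta n+O(\log(1/\beta))$, then the top-$k$ bound $O(k\log(en/k)+\log(1/\beta))$. This only yields $O(\eta n\log(1/\eta)+\log(1/\beta)\log(en/k))$. That overshoots the constraint's additive $\chi\log(1/\beta)$ by a logarithmic factor when $\log(1/\beta)\gg\eta n$. It is harmless when $\eta n\gtrsim\log(1/\beta)$, as assumed in \cref{thm:bayesian-error-regression}. So the subset-union-bound route is not ``equivalent'' to the MGF route unless you sharpen the count bound to $\max\{\eta n,O(\log(1/\beta)/\log(1/\eta))\}$.

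The MGF route you propose does close the gap cleanly, and it is the version to write down. Write $y_i=s g_i$ with $s^2=1+\|\wnull\|^2$ and take $t^2\ge 4s^2\log(1/\eta)$. Then $\E\exp\bigl(\tfrac14 g_i^2\mathbf 1[|y_i|>t]\bigr)\le 1+\sqrt2\,\eta$. Conditionally on $\wnull$ these factors are independent, so Markov's inequality gives $\sum_i y_i^2\mathbf 1[|y_i|>t]\le 4s^2(\sqrt2\,\eta n+\log(1/\beta))$ with probability $1-\beta$.

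Your conditioning on $\wnull$ to get i.i.d.\ $\mathcal N(0,s^2)$ labels, together with a high-probability bound $s^2=O(1)$ from $\sigma^2=\Theta(1/n)$, is also a more careful justification than the paper's one-line claim that $y$ is i.i.d.\ Gaussian with variance at most $2$.
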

\begin{proof}
    The corruption constraints $\cA_{\rm corr}$ are satisfied by definition. 
    For the resilience constraints $\cA_{\rm res}$, by \cref{thm:covariance-concentration}, with probability at least $1-\beta$, the bound on $\Normop{\frac{1}{n}\Xnull \Xnull^\top-\Id}$ holds.
    On the other hand, the vector $y$ is i.i.d Gaussian with variance bounded by $2$. 
     We can let $z_1$ represent the $\eta n$ entries in $y$ with largest absolute values, and $z_2$ the remaining entries.
    By \cref{thm:gaussian-order-stat-sharp}, we have
    \begin{equation*}
        \norm{z_2}_\infty^2\leq \chi\log(1/\eta)+\frac{\chi\log(1/\beta)}{\eta n}\,.
    \end{equation*}
    By \cref{thm:maximum-subset-sum-gaussian-correct}, we have 
    \begin{equation*}
        \norm{z_1}^2_2 \leq \chi\eta n\log(1/\eta)+\chi\log(1/\beta)\,.
    \end{equation*}
    Therefore the resilience constraint is satisfied with probability at least $1-\beta$. 
\end{proof}

\noindent Next, we show that our algorithm outputs an accurate estimate of $(1/\sigma^2 + n)^{-1}\Xnull\ynull$:

\begin{lemma}\label{lem:identifiability-proof-isotropic-regression}
    Consider the setting of \cref{thm:private-bayesian-linear-regression}. 
    Then with probability at least $1-\beta$, \cref{algo:posterior-mean-regression} outputs $\hat{w}\in \R^d$ such that 
    \begin{equation*}
        \|\hat{w}-(1/\sigma^2+n)^{-1}\Xnull\ynull\|^2_2\leq O\Paren{\eta^2\log^2(1/\eta)+\eta \log(1/\eta)\sqrt{\frac{d+\log(1/\beta)}{n}}}\,.
    \end{equation*}
\end{lemma}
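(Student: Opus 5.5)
We prove the bound by the standard identifiability-to-algorithm route. We exhibit a low-degree SoS proof that every feasible solution of $\cA$ produces a $w_\out$ close to $w^\circ \defeq (1/\sigma^2+n)^{-1}\Xnull\ynull$. We then pass the bound through the pseudo-expectation using $\|\tE w_\out - w^\circ\|_2^2 \le \tE\|w_\out-w^\circ\|_2^2$, which holds by pseudo-Cauchy--Schwarz.

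Throughout we condition on the event of \cref{lem:feasibility-proof-isotropic-regression}: the clean data $(\Xnull,\ynull)$ satisfies $\cAres$, so the program is feasible. Since $\sigma=\Theta(1/\sqrt n)$, the prefactor $(1/\sigma^2+n)^{-1}$ is $\Theta(1/n)$. It therefore suffices to bound $\frac1n\|Xy-\Xnull\ynull\|_2$ in SoS.

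\textbf{Step 1: reduce to a small set of bad indices.} Let $\xi^\circ$ be the indicator of the truly corrupted coordinates. Define the polynomial $s(i)=1-(1-\xi(i))(1-\xi^\circ(i))$, which is Booleanly constrained and satisfies $\sum_i s(i)\le 2\eta n$. On indices with $s(i)=0$, the constraints of $\cA_{\mathrm{corr}}$ force the $i$-th columns of $X$ and $\Xnull$ to agree, and likewise the entries of $y$ and $\ynull$. Hence, in SoS,
\[
Xy-\Xnull\ynull=\sum_i s(i)\bigl(X(\cdot,i)y(i)-\Xnull(\cdot,i)\ynull(i)\bigr).
\]
We bound the two pieces separately by testing against an arbitrary unit $u$. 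Concretely, we bound $\langle u, \cdot\rangle^2$ and then sum over an orthonormal basis, or equivalently take $u$ to be the normalized difference vector as a polynomial.

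\textbf{Step 2: Cauchy--Schwarz and the short-flat bound.} For the program's piece, SoS Cauchy--Schwarz gives
\[
\Bigl(\sum_i s(i)y(i)\langle X(\cdot,i),u\rangle\Bigr)^2\le \Bigl(\sum_i s(i)y(i)^2\Bigr)\Bigl(\sum_i s(i)\langle X(\cdot,i),u\rangle^2\Bigr).
\]
For the first factor we use $y=z_1+z_2$ together with $s(i)^2=s(i)\le1$:
\[
\sum_i s(i)y(i)^2\le 2\|z_1\|_2^2+2\sum_i s(i)z_2(i)^2\le O\bigl(\eta n\log(1/\eta)+\log(1/\beta)\bigr).
\]
Bounding $\sum_i s(i)z_2(i)^2$ requires a degree-bounded SoS encoding of the $\ell_\infty$ constraint. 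We encode it as $z_2(i)^2\le B/(\eta n)$ for each $i$, which then gives $\sum_i s(i)z_2(i)^2\le 2\eta n\cdot B/(\eta n)$.

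For the second factor we cannot certify fourth moments, so we use only the covariance constraint $\frac1n XX^\top\approx \Id$. We rewrite
\[
\sum_i s(i)\langle X(\cdot,i),u\rangle^2=\sum_i\langle X(\cdot,i),u\rangle^2-\sum_i(1-s(i))\langle X(\cdot,i),u\rangle^2.
\]
On the good indices the columns coincide with the clean ones, so the subtracted term equals $\sum_i(1-s(i))\langle \Xnull(\cdot,i),u\rangle^2$. We lower bound it by \cref{lem:sos-subset-sum} applied to the clean values $a_i=\langle\Xnull(\cdot,i),u\rangle^2$: by Gaussian order statistics, any $2\eta n$ of them sum to at most $O(\eta n\log(1/\eta)+d+\log(1/\beta))$. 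Combining this with the upper bound $\sum_i\langle X(\cdot,i),u\rangle^2\le n(1+\chi\sqrt{(d+\log(1/\beta))/n})$ yields
\[
\sum_i s(i)\langle X(\cdot,i),u\rangle^2\le O\Bigl(\eta n\log(1/\eta)+n\sqrt{(d+\log(1/\beta))/n}\Bigr).
\]
Here we use $n\ge (d+\log 1/\beta)/\eta$ to absorb the lower-order terms. The clean piece $\Xnull\ynull$ restricted to $s$ is bounded the same way, now using clean-data bounds directly.

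\textbf{Step 3: combine.} Multiplying the two factors and dividing by $n^2$ gives
\[
\frac1{n^2}\|Xy-\Xnull\ynull\|^2\le O\Bigl(\eta\log(1/\eta)\bigl(\eta\log(1/\eta)+\sqrt{(d+\log 1/\beta)/n}\bigr)\Bigr),
\]
which is exactly the claimed bound. The $\log(1/\beta)/n$ contributions are absorbed by the sample-size assumption.

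The main obstacle is Step 2's treatment of $\sum_i s(i)\langle X(\cdot,i),u\rangle^2$. The lemma must hold uniformly over $u$ while $u$ is itself a function of the SoS variables. The plan is to avoid a quantifier over $u$: we bound the Frobenius-type quantity $\|\sum_i s(i)X(\cdot,i)y(i)\|^2$ directly. To do this we write $\|\cdot\|^2=\langle v,\cdot\rangle$ with $v$ equal to the vector itself and apply Cauchy--Schwarz to get
\[
\Bigl(\sum_i s(i)y(i)\langle X(\cdot,i),v\rangle\Bigr)^2\le \Bigl(\sum_i s(i)y(i)^2\Bigr)\Bigl(\sum_i s(i)\langle X(\cdot,i),v\rangle^2\Bigr).
\]
The last factor is controlled by an operator-norm certificate of the form $v^\top(\sum_i s(i)X(\cdot,i)X(\cdot,i)^\top)v\le \lambda\|v\|^2$, which holds at constant degree because the spectral constraint on $XX^\top$ is a PSD-matrix constraint. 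We then cancel one power of $\|v\|^2$ by the standard SoS division trick $a^2\le ab\Rightarrow a\le b$ for squared norms.
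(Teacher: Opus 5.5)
\textbf{Steps 1--3 match the paper; the final step does not go through.} Your Steps 1--3 are essentially the paper's argument. You split over $s=1-(1-\xi)\odot(1-\xi^\circ)$ and apply Cauchy--Schwarz against a direction $u$. You use the short--flat bound on $\sum_i s(i)y(i)^2$. You bound $\sum_i s(i)\langle X(\cdot,i),u\rangle^2$ as total covariance minus the clean good part, lower-bounding the good part via \cref{lem:sos-subset-sum}; this spells out what the paper only calls ``similar to empirical mean estimation''. The gap is in how you pass from this to $\hat w$.

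You route through $\tE\|w_{\mathrm{out}}-w^\circ\|_2^2$. That requires the directional bound for a $u$ which is itself a polynomial in the program variables. Summing over an orthonormal basis loses a factor of $d$.

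Your fallback in the last paragraph fails as well. The PSD covariance constraint only certifies $\sum_i s(i)X(\cdot,i)X(\cdot,i)^\top\preceq(1+o(1))n\,\mathrm{Id}$. That yields squared error $\eta\log(1/\eta)$, which is the weaker rate of \cref{lem:regression-sos-identifiability-certifying}, not the claimed one. The improvement to $\lambda=O(\eta n\log(1/\eta)+\sqrt{dn})$ in your Step 2 came entirely from \cref{lem:sos-subset-sum}. That lemma's degree-$1$ certificate depends on the sorted order of \emph{fixed reals} $a_i=\langle X^\circ(\cdot,i),u\rangle^2$. With $v$ a polynomial, you would instead need a joint constant-degree certificate in $(s,v)$ that $\sum_i s(i)\langle X^\circ(\cdot,i),v\rangle^2\le B\|v\|_2^2$. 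That is an SoS certificate of a sparse-singular-value bound for the Gaussian matrix $X^\circ$. The inequality is true for every real $v$ by \cref{lem:sparse-spectral-norm}, but nothing in the program supplies a constant-degree certificate of it. Producing one is exactly the sparse-singular-value / higher-moment certification problem the short--flat approach is designed to avoid.

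\textbf{The fix.} No uniformity over polynomial directions is needed, because what you must bound is the norm of a \emph{fixed} vector. Run your Step 2--3 argument with $u$ a constant unit vector. For every such $u$ this gives a degree-$O(1)$ SoS proof of $\langle w_{\mathrm{out}}-w^\circ,u\rangle^2\le B$, with $B=O\bigl(\eta^2\log^2(1/\eta)+\eta\log(1/\eta)\sqrt{(d+\log(1/\beta))/n}\bigr)$. Hence $\tE\langle w_{\mathrm{out}}-w^\circ,u\rangle^2\le B$ holds for every fixed unit $u$ simultaneously. After solving the program, $\Delta:=\tE[w_{\mathrm{out}}]-w^\circ$ is a fixed vector. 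Take $u=\Delta/\|\Delta\|_2$ and apply pseudo-Cauchy--Schwarz:
$\|\Delta\|_2^2=\bigl(\tE\langle w_{\mathrm{out}}-w^\circ,u\rangle\bigr)^2\le \tE\langle w_{\mathrm{out}}-w^\circ,u\rangle^2\le B$.
This is how the paper concludes (``for every direction $u$ \dots taking pseudo-expectation''). Replace your final paragraph, and the step $\|\tE w_{\mathrm{out}}-w^\circ\|_2^2\le\tE\|w_{\mathrm{out}}-w^\circ\|_2^2$, with this argument and the rest of your proof goes through.
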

\begin{proof}
    Let $r\in \Set{0,1}^n$ be the true indicator for the corrupted samples. 
    Then in the support of $(1-\xi)\odot (1-r)$, the program variables $X,y$ are identical to the ground truth samples, i.e
    \begin{align*}
         X((1-\xi)\odot (1-r)) &=\Xnull((1-\xi)\odot (1-r))\\
        (1-\xi)\odot (1-r)\odot y &=(1-\xi)\odot (1-r) \odot \ynull\,.
    \end{align*}
    We have 
    \begin{equation*}
        Xy-\Xnull \ynull=X(y\odot v)-\Xnull (\ynull\odot v)\,.
    \end{equation*}
    where $v=1-(1-r)\odot (1-\xi)$ indicates the points which are truly corrupted and/or were marked by $\xi$ as corrupted.

    Now we bound the two terms on the right-hand side separately. 
    For every unit vector $u$, using Cauchy-Schwarz, we have 
    \begin{equation*}
        \iprod{X(y\odot v),u}^2\leq \norm{Xu\odot v}^2\norm{y\odot v}^2\,. 
    \end{equation*}
    Now by the constraint $\cAres(X,y,\xi)$, we have 
\begin{equation*}
    \norm{y\odot v}^2\lesssim \norm{z_1\odot v}^2+\norm{z_2\odot v}^2\leq \norm{z_1}^2+\eta n\norm{z_2}_\infty^2\lesssim \eta n \log(1/\eta)+\log(1/\beta)\,.
\end{equation*}
    Moreover, similar to the setting of empirical mean estimation where we invoked the isotropic estimation lemma (\cref{lem:technicallem-unit-variance}), 
    we have
    \begin{equation*}
        \Norm{Xu\odot v}^2=\sum_i v_i\iprod{X(\cdot,i),u}^2\leq \eta n \log(1/\eta)+\sqrt{\frac{d + \log(1/\beta)}{n}}n\,.
    \end{equation*}
    We can bound  $\iprod{\Xnull(\ynull\odot v),u}^2$ similarly, as the ground truth is a feasible solution. As a result, for every direction $u$, we have a degree-$O(1)$ sum-of-squares proof that
\begin{equation*}
        \iprod{Xy-\Xnull \ynull,u}^2\lesssim \paren{\eta n\log(1/\eta)+\log(1/\beta)}\cdot\Paren{\eta n \log(1/\eta)+\sqrt{\frac{d + \log(1/\beta)}{n}}n}\,.
    \end{equation*}
    Taking pseudo-expectation of $w_\out=(1/\sigma^2+n)^{-1}Xy$, we get an estimator with the claimed error rate: 
    \begin{align*}
        \alpha&\leq O\Paren{\eta \log(1/\eta)+\sqrt{\eta\log(1/\eta)\sqrt{\frac{d+\log(1/\beta)}{n}}}}\,. \qedhere
    \end{align*}
\end{proof} 

\noindent Finally, we show that our estimator is close to the posterior mean by concentration.
\begin{lemma}\label{lem:Lipshitz-least-square}
    Consider the setting of \cref{thm:private-bayesian-linear-regression}.
    With probability at least $1-\beta$, we have 
    \begin{equation*}
        \norm{\E[\wnull\mid \Xnull,\ynull]-(\Sigma^{-1}+n\Id_n)^{-1}\Xnull \ynull}\lesssim \frac{d+\log(1/\beta)}{n} \,.
\end{equation*}
\end{lemma}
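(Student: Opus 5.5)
The plan is a direct matrix-perturbation argument. By Gaussian conjugacy, with $\Sigma=\sigma^2\Id_d$ the posterior mean is
\[
\E[\wnull\mid \Xnull,\ynull]=A^{-1}\Xnull\ynull,\qquad A\defeq \sigma^{-2}\Id_d+\Xnull\Xnull^\top .
\]
The comparison vector is $B^{-1}\Xnull\ynull$ with $B\defeq \sigma^{-2}\Id_d+n\Id_d$; the $\Id_n$ in the statement should read $\Id_d$. The resolvent identity gives
\[
A^{-1}-B^{-1}=A^{-1}\bigl(n\Id_d-\Xnull\Xnull^\top\bigr)B^{-1},
\]
and therefore
\[
\norm{A^{-1}\Xnull\ynull-B^{-1}\Xnull\ynull}_2\le \normop{A^{-1}}\cdot\Normop{n\Id_d-\Xnull\Xnull^\top}\cdot\normop{B^{-1}}\cdot\norm{\Xnull\ynull}_2 .
\]
It remains to bound each of the four factors with probability $1-\beta$ and multiply.

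\textbf{Operator-norm factors.} By \cref{thm:covariance-concentration}, with probability $1-\beta/3$ we have $\Normop{\Xnull\Xnull^\top-n\Id_d}\le C n\delta$, where $\delta\defeq\sqrt{(d+\log(1/\beta))/n}$. Since $n\gtrsim d+\log(1/\beta)$, we may assume $C\delta\le 1/2$. Then $A\succeq (\sigma^{-2}+n/2)\Id_d$, so $\normop{A^{-1}}\le 2/n$. Trivially $\normop{B^{-1}}\le 1/n$. These bounds hold for any $\sigma$; the critical scaling $\sigma=\Theta(1/\sqrt n)$ is used only in the next step.

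\textbf{Bounding $\norm{\Xnull\ynull}_2$.} This is the only step needing some care. Write $\ynull=\Xnull^\top\wnull+g$ with $g\sim\mathcal N(0,\Id_n)$ independent of $\Xnull$, so that $\Xnull\ynull=\Xnull\Xnull^\top\wnull+\Xnull g$.
\begin{itemize}
\item \emph{First term.} On the event above, $\normop{\Xnull\Xnull^\top}\le 2n$. Gaussian norm concentration for $\wnull\sim\mathcal N(0,\sigma^2\Id_d)$ gives $\norm{\wnull}_2\lesssim \sigma\bigl(\sqrt d+\sqrt{\log(1/\beta)}\bigr)\lesssim \delta$, using $\sigma=\Theta(1/\sqrt n)$. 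Hence $\norm{\Xnull\Xnull^\top\wnull}_2\lesssim n\delta$.
\item \emph{Second term.} Conditional on $\Xnull$, the vector $\Xnull g$ is $\mathcal N(0,\Xnull\Xnull^\top)$ with $\Xnull\Xnull^\top\preceq 2n\Id_d$. Standard Gaussian norm concentration (Hanson--Wright) then gives $\norm{\Xnull g}_2\lesssim\sqrt n\bigl(\sqrt d+\sqrt{\log(1/\beta)}\bigr)=n\delta$ with probability $1-\beta/3$.
\end{itemize}
Together, $\norm{\Xnull\ynull}_2\lesssim n\delta$.

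\textbf{Conclusion.} A union bound over the three events and multiplying the factors gives
\[
\norm{A^{-1}\Xnull\ynull-B^{-1}\Xnull\ynull}_2\lesssim \frac{2}{n}\cdot n\delta\cdot\frac1n\cdot n\delta=O(\delta^2)=O\Bigl(\frac{d+\log(1/\beta)}{n}\Bigr),
\]
as claimed.
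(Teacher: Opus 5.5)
Your proposal is correct and follows the same route as the paper. The paper also uses covariance concentration to bound $\Normop{n(\Sigma^{-1}+n\Id)^{-1}-n(\Sigma^{-1}+\Xnull\Xnull^\top)^{-1}}\lesssim\sqrt{(d+\log(1/\beta))/n}$ and multiplies this by $\norm{\tfrac1n\Xnull\ynull}\lesssim\sqrt{(d+\log(1/\beta))/n}$; you make the resolvent identity explicit and justify the bound on $\norm{\Xnull\ynull}$, including where $\sigma=\Theta(1/\sqrt n)$ enters, which the paper asserts without proof.
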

 
\begin{proof}
    The posterior mean of our regression model is given by 
    \begin{equation*}
        \E[\wnull\mid \Xnull,\ynull]=\Paren{\Xnull \Xnull^\top+\Sigma^{-1}}^{-1} \Xnull \ynull\,.
    \end{equation*}
    By \cref{thm:covariance-concentration}, with probability $1-\beta$ we have
    \begin{equation*}
        \Normop{\frac{1}{n}\Xnull \Xnull^\top-\Id_n}\leq \sqrt{\frac{d+\log(1/\beta)}{n}}
    \end{equation*}
    Therefore, as a result we have
\begin{equation*}
\Normop{n(\Sigma^{-1}+n\Id_n)^{-1}- n(\Sigma^{-1}+\Xnull \Xnull^\top)^{-1}}\lesssim \sqrt{\frac{d+\log(1/\beta)}{n}}\,.
\end{equation*}
 Since with probability at least $1-\beta$ we have $\Norm{\frac{1}{n}\Xnull \ynull}\leq O(\sqrt{\frac{d+\log(1/\beta)}{n}})$, the claim thus follows.
\end{proof}

\begin{proof}[Proof of \cref{thm:private-bayesian-linear-regression}]
    Combining \cref{lem:Lipshitz-least-square} and \cref{lem:identifiability-proof-isotropic-regression} with triangle inequality, we know that with probability at least $1-\beta$, we have 
         \begin{equation*}
        \norm{\hat{w}-(1/\sigma^2+\Xnull\Xnull^\top)^{-1}\Xnull\ynull}^2\leq O\Paren{\eta^2\log^2(1/\eta)+\eta\log(1/\eta)\sqrt{\frac{d+\log(1/\beta)}{n}}}\,.
    \end{equation*}
Converting to sample complexity bound, we have the claim. 
\end{proof}

\subsubsection{Robustness-to-privacy reduction}
\label{sec:robust_to_private_LR}

In this section we apply the robustness to privacy reduction to prove \cref{thm:private-bayesian-linear-regression}. We will privatize our \cref{algo:posterior-mean-regression}.
We make the following definition for certifiable posterior mean for regression.
\begin{definition}[(\(\alpha,\tau,T\))-certifiable posterior mean for regression]\label{def:cert}
Fix $R>0$ and $\alpha,\tau>0$. For an input $(X_{\mathrm{in}},y_{\mathrm{in}})$, a vector $\widetilde w\in\R^d$
is \emph{$(\alpha,\tau,T)$-certifiable} if there exists a degree-$O(1)$ linear functional $L$ (a pseudoexpectation) over indeterminates $(X,y,\xi,w,\wpost,M)$ such that:
\begin{enumerate}[leftmargin=2.1em]
\item \textbf{Closeness/boundedness:}
$\norm{L[\wpost]-\widetilde w}_2\le \alpha$ and $\norm{L[\wpost]}_2\le 2R+\tau T$.
\item \textbf{Approximate feasibility (degree-6):} $L$ $\tau$-approximately satisfies the union of constraint families
\[
\mathcal A \;=\; \underbrace{\mathcal A_{\mathrm{corr}}}_{\text{corruption}}\ \cup\ 
\underbrace{\mathcal A_{\mathrm{res}}}_{\text{design/residual stability}}\ \cup\ 
\underbrace{\mathcal A_{\mathrm{post}}}_{\wpost=(1/\sigma^2+n)^{-1}Xy}\,,
\]
concretely:
\begin{itemize}[leftmargin=1.5em]
\item $\mathcal A_{\mathrm{corr}}$: $\xi\odot\xi=\xi$, $\|\xi\|_1\le \eta n$ and $(1-\xi(i))X(i,\cdot)=(1-\xi(i))X_{\mathrm{in}}(i,\cdot)$, $(1-\xi(i))y(i)=(1-\xi(i))y_{\mathrm{in}}(i)$ for all $i$.
\item $\mathcal A_{\mathrm{res}}$: 
$\big\|\frac{1}{n}XX^\top-I\big\|_{\op}\lesssim \sqrt{(d+\log(1/\beta))/n}$ and a decomposition $y=z_1+z_2$ with
$\|z_1\|_2^2\lesssim \eta n\log(1/\eta)+\log(1/\beta)$ and $\|z_2\|_\infty^2\lesssim \log(1/\eta)+\log(1/\beta)/(\eta n)$.
\item $\mathcal A_{\mathrm{post}}$: $\wpost=(1/\sigma^2+n)^{-1}Xy$.
\end{itemize}
\end{enumerate}
We choose $R=\Theta(\sigma\sqrt d)$ and take the slack inside $\mathcal A_{\mathrm{res}}$ to be
\[
\alpha_0(\eta)\;\asymp\; \sqrt{\frac{d+\log(1/\beta)}{n}}\ +\ \eta\log1/\eta\,.
\]
\end{definition}

\begin{definition}[Regression score]\label{def:regression-score-1}
For $\widetilde w\in \ball{d}{\,2R+n\tau+\alpha\,}$, define
\[
\mathcal{S}(\widetilde w;X_{\mathrm{in}},y_{\mathrm{in}},\alpha,\tau)
\;\defeq\; \min\bigl\{T\ge 0:\ \widetilde w\ \text{is $(\alpha,\tau,T)$-certifiable for }(X_{\mathrm{in}},y_{\mathrm{in}})\bigr\}.
\]
\end{definition}

\noindent Now we bound the volume ratio for different score values.

\begin{lemma}[Low-score sets localize near $\wpost$]\label{lem:vol}
There exists a universal $\eta_\star>0$ such that, on the stability event from the resilience constraints (which holds with prob.\ $1-\beta$), the following hold simultaneously. Let
\[
\alpha(\eta)^2\ \lesssim\ \eta^2\log^2 1/\eta +\ \eta\log(1/\eta)\cdot \sqrt{\frac{d+\log(1/\beta)}{n}}\,.
\]
\begin{enumerate}[leftmargin=2.1em]
\item \textbf{Completeness (inner ball).} If $\norm{\widetilde w-\wpost}_2\le \alpha(\eta)$ then $\mathcal{S}(\widetilde w;\cdot)\le \eta n$.
\item \textbf{Soundness (outer ball).} If $\mathcal{S}(\widetilde w;\cdot)\le \eta_\star n$ then $\norm{\widetilde w-\wpost}_2\lesssim \alpha(\eta_\star)$.
\end{enumerate}
\end{lemma}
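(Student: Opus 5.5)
We follow the template of \cref{lem:volume-ratio-bound} for mean estimation, treating the two parts separately. Throughout we work on the event, of probability at least $1-\beta$, on which the clean data $(\Xnull,\ynull)$ satisfy the constraints of \cref{lem:feasibility-proof-isotropic-regression}. On the same event we also assume the conclusion of \cref{lem:Lipshitz-least-square}, so that $\wpost$ and $(1/\sigma^2+n)^{-1}\Xnull\ynull$ differ by at most $O((d+\log(1/\beta))/n)$. This discrepancy is absorbed into $\alpha(\eta)$, either by enlarging the constant or by assuming $n$ is large enough. Here $\wpost$ in the statement is read as the target defined via the clean data.

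For completeness, fix $\widetilde w$ with $\norm{\widetilde w-\wpost}_2\le\alpha(\eta)$. We take $L$ to be the point-mass pseudoexpectation supported on the ground-truth assignment:
\begin{itemize}
\item $X=\Xnull$ and $y=\ynull$;
\item $\xi=r$, the true corruption indicator, which has $\|r\|_1\le\eta n$;
\item $z_1,z_2$ chosen as the top-$\eta n$ / remaining split from the feasibility proof;
\item $\wpost=(1/\sigma^2+n)^{-1}\Xnull\ynull$ as the program variable;
\item $M$ any square root of $(1+\alpha_0)I-\frac1n\Xnull\Xnull^\top$ (if the operator-norm constraint is encoded via an equality with slack, as in the mean case); this is PSD on the stability event.
\end{itemize}
An actual distribution satisfies all constraints exactly, so it satisfies them $\tau$-approximately for every $\tau$. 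The matrix-norm bound $\normf{\cR(L)}$ is polynomial in $n$, $d$ and $R$ because the Gaussian data are polynomially bounded with high probability.

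It remains to check the closeness conditions on $L$. First, $\norm{L[\wpost]-\widetilde w}\le\norm{L[\wpost]-\wpost}+\alpha(\eta)$, which is at most a constant multiple of $\alpha(\eta)$. The factor $2$ is absorbed by defining the score using a constant multiple of $\alpha$, or by shrinking the inner ball by a constant. Second, the condition $\norm{L[\wpost]}\le 2R$ holds because $R=\Theta(\sigma\sqrt d)$ and, on the event, $\norm{\Xnull\ynull}/n\lesssim\norm{\wnull}+\sqrt{d/n}$. The only nontrivial input here is a bound $\norm{\wnull}\lesssim\sigma\sqrt d$, which holds with high probability under the prior. The number of points that $\xi$ marks is at most $\eta n$, so $T=\eta n$ works and the score is at most $\eta n$.

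For soundness, let $\mathcal S(\widetilde w)\le\eta_\star n$ and let $L$ be the certifying functional, which $\tau$-approximately satisfies $\mathcal A$ with $T\le\eta_\star n$. We rerun the identifiability argument of \cref{lem:identifiability-proof-isotropic-regression} at corruption level $\eta_\star$. The set $v=1-(1-r)\odot(1-\xi)$ has size at most $2\eta_\star n$. For each unit $u$, Cauchy--Schwarz together with the short-flat constraint bounds $\norm{y\odot v}^2$, and the covariance constraint together with \cref{lem:sos-subset-sum} bounds $\norm{Xu\odot v}^2$. Together these give a degree-$O(1)$ SoS certificate
\[
\iprod{Xy-\Xnull\ynull,u}^2\lesssim n^2\alpha(\eta_\star)^2.
\]
Because the certificate holds for every $u$, applying $L$ yields $\norm{L[\wpost]-(1/\sigma^2+n)^{-1}\Xnull\ynull}\lesssim\alpha(\eta_\star)$. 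The $\tau$-approximation errors contribute only $\poly(n)\tau$, which is negligible for $\tau$ exponentially small since the certificate has polynomially bounded coefficients (\cref{remark:sos-numerical-issue}). Finally, the triangle inequality with $\norm{L[\wpost]-\widetilde w}\le\alpha\le\alpha(\eta_\star)$ and with \cref{lem:Lipshitz-least-square} finishes the proof.

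The main obstacle is the soundness step. The delicate point is that the identifiability proof must be robust to approximate satisfaction, and the constraint $\|\xi\|_1\le\eta n$ in the program must be interpreted as $\|\xi\|_1\le T$ via $\sum_i(1-\xi_i)\ge n-T$. We would check that the subset-sum bound \cref{lem:sos-subset-sum} is applied with $k=2\eta_\star n$ and still yields the rate $\alpha(\eta_\star)$ up to constants, with $\eta_\star$ chosen small enough that these constants stay bounded.
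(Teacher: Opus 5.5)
Your proposal is correct and follows the same route as the paper. Completeness comes from the point-mass pseudoexpectation on the clean assignment, and soundness from rerunning the SoS identifiability argument of \cref{lem:identifiability-proof-isotropic-regression} at level $\eta_\star$, combined with the closeness constraint of \cref{def:cert} via the triangle inequality; your extra bookkeeping (transfer to the exact posterior mean via \cref{lem:Lipshitz-least-square}, the $2R$ boundedness check, reading $\|\xi\|_1\le\eta n$ as $\sum_i(1-\xi_i)\ge n-T$, and tolerance to $\tau$-approximate satisfaction) only makes explicit what the paper's two-line proof leaves implicit.
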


\begin{proof}
For (1), use the witness that puts point mass on the uncorrupted assignment consistent with $(X_{\mathrm{in}},y_{\mathrm{in}})$; the constraints are feasible and $L[\wpost]=\wpost$.
For (2), combine the SoS identifiability for regression under $\mathcal A_{\mathrm{res}}$ with the closeness constraint of \cref{def:cert}.
\end{proof}

\noindent We need to prove that our score function has value bounded by $n$ for all candidate points $\tilde{w}$. 

\begin{lemma}[Well-definedness of the score function]\label{lem:regression-satisfiability-parameter}
    For any $\tilde{w}\in \R^d$, the value of score function is bounded by $n$.  
\end{lemma}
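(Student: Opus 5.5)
To prove \cref{lem:regression-satisfiability-parameter}, we exhibit, for every $\tilde w$ in the domain $\ball{d}{2R+n\tau+\alpha}$, an explicit witness showing that $\tilde w$ is $(\alpha,\tau,T)$-certifiable with $T = n$. Since $\mathcal S$ is defined as a minimum over such $T$, this gives $\mathcal S(\tilde w;\cdot)\le n$. The key observation is that $T=n$ makes the counting constraint trivial. The constraint $\sum_i w_i \ge n-T$ becomes $\sum_i w_i \ge 0$, or in the $\xi$ parametrization, $\norm{\xi}_1\le n$. We may therefore declare every sample corrupted, $\xi\equiv 1$, which decouples the program variables $(X,y)$ from the input $(X_{\mathrm{in}},y_{\mathrm{in}})$ entirely.

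The witness is a point-mass (actual) distribution, which automatically satisfies every constraint exactly and hence $\tau$-approximately for any $\tau\ge 0$. Concretely:
\begin{itemize}
\item Set $\xi(i)=1$ for all $i$. Then $\xi\odot\xi=\xi$ holds, and the agreement constraints $(1-\xi(i))X(i,\cdot)=(1-\xi(i))X_{\mathrm{in}}(i,\cdot)$ are vacuous.
\item Choose $X$ with $\frac1n XX^\top=\Id$ exactly, for example $\sqrt{n/d}$ times a matrix with orthonormal rows, which is possible since $n\ge d$. This satisfies the covariance constraint in $\cAres$ with zero slack.
\item Choose $y$ so that $\wpost=(1/\sigma^2+n)^{-1}Xy$ equals $\tilde w$. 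This requires $Xy=(1/\sigma^2+n)\tilde w$, and $y=(1/\sigma^2+n)(XX^\top)^{-1}\cdot$ gives $y = \frac{1/\sigma^2+n}{n}X^\top\tilde w$.
\item Finally, set $M$ to be whatever slack matrix the encoding of $\cAres$ requires (e.g.\ the square root of the PSD gap), which is available because the constraint holds exactly.
\end{itemize}
With this witness, $L[\wpost]=\tilde w$, so the closeness condition $\norm{L[\wpost]-\tilde w}_2\le\alpha$ holds. The boundedness condition $\norm{\tilde w}_2\le 2R+\tau n$ holds whenever $\tilde w$ lies in the smaller ball of radius $2R+\tau n$. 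For $\tilde w$ in the outer shell of radius up to $2R+n\tau+\alpha$, I would instead target $\wpost = \tilde w'$, the radial projection of $\tilde w$ onto the ball of radius $2R+\tau n$. This costs distance at most $\alpha$, which closeness allows.

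The main obstacle is the short-flat decomposition constraint on $y$. We need $y=z_1+z_2$ with $\norm{z_1}_2^2\lesssim \eta n\log(1/\eta)+\log(1/\beta)$ and $\norm{z_2}_\infty^2\lesssim\log(1/\eta)+\log(1/\beta)/(\eta n)$. The vector $y$ we built has norm
\[
\norm{y}_2=\frac{1/\sigma^2+n}{\sqrt n}\norm{\tilde w}_2\approx \sqrt n\,\norm{\tilde w}_2,
\]
which can be as large as $\sqrt n\,R\approx \sqrt{nd}\,\sigma$. In the critical regime $\sigma\asymp1/\sqrt n$ this is $O(\sqrt d)$. To handle this, I would spread $y$ as flatly as possible by choosing $X$ so that $X^\top\tilde w$ is proportional to a $\pm1$ (Hadamard-like) vector. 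That is, take the rows of $X$ to be orthogonal $\pm1$ vectors scaled appropriately, with $\tilde w$ rotated into a coordinate direction; rotating $X$ is free since only $XX^\top$ is constrained. Then $z_2=y$ has $\norm{y}_\infty \approx \norm{y}_2/\sqrt n = O(\norm{\tilde w}_2)$, which is $O(\sigma\sqrt d)=O(\sqrt{d/n})\le O(1)$, and we take $z_1=0$. This fits within $\sqrt{\log(1/\eta)}$ for small $\eta$, with room to adjust the constant $\chi$.

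If the flatness budget is still exceeded, for instance for the larger $\tilde w$ allowed by the $\tau n$ term, I would fall back on the choice of $\tau$. Since $\tau$ is tiny, the extra radius is negligible, and I would absorb it into the constants and into the domain of $\mathcal S$.
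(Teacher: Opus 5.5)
Your argument is correct, but it puts the dependence on $\tilde w$ into a different variable than the paper does.

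\textbf{The paper's witness.} Following \cite[Lemma 6.15]{anderson2025sample}, the paper sets $y=\mathbf 1$. This makes the short-flat constraint trivial ($z_1=0$, $\norm{z_2}_\infty=1$). It then puts $\tilde w$ into the design by making every column $X_i$ the same multiple of $\tilde w$. That multiple should be $1+\frac{1}{n\sigma^2}$, so that $(1/\sigma^2+n)^{-1}X\mathbf 1=\tilde w$. For the outer shell it uses essentially the same radial projection you do.

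\textbf{Your witness.} You keep $\frac1n XX^\top=\Id$ exactly and put $\tilde w$ into $y$. This moves the work to checking that $y$ is flat. Your alignment gives $y=(1+\frac{1}{n\sigma^2})\norm{\tilde w'}_2\mathbf 1$. Hence $\norm{y}_\infty^2=O(\sigma^2 d)=O(d/n)$ in the critical regime, comfortably inside the budget $\chi\log(1/\eta)$.

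\textbf{What each buys.} The paper's $\frac1n XX^\top$ has rank one. So it only meets an upper bound on the design second moment, as in the one-sided slack-matrix encoding via $M$ used for the mean-estimation score. Once $d\ge 2$ and $n\gg d$, it violates the two-sided constraint $\Normop{\frac1n XX^\top-\Id}\le\chi\sqrt{(d+\log(1/\beta))/n}$ as literally written in $\cAres$. Your witness satisfies either encoding exactly. Both witnesses rely on the same regime facts: $n\sigma^2=\Theta(1)$, $R=\Theta(\sigma\sqrt d)$, $n\gtrsim d$, and $n\tau\lesssim R$. The paper needs them to keep $(1+\frac{1}{n\sigma^2})^2\norm{\tilde w}_2^2\lesssim 1$; you need them to bound $\norm{y}_\infty$.

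\textbf{Two cleanups.}
\begin{enumerate}
\item Mutually orthogonal $\pm1$ rows need not exist for arbitrary $n$, and you do not need them. Take $X=UH$, where $U$ is orthogonal with first column $\tilde w'/\norm{\tilde w'}_2$. Let $H\in\R^{d\times n}$ have first row $\mathbf 1^\top$ and remaining rows any orthogonal family in $\mathbf 1^\perp$ of norm $\sqrt n$; this is possible since $n\ge d$. Then $\frac1n XX^\top=\Id$ and $X^\top\tilde w'=\norm{\tilde w'}_2\mathbf 1$ exactly.
\item Your final fallback paragraph is unnecessary. In this framework $\tau$ is chosen tiny, so $\norm{\tilde w'}_2\le 2R+n\tau\le 3R$ directly, and the $O(\sigma\sqrt d)$ bound applies as is.
\end{enumerate}
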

\begin{proof}
For feasibility, we note that we can take $y=\mathbf{1}$, and the rest of the proof is similar to \cite[Lemma 6.15]{anderson2025sample}.
In particular, when the norm of $\tilde{w}$ is bounded by $2R+\tau T$, we can set $X_i=\Paren{\frac{1}{n\sigma^2}+1}^{-1}\tilde{w}$. 
Otherwise we project $\tilde{w}$ to the ball of radius $2R+\tau T-\phi$ for some $\phi$, and sample from the distribution $N(\bar{w},\Id)$, where $\bar{w}$ is the projection. 
\end{proof}

\begin{lemma}\label{lem:quasi-convexity-regression-1}
    The score function defined in \cref{def:regression-score-1} is quasi-convex.
\end{lemma}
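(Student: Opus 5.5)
To prove quasi-convexity we need the following. Fix a dataset $(X_{\mathrm{in}},y_{\mathrm{in}})$ and two candidates $\widetilde w_1,\widetilde w_2$ with scores $T_1,T_2$, and let $T=\max(T_1,T_2)$. Then for every $\lambda\in[0,1]$ the point $\widetilde w_\lambda=\lambda\widetilde w_1+(1-\lambda)\widetilde w_2$ should be $(\alpha,\tau,T)$-certifiable. The natural witness is the convex combination $L_\lambda=\lambda L_1+(1-\lambda)L_2$ of the certifying linear functionals $L_1,L_2$ for $\widetilde w_1,\widetilde w_2$. This is the same argument as for the score function of \cite{hopkins2023robustness} (and \cite[Lemma 6.15]{anderson2025sample}), and I would follow it.

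The first step is monotonicity in $T$. If $\widetilde w$ is $(\alpha,\tau,T')$-certifiable, it is also $(\alpha,\tau,T)$-certifiable for every $T\ge T'$. Indeed, the constraint $\sum_i w_i\ge n-T$ (here $\|\xi\|_1\le T$ once $\eta n$ is replaced by the score parameter, as in \cref{def:tau-approximately-satisfying}) only weakens as $T$ grows. Likewise the slack terms $-\tau T$ and $-5\tau Tn$ only loosen, and so do the bound $\norm{L[\wpost]}_2\le 2R+\tau T$ and the Frobenius bound on $\cR(L)$. So we may assume both $L_1$ and $L_2$ certify at the common level $T$.

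The second step checks each requirement for $L_\lambda$, using that every requirement is either linear or convex in $L$.
\begin{itemize}
\item Normalization holds: $L_\lambda 1=1$.
\item Each approximate-feasibility condition has the form $L(p^2q)\ge -c\,\tau T$ with the right-hand side independent of $L$. It is therefore preserved under convex combination.
\item The closeness condition holds by the triangle inequality, since $L_\lambda[\wpost]-\widetilde w_\lambda=\lambda(L_1[\wpost]-\widetilde w_1)+(1-\lambda)(L_2[\wpost]-\widetilde w_2)$ has norm at most $\alpha$.
\item The norm bounds on $L_\lambda[\wpost]$ and on $\cR(L_\lambda)$ follow from convexity of norms.
\end{itemize}
Finally, $\widetilde w_\lambda$ stays in the ball $\ball{d}{2R+n\tau+\alpha}$ because that ball is convex.

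The only potential obstacle is making sure none of the constraints in \cref{def:cert} depends on $L$ nonlinearly, or has a data-independent threshold tied to $\widetilde w$ in a non-convex way. For example, the $\cA_{\mathrm{post}}$ equality and the $\cA_{\mathrm{res}}$ operator-norm and short-flat constraints enter only through $L(p^2 q)$ tests, so they are linear in $L$. I would verify this item by item. Combining the two steps gives $\mathcal S(\widetilde w_\lambda)\le T=\max(\mathcal S(\widetilde w_1),\mathcal S(\widetilde w_2))$. If the minimum defining the score is not attained, I would run the argument with $T_i+\gamma$ and let $\gamma\to0$, which closes the proof.
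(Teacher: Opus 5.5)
Your proposal is correct and is essentially the paper's argument. The paper's one-line proof observes that every constraint, including the new short-flat resilience constraints on $y$, enters only through tests that are linear in the pseudo-expectation, so convex combinations of certifying functionals certify convex combinations of candidates, exactly as for the mean-estimation score of \cite{hopkins2023robustness}; you have spelled this out in more detail.

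The one place that needs slightly more care than ``the slacks only loosen'' is condition~4 of \cref{def:tau-approximately-satisfying} in your monotonicity step. There the polynomial $\sum_i w_i-n+T$ itself changes with $T$, and $L(p^2)$ can be as small as $-\tau T'$. The resulting extra term $-(T-T')\tau T'$ is still absorbed by $-5\tau(T-T')n$ because $T'\le n$, so the step goes through.
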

\begin{proof}
    Most of the constraints are similar to the program for empirical mean estimation, except for the resilience constraints on $y$.
    However, by the property of pseudo-distribution, it is easy to see that these convex constraints admit quasi-convexity.
\end{proof}

\noindent Now we conclude the proof of the \cref{thm:private-bayesian-linear-regression} with utility analysis.

\begin{proof}[Proof of \cref{thm:private-bayesian-linear-regression}]
Apply the robustness-to-privacy reduction to the score $\mathcal{S}$ on the domain $\ball{d}{\,2R+n\tau+\alpha\,}$.
Let $V_\eta$ denote the volume of $\{\widetilde w:\ \mathcal{S}(\widetilde w)\le \eta n\}$ inside this domain.
By \cref{lem:vol}:
\begin{itemize}
    \item \emph{Local regime} $\eta\le \eta'\le \eta_\star$: Low-score sets are sandwiched between $\ell_2$-balls around $\wpost$ with radii $\alpha(\eta)$ and $\alpha(\eta')$, hence
    \[
    \log\!\frac{V_{\eta'}}{V_\eta}\ \lesssim\ d\cdot \log\!\frac{\alpha(\eta')}{\alpha(\eta)}.
    \]
    
    \item \emph{Global regime} $\eta_\star\le \eta'\le 1$: The score-$\le \eta_\star n$ set lies in an $O(\alpha(\eta_\star))$-ball while the ambient radius is $O(R+n\tau)$, so
    \[
    \log\!\frac{V_{\eta'}}{V_\eta}\ \lesssim\ d\cdot \log\!\frac{R+n\tau}{\alpha(\eta_\star)}.
    \]
\end{itemize}
Finally it is easy to verify that the score function is quasi-convex and can be efficiently evaluated. The reduction in \cref{thm:efficient-reduction-meta-pure} then yields a sufficient condition
\[
n\ \gtrsim\ 
\max_{\eta\le\eta'\le\eta_\star}\frac{d\log(\alpha(\eta')/\alpha(\eta))+\log(1/\beta_\eta)}{\varepsilon \eta'}
\;+\;
\max_{\eta_\star\le\eta'\le 1}\frac{d\log\!\big((R+n\tau)/\alpha(\eta_\star)\big)+\log(1/\beta_\eta)}{\varepsilon}.
\]
Optimizing over $\eta$ using $\alpha(\eta)^2\asymp \eta\cdot\frac{d+\log(1/\beta)}{n}$ in the small-$\eta$ regime yields the two ``local'' terms
$\tfrac{d+\log(1/\beta)}{\alpha^{4/3}\varepsilon^{2/3}}$ and $\tfrac{d+\log(1/\beta)}{\alpha\varepsilon}$.
For the global term, take $R=\Theta(\sigma\sqrt d)$ so that the contribution is $\tfrac{d\log(\sigma\sqrt d)}{\varepsilon}$, up to log factors.
\end{proof}

\begin{remark}[On $\sigma$ and transferring from $\wpost$]\label{rem:sigma}
The $\tfrac{d\log(\sigma\sqrt d)}{\varepsilon}$ term is the regression analogue of the global-volume term in mean estimation and arises purely from the ratio between the ambient ball and the localized score set.
In the regime $\sigma=\Omega(n^{-1/2})$, $R=\Theta(\sigma\sqrt d)$ matches the typical scale of $\|\wpost\|_2$.
Standard concentration shows $\wpost=(\frac{1}{\sigma^2}+n)^{-1}Xy$ is within $\widetilde{O}(\sqrt{(d+\log(1/\beta))/n})$ of the true Bayesian posterior mean, which is absorbed into~$\alpha$ for the stated range. 
\end{remark}

\subsection{Computational rate for private Bayesian linear regression with weak prior}
\label{sec:approxOLS}

Our main guarantee in the weak prior regime is as follows:

\begin{theorem}\torestate{\label{thm:private-approximation-least-square}
    Let $\Sigma = \sigma^2\Id_d$ for known parameter $\sigma = \omega(1/\sqrt{n})$. Let $\beta\in[0,1]$, $\epsilon \le O(1)$. There is a polynomial-time algorithm that, given $\eta$-corrupted samples $(X,y)$ generated as in \cref{def:bayesian-regression-under-gaussian-prior}, outputs an $\epsilon$-DP estimator $\hat{w}$ such that with probability at least $1-\beta$,
    \begin{equation*}
        \Norm{\hat{w}-\E[\wnull\mid \Xnull,\ynull]}^2\leq \alpha^2\,,
    \end{equation*}  
    where $(\Xnull, \ynull)$ are the uncorrupted samples, under the conditions that $\alpha\geq \eta \log(1/\eta)$ and
    \begin{equation*}
        n\geq \tilde{\Omega}\,\Paren{\frac{d+\log(1/\beta)}{\alpha^{4/3}\e^{2/3}}+\frac{d+\log(1/\beta)}{\alpha\epsilon}+\frac{d\log(\sigma\sqrt{d})}{\epsilon}+\frac{d\eta^2}{\alpha^4}}\,.
    \end{equation*}}
\end{theorem}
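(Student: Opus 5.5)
Recall $w_{\rm post}=(\sigma^{-2}\Id+\Xnull\Xnull^\top)^{-1}\Xnull\ynull$. The plan follows the critical-prior template of \cref{sec:comp_LR}, with one change: when $\sigma=\omega(1/\sqrt n)$ the labels are no longer $O(1)$-variance scalars, since $\ynull(i)\sim\mathcal N(0,1+\|\wnull\|^2)$ and $\|\wnull\|^2\approx\sigma^2 d$ may be huge. So the short-flat constraint in $\cAres$ cannot be imposed on $y$ directly. Instead I impose it on the \emph{residual} $y-X^\top w_0$ for a pilot estimate $w_0$. Two approximations are used. First, by \cref{thm:covariance-concentration}, $(\sigma^{-2}\Id+\Xnull\Xnull^\top)^{-1}$ is within $O(1/(n^2\sigma^2))$ of $(\Xnull\Xnull^\top)^{-1}$. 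Hence $w_{\rm post}$ is within $o(\alpha)$ of $w_{\rm OLS}=(\Xnull\Xnull^\top)^{-1}\Xnull\ynull$, using that $\|w_{\rm OLS}\|\lesssim\sigma\sqrt d$ with high probability. Second, $w_{\rm OLS}=w_0+(\Xnull\Xnull^\top)^{-1}\Xnull(\ynull-\Xnull^\top w_0)$ exactly, for any fixed $w_0$. So it suffices to robustly estimate $\frac1n\Xnull r^\circ$, where $r^\circ=\ynull-\Xnull^\top w_0$, with $(\frac1n XX^\top)^{-1}$ enforced to be $\approx\Id$ by the covariance constraint.

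\textbf{Three SoS programs in sequence.} Each stage is a privatized program, and the privacy budget is split as $\epsilon/3$ per stage by composition.
\begin{enumerate}
\item \textbf{Coarse.} A program with only $\cA_{\rm corr}$ and the covariance constraint, plus a constraint that the residuals $y-X^\top w$ have bounded second moment $\frac1n\|y-X^\top w\|^2\le 1+\|w_0-\wnull\|^2$-type slack. This is a standard SoS least-squares identifiability argument using \cref{lem:sos-subset-sum}. It yields $w_1$ with $\|w_1-\wnull\|\le O(1)$, privatized via \cref{thm:efficient-reduction-meta-pure} over a ball of radius $R=\Theta(\sigma\sqrt d)$. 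This stage accounts for the $d\log(\sigma\sqrt d)/\epsilon$ term.
\item \textbf{Refinement.} Repeat with center $w_1$, so the residual variance is now $O(1)$. This gives $w_0$ with $\|w_0-\wnull\|\lesssim\sqrt{\eta\log(1/\eta)}+\sqrt{d/n}$.
\item \textbf{Fine-grained.} Run the analogue of \cref{algo:posterior-mean-regression}. Replace the short-flat constraint on $y$ by one on $y-X^\top w_0$, and replace $\cA_\post$ by $w_\out=w_0+\frac1n X(y-X^\top w_0)$.
\end{enumerate}
For the last stage I will use that the residuals $r^\circ(i)=\xi_i+\langle\Xnull(\cdot,i),\wnull-w_0\rangle$ are Gaussian with variance $1+\|\wnull-w_0\|^2=O(1)$. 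There is a subtlety: $w_0$ depends on the data. I plan to handle it either by sample splitting or by a union bound over a net of $w_0$'s in an $O(1)$-ball, which costs $d\log$ in the failure-probability terms.

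\textbf{Identifiability for the fine-grained stage.} Feasibility of the uncorrupted witness follows as in \cref{lem:feasibility-proof-isotropic-regression}, using \cref{thm:gaussian-order-stat-sharp} and \cref{thm:maximum-subset-sum-gaussian-correct}. Identifiability repeats the Cauchy--Schwarz argument of \cref{lem:identifiability-proof-isotropic-regression} with $y$ replaced by the residual. One extra cross term appears: $\frac1n X(X^\top w_0)$ on the set $v$ of disagreements, i.e.\ $\frac1n\sum_i v_iX(\cdot,i)X(\cdot,i)^\top w_0$. This term is not small by itself, because $\|w_0\|$ can be large. I would handle it by writing $w_\out$ so that $w_0$ enters only through $(\Id-\frac1nXX^\top)w_0$, whose norm is $\lesssim\sqrt{d/n}\,\|w_0\|$. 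That is fine when $\sigma\sqrt{d}\sqrt{d/n}\ll\alpha$ but not in general. The cleaner fix is to define the output as $w_\out=w_0+(XX^\top)^{-1}X(y-X^\top w_0)$, which equals $(XX^\top)^{-1}Xy$ exactly, and to certify $(\frac1nXX^\top)^{-1}$ via an auxiliary matrix variable $P$ with constraints $P\frac1nXX^\top=\Id$ and $\|P-\Id\|_{\op}\lesssim\sqrt{d/n}$. Then
\[
w_\out-w_{\rm OLS}=P\tfrac1n X r-P^\circ\tfrac1n\Xnull r^\circ .
\]
Splitting into $(P-P^\circ)\frac1n\Xnull r^\circ$ plus $P\cdot\frac1n(Xr-\Xnull r^\circ)$, the first piece is small because $\frac1n\Xnull r^\circ=O(\sqrt{d/n})$. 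The difference $P-P^\circ$ is controlled by the $v$-supported part of $XX^\top$, which by the selector lemma has operator norm $\lesssim\eta\log(1/\eta)+\sqrt{d/n}$. The second piece is exactly the critical-prior bound. Together these give $\alpha(\eta)^2\lesssim\eta^2\log^2(1/\eta)+\eta\log(1/\eta)\sqrt{(d+\log(1/\beta))/n}$ plus the $d\eta^2/\alpha^4$-type condition.

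\textbf{Privatization and main obstacle.} The volume-ratio and quasi-convexity arguments are as in \cref{lem:vol} and \cref{lem:quasi-convexity-regression-1}, and they yield the stated sample complexity. The step I expect to be hardest is this third-stage SoS certification. There the product $P\cdot XX^\top$ and the data-dependent center $w_0$ interact, and the bound must be kept linear in $d$ without certifying fourth moments.
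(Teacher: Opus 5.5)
\textbf{The gap: targeting OLS instead of the posterior mean.} Your architecture parallels the paper's: a coarse $O(1)$ estimate, a re-centered refinement, and a fine-grained stage with short-flat constraints on residuals, each privatized through the score/volume-ratio reduction. The step that fails is the reduction from the posterior mean to OLS. Exactly, $w_{\rm OLS}-w_{\rm post}=\sigma^{-2}(\Xnull\Xnull^\top+\sigma^{-2}\Id)^{-1}w_{\rm OLS}$. Hence with high probability $\|w_{\rm OLS}-w_{\rm post}\|\asymp\|\wnull\|/(n\sigma^2)\asymp\sqrt d/(n\sigma)$, as a lower bound too, so nothing cancels. Under only $\sigma=\omega(1/\sqrt n)$ this is $o(\sqrt{d/n})$: negligible against the Bayes risk, but not against $\alpha$. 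For $\epsilon=\Theta(1)$ and $\eta\to 0$ the theorem permits $\alpha\approx(d/n)^{3/4}$, and then $\sqrt d/(n\sigma)\le(d/n)^{3/4}$ forces $\sigma\gtrsim n^{-1/2}(n/d)^{1/4}$.

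Your third stage outputs exactly $P\frac1nXy$ in SoS, since the $w_0$ terms cancel via $P\frac1nXX^\top=\Id$. So it tracks $w_{\rm OLS}$, and it cannot meet the guarantee when $n^{-1/2}\ll\sigma\ll n^{-1/2}(n/d)^{1/4}$.

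The paper keeps the prior. It uses the exact identity
\[
w_{\rm post}=\hat w_1+(A+\sigma^{-2}\Id)^{-1}\bigl(\Xnull(\ynull-\Xnull^\top\hat w_1)-\sigma^{-2}\hat w_1\bigr),\qquad A=\Xnull\Xnull^\top,
\]
and outputs $(1+\frac{1}{n\sigma^2})^{-1}\hat w_1+(n+\sigma^{-2})^{-1}\tE[X(y-X^\top\hat w_1)]$. Replacing $(A+\sigma^{-2}\Id)^{-1}$ by $(n+\sigma^{-2})^{-1}\Id$ costs $\frac1n\sqrt{d/n}$ times one of two vectors:
\begin{itemize}
\item the re-centered vector, with $\|\Xnull(\ynull-\Xnull^\top\hat w_1)\|\lesssim n(\sqrt{\eta\log(1/\eta)}+\sqrt{d/n})$;
\item $\sigma^{-2}\hat w_1$, which contributes only $\frac dn\cdot\frac{1}{\sigma\sqrt n}=o(d/n)$.
\end{itemize}
The second term is small because the shrinkage $\frac{1}{n\sigma^2}$ is kept exactly and only the matrix fluctuation is approximated. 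Your $P$-variable can be repaired the same way: certify $(\frac1nXX^\top+\frac{1}{n\sigma^2}\Id)^{-1}$ and keep the $-\sigma^{-2}w_0$ term.

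\textbf{Secondary points.} The same identity shows that the cross-term obstacle which motivated $P$ is not real. The target is exactly $w_0+A^{-1}\Xnull r^\circ$ (or its prior-shrunk analogue), and $A^{-1}-\frac1n\Id$ only ever multiplies the small re-centered vector $\Xnull r^\circ$, never $w_0$. Comparing $w_0+\frac1nXr$ against $\frac1nXy$ was the wrong comparison, and the paper's plain one-step estimate suffices.

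Your refinement stage is underspecified. Re-running the coarse second-moment program around $w_1$ only reproduces $O(1)$ error. You should state explicitly a one-step update $w_1+\frac1n\tE[X(y-X^\top w_1)]$ under a short-flat constraint on $y-X^\top w_1$, which does give $\sqrt{\eta\log(1/\eta)}+\sqrt{d/n}$.

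Sample splitting cannot handle the data-dependent center, because $w_{\rm post}$ depends on all samples. The net argument works up to log factors, but the paper's route is simpler. By \cref{lem:sparse-spectral-norm}, once $n\ge(d+\log(1/\beta))/\eta$, the vector $\Xnull^\top(\wnull-w_0)$ has a short-flat decomposition simultaneously for every $w_0$ with $\|w_0-\wnull\|=O(1)$. This is why the paper imposes a separate short-flat constraint on $X^\top(w-w_\init)$.
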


\noindent 
The argument is more involved than in the critical prior setting, and we defer some of the details to \cref{app:regression}.

In the regime of $\sigma = \omega(1/\sqrt{n})$, the prior is sufficiently weak that the Bayes-optimal error is asymptotically achieved by the OLS estimator. 
We design a \emph{multi-stage} algorithm.
In the first stage, we obtain a rough estimator with error rate $O(1)$. 
In the second and the third stage, we refine the estimator to the target level of accuracy.

\subsubsection{Overview of multi-stage algorithm}
\label{sec:weakprioroverview}

Here we overview our approach in the weak prior regime. For starters, without any \emph{a priori} upper bound on the variance of the true regressor $w$, the strategy in the critical prior regime cannot be directly applied as we have no control over the magnitudes of the labels $y$. The first step is thus to ``re-center'' the labels by learning a \emph{coarse estimate} of $w$ (\cref{sec:stage1}). This can be done again with the short-flat decomposition constraint, but this time imposed not upon the label vector $\hat{y}$, but on the \emph{residual vector} $\hat{y} - \hat{X}^\top\hat{w}$. An argument similar to the above can be used to show that this will yield an estimate $w_{\rm init}$ of the true $w$ which is accurate to within $O(1)$ error in $L_2$.

After the coarse estimation stage, intuitively we can now ``subtract out'' our coarse estimate $w_{\rm init}$ from $w$ to reduce the problem to something resembling the proportional regime. This runs into several obstacles. 

First, we do not want to directly subtract out $\tilde{X}w_{\rm init}$ from $\tilde{y}$ because of the corruptions in $\tilde{X}$, so instead we indirectly fold our coarse estimate into a new sum-of-squares program and search for short-flat decompositions of both $\hat{y} - \hat{X}^\top \hat{w}$, where $\hat{w}$ is the SoS variable for the regressor, as well as $X^\top (\hat{w} - w_{\rm init})$. 

Second, it turns out that even this is only enough to refine the $O(1)$ error achieved by $w_{\rm init}$ to error $\tilde{O}(\sqrt{d/n})$ in $L_2$ for estimating the posterior mean.
This is because the spectral norm of $\frac{1}{n}XX^\top-\Id$ is only bounded by $O(\sqrt{d/n})$ with high probability. 
Therefore, naively one can only show that $\frac{1}{n}Xy$ is within $L_2$ distance $\sqrt{d/n}$ to the least square estimator $(XX^\top)^{-1} Xy$. 
Consequently, given the vector $w_1$ achieving this latter error bound, the above program which looks for short-flat decompositions of both labels and residuals needs to be re-run, with $w_{\rm init}$ replaced by $w_1$.
Particularly, because of the refined error bound from $w_1$, one can now show that the vectors $v_1\seteq (n+\frac{1}{\sigma^2})^{-1} X(y-X^\top w_1)$ and $v_2\seteq (XX^\top+\frac{1}{\sigma^2})^{-1} X(y-X^\top w_1)$ are close in the sense that with high probability 
    \begin{equation*}
         \Norm{v_1-v_2} \\
         \leq O\Paren{\sqrt{\eta\log 1/\eta}+\sqrt{d/n}}\cdot \Paren{\sqrt{d/n}}\,.
    \end{equation*}
This will enable us to prove the desired error bound of $\tilde{O}(\eta + \sqrt{\eta\sqrt{d/n}})$. We defer the details of these latter two SoS programs to \cref{sec:stage2}. 

Finally, we note that because we are working in a Bayesian setting, the three SoS programs that are used in sequence to obtain the final posterior mean estimate are necessarily run on the \emph{same dataset}. This might appear to be problematic as the latter two SoS programs are adaptively constructed: some of the constraints therein depend on estimates $w_{\rm init}, w_1$ computed in preceding SoS programs, and this data reuse could potentially break feasibility. As we will see (\cref{lem:program-satisfiability-fine-regression}), this turns out not to be an issue, essentially because we can use a bound on the ``sparse singular values'' of the design matrix $X$ to argue that regardless of $w_{\rm init}, w_1$, as long as they are $O(1)$-close to $w$, the vectors $X^\top(w - w_{\rm init})$ and $X^\top(w - w_1)$ admit a short-flat decomposition.

\subsubsection{Stage 1: Rough estimation}
\label{sec:stage1}

In the first stage, we will use a sum-of-squares program that bears resemblance to the one used in the critical prior regime, with the following key differences.

First, the corruption constraints $\cA_{\rm corr}$ are the same as before (\cref{eq:corruption_constraints}). The resilience constraints are similar, except instead of insisting on a short-flat decomposition for the labels $y$, we insist on a short-flat decomposition for the residual $y - X^\top w$:
\begin{equation}\label{eq:resilience_constraints_rough}
\cAres'(X,y,w)=
\left\{
\begin{aligned}
&\Normop{\frac{1}{n}XX^\top-\Id}
   \le \chi \sqrt{\frac{d+\log(1/\beta)}{n}},\\
&y-X^\top w = z_1 + z_2, \norm{z_1+z_2}^2\le 2n\\
&\|z_1\|^2 \le \eta n \log(1/\eta)+\log(1/\beta),\\
&\|z_2\|_\infty^2 \le \chi \log(1/\eta)+\log(1/\beta)/\eta n
\end{aligned}
\right\}\,.
\end{equation}

\noindent Formally, we have the following guarantee showing that the true parameter $\wnull$ can be recovered to within \emph{constant error}:

\begin{lemma}[Rough estimation under weak prior]\torestate{\label{lem:robust-regression-constant-error}
    Let $\Sigma = \sigma^2\Id_d$ for known parameter $\sigma = \omega(1/\sqrt{n})$. There is a polynomial-time algorithm that, given $\eta$-corrupted samples $(X,y)$ generated as in \cref{def:bayesian-regression-under-gaussian-prior}, outputs $\hat{w}$ such that with probability at least $1 - \beta$, we have $\norm{\hat{w} - \wnull}_2 \le O(1)$ provided that $n \ge \Omega((d + \log (1/\beta))/\eta)$.}
\end{lemma}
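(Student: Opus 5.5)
We solve the degree-$O(1)$ SoS relaxation of $\cA_{\rm corr}\cup\cAres'(X,y,w)$ and output $\hat w=\tE[w]$. The argument has the usual two halves: feasibility of the ground truth, and an SoS identifiability bound showing $\|w-\wnull\|\le O(1)$ for every feasible assignment.

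\textbf{Feasibility.} Take $X=\Xnull$, $y=\ynull$, $w=\wnull$, and let $\xi$ be the true corruption indicator. The operator-norm constraint holds with probability $1-\beta$ by \cref{thm:covariance-concentration}. The residual $\ynull-\Xnull^\top\wnull$ is the noise vector $g\sim\mathcal N(0,\Id_n)$, so $\|g\|^2\le 2n$ once $n\gtrsim\log(1/\beta)$. For the short-flat decomposition we argue exactly as in \cref{lem:feasibility-proof-isotropic-regression}: let $z_1$ be the top $\eta n$ entries of $g$ in absolute value and $z_2$ the rest, and apply \cref{thm:gaussian-order-stat-sharp} and \cref{thm:maximum-subset-sum-gaussian-correct}. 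The point of centering at the residual is that this step is independent of the magnitude of $\wnull$.

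\textbf{Identifiability.} Let $r$ be the true corruption indicator and $v=1-(1-r)\odot(1-\xi)$, so $\|v\|_1\le 2\eta n$. Off the support of $v$ the program variables agree with the truth. Set $\Delta=w-\wnull$ and write the residual identity
\[
X^\top w-\Xnull^\top\wnull=(y-\Xnull^\top\wnull\cdots)
\]
in the following form: on coordinates outside $v$,
\[
(X^\top\Delta)_i=(g-z_1-z_2)_i .
\]
Test against $X^\top\Delta$. On one hand, the covariance constraint gives $\|X^\top\Delta\|^2\ge (1-o(1))n\|\Delta\|^2$ in SoS, since it is a degree-2 matrix inequality in $\Delta$. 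On the other hand, split $\|X^\top\Delta\|^2$ into the parts on $1-v$ and on $v$.

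For the part on $1-v$, SoS Cauchy--Schwarz gives the bound $\|X^\top\Delta\|\cdot\|g-z\|\le \|X^\top\Delta\|\cdot O(\sqrt n)$, using $\|g\|^2\le 2n$ and the constraint $\|z_1+z_2\|^2\le 2n$.

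For the part on $v$, write
\[
\sum_i v_i\langle X_i,\Delta\rangle^2\lesssim \bigl(\eta\log(1/\eta)+\sqrt{d/n}\bigr)\,n\|\Delta\|^2
\]
via the selector argument (\cref{lem:sos-subset-sum}, as in \cref{lem:identifiability-proof-isotropic-regression}). This term is therefore $c\,n\|\Delta\|^2$ with $c$ small, since $\eta$ is small and $n\ge d/\eta$.

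Combining the two bounds yields, in SoS,
\[
n\|\Delta\|^2\lesssim \sqrt n\,\|X^\top\Delta\|\lesssim n\|\Delta\|,
\]
and hence $\|\Delta\|^2\le O(1)$ by the standard SoS argument $\{a^2\le Ca\}\vdash a^2\le C^2$. Pseudo-expectation and Jensen ($\|\tE w-\wnull\|^2\le\tE\|\Delta\|^2$) then finish the proof.

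\textbf{Main obstacle.} The delicate step is bounding $\langle X^\top\Delta, v\odot X^\top\Delta\rangle$. There $X$ is a program variable on corrupted rows, so no property of $\Xnull$ can be used, and everything must come from the constraint that all $\frac1n XX^\top$ is near $\Id$, together with the complement rows agreeing with clean data. The route I would try is to write
\[
\sum_i v_i\langle X_i,\Delta\rangle^2=\|X^\top\Delta\|^2-\sum_i(1-v_i)\langle \Xnull_i,\Delta\rangle^2 ,
\]
then lower bound the clean sum by resilience of $\Xnull$: any $(1-2\eta)$-subset has covariance $\succeq(1-O(\eta\log(1/\eta)+\sqrt{d/n}))\Id$, applied via \cref{lem:sos-subset-sum} on the coefficients $\langle\Xnull_i,\Delta\rangle^2$ viewed as a quadratic form. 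That leaves a small multiple of $n\|\Delta\|^2$, which is then absorbed into the left-hand side.
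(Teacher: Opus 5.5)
\textbf{Verdict.} Once unwound, your route is the paper's own argument, and it inherits the one step the paper also leaves unjustified.

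\textbf{Why it is the paper's argument.} Apply the fix from your last paragraph:
\[
\sum_i v_i\langle X_i,\Delta\rangle^2=\|X^\top\Delta\|^2-\|\Xnull^\top\Delta\|^2+\sum_i v_i\langle \Xnull_i,\Delta\rangle^2 .
\]
Then the program-side term $\|X^\top\Delta\|^2$ cancels from your chain. Using $(1-v_i)\langle\Xnull_i,\Delta\rangle^2=(1-v_i)(g-z)_i^2$, which makes the Cauchy--Schwarz step unnecessary, what remains is
\[
(1-o(1))\,n\|\Delta\|^2\le\|\Xnull^\top\Delta\|^2\le 8n+\sum_i v_i\langle\Xnull_i,\Delta\rangle^2 .
\]
This is exactly the inequality the paper obtains by testing on the clean design. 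The program's covariance constraint plays no real role.

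\textbf{The gap.} You need
\[
\sum_i v_i\langle\Xnull_i,\Delta\rangle^2\le c\,n\|\Delta\|^2,\qquad c \text{ bounded away from } 1,
\]
and it must hold as an SoS consequence in the joint variables $(\xi,w)$. \cref{lem:sos-subset-sum} does not give this. That lemma is for fixed real coefficients, and its proof sorts the $a_i$. Here $a_i=\langle\Xnull_i,\Delta\rangle^2$ is a polynomial in the SoS variable $\Delta$. Having a different degree-1 certificate for each fixed value of $\Delta$ does not produce a single polynomial identity in $(\xi,w)$. The paper cites \cref{lem:sos-subset-sum} for exactly this step, so this is not a divergence from the paper.

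What is actually required is a low-degree certificate that $\|\Xnull\|_{\mathrm{sp}}(2\eta n)^2\le cn$, i.e.\ that every $(1-2\eta)$-fraction of the clean rows has covariance $\succeq(1-c-o(1))\Id$. \cref{lem:sparse-spectral-norm} says this is true, not that it is certifiable.
\begin{itemize}
\item The natural certificate, Cauchy--Schwarz against fourth moments, is only useful once $n\gtrsim \eta d^2$ (up to logarithms). That is the quadratic barrier this section is meant to avoid.
\item There is low-degree evidence that nothing efficient does better. Suppose that along a hidden $u$ the clean rows had variance about $2/\eta$ on an $\eta/2$-fraction and nearly $0$ on the rest, with second moments matched. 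Then the program is genuinely non-identifiable: $w=\wnull+Du$ with large $D$ is feasible once $\xi$ marks the high-variance rows and refits their labels. Yet by \cref{lem:general-ldlr} with $j_\star=4$, $\kappa\asymp\eta^2$, $\tau\asymp\eta^4$, this design is low-degree indistinguishable from Gaussian for $n\ll\eta^2 d^2$. That range contains $n\approx d/\eta$ once $d\gg\eta^{-3}$.
\item The fixed-direction device of \cref{lem:technicallem-unit-variance}, which applies clean resilience to the numbers $\tE[w_i']$ and is what makes the critical-prior analysis work, is unavailable here because $\Delta$ appears in both factors.
\end{itemize}

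\textbf{A possible repair.} Your program-side formulation is actually the right tool for this. Add a reference point as in Stage~2: impose a short-flat decomposition of $X^\top(w-w_{\mathrm{ref}})$ at a known scale $R_0\ge\|\wnull-w_{\mathrm{ref}}\|$. Then $\sum_i v_i\langle X_i,w-w_{\mathrm{ref}}\rangle^2$ is controlled by that constraint. The leftover term involves only the fixed vector $\wnull-w_{\mathrm{ref}}$, where the scalar selector lemma does apply. This gives
\[
\|\Delta\|^2\lesssim 1+\bigl(\eta\log(1/\eta)+\sqrt{(d+\log(1/\beta))/n}\bigr)R_0^2 ,
\]
so iterating from $w_{\mathrm{ref}}=0$ shrinks the radius geometrically.
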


\noindent The algorithm will be to solve the degree-$O(1)$ SoS relaxation of the program $\cA_{\rm corr} \cup \cAres'$ and naively round the resulting pseudo-distribution (see \cref{algo:constant-error-regression}). We defer the details of the SoS identifiability proof to \cref{sec:robust-regression-constant-error}.

\subsubsection{Stage 2: Refinement}
\label{sec:stage2}

In this next stage, we will run two SoS programs in sequence. The first will take as a parameter, call it $w_\init$, the estimate from the previous stage. This will result in a new estimate $\hat{w}_1$ which achieves error $O(\sqrt{\eta\log 1/\eta} + \sqrt{\frac{d + \log(1/\beta)}{n}})$ in estimating the true parameter $\wnull$. We will pass $\hat{w}_1$ into a final SoS program which will then refine it to produce an estimate $\hat{w}_2$ with the desired level of error for approximating the posterior mean.

As in Stage 1, we use the same corruption constraints $\cA_{\rm corr}$ from \cref{eq:corruption_constraints}. For the resilience constraints, for the first program, in addition to insisting on a short-flat decomposition for the residual $y - X^\top w$, we also insist on one for the difference between the labels $X^\top w_\init$ predicted by the initialization $w_\init$ versus under the labels $X^\top w$ predicted by the SoS variable $w$:
\begin{equation}
\cAres''(X,y; w_{\init})=
\left\{
\begin{aligned}
&\Normop{\frac{1}{n}XX^\top-\Id}
   \le \chi \sqrt{\frac{d+\log(1/\beta)}{n}},\\
&y=X^\top w+z,
z=z_1 + z_2, \norm{z_1+z_2}^2\le 2n\\
&\|z_1\|^2 \le \eta n \log(1/\eta)+\log(1/\beta), \\
&\|z_2\|_\infty^2 \le \chi \log(1/\eta)+\log(1/\beta)/\eta n\\
& X^\top (w-w_\init)=b_1+b_2, \\
& \|b_1\|^2 \le \eta n \log(1/\eta), \|b_2\|_\infty^2 \le \chi \log(1/\eta)
\end{aligned}
\right\}.
\end{equation}
Similarly, for the second program, we replace $w_\init$ with $\hat{w}_1$ to get $\cAres''(X,y; \hat{w}_1)$.

For the first program, we add a constraint for the definition of the least square estimator:
\begin{equation}\label{eq:ls-constraint}
\cA_\LS(X,y,w,w_{\LS})=\Set{w_{\LS}=\frac{1}{n}Xy, \norm{w-w_{\LS}}^2\le \frac{\chi(d+\log(1/\beta))}{n}}\,,
\end{equation}
and for the second program, we will use the constraint for the definition of the (simplified) posterior mean from \cref{eq:fine-grained-post-constraint}. 

Formally, we have the following guarantee showing that the posterior mean can be estimated to within the desired level of error:

\begin{lemma}\torestate{\label{lem:robust-regression-least-square}
    Let $\Sigma = \sigma^2\Id_d$ for known parameter $\sigma = \omega(1/\sqrt{n})$. There is a polynomial-time algorithm that, given $\eta$-corrupted samples $(X,y)$ generated as in \cref{def:bayesian-regression-under-gaussian-prior}, and given rough estimate $w_\init$ satisfying $\norm{w_\init - \wnull} \le O(1)$, outputs $\hat{w}$ such that with probability at least $1 - \beta$, we have
    \begin{equation*}
        \Norm{\hat{w}-\E[\wnull \mid \Xnull,\ynull]}^2\leq O\Paren{\eta\log(1/\eta)\sqrt{\frac{d+\log(1/\beta)}{n}}+\eta^2\log^2(1/\eta)}
    \end{equation*} 
    provided $n\geq (d+\log(1/\beta))/\eta$.
}
\end{lemma}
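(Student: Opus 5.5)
The plan is to run the two refinement SoS programs of Stage~2 in sequence and carry out, for each, the three ingredients already used in the critical-prior analysis: feasibility of the ground truth, an SoS identifiability bound, and a deterministic comparison to the target quantity.

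\textbf{Program~1.} Solve $\cA_{\rm corr}\cup\cAres''(X,y;w_\init)\cup\cA_\LS$ and round to $\hat w_1 = \tE[w]$ (equivalently $\tE[w_\LS]$).

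\textbf{Program~2.} Solve $\cA_{\rm corr}\cup\cAres''(X,y;\hat w_1)\cup\cA_\post$ and output $\hat w=\tE[w_\out]$.

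\textbf{Feasibility.} For both programs we use the clean assignment $X=\Xnull$, $y=\ynull$, $w=\wnull$, with $\xi$ the true corruption indicator. Then $z=\ynull-\Xnull^\top\wnull$ is i.i.d.\ $\cN(0,1)$, so its short-flat decomposition follows from \cref{thm:gaussian-order-stat-sharp} and \cref{thm:maximum-subset-sum-gaussian-correct}, exactly as in \cref{lem:feasibility-proof-isotropic-regression}.

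The delicate part is the decomposition of $\Xnull^\top(\wnull-w_\init)$, because $w_\init$ (and later $\hat w_1$) depends on the same data. I would handle this uniformly over all $u$ with $\|u\|\le O(1)$. Split $\Xnull^\top u$ into its top $\eta n$ entries, taken as $b_1$, and the rest, taken as $b_2$. Using the sparse spectral norm bound $\|\Xnull\|_{\rm sp}(\eta n)^2\lesssim d+\eta n\log(1/\eta)$, which is $\lesssim \eta n\log(1/\eta)$ since $n\ge d/\eta$, we get $\|b_1\|^2\lesssim \eta n\log(1/\eta)\|u\|^2$. Bounding $\|b_2\|_\infty$ is an $\eta n$-th order-statistic statement, and it holds uniformly over $u$ by the same sparse singular value bound applied to the $\eta n$ largest coordinates. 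The $\cA_\LS$ constraint $\|\wnull-\frac1n\Xnull\ynull\|^2\le\chi(d+\log(1/\beta))/n$ holds by \cref{thm:covariance-concentration} together with concentration of $\frac1n\Xnull\xi$.

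\textbf{Identifiability for program~1.} With $v=1-(1-r)\odot(1-\xi)$ as before, write
\[
\tfrac1n(Xy-\Xnull\ynull)=\tfrac1n\bigl(X(y\odot v)-\Xnull(\ynull\odot v)\bigr).
\]
Now decompose $y\odot v=(X^\top(w-w_\init))\odot v+(X^\top w_\init)\odot v+z\odot v$.

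The first and third pieces have $\|\cdot\odot v\|^2\lesssim\eta n\log(1/\eta)$ by the short-flat constraints, in the same Cauchy--Schwarz way as \cref{lem:identifiability-proof-isotropic-regression}. Combining with the sparse covariance bound $\|Xu\odot v\|^2\lesssim \eta n\log(1/\eta)+n\sqrt{(d+\log(1/\beta))/n}$ gives an error of $\sqrt{\eta\log(1/\eta)}$ times $O(\sqrt{\eta\log(1/\eta)}+((d+\log(1/\beta))/n)^{1/4})$.

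The middle piece involves a fixed vector $w_\init$, so $\frac1n\sum_i v_i X_iX_i^\top w_\init$ is controlled by the second-moment resilience, again via \cref{lem:sos-subset-sum}. Adding the $\cA_\LS$ slack $\sqrt{d/n}$ yields $\|\hat w_1-\wnull\|\lesssim\sqrt{\eta\log(1/\eta)}+\sqrt{d/n}$.

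\textbf{Program~2 and comparison to the posterior mean.} Repeat the same argument with $w_\init$ replaced by $\hat w_1$. This gives
\[
\bigl\|\hat w-(n+\sigma^{-2})^{-1}\Xnull\ynull\bigr\|^2\lesssim \eta^2\log^2(1/\eta)+\eta\log(1/\eta)\sqrt{d/n}.
\]
The bound on the middle term now uses centering: replacing $y$ by $y-X^\top\hat w_1$ makes its contribution scale with $\|\hat w_1-\wnull\|$ rather than $\|\wnull\|$.

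It remains to compare $(n+\sigma^{-2})^{-1}\Xnull\ynull$ with $w_{\rm post}$. I would write both as $\hat w_1$ plus an operator applied to $\Xnull(\ynull-\Xnull^\top\hat w_1)$, namely $v_1$ and $v_2$ from the overview. Then
\[
\|v_1-v_2\|\le\bigl\|(n+\sigma^{-2})^{-1}-(\Xnull\Xnull^\top+\sigma^{-2})^{-1}\bigr\|_{\rm op}\cdot\bigl\|\Xnull(\ynull-\Xnull^\top\hat w_1)\bigr\|.
\]
This is $\lesssim\frac{1}{n}\sqrt{d/n}\cdot n(\sqrt{d/n}+\|\hat w_1-\wnull\|)$, i.e.\ $O(\sqrt{d/n}\,(\sqrt{\eta\log(1/\eta)}+\sqrt{d/n}))$, which fits within the target bound. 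The residual shift from $\hat w_1$ appears identically in the SoS estimate, since program~2's output is affine in that shift.

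\textbf{Main obstacle.} I expect the uniform-over-$u$ feasibility of the $b_1+b_2$ decomposition to be the hardest step, since $w_\init$ and $\hat w_1$ are data-dependent. The first approach I would try is an $\varepsilon$-net over the $O(1)$-ball combined with the sparse spectral norm bound.
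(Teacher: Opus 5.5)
\textbf{Gap: your Program~2 output and your final comparison are not centered, and this breaks in the weak-prior regime.}

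Here $\|w^*\|\approx\sigma\sqrt d$ is unbounded. So no error term may scale like $\sqrt t\,\|w^*\|$ or $\sqrt t\,\|\hat w_1\|$, where $t=(d+\log(1/\beta))/n$.

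Your output is $\tE[w_{\rm out}]=(n+\sigma^{-2})^{-1}\bigl(\tE[X(y-X^\top\hat w_1)]+\tE[XX^\top]\hat w_1\bigr)$. Centering controls the first piece. The second piece, however, differs from its ground-truth counterpart by $(n+\sigma^{-2})^{-1}(\tE[XX^\top]-X^\circ X^{\circ\top})\hat w_1$. The constraints bound this only by $O(\sqrt t\,\|\hat w_1\|)$, and that size is attained. Replace $\eta n$ samples by label-consistent points $(x_i+c_i\hat u,\ y_i+c_i\|w^*\|)$, with $\hat u=w^*/\|w^*\|$ and $\sum_i c_i^2=\kappa n\sqrt t$ for a small constant $\kappa$. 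Your Program~2 stays feasible: take $\xi=0$ and $w=w^*$, and the extra entries of $X^\top(w^*-\hat w_1)$ fit in the short part. Yet $(n+\sigma^{-2})^{-1}Xy$ moves by $\Omega(\sqrt t\,\|w^*\|)$ along $\hat u$. So the shift does not ``appear identically in the SoS estimate.''

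Independently, your comparison target is itself far from the posterior mean. Indeed, $(n+\sigma^{-2})^{-1}X^\circ y^\circ-w_{\rm post}=\bigl[(n+\sigma^{-2})^{-1}-(X^\circ X^{\circ\top}+\sigma^{-2})^{-1}\bigr]X^\circ y^\circ$ has norm of order $\sqrt{d/n}\,\|w^*\|$; as $\sigma\to\infty$ it tends to $(\tfrac1n X^\circ X^{\circ\top}-\Id)w_{\rm OLS}$. Rewriting it as ``$\hat w_1$ plus an operator applied to the residual'' silently drops $(n+\sigma^{-2})^{-1}(X^\circ X^{\circ\top}-n\Id)\hat w_1$.

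\textbf{The fix, which is what the paper does.} Drop $\cA_{\post}$ and output $\hat w=\frac{n}{n+\sigma^{-2}}\hat w_1+(n+\sigma^{-2})^{-1}\tE[X(y-X^\top\hat w_1)]$. That is, replace $XX^\top\hat w_1$ by the deterministic $n\hat w_1$. The SoS error then involves only the centered residual $y-X^\top\hat w_1=X^\top(w-\hat w_1)+z$, which your Cauchy--Schwarz/short-flat argument controls.

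For the comparison, use the exact identity $w_{\rm post}=\hat w_1+(A+\lambda)^{-1}\bigl(X^\circ(y^\circ-X^{\circ\top}\hat w_1)-\lambda\hat w_1\bigr)$, with $A=X^\circ X^{\circ\top}$ and $\lambda=\sigma^{-2}$. Besides your $v_1-v_2$ term, there is a prior-shrinkage mismatch $\bigl[(A+\lambda)^{-1}-(n+\lambda)^{-1}\bigr]\lambda\hat w_1$ that your sketch omits. Its norm is $\lesssim\sqrt t\,\|\hat w_1\|/(n\sigma^2+1)\lesssim\sqrt t\,\bigl(\|w^*\|/(n\sigma^2)+\|\hat w_1-w^*\|\bigr)$. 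This is within the target only because $\|w^*\|\lesssim\sigma\sqrt{d+\log(1/\beta)}$ and $n\sigma^2\to\infty$.

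\textbf{The same scale problem affects your Program~1.} The ``middle piece'' costs $(\eta\log(1/\eta)+\sqrt t)\|w_{\init}\|$, not $O(\sqrt{\eta\log(1/\eta)}+\sqrt t)$. Likewise, feasibility of $\|w^*-\tfrac1n X^\circ y^\circ\|^2\le\chi t$ does not follow from covariance concentration, because $(\tfrac1n X^\circ X^{\circ\top}-\Id)w^*$ has norm of order $\sqrt{d/n}\,\|w^*\|$. You need to center at $w_{\init}$ there too, working with $X(y-X^\top w_{\init})$, as the paper's identifiability argument for that program does.

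\textbf{Your flagged obstacle needs no net.} The data-dependence of $w_{\init}$ and $\hat w_1$ is handled by one event: on it, $\|X_S^\top u\|\le\|X\|_{\mathrm{sp}}(\eta n)\|u\|$ holds for all $u$ and all $|S|\le\eta n$, and $\|b_2\|_\infty^2\le\|b_1\|^2/(\eta n)$.
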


\noindent The full algorithm combining Stages 1 and 2 is given in \cref{algo:private-least-square-regression} below:

\begin{algorithmbox}[Robust algorithm for approximating posterior mean estimator]\label{algo:private-least-square-regression}
    \mbox{}\\
    \textbf{Input:} $\eta$ corrupted matrix $X_{\mathrm{input}}\in \R^{n\times m},y_{\mathrm{input}}\in \R^n$  \\
    \textbf{Output:} a polynomial time estimator $\hat{w}\in \R^d$
\begin{enumerate}
\item Run \cref{algo:constant-error-regression}, for obtaining estimator $w_\init$.
  \item Solve the degree-$O(1)$ SoS relaxation of the constraints $\cA(X,y,\xi,w;\Xinput,\yinput,w_\init)=\cA_{\mathrm{corr}}(X,y,\xi; X_\mathrm{input}, y_\mathrm{input})\cup \cAres''(X,y,w;w_\init)\cup \cA_\LS(X,y,w_{\LS})$, where $t = O(1)$,
    obtaining a pseudo-expectation operator $\tE$. \label{line:program1}
  \item Round the pseudo-distribution for $\hat{w}_1=\tE[w]$. \label{line:program1_program}
  \item Solve the degree-$O(1)$ SoS relaxation of the constraints $\cA(X,y,\xi,w;\Xinput,\yinput,\hat{w}_1)=\cA_{\mathrm{corr}}(X,y,\xi; X_\mathrm{input}, y_\mathrm{input})\cup \cAres''(X,y,w;\hat{w}_1)$, where $t = O(1)$, 
    obtaining a pseudo-expectation operator $\tE$. 
    \label{line:program2}
    \item Round the pseudo-distribution for $\hat{w}_2=\tE[(n+\frac{1}{\sigma^2})^{-1}X(y-X\hat{w}_1)]$ and return $(n+\frac{1}{\sigma^2})^{-1}\hat{w}_1+\hat{w}_2$.   \label{line:program2_program}
\end{enumerate}
\end{algorithmbox}

\noindent First, we verify feasibility of the relevant programs. 

\begin{lemma}\label{lem:program-satisfiability-fine-regression}
    The program constraints in Steps~\ref{line:program1} and \ref{line:program2} of \cref{algo:private-least-square-regression} are satisfied by the ground truth with probability at least $1-3\beta$.
\end{lemma}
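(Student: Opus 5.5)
We plan to exhibit an explicit feasible point, built from the uncorrupted data, for each of the two programs. We take $X=\Xnull$, $y=\ynull$, $w=\wnull$, and let $\xi$ be the true corruption indicator $r$. Then $\cA_{\rm corr}$ holds by construction, exactly as in \cref{lem:feasibility-proof-isotropic-regression}. The remaining constraints are handled by three concentration events, each of probability $1-\beta$, followed by a union bound; this is where the $3\beta$ comes from.

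\textbf{Event 1 (design and noise).} By \cref{thm:covariance-concentration}, we have $\Normop{\frac1n\Xnull\Xnull^\top-\Id}\le\chi\sqrt{(d+\log(1/\beta))/n}$. The residual $z=\ynull-\Xnull^\top\wnull$ is an i.i.d.\ $\mathcal N(0,1)$ vector. We let $z_1$ be its $\eta n$ largest-magnitude entries and $z_2$ the rest. The bounds on $\|z_1\|_2^2$ and $\|z_2\|_\infty^2$ then follow from \cref{thm:maximum-subset-sum-gaussian-correct} and \cref{thm:gaussian-order-stat-sharp}, and $\|z\|^2\le 2n$ follows from $\chi^2$ concentration once $n\gtrsim\log(1/\beta)$.

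\textbf{Event 2 (least-squares constraint).} For $\cA_\LS$ we need $\|\wnull-\frac1n\Xnull\ynull\|^2\le\chi(d+\log(1/\beta))/n$. We write
\[
\tfrac1n\Xnull\ynull-\wnull=\Bigl(\tfrac1n\Xnull\Xnull^\top-\Id\Bigr)\wnull+\tfrac1n\Xnull z .
\]
The second term has norm $O(\sqrt{(d+\log(1/\beta))/n})$ by Gaussian concentration, since $\Xnull z$ conditioned on $\Xnull$ has covariance $\Xnull\Xnull^\top\approx n\Id$. The first term is a problem because $\|\wnull\|\approx\sigma\sqrt d$ may be large under a weak prior. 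As a fallback, we will interpret the constraint at the scale of the centered problem, effectively with $\wnull$ replaced by $\wnull-w_\init$ after re-centering. Alternatively, we would need to note that $\frac1n\Xnull\Xnull^\top\wnull$ deviates from $\wnull$ only in norm $\sqrt{d/n}\,\|\wnull\|$, and absorb this using the $O(1)$ scale implicit in Stage 1. I expect this bookkeeping to need care, but it is not the main difficulty.

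\textbf{Event 3 (the adaptive constraint on $b$: the main obstacle).} We need $\Xnull^\top(\wnull-w_\init)=b_1+b_2$ with short/flat bounds, and likewise with $\hat w_1$ in place of $w_\init$. Here $w_\init$ and $\hat w_1$ depend on the same data, so we cannot treat $\Xnull^\top u$ for fixed $u$. Instead we prove a bound uniform over all $u$ with $\|u\|\le O(1)$: for every such $u$, $\Xnull^\top u$ admits the decomposition. Taking $b_1$ to be the top $\eta n$ entries, we need two facts.
\begin{enumerate}
\item The sparse-spectral bound $\|\Xnull\|_{\rm sp}(\eta n)^2\lesssim \eta n\log(1/\eta)+d$, proved by a union bound over $\binom{n}{\eta n}$ subsets together with a net over the sphere. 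This gives $\|b_1\|^2\lesssim \eta n\log(1/\eta)$ using $n\ge d/\eta$.
\item The uniform claim that at most $\eta n$ coordinates of $\Xnull^\top u$ exceed $C\sqrt{\log(1/\eta)}$. This follows from the first fact: if $\eta n$ entries each exceeded $C\sqrt{\log 1/\eta}$, their squared sum would exceed $C^2\eta n\log(1/\eta)$, contradicting the sparse-spectral bound for $C$ large.
\end{enumerate}
Since $\|w_\init-\wnull\|=O(1)$ by \cref{lem:robust-regression-constant-error}, and $\|\hat w_1-\wnull\|=O(1)$ by the Stage-2 guarantee for the first program, the uniform statement applies to both. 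No fresh randomness is needed, so the ground truth is feasible for both programs on the single event of Fact 1 intersected with Events 1 and 2.
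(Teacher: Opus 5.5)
The parts of your proposal outside Event~2 are the paper's proof. You use the same feasible point and the same concentration inputs for the covariance constraint and for the short-flat split of $z$. For the adaptive constraint you use the same construction: $b_1$ is the top $\eta n$ entries of $a=\Xnull^\top u$, $\|b_1\|_2\le\|\Xnull\|_{\mathrm{sp}}(\eta n)\,\|u\|_2$ by \cref{lem:sparse-spectral-norm}, and the flat bound comes from $|a|_{(\eta n)}^2\le\|b_1\|_2^2/(\eta n)$. Uniformity over $u$ handles the data reuse, exactly as in \cref{sec:weakprioroverview}.

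You are more careful than the paper in two places. You check $\|z_1+z_2\|_2^2\le 2n$. You also treat Step~\ref{line:program2} explicitly: feasibility of Step~\ref{line:program1}, then \cref{lem:regression-sos-identifiability-certifying}, then $\|\hat w_1-\wnull\|_2=O(1)$, then the same uniform event. That chain is right. Two small fixes: the sparse-spectral bound needs $n\ge(d+\log(1/\beta))/\eta$, not $n\ge d/\eta$. And your Event~1 is itself several events, so the count behind ``$3\beta$'' is loose, as it is in the paper.

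The genuine gap is Event~2, and there you should commit rather than hedge. Your computation is correct and, since $\wnull$ is independent of $\Xnull$, essentially tight. Write
\[
\wnull-\tfrac1n\Xnull\ynull=\bigl(\Id-\tfrac1n\Xnull\Xnull^\top\bigr)\wnull-\tfrac1n\Xnull z .
\]
One has $\E\|(\frac1n\Xnull\Xnull^\top-\Id)w\|_2^2=(d+1)\|w\|_2^2/n$, so the first term is typically of order $\|\wnull\|_2\sqrt{d/n}\approx\sigma d/\sqrt n$. Hence the literal constraint $\|w-w_{\LS}\|_2^2\le\chi(d+\log(1/\beta))/n$ fails for the ground truth once $\sigma\gg 1/\sqrt d$, which is a sub-case of the weak-prior regime. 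As written, the statement is not provable for Step~\ref{line:program1}, and the paper's proof simply never checks $\cA_{\LS}$.

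Your second alternative cannot rescue this. Stage~1 controls $\|\wnull-w_{\init}\|_2$, but the literal constraint involves $\wnull$ itself.

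Your first fallback is the correct repair, but it changes the program, so state it as such. Replace $\cA_{\LS}$ by
\[
\bigl\|(w-w_{\init})-\tfrac1n X(y-X^\top w_{\init})\bigr\|_2^2\le\chi(d+\log(1/\beta))/n .
\]
For the ground truth the left side equals $\|(\Id-\frac1n\Xnull\Xnull^\top)(\wnull-w_{\init})-\frac1n\Xnull z\|_2^2$. Given $\|\wnull-w_{\init}\|_2=O(1)$, this is $O((d+\log(1/\beta))/n)$ on the operator-norm event together with the Gaussian bound on $\frac1n\Xnull z$. The operator-norm event is uniform in $w_{\init}$, so adaptivity is again harmless. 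The same holds with $\hat w_1$ in place of $w_{\init}$.

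This recentered form is also the one the proof of \cref{lem:regression-sos-identifiability-certifying} actually needs. Its split only controls $\frac1n X^{(j)}(y^{(j)}-X^{(j)\top}w_{\init})$, not $\frac1n X^{(j)}y^{(j)}$, so the downstream identifiability argument is unaffected by the change.
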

 \begin{proof}
    The corruption constraints $\cA_{\rm corr}$ are satisfied by definition. 
    For the resilience constraints, by \cref{thm:covariance-concentration}, with probability at least $1-\beta$, the bound on $\Normop{\frac{1}{n}\Xnull\Xnull^\top-\Id}$ holds.
    On the other hand, the vector $\ynull- (\Xnull)^\top w$ is i.i.d Gaussian with variance bounded by $1$. 
    We can let $z_1$ represent the $\eta n$ entries in $y$ with largest absolute values.
    By \cref{thm:gaussian-order-stat-sharp}, we have
    \begin{equation*}
        \norm{z_2}_\infty^2\leq \chi\log(1/\eta)+\frac{\chi\log(1/\beta)}{\eta n}\,.
    \end{equation*}
    By \cref{thm:maximum-subset-sum-gaussian-correct}, we have 
    \begin{equation*}
        \norm{z_1}^2 \leq \chi\eta n\log(1/\eta)+\chi\log(1/\beta)\,.
    \end{equation*}
    
    Fix $u\seteq w^* - w_{\rm init}$ and let $a\defeq \Xnull^\top (\wnull-w_\init)\in\R^n$.
    Let $S\subseteq[n]$ be the indices of the $k=\eta n$ largest values of $|a_i|$ (breaking ties arbitrarily),
    and define
    \[
    b_1 \defeq a\odot \mathbf{1}_S,
    \qquad
    b_2 \defeq a\odot \mathbf{1}_{S^c}.
    \]
    Then $a=b_1+b_2$ by construction.
    
    By \cref{lem:sparse-spectral-norm}, there exists a universal
    constant $C>0$ such that, with probability at least $1-\beta$ over $X$,
    we have \(\|X\|_{\mathrm{sp}}(\eta n)\leq L\) where $L = C(
    \sqrt d\;+\;\sqrt{\eta n\log\frac{e}{\eta }}\;+\;\sqrt{\log\frac1\beta}
    )$.
    Under the condition that $n\geq \frac{d+\log(1/\beta)}{\eta}$, we have $L\lesssim \sqrt{\eta n \log(1/\eta)}$. 
    
    For the bound of $b_1$, note
    \[
    \|b_1\|_2
    =
    \|a_S\|_2
    =
    \|X_S^\top u\|_2
    \le
    \|X_S\|_{\op}\,\|u\|_2
    \le
    \|X\|_{\mathrm{sp}}(k)\,\|u\|_2
    \le
    L\|u\|_2,
    \]
    which implies $\|b_1\|_2^2\le L^2\|u\|_2^2$. 
    
    For the bound of $b_2$, let $|a|_{(1)}\ge \cdots \ge |a|_{(n)}$ be the order statistics of
    $\{|a_i|\}_{i=1}^n$.  By definition of $S$, for every $i\notin S$ we have $|b_2(i)|=|a_i|\le |a|_{(k+1)}\le |a|_{(k)}$.
    Moreover,
    \[
    |a|_{(k)}^2
    \le
    \frac{1}{k}\sum_{j=1}^k |a|_{(j)}^2
    =
    \frac{1}{k}\|a_S\|_2^2
    =
    \frac{1}{\eta n}\|b_1\|_2^2.
    \]
    Therefore,
    \[
    \|b_2\|_\infty^2
    =
    \max_{i\notin S}|a_i|^2
    \le
    |a|_{(k)}^2
    \le
    \frac{1}{k}\|b_1\|_2^2
    \le
    \frac{L^2}{\eta n}\|u\|_2^2,
    \]
    where we used the bound of $b_1$ in the last step.
    Therefore, the program constraints are satisfied.
\end{proof}

\noindent Next we show that $\hat{w}_1$ produced by first program in Stage 2 achieves error $O(\sqrt{\eta\log 1/\eta} + \sqrt{\frac{d + \log1/\beta}{n}})$:

\begin{lemma}\label{lem:regression-sos-identifiability-certifying}
        Consider the program in Step~\ref{line:program1_program} from \cref{algo:private-least-square-regression}. There is a degree-$O(1)$ sum-of-squares proof that \begin{align*}\cA(X^{(1)},y^{(1)},\xi^{(1)},w^{(1)};\Xinput,\yinput)\,, \ \cA(X^{(2)},y^{(2)},\xi^{(2)},w^{(2)};\Xinput,\yinput)\\
        \sststile{}{} \norm{w^{(1)}-w^{(2)}}^2 \leq O\Paren{\eta\log(1/\eta)+\frac{d+\log(1/\beta)}{n}}\,.
    \end{align*}
    By standard rounding, this implies that
    \begin{equation}
        \norm{\hat{w}_1 - w^*}_2 \le  O\Bigl(\sqrt{\eta\log 1/\eta}+\sqrt{\frac{d+\log(1/\beta)}{n}}\Bigr)\,.
    \end{equation} 
\end{lemma}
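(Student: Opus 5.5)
We run the standard two-solution identifiability argument inside SoS, following the proof of \cref{lem:identifiability-proof-isotropic-regression}, but applied to the residual decomposition rather than to the labels. Take two feasible solutions $(X^{(1)},y^{(1)},\xi^{(1)},w^{(1)})$ and $(X^{(2)},y^{(2)},\xi^{(2)},w^{(2)})$. Let $v = 1-(1-\xi^{(1)})\odot(1-\xi^{(2)})$ mark the points flagged by either solution. Off the support of $v$, both solutions agree with the input, so $X^{(1)},X^{(2)}$ and $y^{(1)},y^{(2)}$ coincide there; moreover $\|v\|_1 \le 2\eta n$, which holds in SoS since $v$ is Boolean. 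We use the least-squares constraint $\cA_\LS$ of each program to write
\[
w^{(j)} = \tfrac1n X^{(j)}y^{(j)} + e^{(j)}, \qquad \|e^{(j)}\|^2 \le \chi\tfrac{d+\log(1/\beta)}{n}.
\]
So it suffices to bound $\langle \tfrac1n(X^{(1)}y^{(1)}-X^{(2)}y^{(2)}),u\rangle^2$ for a unit vector $u$. This is not directly enough, since $\tfrac1n Xy$ is not scale-free when $w$ is large. We therefore substitute $y^{(j)} = X^{(j)\top}w^{(j)} + z^{(j)}$ and obtain
\[
\tfrac1n X^{(j)}y^{(j)} = \tfrac1n X^{(j)}X^{(j)\top} w^{(j)} + \tfrac1n X^{(j)}z^{(j)}.
\]

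\textbf{Splitting the difference.} Next we decompose $\tfrac1n(X^{(1)}y^{(1)}-X^{(2)}y^{(2)})$ into three pieces.
\begin{itemize}
\item[(a)] The residual piece $\tfrac1n(X^{(1)}(z^{(1)}\odot v) - X^{(2)}(z^{(2)}\odot v))$. Off $v$ the residuals differ only through $X^\top(w^{(1)}-w^{(2)})$, and we move that part into term (b).
\item[(b)] The covariance piece $\bigl(\tfrac1n X^{(j)}X^{(j)\top}-\Id\bigr)(w^{(1)}-w^{(2)})$. Its operator norm is at most $\chi\sqrt{(d+\log 1/\beta)/n}$, which is small.
\item[(c)] The label-difference piece on $v$, written through $X^{(j)\top}(w^{(j)}-w_\init) = b_1^{(j)}+b_2^{(j)}$. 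This is where the $b$-decomposition is needed, because the labels on flagged points have magnitude controlled only via $w_\init$.
\end{itemize}

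\textbf{Bounding each piece.} Each piece on $v$ is handled by Cauchy--Schwarz, exactly as in the critical-prior proof. For the first factor we use
\[
\|z\odot v\|^2 \lesssim \|z_1\|^2 + \eta n\|z_2\|_\infty^2 \lesssim \eta n\log(1/\eta)+\log(1/\beta),
\]
and similarly $\|b\odot v\|^2 \lesssim \eta n\log(1/\eta)$. For the second factor we bound $\|X^{(j)\top}u\odot v\|^2$ by $O(\eta n\log(1/\eta) + \sqrt{(d+\log1/\beta)/n}\,n)$; this follows from the operator-norm constraint together with \cref{lem:sos-subset-sum} applied to the clean data. Combining these, and using $n\ge (d+\log(1/\beta))/\eta$, each $v$-term contributes $O(\eta\log(1/\eta))$ to $\|w^{(1)}-w^{(2)}\|^2$.

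\textbf{Main obstacle.} The hard step is (c). The term $\langle X^{(j)\top}(w_\init - w^{(j')}), \cdot\rangle$ involves $w^{(j)}$ itself, so bounding it would naively require a prior bound on $\|w^{(j)}-w_\init\|$. I would first try to make it self-bounding: the $b$-constraints give a bound with no dependence on $\|w\|$, and the short-flat bound on $b$ replaces the $O(1)$ closeness. If this is insufficient, I would instead use the constraint $\|z_1+z_2\|^2\le 2n$ together with $\frac1n XX^\top\approx\Id$ to get $\|w^{(1)}-w^{(2)}\|\lesssim 1$ first, and then bootstrap.

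\textbf{Closing the argument.} Collecting terms gives the SoS inequality
\[
\|w^{(1)}-w^{(2)}\|^2 \le O(\eta\log(1/\eta)) + O\Bigl(\tfrac{d+\log(1/\beta)}{n}\Bigr) + \tfrac12\|w^{(1)}-w^{(2)}\|^2,
\]
where the $\tfrac12$ term comes from (b); we absorb it into the left-hand side. Since the ground truth $w^*$ is feasible by \cref{lem:program-satisfiability-fine-regression}, we instantiate the second solution as $w^*$. Taking pseudo-expectations and applying Cauchy--Schwarz for pseudo-expectations gives $\|\tE w - w^*\|^2 \le \tE\|w-w^*\|^2$, which yields the stated bound on $\hat w_1$.
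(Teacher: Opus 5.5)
Your proposal has a genuine gap. It sits exactly at the step you call the main obstacle, and the absorption step you use to close the argument is not available.

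\textbf{Why (b) gives nothing to absorb.} Off $v$ the two solutions share columns and labels, so the identity
\[
\tfrac1n\bigl(X^{(1)}y^{(1)}-X^{(2)}y^{(2)}\bigr)=\tfrac1n X^{(1)}\bigl(y^{(1)}\odot v\bigr)-\tfrac1n X^{(2)}\bigl(y^{(2)}\odot v\bigr)
\]
holds exactly. If you substitute $y^{(j)}=X^{(j)\top}w^{(j)}+z^{(j)}$ and re-split, the off-$v$ part of $\tfrac1n XX^\top(w^{(1)}-w^{(2)})$ cancels identically against the off-$v$ part of $\tfrac1n X(z^{(1)}-z^{(2)})$. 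After combining with $w^{(j)}=\tfrac1n X^{(j)}y^{(j)}+e^{(j)}$, the quantity $w^{(1)}-w^{(2)}$ cancels from both sides. What remains carries no information about it. So there is no term (b) that produces an absorbable $\tfrac12\|w^{(1)}-w^{(2)}\|^2$.

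\textbf{Why the flagged terms are not controlled.} All of the content lies in the flagged terms, and your constraints do not control them. Write $y^{(j)}\odot v=(z^{(j)}+b^{(j)})\odot v+(X^{(j)\top}w_{\mathrm{init}})\odot v$. The $b$-decomposition controls $X^\top(w^{(j)}-w_{\mathrm{init}})$ with no a priori bound needed, but it leaves
\[
\tfrac1n\bigl(X^{(1)}X^{(1)\top}-X^{(2)}X^{(2)\top}\bigr)w_{\mathrm{init}} .
\]
The operator-norm constraints bound this only by $O\bigl(\sqrt{(d+\log(1/\beta))/n}\bigr)\|w_{\mathrm{init}}\|$. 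In the weak-prior regime $\|w_{\mathrm{init}}\|\approx\|w^*\|$ is typically of order $\sigma\sqrt d\gg 1$. Neither self-bounding nor an $O(1)$ bootstrap on $\|w^{(1)}-w^{(2)}\|$ removes this: the obstruction is the size of $w_{\mathrm{init}}$, not of $w^{(j)}-w_{\mathrm{init}}$.

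\textbf{The missing idea.} The paper recenters by the fixed vector $w_{\mathrm{init}}$ before splitting. Set $y_1^{(j)}:=y^{(j)}-X^{(j)\top}w_{\mathrm{init}}=z^{(j)}+b^{(j)}$.
\begin{enumerate}
\item Since $w_{\mathrm{init}}$ is a constant rather than a program variable, $y_1^{(1)}$ and $y_1^{(2)}$ still agree off $v$.
\item Adding the two short-flat decompositions gives $\|y_1^{(j)}\odot v\|^2\lesssim\|z_1^{(j)}\|^2+\|b_1^{(j)}\|^2+\eta n\bigl(\|z_2^{(j)}\|_\infty^2+\|b_2^{(j)}\|_\infty^2\bigr)\lesssim\eta n\log(1/\eta)+\log(1/\beta)$.
\item Then $\tfrac1n\bigl(X^{(1)}y_1^{(1)}-X^{(2)}y_1^{(2)}\bigr)=\tfrac1n X^{(1)}(y_1^{(1)}\odot v)-\tfrac1n X^{(2)}(y_1^{(2)}\odot v)$.
\item The crude bound $\|X^{(j)}a\|^2\le\|X^{(j)}\|_{\mathrm{op}}^2\|a\|^2\le O(n)\|a\|^2$ already gives squared norm $O\bigl(\eta\log(1/\eta)+\log(1/\beta)/n\bigr)$. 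Your sharper Cauchy--Schwarz bound via sparse directions also works but is unnecessary.
\end{enumerate}

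The least-squares constraint must be read in the recentered form $w_{\mathrm{LS}}=w_{\mathrm{init}}+\tfrac1n X(y-X^\top w_{\mathrm{init}})$, which is what the paper's computation actually compares. The literal form $w_{\mathrm{LS}}=\tfrac1n Xy$ is violated by the ground truth itself once $\|w^*\|\gg1$. This is because $\tfrac1n X^\circ y^\circ-w^*$ contains $(\tfrac1n X^\circ X^{\circ\top}-I)w^*$, of norm about $\sqrt{d/n}\,\|w^*\|$. With the recentered constraint, the triangle inequality gives the claim directly, with no absorption.

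Your final rounding step against the feasible ground truth is fine.
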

\begin{proof}
    For $j\in \{1,2\}$, define $y^{(j)}_1=y^{(j)}-(X^{(j)})^\top w_{\init}$ and $w^{(j)}_1=w^{(j)}-w_{\init}$.
    Then we have $\norm{w^{(j)}_1}^2\leq O(1)$.
    We aim to give sum-of-squares proof that 
    \begin{equation*}
        \norm{w_{\LS}^{(1)}-w_{\LS}^{(2)}}^2\leq O\Bigl(\eta \log(1/\eta)+\frac{d+\log(1/\beta)}{n}\Bigr)\,.
    \end{equation*}
 
    Let $v=1-(1-\xi^{(1)})\odot (1-\xi^{(2)})$, noting that
    \begin{align*}
         X^{(1)}(1-v) &=X^{(2)}(1-v)\\
        (1-v)\odot y_1^{(1)} &=(1-v)\odot y_1^{(2)}\,.
    \end{align*}
    We perform the split
    \begin{equation*}
        X^{(1)}y_1^{(1)}-X^{(2)}y_1^{(2)}=X^{(1)}(y_1^{(1)}\odot v)-X^{(2)}(y_1^{(2)}\odot v)
    \end{equation*} 
    and bound these two terms separately, using the spectral norm bound on $X$. 
    In particular, we have
    \begin{equation*}
    \norm{X^{(1)}(y_1^{(1)}\odot v)}^2=\normop{X^{(1)}}^2 \cdot \norm{y_1^{(1)}\odot v}^2\leq \chi n\cdot \norm{y_1^{(1)}\odot v}^2\,.
    \end{equation*}
    By the short-flat decomposition for $y^{(1)}$,
    \begin{equation*}
        \norm{y_1^{(1)}\odot v}^2\leq \chi^2\eta n\log(1/\eta)+\log(1/\beta)\,.
    \end{equation*}
    As a result, we have \begin{equation*}
      \norm{X^{(j)}(y_1^{(j)}\odot v)}^2 \leq O(\eta n^2 \log(1/\eta)+\log(1/\beta) n)
    \end{equation*}
    for $j = 1,2$, and thus $\norm{w_{\LS}^{(1)}-w_{\LS}^{(2)}}^2\leq \eta\log(1/\eta)+\log(1/\beta)/{n}$.
    On the other hand, we have the program constraints that $\norm{w^{(j)}-w^{(j)}_{\LS}}^2\leq O(\frac{d+\log(1/\beta)}{n})$, from which the bound on $\norm{w^{(1)}-w^{(2)}}$ follows by triangle inequality.
\end{proof}

\noindent Finally, we show that the last program in \cref{algo:private-least-square-regression} results in a sufficiently accurate approximation to the posterior mean.
 
\begin{proof}[Proof of \cref{lem:robust-regression-least-square}]
    Let $y'=y-X^\top \hat{w}_1$, and $w'=w-\hat{w}_1$.
    Let $\ynull^\prime=\ynull-\Xnull^\top \hat{w}_1$ and $\wnull^\prime=\wnull-\hat{w}_1$.
    By the guarantee on $\hat{w}_1$ from \cref{lem:regression-sos-identifiability-certifying}, we have $\norm{w'}^2\leq O\Paren{\eta \log(1/\eta)+\frac{d+\log(1/\beta)}{n}}$. 

    Now we give sum-of-squares proof that for any unit vector $u$ 
    \begin{equation*}
        \iprod{Xy'-\Xnull \ynull^\prime,u}^2\leq O\Paren{\eta^2 \log(1/\eta)+\eta\sqrt{\frac{d+\log(1/\beta)}{n}}}\,.
    \end{equation*}
    Let $r$ be the indicator for the set of corrupted nodes. Let $v=1-(1-\xi)\odot (1-r)$, noting that
    \begin{align*}
         X((1-v)\odot y) &=\Xnull((1-v)\odot \ynullp)\\
        (1-v)\odot y &=(1-v)\odot \ynullp\,.
    \end{align*}
    As before, we perform the split
    \begin{equation*}
        Xy'-\Xnull \ynullp=X(y'\odot v)-\Xnull(\ynullp\odot v) \label{eq:split_triangle}
    \end{equation*} 
    and bound these two terms separately, using the spectral norm bound on $X$. In particular, for any unit vector $u$,
    \begin{equation*}
        \iprod{X(y'\odot v),u}^2\leq \norm{Xu\odot v}^2\norm{y'\odot v}^2\,.
    \end{equation*}
    Now similar to our proofs for empirical mean estimation, we have a sum-of-squares proof that 
    \begin{align*}
        \frac{1}{n}\norm{Xu \odot v}^2 &\leq O\Bigl(\eta \log(1/\eta)+\sqrt{\frac{d+\log(1/\beta)}{n}}\Bigr) \\
        \norm{y'\odot v}^2 & \leq (\eta n\log(1/\eta)+\log(1/\beta))\,. 
    \end{align*}
    Combining the above, we have a sum-of-squares proof that
    \begin{equation}
        \frac{1}{n}\iprod{X(y'\odot v),u}^2\leq O\Bigl(\eta \log(1/\eta)+\sqrt{\frac{d+\log(1/\beta)}{n}}\Bigr)\cdot (\eta n\log(1/\eta)+\log(1/\beta))\,, \label{eq:residbound}
    \end{equation}
    as well as the same bound for $\frac{1}{n}\iprod{\Xnull(\ynullp\odot v),u}^2$.
    Therefore, by \cref{eq:split_triangle}, for any pseudo-distribution $\tE$ satisfying the constraints and for any unit vector $u$, we have that $\tilde{\E}\Bigl[\frac{1}{n}\iprod{Xy'-\Xnull \ynullp,u}^2\Bigr]$ is bounded by the right-hand side of \cref{eq:residbound}.
    Therefore by rounding the pseudo-distribution, we conclude that
    \begin{equation}\label{eq:fine-estimation-error-regression}
        \Bigl\|\frac{1}{n}\Paren{\tE[Xy']-\Xnull \ynullp}\Bigr\|^2\leq O\Paren{\eta \log(1/\eta)+\sqrt{\frac{d+\log(1/\beta)}{n}}}\cdot \paren{\eta \log(1/\eta)+\log(1/\beta)/n}\,.
    \end{equation}

    Now we show that $(n+\frac{1}{\sigma^2})^{-1}\Xnull\ynull'+\Paren{1+\frac{1}{n\sigma^2}}^{-1}\hat{w}_1$ is close to the true posterior mean $(\Xnull\Xnull^\top+\frac{1}{\sigma^2}\Id)^{-1}\Xnull\ynull$. 
    By definition, we have $\ynull'=\ynull-\Xnull^\top\hat{w}_1$. 
    Now we have
    \begin{align*}
        \MoveEqLeft (n+\frac{1}{\sigma^2})^{-1}\Xnull \ynullp \\
        &= \Paren{\Xnull\Xnull^\top+\frac{1}{\sigma^2}\Id}^{-1}\Xnull (\ynull-\Xnull^\top \hat{w}_1)+\Paren{(n+\frac{1}{\sigma^2})^{-1}-\Paren{\Xnull\Xnull^\top+\frac{1}{\sigma^2}\Id}^{-1}}\Xnull(\ynull-\Xnull^\top \hat{w}_1)\\
        &= \Paren{\Xnull\Xnull^\top+\frac{1}{\sigma^2}\Id}^{-1}\Xnull \ynull-\hat{w}_1+\Paren{\Xnull\Xnull^\top+\frac{1}{\sigma^2}\Id}^{-1}\frac{\hat{w}_1}{\sigma^2}\\
        &\qquad +\Paren{(n+\frac{1}{\sigma^2})^{-1}-\Paren{\Xnull\Xnull^\top+\frac{1}{\sigma^2}\Id}^{-1}}\Xnull(\ynull-\Xnull^\top \hat{w}_1)\,.
    \end{align*}

    First, 
    \begin{equation*}
        \Norm{\Paren{\Xnull\Xnull^\top+\frac{1}{\sigma^2}\Id}^{-1}\frac{\hat{w}_1}{\sigma^2} - \frac{\hat{w}_1}{n\sigma^2+1}} \leq \Normop{\Paren{\Xnull\Xnull^\top+\frac{1}{\sigma^2}\Id}^{-1}-\frac{\sigma^2}{n\sigma^2+1}}\frac{\norm{\hat{w}_1}}{\sigma^2}
    \end{equation*}
    Moreover, we have 
    \begin{equation*}
         \Normop{\Paren{\Xnull\Xnull^\top+\frac{1}{\sigma^2}\Id}^{-1}-\frac{\sigma^2}{n\sigma^2+1}\Id}\leq \frac{1}{n} \Normop{\frac{1}{n}\Xnull\Xnull^\top-\Id}\leq \frac{1}{n} \sqrt{\frac{d+\log(1/\beta)}{n}}\,.
    \end{equation*}
    Combining, we obtain
    \begin{equation*}
    \Norm{\Paren{\Xnull\Xnull^\top+\frac{1}{\sigma^2}\Id}^{-1}\frac{\hat{w}_1}{\sigma^2} - \frac{\hat{w}_1}{n\sigma^2+1}}
    \leq \sqrt{\frac{d+\log(1/\beta)}{n}}\cdot \frac{\norm{\hat{w}_1}}{n\sigma^2}\lesssim \sqrt{\frac{d+\log(1/\beta)}{n}}\cdot \frac{1}{n\sigma}\lesssim \frac{d+\log(1/\beta)}{n}\,.
    \end{equation*}

    Finally we want to bound the norm of $\Paren{(n+\frac{1}{\sigma^2})^{-1}-\Paren{\Xnull\Xnull^\top+\frac{1}{\sigma^2}\Id}^{-1}}\Xnull(\ynull-\Xnull^\top \hat{w}_1)$.
    In particular, with probability at least $1-\beta$, for some universal constant $C$, we have
    \begin{equation*}
        \Normop{(n+\frac{1}{\sigma^2})^{-1}\Id-\Paren{\Xnull\Xnull^\top+\frac{1}{\sigma^2}\Id}^{-1}}\leq \frac{1}{n}\Normop{\frac{1}{n}\Xnull\Xnull^\top-\Id}\leq \frac{1}{n}\sqrt{\frac{d+\log(1/\beta)}{n}}\,.
    \end{equation*}

    On the other hand, we have
    \begin{equation*}
        \norm{\frac{1}{n}\Xnull(\ynull-\Xnull^\top \hat{w}_1)}=\norm{\frac{1}{n}\Xnull\Xnull^\top(\wnull-\hat{w}_1)}+\Norm{\frac{1}{n}\Xnull z_0}\leq O\Paren{\sqrt{\eta\log 1/\eta}+\sqrt{\frac{d+\log(1/\beta)}{n}}}\,.
    \end{equation*}
    Therefore, we have 
    \begin{multline*}
         \Norm{\Paren{(n+\frac{1}{\sigma^2})^{-1}\Id-\Paren{\Xnull\Xnull^\top+\frac{1}{\sigma^2}\Id}^{-1}}\Xnull(\ynull-\Xnull^\top \hat{w}_1)} \\
         \leq O\Paren{\sqrt{\eta\log 1/\eta}+\sqrt{\frac{d+\log(1/\beta)}{n}}}\cdot \Paren{\sqrt{\frac{d+\log(1/\beta)}{n}}}\,.
    \end{multline*}
    Putting everything together, we get
    \begin{align*}
        \MoveEqLeft\Norm{(n+\frac{1}{\sigma^2})^{-1}\Xnull \ynullp +\Paren{1+\frac{1}{n\sigma^2}}^{-1}\hat{w}_1-(\Xnull\Xnull^\top+\frac{1}{\sigma^2}\Id)^{-1}\Xnull\ynull} \\ 
        &\leq  \Norm{\Paren{(n+\frac{1}{\sigma^2})^{-1}\Id-\Paren{\Xnull\Xnull^\top+\frac{1}{\sigma^2}\Id}^{-1}}\Xnull(\ynull-\Xnull^\top \hat{w}_1)}+  \Norm{\Paren{\Xnull\Xnull^\top+\frac{1}{\sigma^2}\Id}^{-1}\frac{\hat{w}_1}{\sigma^2} - \frac{\hat{w}_1}{n\sigma^2+1}}\\
        &\leq  O\Paren{\sqrt{\eta\log 1/\eta}+\sqrt{\frac{d+\log(1/\beta)}{n}}}\cdot \Paren{\sqrt{\frac{d+\log(1/\beta)}{n}}}\,.
    \end{align*}
    Combining with \cref{eq:fine-estimation-error-regression}, we conclude the proof.
\end{proof}

\subsubsection{Putting everything together}

To obtain a private algorithm, it remains to plug the above robustness guarantees into the robustness-to-privacy reduction. We implement this in \cref{sec:privatize} to prove Theorem~\ref{thm:private-approximation-least-square}. Finally, by combining this with the guarantee in the critical prior case, we immediately deduce \cref{thm:bayesian-error-regression}.

\begin{proof}[Proof of \cref{thm:bayesian-error-regression}]
    When $\sigma=o(1/\sqrt{n})$, the claimed error rate is achieved by sampling from the prior distribution.
    When $\sigma=\Omega(1/\sqrt{n})$, we can combine \cref{thm:private-bayesian-linear-regression}, \cref{thm:private-approximation-least-square} and \cref{fact:error-decomposition}.
    In particular, by \cref{thm:private-bayesian-linear-regression}, \cref{thm:private-approximation-least-square}, when $n\geq d\log(\sigma\sqrt{d})/\e$ with probability $1-\beta$, we can achieve error rate $\alpha=O(\eta'+\sqrt{\eta'\sqrt{\frac{d+\log(1/\beta)}{n}}})$ where $\eta'=\eta+\frac{d+\log(1/\beta)}{n\e}$. 
\end{proof}

\input{content/regression-lb}

%% file: content/regression-lb.tex
\subsection{Hardness of robust Bayesian regression}\label{sec:robust-lr-lb}

We now give a computational lower bound for robust posterior mean estimation in the regime
\(
\eta\log(1/\eta)\lesssim \alpha \le \eta d^{1/4-o(1)}.
\)
As in the rest of this subsection, it is convenient to work first with the simplified posterior statistic
\[
w_{\rm simp}(X,y)\defeq \Paren{n+\frac{1}{\sigma^2}}^{-1}Xy.
\]
In the critical-prior regime \(\sigma^2=\Theta(1/n)\), \cref{lem:Lipshitz-least-square} implies that
\(w_{\rm simp}(X^\circ,y^\circ)\) differs from the exact posterior mean by
\(O((d+\log(1/\beta))/n)\).

\begin{definition}[Distinguishing mixtures of linear regressions]
\label{def:variance-matched-xcorruption-test}
Fix \(\eta\in(0,1/2)\). 
Let $K$ be some sufficiently large constant.
Fix a signal parameter \(\delta>0\), and define
\[
s \defeq \sqrt{1+K^2\alpha^2},
\qquad
a \defeq -\frac{K\alpha}{\eta s^2}.
\]
We consider the following one-sample hypothesis testing problem on
\((x_i,y_i)\in\R^d\times\R\).

\begin{itemize}
    \item \textbf{Null distribution \(Q\):}
    \(
    x_i = g_i,
    y_i = sz_i,
    \)
    where \(g_i\sim \mathcal N(0,\Id_d)\), \(z_i\sim \mathcal N(0,1)\), and \(g_i,z_i\) are independent.

    \item \textbf{Planted \(P\):}
    first draw
    \(
    u \sim \mathrm{Unif}(\mathbb S^{d-1}),
    B \sim \mathrm{Ber}(q).
    \)
    Conditional on \((u,B)\), draw
    \(
    g_i\sim \mathcal N(0,\Id_d),
    z_i\sim \mathcal N(0,1),
    \)
    independently, and set
    \(
    y_i \defeq K\alpha \langle g_i,u\rangle + z_i.
    \)
    If \(B=0\), let
    \(
    x_i \defeq g_i.
    \)
    If \(B=1\), let
    \(
    x_i \defeq g_i + ay_iu\,.
    \)
\end{itemize}
\end{definition}

\noindent We record some basic properties of the distinguishing problem:

\begin{lemma}
\label{lem:basic-properties-xcorruption}
Assume
\(
\alpha^2 \lesssim \eta.
\)
Then the following hold.

\begin{enumerate}
    \item Under \(P\), conditional on \(u\), the observed sample is a \(q\)-corruption of a clean
    Gaussian-design regression sample with regressor \(w^\star=K\alpha u\).

    \item Under both \(Q\) and \(P\), the marginal law of \(y_i\) is exactly \(\mathcal N(0,s^2)\).

    \item Under \(Q\),
    \(
    \E_Q[x_iy_i]=0,
    \)
    while under \(P\), conditional on \(u\),
    \(
    \E_P[x_iy_i\mid u]=0.
    \)

    \item The null law \(Q\) is a \(\eta\)-corruption of a clean Gaussian-design regression sample with
    regressor \(w^\star=0\).
\end{enumerate}
\end{lemma}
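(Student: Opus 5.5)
The plan is to verify the four items directly from \cref{def:variance-matched-xcorruption-test}, by conditioning on $u$ and on the corruption indicators. First, a point of interpretation. For item 3 to hold, the bit $B$ must be drawn independently for each sample, $B_i \sim \mathrm{Ber}(q)$, with $q = \eta$. The constant $a = -K\alpha/(\eta s^2)$ is tuned exactly for this: the cross-moment contributed by the corrupted fraction cancels the one from the clean fraction. I will adopt this reading throughout.

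Item 2 is immediate, because $y_i$ never depends on $B_i$. Under $P$, $y_i = K\alpha\langle g_i,u\rangle + z_i$ is a sum of independent centered Gaussians with variances $K^2\alpha^2$ and $1$. Hence $y_i\sim\mathcal N(0,1+K^2\alpha^2)=\mathcal N(0,s^2)$ for every fixed $u$, and so also after averaging over $u$. Under $Q$, $y_i = s z_i\sim\mathcal N(0,s^2)$ by definition.

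For item 1, fix $u$ and let $w^\star = K\alpha u$. The pairs $(g_i,y_i)$ are exactly clean samples from the regression model of \cref{def:bayesian-regression-under-gaussian-prior}: Gaussian design $g_i\sim\mathcal N(0,\Id_d)$, unit-variance noise $z_i$, and regressor $w^\star$. The observed sample keeps $(g_i,y_i)$ when $B_i=0$ and replaces $x_i$ by $g_i + a y_i u$ when $B_i=1$. So it is obtained by altering exactly the samples in $\{i : B_i = 1\}$. This set has expected size $qn$, and by a Chernoff bound its size is at most $2qn$ with high probability once $qn\gtrsim 1$. For an exact $q$-corruption one can instead condition on $\sum_i B_i = qn$, which does not affect the moment computations below.

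Item 3 is a second-moment computation. Under $Q$, $x_i$ and $y_i$ are independent and centered, so $\E_Q[x_iy_i]=0$. Under $P$ conditional on $u$, I split on $B_i$.
\begin{equation*}
\E[g_iy_i\mid u] = K\alpha\,\E[g_ig_i^\top]u = K\alpha u,
\qquad
\E[(g_i+ay_iu)y_i\mid u] = K\alpha u + a s^2 u .
\end{equation*}
Averaging with weights $1-q$ and $q$, and using $q=\eta$, gives
\begin{equation*}
\E_P[x_iy_i\mid u] = K\alpha u + \eta a s^2 u = K\alpha u - K\alpha u = 0 .
\end{equation*}

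Item 4 is the only item requiring care; the obstacle is a noise-variance mismatch. The law $Q$ is, verbatim, an uncorrupted regression sample with regressor $w^\star = 0$, so it is trivially a $0$-corruption and hence an $\eta$-corruption. The catch is that its noise variance is $s^2 = 1+K^2\alpha^2$ rather than $1$. I would handle this by rescaling: divide all labels by $s$. This is a public, data-independent map, it keeps the regressor $0$, and it changes any target statistic by a factor $1+O(K^2\alpha^2)$. The standing assumption $\alpha^2\lesssim\eta$ gives $s = 1+O(\eta)$. The resulting multiplicative perturbation is therefore negligible relative to the $\alpha$-scale errors relevant in the reduction.

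I expect this last step to be the main point to check in the downstream use. One must confirm that the estimator's guarantee tolerates noise variance $1+O(\alpha^2)$, or equivalently that the rescaling commutes with the reduction. If it does not, the fallback is to realize $Q$ as a genuine corruption of a unit-noise clean sample, by having the adversary resample an $O(\alpha^2)\le O(\eta)$ fraction of labels. This would still fit within the $\eta$-corruption budget.
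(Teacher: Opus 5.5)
Items 1--3 are correct and follow the paper's argument, including your reading that $B_i\sim\mathrm{Ber}(q)$ is drawn i.i.d.\ per sample with $q=\eta$, which is what makes $K\alpha+qas^2=0$. The genuine gap is item 4.

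Your primary argument does not prove the statement. The clean model of \cref{def:bayesian-regression-under-gaussian-prior} has unit noise. As you note yourself, $Q$ has labels $sz_i$ with $s^2=1+K^2\alpha^2$, so it is \emph{not} a $0$-corruption of a clean sample.

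Rescaling the labels by $1/s$ does not repair this, for three reasons:
\begin{itemize}
\item It changes the data; it does not exhibit $Q$ itself as a corruption.
\item In the reduction of \cref{lem:reduction-xcorruption}, any such map must be applied identically under both hypotheses. Under $P$ it would give the clean part noise variance $1/s^2$ and regressor $K\alpha u/s$, so the planted side would stop being a corruption of the unit-noise model.
\item The robust estimator is only guaranteed on corruptions of the exact model. You therefore cannot appeal to the perturbation being negligible.
\end{itemize}
Your ``fallback'' is the right idea, but it is only asserted. Carrying it out is exactly where the hypothesis $\alpha^2\lesssim\eta$ is used.

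The paper does this as follows. Define $r_0\defeq(\varphi_s-(1-\eta)\varphi_1)/\eta$, so that $\varphi_s=(1-\eta)\varphi_1+\eta r_0$. Then check that $r_0$ is a genuine density:
\begin{itemize}
\item It integrates to $1$.
\item It is nonnegative. Indeed $\varphi_s(y)/\varphi_1(y)=s^{-1}\exp\bigl(\tfrac{s^2-1}{2s^2}y^2\bigr)\ge 1/s$, and $1-1/s\le (s^2-1)/2=K^2\alpha^2/2\le\eta$.
\end{itemize}
Start from a clean $w^\star=0$ sample $(g_i,z_i)$ with unit noise. Independently, with probability $\eta$, replace each label by a fresh draw from $r_0$. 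The result has exactly law $Q$.

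An alternative route is also available, because the strong contamination model allows data-dependent corruptions. You could use a maximal coupling of $\mathcal N(0,1)$ and $\mathcal N(0,s^2)$, whose total variation distance is $O(K^2\alpha^2)$. This would need to be stated and verified explicitly. Without one of these concrete constructions, item 4 remains unproved.
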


\noindent We show that a robust estimator for the exact posterior mean can be used to solve the distinguishing problem.

\begin{lemma}[Reduction from distinguishing to robust posterior mean estimation]
\label{lem:reduction-xcorruption}
Assume \(\sigma^2=\Theta(1/n)\), and define
\[
w_{\rm simp}(X,y)\defeq \Paren{n+\frac1{\sigma^2}}^{-1}Xy.
\]
Consider the testing problem in \cref{def:variance-matched-xcorruption-test}.
Assume that
\begin{equation}
\label{eq:reduction-side-conditions}
n \ge C_1\frac{d+\log(1/\beta)}{\alpha}
\;+\;
C_2\frac{\eta(d+\log(1/\beta))}{\alpha^2}
\;+\;
C_3\frac{\log(1/\beta)}{\eta}
\end{equation}
for sufficiently large universal constants \(C_1,C_2,C_3\).

Suppose there exists a polynomial-time estimator \(\mathcal A\) with the following guarantee:
for every \(\eta\)-corruption \((X,y)\) of a clean sample \((X^\circ,y^\circ)\) from
\cref{def:bayesian-regression-under-gaussian-prior} with \(\|\wnull\|_2\le \alpha\), with
probability at least \(9/10\),
\[
\Norm{\mathcal A(X,y)-\E[\wnull\mid X^\circ,y^\circ]}_2 \le \frac{\alpha}{20}.
\]
Then there is a polynomial-time test that distinguishes \(Q\) from \(P\) in
\cref{def:variance-matched-xcorruption-test} with probability at least
\(1-O(\beta)-\exp(-\Omega(\eta n))\).
\end{lemma}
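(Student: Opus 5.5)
The test compares the robust estimator with the \emph{non-robust} simplified posterior statistic computed on the observed data. Given samples $(X,y)$, compute $\hat w \defeq \mathcal A(X,y)$ and $w_{\rm obs}\defeq w_{\rm simp}(X,y)$. Output ``planted'' if $\norm{\hat w - w_{\rm obs}}_2 \ge K\alpha/4$, and ``null'' otherwise.

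The plan rests on two claims.
\begin{itemize}
\item Under $Q$, both $\hat w$ and $w_{\rm obs}$ are $o(K\alpha)$-close to the clean posterior mean of a regressor-$0$ sample.
\item Under $P$, the corruption is tuned to cancel the signal in $w_{\rm obs}$, but not in the clean posterior mean, which $\mathcal A$ tracks.
\end{itemize}
I read the Bernoulli selection in \cref{def:variance-matched-xcorruption-test} as per-sample with rate $q=\Theta(\eta)$, say $q=\eta/2$. Then a Chernoff bound shows that at most $\eta n$ samples are corrupted, except with probability $\exp(-\Omega(\eta n))$; this is the source of that term in the failure probability.

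\textbf{Step 1 (common estimates).} In both cases I write the observed sample as an $\eta$-corruption of a clean sample $(X^\circ,y^\circ)$, as in \cref{lem:basic-properties-xcorruption} items 1 and 4, with corrupted index set $S$ and $|S|\le\eta n$. \cref{lem:Lipshitz-least-square} gives $\norm{\E[\wnull\mid X^\circ,y^\circ]-w_{\rm simp}(X^\circ,y^\circ)}\lesssim (d+\log(1/\beta))/n$, and the side condition $n\ge C_1(d+\log 1/\beta)/\alpha$ makes this at most $\alpha/100$. The norm constraint $\norm{\wnull}\le\alpha$ in the hypothesis on $\mathcal A$ is met after rescaling $K$ into the construction; equivalently, I apply the guarantee at scale $K\alpha$ with error $K\alpha/20$. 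Either way $\norm{\hat w - w_{\rm simp}(X^\circ,y^\circ)}\le K\alpha/10$ with probability $9/10$.

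It then remains to control
\[
w_{\rm obs}-w_{\rm simp}(X^\circ,y^\circ)=\Paren{n+\tfrac1{\sigma^2}}^{-1}\sum_{i\in S}\bigl(x_iy_i-x^\circ_iy^\circ_i\bigr).
\]
Since $n+1/\sigma^2=\Theta(n)$, this is a rescaled sum over at most $\eta n$ indices.

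\textbf{Step 2 (null).} Under $Q$ I would pick the coupling in which the clean and observed points share the covariate and differ only in label scale on $S$. The difference sum is then a sum over $\eta n$ terms of the form $g_i\cdot(\text{mean-zero scalar of size } O(1))$. Its norm is $O\bigl(\sqrt{\eta(d+\log 1/\beta)/n}\bigr)$, using the sparse-singular-value bound \cref{lem:sparse-spectral-norm}, or \cref{lem:sos-subset-sum}-type resilience with a union bound over $S$. The side condition $n\ge C_2\eta(d+\log 1/\beta)/\alpha^2$ makes this $\le\alpha/100$, so the test says ``null''.

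\textbf{Step 3 (planted).} Under $P$ the clean regressor is $\wnull=K\alpha u$, so $w_{\rm simp}(X^\circ,y^\circ)=\Theta(1)\cdot K\alpha u+O(\sqrt{d/n})$-type noise. On the corrupted points, $x_iy_i=g_iy_i+ay_i^2u$, and $\sum_{i\in S} a y_i^2 u \approx a s^2|S|\,u = -K\alpha\,(|S|/(\eta n))\,u$. With the normalization in \cref{def:variance-matched-xcorruption-test} this cancels the signal, which is exactly item 3 of \cref{lem:basic-properties-xcorruption}: $\E[x_iy_i\mid u]=0$.

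So $w_{\rm obs}$ is mean-zero noise of norm $O(\sqrt{(d+\log 1/\beta)/n})+o(\alpha)$, while $\hat w\approx c K\alpha u$ with $c=\Theta(1)$. The gap between them exceeds $K\alpha/4$ once $K$ is large.

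\textbf{Main obstacle.} The hardest part is the concentration in Step 3. Two issues arise.
\begin{itemize}
\item The fluctuation $(K\alpha/\eta n)\sum_{i\in S}(y_i^2-s^2)$ is of order $K\alpha/\sqrt{\eta n}$, and must be pushed below $\alpha$ with probability $1-\beta$. This uses Bernstein for chi-squares together with $n\ge C_3\log(1/\beta)/\eta$.
\item The fluctuation of $|S|$ around its mean must be handled. I would first try to choose the cancellation at the realized $|S|$, which is only accurate to a $1\pm o(1)$ factor, and absorb the residual $O(K\alpha\cdot\text{relative deviation})$ into the constant by taking $K$ large relative to the $1/20$ accuracy slack.
\end{itemize}
I also need a lower bound on $\norm{w_{\rm simp}(X^\circ,y^\circ)}$ in the planted case; it comes from its $u$-component concentrating at $\Theta(K\alpha)$. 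Combining the $9/10$ success of $\mathcal A$, the $O(\beta)$ concentration failures, and the Chernoff term gives the claimed distinguishing probability.
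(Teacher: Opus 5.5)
Your test and overall plan match the paper's proof. You compare $\mathcal A(X,y)$ with the non-robust $w_{\rm simp}(X,y)$, and you control $w_{\rm obs}-w_{\rm simp}(X^\circ,y^\circ)$, which is supported on the corrupted indices. The gap is in how you finish Step 3.

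\textbf{Main gap (Step 3).} You conclude that $w_{\rm obs}$ is noise of norm $O(\sqrt{(d+\log(1/\beta))/n})$ while $\hat w\approx cK\alpha u$, so the two are $K\alpha/4$ apart; for this you also ask for a lower bound on $\|w_{\rm simp}(X^\circ,y^\circ)\|$. But the side conditions only give $(d+\log(1/\beta))/n\lesssim\alpha$ and $\eta(d+\log(1/\beta))/n\lesssim\alpha^2$. So $\sqrt{(d+\log(1/\beta))/n}$ can be much larger than $K\alpha$. For example, $\alpha=\eta^2$ and $n\asymp d/\eta^3$ lie inside the range of \cref{thm:robust-regression-gap-moderately-dense}, and there $\sqrt{d/n}\asymp\eta^{3/2}\gg K\eta^2$. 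Let $\lambda=(n+1/\sigma^2)^{-1}$. Both $w_{\rm obs}$ and $w_{\rm simp}(X^\circ,y^\circ)$ (hence $\hat w$) contain the same term $\lambda\sum_i g_iz_i$, of norm $\asymp\sqrt{d/n}$. Bounding the two vectors' sizes separately therefore cannot separate the hypotheses.

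The paper instead lower-bounds exactly the difference you wrote in Step 1. Under $P$ this difference is $w_{\rm obs}-w_{\rm simp}(X^\circ,y^\circ)=\lambda a\sum_iB_i(y_i^\circ)^2\,u$, a vector along $u$ with no $\sqrt{d/n}$ part. Let $\nu=n\lambda$. One Bernstein bound on $\sum_iB_i(y_i^\circ)^2$, using $n\ge C_3\log(1/\beta)/\eta$, shows its norm is at least $\tfrac34\nu K\alpha\,(q/\eta)$. Then $\|\hat w-w_{\rm obs}\|\ge\tfrac34\nu K\alpha(q/\eta)-\|\hat w-w_{\rm simp}(X^\circ,y^\circ)\|$.

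This also removes your ``main obstacle'': the fluctuation of $|S|$ sits inside the same Bernstein bound, and no cancellation of the signal in $w_{\rm obs}$ is needed. None is available under your reading $q=\eta/2$ anyway. With $a=-K\alpha/(\eta s^2)$ one gets $\E[x_iy_i\mid u]=\tfrac12K\alpha u$, and the test cannot re-choose $a$ at the realized $|S|$.

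\textbf{Rescaling the accuracy guarantee.} Scaling the guarantee to error $K\alpha/20$ at scale $K\alpha$ removes the benefit of large $K$, because the planted separation $\approx\nu(q/\eta)K\alpha$ and the estimator error both scale with $K$. You then need $\nu q/\eta$ to exceed an absolute constant: roughly $0.3$ for your threshold, and $0.1$ for any threshold. But $\sigma^2=\Theta(1/n)$ allows $\nu=n\sigma^2/(1+n\sigma^2)$ to be any constant in $(0,1)$. The paper keeps the estimator error at $\alpha/10$, uses threshold $2\alpha$, and takes $K$ large relative to $1/\nu$. It applies the $\alpha/20$ guarantee to the regressor $K\alpha u$ without comment. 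The consistent repair of the norm-constraint issue you noticed is to assume that accuracy for $\|\wnull\|_2\le K\alpha$, not to scale the error.

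\textbf{Step 2 tool.} A worst-case bound over $|S|\le\eta n$ (sparse singular values, or a union bound over subsets) adds an $\eta\sqrt{\log(1/\eta)}$ term. That term is not $\lesssim\alpha$ when $\alpha\le\eta^2$. The paper instead uses that the label perturbations $\Delta_i=C_i(R_i-Z_i)$ are independent of the covariates. Conditionally on them, $\sum_iX_i^\circ\Delta_i\sim\mathcal N(0,(\sum_i\Delta_i^2)\Id_d)$, which gives $O(\sqrt{\eta(d+\log(1/\beta))/n})$ directly.
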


\noindent In particular, we give the following test:
we compute
\(
w_{\rm obs}\defeq w_{\rm simp}(X,y),
\widehat w\defeq \mathcal A(X,y)
\),
and output \emph{planted} if
\(
\Norm{\widehat w-w_{\rm obs}}_2 \ge 2\alpha,
\)
and \emph{null} otherwise.

Finally, we give a lemma showing that the hypothesis testing problem is low-degree hard in the regime that we are interested in. 
\begin{lemma}
\label{lem:ldlr-xcorruption}
Assume \(\alpha\leq \min\Set{\eta d^{1/4}/D,\eta^2}\) and \(D\le d^{10^{-4}}\).
Consider the testing problem in \cref{def:variance-matched-xcorruption-test}, and let
\(\Adv_{\le D}\) denote the degree-\(D\) advantage.
Then there exists a universal constant \(c>0\) such that whenever
\(
n \le c\,\frac{d \eta^2}{D^2\alpha^4},
\)
we have
\(
\Adv_{\le D}=O(1).
\)
\end{lemma}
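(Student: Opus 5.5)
The plan is to view the testing problem in \cref{def:variance-matched-xcorruption-test} as a non-Gaussian component analysis instance in the $d$ covariate coordinates, \emph{conditional on the labels}. The argument below treats the corruption as occurring independently per sample with probability $q=\eta$, which is the reading consistent with \cref{lem:basic-properties-xcorruption}. Under both $Q$ and $P$, the labels $y_i$ are i.i.d.\ $\mathcal N(0,s^2)$ (\cref{lem:basic-properties-xcorruption}, item 2). So I will condition on the label vector $y$ and compare the conditional laws of $X$. Under $Q$, the columns $x_i$ are i.i.d.\ $\mathcal N(0,\Id_d)$ independent of $y$. Under $P$, conditional on $(u,y_i)$, the part of $x_i$ orthogonal to $u$ is standard Gaussian. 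The coordinate $t_i=\langle x_i,u\rangle$ is a two-component mixture:
\[
t_i \mid y_i \;\sim\; (1-\eta)\,\mathcal N\!\Big(\tfrac{K\alpha}{s^2}y_i,\tfrac1{s^2}\Big)+\eta\,\mathcal N\!\Big(\tfrac{K\alpha}{s^2}y_i+a y_i,\tfrac1{s^2}\Big).
\]
By the choice $a=-K\alpha/(\eta s^2)$, the mixture mean is $0$. The component means are $-\eta\delta_i$ and $(1-\eta)\delta_i$ with $\delta_i\defeq a y_i$, which is exactly the structure of \cref{def:eta-gaussian-mixtures}. The only differences are a label-dependent shift of size $|\delta_i|\asymp K\alpha|y_i|/\eta$ and a slightly shrunk variance $1/s^2=1-\Theta(\alpha^2)$.

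The first step is a conditional version of \cref{lem:general-ldlr}. I would redo the standard NGCA low-degree computation, expanding the likelihood ratio in Hermite polynomials in $X$ with $y$ fixed. Each sample then has its own hidden univariate law $A_{y_i}$, and the resulting bound on $\|L_{\le D}\|^2$ becomes an expectation over $y$ of products of $\E_{A_{y_i}}h_j$. Since $y$ has the same law under both hypotheses, a polynomial in $(X,y)$ of degree $\le D$ is handled by including low-degree polynomials in $y$ as coefficients. Thus it suffices to bound $\E_y\big[\prod_i(\cdot)\big]$ with $y$ Gaussian. In the resulting expression, the Hermite-coefficient bound $|\E_{A_y}h_j|^2\le j^j\kappa\tau^{-j/2}$ only needs to hold \emph{in expectation over $y$} after taking products. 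I would state and use it in this averaged form, since the Gaussian moments $\E|y|^{2j}\le (2j)^{j}s^{2j}$ are absorbed into the $j^j$ factor.

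The second step is to compute the coefficients using \cref{fact:hermite-props}, with $\sigma^2=1/s^2$ and $\rho^2=1-1/s^2=K^2\alpha^2/s^2$. Each component contributes $\rho^j h_j(m/\rho)$. I would expand this as a polynomial in the mean $m$: the pure-mean terms give $m^j/\sqrt{j!}$, and the variance correction produces terms of order $\rho^{j-2k}m^{j-2k}$ and the like. Since the mixture mean is zero, the $j=1$ coefficient vanishes, so $j_\star=2$. The mean part reproduces the calculation in \cref{lem:distinguish_hard} with $\kappa=\eta^2$ and $\tau=\Theta(\eta^4/\alpha^4)$, up to the $y$-moments. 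The variance part contributes $|\E h_2|\lesssim \alpha^2$ and $|\E h_j|\lesssim \rho^j j^{j/2}$ for even $j$. These terms must also satisfy $|\E h_j|^2\le j^j\kappa\tau^{-j/2}=j^j\eta^2(\alpha/\eta)^{2j}$. For $j=2$ this requires $\alpha^4\le\alpha^4\eta^{-2}\cdot\eta^2$, which holds. For larger $j$, the extra factor $\eta^{2j}$ relative to $\alpha^{2j}$ is where the assumption $\alpha\le\eta^2$ enters, to dominate cross terms between the variance and mean perturbations. Plugging into the conditional lemma with $d^{j_\star/2}=d$ gives $n\le c\,d\eta^2/(\operatorname{poly}(D)\alpha^4)$. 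Tracking the $D$-dependence carefully (degree-$D$ truncation gives $j\le D$, and $D\le d^{10^{-4}}$ controls the combinatorics) should yield the stated $D^2$.

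The main obstacle is the first step. The ratio $\delta_i=ay_i$ is unbounded in $y_i$, and \cref{lem:general-ldlr} assumes a fixed hidden distribution. So I need to verify that the averaged-over-$y$ version survives the product structure in the low-degree norm. Concretely, the moments $\E_y[\prod_i |y_i|^{j_i}]$ must not blow up beyond $\prod_i j_i^{j_i}$ when the $j_i$ sum to at most $D$. The condition $\alpha\le\eta d^{1/4}/D$ keeps $|\delta_i|\lesssim d^{1/4}$ on typical $y$, so the Hermite growth remains in the regime where the bounds of \cref{fact:hermite-props} apply. I would handle the tails of $y$ either by truncating $|y_i|\le O(\sqrt{\log n})$ at negligible total-variation cost, or directly through Gaussian hypercontractivity of $y$.
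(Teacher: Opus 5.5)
Your reduction is sound and is in substance the paper's argument. Conditioning on the labels and running NGCA in $x$ is the same as the paper's whitened expansion $\ell_u(x,y)=1+\sum_{k\ge2}h_k(\langle x,u\rangle)\psi_k(y)$. In the whitened label $y=Y/s$, the conditional coefficient $\E_{A_y}h_k$ that \cref{fact:hermite-props} extracts from your two-component mixture is exactly $\psi_k(y)=\theta^k\bigl((1-\eta)h_k(y)+\eta\,h_k((1-1/\eta)y)\bigr)$, with $\theta=K\alpha/s$.

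The gap is that the quantitative conclusion is never derived. Used as a black box, \cref{lem:general-ldlr} only gives $n\le d\eta^2/(\poly(D)\alpha^4)$. Your sentence ``tracking the $D$-dependence carefully should yield $D^2$'' is exactly what has to be proved. Two concrete steps are missing.

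First, you need the sharp per-sample bound $a_k\defeq\E_y[\psi_k(y)^2]\le\eta^2(C\theta^2/\eta^2)^k$ for $k\ge2$ (with $a_1=0$), \emph{without} a $k^k$ factor. The $1/\sqrt{k!}$ normalization of $h_k$ cancels the growth of the Gaussian moments of $y$. The paper makes this clean via Mehler's formula, $\E[h_k(cY)^2]\le 2(4c^2)^k$ for $|c|\ge1$, applied with $c=1-1/\eta$. Absorbing $\E|y|^{2j}$ into a $j^j$ factor, as you propose, discards exactly this saving and worsens the $D$-dependence.

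Second, you need an explicit evaluation of
\[
\Adv_{\le D}^2\le 1+\sum_{\ell\ge1}\binom{n}{\ell}\sum_{\substack{k_1,\dots,k_\ell\ge2\\ k_1+\cdots+k_\ell\le D}}\Bigl(\prod_{j=1}^{\ell}a_{k_j}\Bigr)\E\bigl[|\gamma|^{k_1+\cdots+k_\ell}\bigr],\qquad \gamma=\langle u,v\rangle,
\]
using $\E|\gamma|^m\le(Cm/d)^{m/2}$ and $S\defeq n\alpha^4/(d\eta^2)$.
\begin{itemize}
\item The single-sample terms are $na_2\E\gamma^2=O(S)$, and for $m\ge4$ they satisfy $na_m\E|\gamma|^m\le S\cdot Cm\bigl(C(\alpha/\eta)^4m/d\bigr)^{m/2-1}$. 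This decays geometrically because $\alpha\le\eta d^{1/4}/D$ and $m\le D$ give $(\alpha/\eta)^4m/d\lesssim D^{-3}$; this is where that hypothesis is actually used.
\item The multi-sample terms of total degree $m$ are at most $m(Cm)^mS^{m/2}$. This uses at most $2^m$ compositions, $\binom n\ell\le n^\ell$, and $(n\eta^2)^\ell\le(n\eta^2)^{m/2}$ for $n\eta^2\ge1$. The bound is at most $2^{-m}$ once $S\le c/D^2$, which is where the $D^2$ comes from.
\end{itemize}

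Two smaller corrections.

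The obstacle you single out is not one. The labels are i.i.d.\ $\mathcal N(0,s^2)$ under both hypotheses and independent of $u$, and given $y$ the null covariates are standard Gaussian. So for each fixed multi-index, $\E_y\prod_i\psi_{k_i}(y_i)^2=\prod_i a_{k_i}$ factorizes exactly, and no truncation or hypercontractivity is needed. Truncating $y$ at small total-variation cost would not bound the degree-$D$ advantage of the original problem anyway.

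Also, $\alpha\le\eta^2$ does not enter through cross terms. The bound on $a_k$ holds for every $\alpha$ since $\theta\le K\alpha$, and the paper uses $\alpha^2\lesssim\eta$ only to note $s=\Theta(1)$. Similarly, $\alpha\le\eta d^{1/4}/D$ is not about keeping $|\delta_i|$ in the range of the pointwise Hermite bounds of \cref{fact:hermite-props}; those bounds are never needed.
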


\noindent From these two results, we can conclude an information-computation gap for robust posterior mean estimation in regression.

\begin{theorem}[Information-computation gap for robust posterior mean estimation in regression]
\label{thm:robust-regression-gap-moderately-dense}
There exist absolute constants \(c_0,C_0>0\) such that the following holds for all sufficiently large
\(d\).
Assume \(\alpha\leq \min\Set{\eta d^{1/4}/D,\eta^2}\).
Suppose
    \[
    C_0\frac{d+\log(1/\beta)}{\alpha}
    \;+\;
    C_0\frac{\eta(d+\log(1/\beta))}{\alpha^2}
    \;+\;
    C_0\frac{\log(1/\beta)}{\eta}
    \;\le\;
    n
    \;\le\;
    c_0 \frac{d\eta^2}{D^2\alpha^4}\,.
    \]
Assuming the low-degree conjecture in the hypothesis testing problem
\cref{def:variance-matched-xcorruption-test}.

Then any polynomial-time estimator that robustly estimates the exact posterior mean to error
    \(\alpha/20\) with probability at least \(9/10\) on every \(\eta\)-corruption of clean samples from
    \cref{def:bayesian-regression-under-gaussian-prior} with \(\|\wnull\|_2\le \alpha\) yields a
    polynomial-time distinguisher for the testing problem.
\end{theorem}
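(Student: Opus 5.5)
The plan is to combine \cref{lem:reduction-xcorruption} with \cref{lem:ldlr-xcorruption}, in the same way that the mean-estimation gap followed from \cref{lem:robust-me-hard-distributions} and \cref{lem:distinguish_hard}.

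\textbf{Step 1: build the distinguisher.} Suppose $\mathcal A$ is a polynomial-time estimator with the stated guarantee. The lower bound on $n$ in the hypothesis, with $C_0 \ge \max\{C_1,C_2,C_3\}$, is exactly the side condition \eqref{eq:reduction-side-conditions}. So \cref{lem:reduction-xcorruption} applies and gives a polynomial-time test. The test computes $w_{\rm obs} = w_{\rm simp}(X,y)$ and $\widehat w = \mathcal A(X,y)$, and thresholds $\|\widehat w - w_{\rm obs}\|_2$ at $2\alpha$. It separates $Q$ from $P$ with probability $1-O(\beta)-\exp(-\Omega(\eta n))$.

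\textbf{Step 2: check the hypotheses of the reduction on both ensembles.} Both ensembles must be valid inputs to $\mathcal A$, meaning $\eta$-corruptions of clean samples with $\|\wnull\|_2\le\alpha$, up to the constant $K$ fixed in \cref{def:variance-matched-xcorruption-test}.

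\begin{itemize}
\item \emph{Null ensemble.} By item 4 of \cref{lem:basic-properties-xcorruption}, $Q$ is an $\eta$-corruption of a clean sample with $\wnull=0$.
\item \emph{Planted ensemble.} By item 1, $P$ conditional on $u$ is a $q$-corruption of a clean sample with $w^\star = K\alpha u$. I take $q$ of order $\eta$, with the counts concentrating since $\eta n \gtrsim \log(1/\beta)$.
\item \emph{Norm mismatch.} If the planted regressor $K\alpha u$ exceeds the allowed norm $\alpha$, I rescale $\alpha$ by the constant $K$ throughout. This costs only constants, which are absorbed into $c_0$ and $C_0$.
\end{itemize}

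The failure probability $1-O(\beta)-e^{-\Omega(\eta n)}$ is at least $2/3$ for small $\beta$ and large $\eta n$, so the test has constant advantage.

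\textbf{Step 3: contradict low-degree hardness.} The upper bound $n \le c_0 d\eta^2/(D^2\alpha^4)$, with $c_0 \le c$, is exactly the regime of \cref{lem:ldlr-xcorruption}. The other conditions of that lemma are also met: $\alpha\le\min\{\eta d^{1/4}/D,\eta^2\}$ is assumed, $D\le d^{10^{-4}}$ is taken as part of the low-degree setup, and $d$ is sufficiently large. So $\Adv_{\le D}=O(1)$, and the problem is low-degree hard. Under the low-degree conjecture, the polynomial-time distinguisher from Step 1 cannot exist, which yields the theorem.

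\textbf{Main obstacle.} I expect the obstacle to be the interface between the two lemmas, not either lemma on its own. Concretely, I need to confirm three things:
\begin{itemize}
\item the corruption fraction in $P$ (governed by $q$) matches the $\eta$ for which $\mathcal A$ is guaranteed;
\item the planted regressor's norm fits the class on which $\mathcal A$ succeeds;
\item the sample window is nonempty under $\alpha\le\eta^2$.
\end{itemize}
I would verify each by tracking the constants $K$ and $C_0$, and shrink $c_0$ if needed.
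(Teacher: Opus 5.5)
Your Steps 1 and 3 are exactly the paper's proof. The paper only combines \cref{lem:reduction-xcorruption} (with $C_0\ge\max\{C_1,C_2,C_3\}$) and \cref{lem:ldlr-xcorruption} (with $c_0\le c$); if you take those lemmas at face value, nothing else is needed. Step 2 is where you go beyond the paper, and the mismatches you spotted there are real, but the norm-mismatch patch does not work as stated. Rescaling is not a constant-factor change that $c_0,C_0$ can absorb. In the reduction, the planted corruption moves $w_{\rm simp}$ by about $\nu\|w^\star\|_2$, where $\nu=n\sigma^2/(1+n\sigma^2)$. The lemma beats the estimator's error $\alpha/20$ only by taking $K$ large, which means invoking $\mathcal A$ on a clean regressor of norm $K\alpha$, outside its guaranteed class. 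Suppose you use the test with parameter $\alpha/K$, so the planted regressor has norm $\alpha$. Then the planted shift is only about $\nu\alpha<\alpha$, while $\mathcal A$ may still err by $\alpha/20$ in each world. The threshold test therefore separates only if $\nu>1/10$ (up to lower-order terms), and only with a threshold well below $2\alpha$; at $2\alpha$ it never fires. The assumption $\sigma^2=\Theta(1/n)$ allows any constant $\nu\in(0,1)$. For $\nu<1/20$, the non-robust statistic $w_{\rm simp}(X,y)$ is itself essentially $\alpha/20$-accurate on both ensembles, so no test of this form can work. You need an extra hypothesis, such as $n\sigma^2\ge 1$, or a guarantee for $\|\wnull\|_2\le K\alpha$.

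Your corruption-count patch is also incomplete. ``$q$ of order $\eta$'' must mean something like $q=\eta/2$: with $q=\eta$, the number of corrupted samples exceeds $\eta n$ with probability about $1/2$, and then $\mathcal A$ has no guarantee. Once you shrink $q$, you must also reset $a=-K\alpha/(qs^2)$. Otherwise $\E_P[x_iy_i\mid u]\neq 0$ and $\psi_1\not\equiv 0$, and the $j_\star=2$ low-degree bound collapses. The null realization in item 4 of \cref{lem:basic-properties-xcorruption} has the same $\Ber(\eta)$-count problem. It should use replacement rate $\eta/2$, which is still a valid density since $K^2\alpha^2\ll\eta$. The paper's proof is silent on both points because it only cites the lemmas, so these are repairs to the lemmas themselves rather than to your combination step.
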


\begin{proof}
This follows from combining \cref{lem:reduction-xcorruption} and \cref{lem:ldlr-xcorruption}.
\end{proof}

\subsubsection{Proof of \cref{lem:basic-properties-xcorruption}}
\begin{proof}[Proof of \cref{lem:basic-properties-xcorruption}]
Let $\delta=K\alpha$.
For (i), define the clean sample
\[
x_i^\circ \defeq g_i,
\qquad
y_i^\circ \defeq \delta\langle g_i,u\rangle + z_i.
\]
Then \((X^\circ,Y^\circ)\) is exactly one clean sample from
\cref{def:bayesian-regression-under-gaussian-prior} with regressor \(w^\star=\delta u\).
Under \(P\), if \(B=0\) we keep the sample unchanged, and if \(B=1\) we replace \(x_i^\circ\) by
\(x_i^\circ+ay_i^\circ u\) while leaving \(y_i^\circ\) unchanged.  This proves (i).

For (ii), under \(Q\) the claim is immediate.  Under \(P\), conditional on \(u\), in both branches
we have
\[
y_i = \delta\langle g_i,u\rangle + z_i.
\]
Since \(\langle g_i,u\rangle\sim \mathcal N(0,1)\) and is independent of \(z_i\sim \mathcal N(0,1)\),
it follows that \(y_i\sim \mathcal N(0,1+\alpha^2)=\mathcal N(0,s^2)\).

For (iii), under \(Q\) we have independence and centering, so \(\E_Q[x_iy_i]=0\).
Under \(P\), conditional on \(u\), in the clean branch \(B=0\),
\[
\E[x_iy_i\mid u,B=0]
=
\E\bigl[x_i(\delta\langle G,u\rangle+Z)\mid u\bigr]
=
\delta u.
\]
In the outlier branch \(B=1\),
\[
\E[x_iy_i\mid u,B=1]
=
\E[(g_i+auy_i)y_i\mid u]
=
\E[g_iy_i\mid u] + a\E[y_i^2\mid u]u
=
\delta u + as^2u.
\]
Averaging over \(B\sim \Ber(q)\),
\[
\E_P[x_iy_i\mid u]
=
(1-q)\delta u + q(\delta+as^2)u
=
\bigl(\delta + qas^2\bigr)u
=
0,
\]
because \(a=-\delta/(qs^2)\).

For (iv), let \(\varphi_\sigma\) denote the density of \(\mathcal N(0,\sigma^2)\), and define
\[
r_0(y)\defeq \frac{\varphi_s(y)-(1-\eta)\varphi_1(y)}{\eta}.
\]
Since
\[
\frac{\varphi_s(y)}{\varphi_1(y)}
=
\frac{1}{s}\exp\Paren{\frac{\alpha^2}{2(1+\alpha^2)}y^2},
\]
the ratio is minimized at \(y=0\), so
\[
\inf_y \frac{\varphi_s(y)}{\varphi_1(y)} = \frac{1}{s}.
\]
Also,
\[
1-\frac{1}{\sqrt{1+t}} \le \frac{t}{2}
\qquad (t\ge 0),
\]
hence
\[
1-\frac1s \le \frac{\alpha^2}{2}\le \eta.
\]
Therefore
\[
\varphi_s(y)\ge (1-\eta)\varphi_1(y)
\qquad\text{for all }y,
\]
so \(r_0(y)\ge 0\). Moreover,
\[
\int_\R r_0(y)\,dy
=
\frac{1-(1-\eta)}{\eta}
=
1,
\]
so \(r_0\) is a valid density.

Now start from a clean \(w^\star=0\) sample
\(
x_i^\circ = g_i,
\) and \(
y_i^\circ = z_i
\),
with \(g_i\sim \mathcal N(0,\Id_d)\), \(z_i\sim \mathcal N(0,1)\) independent.  Keep \(x_i^\circ\) unchanged.
With probability \(1-\eta\), keep \(y_i^\circ\); with probability \(\eta\), replace it by an independent draw
\(R\sim r_0\).  The observed response then has density
\[
(1-\eta)\varphi_1 + \eta r_0 = \varphi_s,
\]
so \(y_i\sim \mathcal N(0,s^2)\), and therefore the resulting observed pair \((x_i,y_i)\) has exactly law \(Q\).
\end{proof}

\subsubsection{Reducing hypothesis testing to robust posterior mean estimation.}

In this section, we give the proof of \cref{lem:reduction-xcorruption}
\begin{proof}[Proof of \cref{lem:reduction-xcorruption}]
Let
\[
\lambda \defeq \Paren{n+\frac1{\sigma^2}}^{-1},
\qquad
\nu \defeq n\lambda = \frac{n\sigma^2}{1+n\sigma^2}.
\]
Since \(\sigma^2=\Theta(1/n)\), we have \(\nu=\Theta(1)\).

We first consider the case under null distribution.
By \cref{lem:basic-properties-xcorruption}, \(Q\) can be realized as a \(\eta\)-corruption of a clean
\(w^\star=0\) sample \((X^\circ,y^\circ)\), where \(X^\circ_i=G_i\) and \(y^\circ_i=Z_i\).
Let \(C_i\sim \Ber(\eta)\) denote the indicators of the corrupted responses.  
A Chernoff bound then yields
\[
\Pr\Bigl[\sum_{i=1}^n C_i \le \eta n\Bigr] \ge 1-\exp(-\Omega(\eta n)).
\]
On this event, the robust guarantee for \(\mathcal A\) applies.  Also, by \cref{lem:Lipshitz-least-square} and the first term
in \eqref{eq:reduction-side-conditions},
\[
\Norm{\E[\wnull\mid X^\circ,y^\circ]-w_{\rm simp}(X^\circ,y^\circ)}_2 \le \frac{\alpha}{20}
\]
except with probability at most \(\beta\).  Hence, except with probability \(O(\beta)\),
\begin{equation}
\label{eq:null-clean-close}
\Norm{\widehat w-w_{\rm simp}(X^\circ,y^\circ)}_2 \le \frac{\alpha}{10}.
\end{equation}

Now compare the observed and clean simplified statistics.  Since only the responses are modified,
\[
w_{\rm obs}-w_{\rm simp}(X^\circ,y^\circ)
=
\lambda \sum_{i=1}^n X_i^\circ \Delta_i,
\qquad
\Delta_i \defeq C_i(R_i-Z_i).
\]
The replacement density \(r_0\) has variance
\[
\mathrm{Var}(R_i)=\frac{s^2-(1-\eta)}{\eta}=1+\frac{K^2\alpha^2}{\eta}\le O(1),
\]
so \(\Delta_i\) are centered and subgaussian with variance proxy \(O(\eta)\).  Thus
\[
\sum_{i=1}^n \Delta_i^2 \le C \eta n
\]
with probability at least \(1-\exp(-\Omega(\eta n))\).  Conditional on the \(\Delta_i\)'s, the vector
\(\sum_i X_i^\circ \Delta_i\) is Gaussian with covariance \((\sum_i \Delta_i^2)\Id_d\).  Therefore,
with probability at least \(1-\beta-\exp(-\Omega(\eta n))\),
\[
\Norm{w_{\rm obs}-w_{\rm simp}(X^\circ,y^\circ)}_2
\le
C\lambda \sqrt{\eta n(d+\log(1/\beta))}
\le
C\sqrt{\frac{\eta(d+\log(1/\beta))}{n}}
\le
\alpha,
\]
where the last step uses the second term in \eqref{eq:reduction-side-conditions}.
Combining with \eqref{eq:null-clean-close},
\[
\Norm{\widehat w-w_{\rm obs}}_2 \le \frac{11}{10}\alpha < 2\alpha
\]
except with probability \(O(\beta)+\exp(-\Omega(\eta n))\).  Thus the test outputs \emph{null}.

We then consider the planted distribution.
By construction, the planted law is realized by first drawing a clean sample
\[
X_i^\circ = G_i,
\qquad
y_i^\circ = \alpha\langle G_i,u\rangle + Z_i
\]
with regressor \(w^\star=\alpha u\), then drawing i.i.d.\ indicators \(B_i\sim \Ber(q)\), and finally
setting
\[
(X_i,y_i)=
\begin{cases}
(X_i^\circ,y_i^\circ), & B_i=0,\\
(X_i^\circ + a y_i^\circ u,\ y_i^\circ), & B_i=1.
\end{cases}
\]
Again
\[
\Pr\Bigl[\sum_{i=1}^n B_i \le \eta n\Bigr]\ge 1-\exp(-\Omega(\eta n)).
\]
On this event the robust guarantee for \(\mathcal A\) applies, and by \cref{lem:Lipshitz-least-square} together with the first
term in \eqref{eq:reduction-side-conditions},
\begin{equation}
\label{eq:planted-clean-close}
\Norm{\widehat w-w_{\rm simp}(X^\circ,y^\circ)}_2 \le \frac{\alpha}{10}
\end{equation}
except with probability \(O(\beta)\).

Now compare the observed and clean simplified statistics.  Since only \(X\) changes,
\[
w_{\rm obs}-w_{\rm simp}(X^\circ,y^\circ)
=
\lambda \sum_{i=1}^n B_i\,a\,(y_i^\circ)^2\,u.
\]
Taking expectations conditional on \(u\),
\[
\E[w_{\rm obs}-w_{\rm simp}(X^\circ,y^\circ)\mid u]
=
\lambda n \eta a s^2 u
=
-\nu K\alpha\,u.
\]
Also, \(y_i^\circ\sim \mathcal N(0,s^2)\), so Bernstein's inequality gives
\[
\left|
\sum_{i=1}^n B_i\bigl((y_i^\circ)^2-s^2\bigr)
\right|
\le
C s^2 \sqrt{\eta n\log(1/\beta)}
\]
with probability at least \(1-\beta-\exp(-\Omega(\eta n))\).  Therefore,
\[
\Norm{\Paren{w_{\rm obs}-w_{\rm simp}(X^\circ,y^\circ)}+\nu\alpha u}_2
\le
C\lambda |a| s^2 \sqrt{\eta n\log(1/\beta)}
=
C\nu K\alpha\sqrt{\frac{\log(1/\beta)}{\eta n}}.
\]
By the third term in \eqref{eq:reduction-side-conditions}, the right-hand side is at most
\(\nu K\alpha/4\).  
Hence, except with probability \(\beta+\exp(-\Omega(\eta n))\),
\[
\Norm{w_{\rm obs}-w_{\rm simp}(X^\circ,y^\circ)}_2 \ge \frac{3\nu K}{4}\alpha.
\]
Combining with \eqref{eq:planted-clean-close},
\[
\Norm{\widehat w-w_{\rm obs}}_2
\ge
\Norm{w_{\rm obs}-w_{\rm simp}(X^\circ,y^\circ)}_2
-
\Norm{\widehat w-w_{\rm simp}(X^\circ,y^\circ)}_2
\ge
\frac{3\nu K}{4}\alpha- \frac{\alpha}{10}.
\]
Since \(\nu=\Theta(1)\), choosing \(K\) sufficiently large gives
\[
\frac{3\nu K}{4}\alpha - \frac{\alpha}{10} \ge 2\alpha.
\]
Thus the test outputs \emph{planted} except with probability \(O(\beta)+\exp(-\Omega(\eta n))\).
\end{proof}

\subsubsection{Proof of \cref{lem:ldlr-xcorruption}}
It is easier to prove low-degree lower bounds when the null distribution is standard Gaussian.
We define the fixed linear map
\[
T:\R^{d+1}\to\R^{d+1},
\qquad
T(x,y)\defeq (x,y/s).
\]
Since \(T\) is an invertible linear map and preserves total polynomial degree, the degree-\(D\)
advantage is unchanged by applying \(T\) samplewise. We will henceforth work with the data transformed by $T$, for which the likelihood ratio admits the following expansion.

\begin{lemma}
\label{lem:whitened-xcorruption}
Let $\theta \defeq \frac{K\alpha}{s}$.
Under \(T\), the null becomes standard Gaussian on \(\R^{d+1}\):
\(
T_\# Q = \mathcal N(0,\Id_{d+1}).
\)
Under the planted law, conditional on \(u\), if we write
\(
t \defeq \langle x,u\rangle,
y \defeq Y/s,
\)
then the one-sample likelihood ratio with respect to the whitened null has the expansion
\begin{equation}
\label{eq:whitened-one-sample-lr}
\ell_u(x,y)
=
1+\sum_{k\ge 1} h_k(\langle x,u\rangle)\,\psi_k(y),
\end{equation}
where
\begin{equation}
\label{eq:psi-k}
\psi_k(y)
=
\theta^k\Paren{(1-q)h_k(y)+\eta\,h_k\Paren{\Paren{1-\frac1\eta}y}}.
\end{equation}
In particular,
\(
\psi_1(y)\equiv 0.
\)
\end{lemma}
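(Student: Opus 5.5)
The plan is to compute the one-sample likelihood ratio by conditioning on the whitened response $y$ and expanding the conditional law of $t=\langle x,u\rangle$ in Hermite polynomials, separately in each branch of $B$. I treat $q=\eta$, which is what \eqref{eq:psi-k} uses.

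\textbf{Reduction to two dimensions.} The null statement is immediate: under $Q$, $x=g\sim\mathcal N(0,\Id_d)$ is independent of $Y=sz\sim\mathcal N(0,s^2)$, so $T_\#Q=\mathcal N(0,\Id_{d+1})$. Under $P$, condition on $u$ and decompose $x=tu+x_\perp$. In both branches $x_\perp$ equals the projection of $g$ onto $u^\perp$. That projection is standard Gaussian on $u^\perp$ and independent of $(\langle g,u\rangle,z)$, hence of $(t,y)$. So the likelihood ratio depends only on $(t,y)$. By \cref{lem:basic-properties-xcorruption}(ii), $y=Y/s\sim\mathcal N(0,1)$ in both branches. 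Writing $p_B(t\mid y)$ for the conditional density of $t$ given $y$ in branch $B$ and $\varphi$ for the standard normal density, we get
\[
\ell_u(x,y)=(1-q)\frac{p_0(t\mid y)}{\varphi(t)}+q\,\frac{p_1(t\mid y)}{\varphi(t)}.
\]

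\textbf{Conditional laws.} Let $t_g=\langle g,u\rangle$. Then $(t_g,y)$ is a standard bivariate Gaussian with correlation $\mathrm{Cov}(t_g,Y/s)=K\alpha/s=\theta$. Hence $t_g\mid y\sim\mathcal N(\theta y,1-\theta^2)$.
\begin{itemize}
\item Branch $B=0$: $t=t_g$, so $t\mid y\sim\mathcal N(\theta y,1-\theta^2)$.
\item Branch $B=1$: $t=t_g+aY=t_g+as\,y$, and $as=-K\alpha/(\eta s)=-\theta/\eta$. So $t\mid y\sim\mathcal N\bigl(\theta(1-\tfrac1\eta)y,\,1-\theta^2\bigr)$.
\end{itemize}
The variance $1-\theta^2$ is the same in both branches. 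Note $\theta<1$ since $s=\sqrt{1+K^2\alpha^2}$.

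\textbf{Hermite expansion.} For any density $p$ of $t$ with finite $\chi^2$-divergence from $\varphi$,
\[
\frac{p(t)}{\varphi(t)}=\sum_{k\ge0}h_k(t)\,\E_{T\sim p}h_k(T).
\]
Apply the second item of \cref{fact:hermite-props} with $\sigma^2=1-\theta^2$, so that $\rho=\theta$. It gives $\E_{T\sim\mathcal N(m,1-\theta^2)}h_k(T)=\theta^k h_k(m/\theta)$.
\begin{itemize}
\item With $m=\theta y$ the coefficient is $\theta^k h_k(y)$.
\item With $m=\theta(1-1/\eta)y$ the coefficient is $\theta^k h_k\bigl((1-\tfrac1\eta)y\bigr)$.
\end{itemize}
Mixing with weights $1-q$ and $q=\eta$ yields \eqref{eq:whitened-one-sample-lr} with $\psi_k$ as in \eqref{eq:psi-k}. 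The $k=0$ term is $1$, since $h_0\equiv1$ and the weights sum to $1$.

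\textbf{Vanishing of $\psi_1$.} Using $h_1(y)=y$,
\[
\psi_1(y)=\theta y\bigl((1-\eta)+\eta-1\bigr)=0.
\]
This is the whitened form of \cref{lem:basic-properties-xcorruption}(iii).

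\textbf{Main obstacle.} The only delicate point is justifying the Hermite series, i.e.\ its $L^2(\varphi)$ convergence. This holds because each conditional law is Gaussian with variance $1-\theta^2<1$ and finite mean, so its $\chi^2$-divergence from $\varphi$ is finite. Alternatively, the expansion follows directly from Mehler's formula applied to the bivariate Gaussian density, with the change of variable $y\mapsto(1-1/\eta)y$ in branch $B=1$.
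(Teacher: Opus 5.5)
Your proposal is correct and follows essentially the same route as the paper: condition on the whitened response $y$, identify $t\mid y$ as $\mathcal N(\theta y,1-\theta^2)$ in the clean branch and $\mathcal N(\theta(1-\frac1\eta)y,1-\theta^2)$ in the outlier branch, apply the Hermite mean identity with $\rho=\theta$, and average over $B$ with $q=\eta$. You also check explicitly that the component of $x$ orthogonal to $u$ is standard Gaussian and independent of $(t,y)$ in both branches, and you justify $L^2$ convergence of the Hermite series; the paper uses both facts without stating them.
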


\begin{proof}
Under the null, \(X\sim \mathcal N(0,\Id_d)\) and \(Y/s\sim \mathcal N(0,1)\) independently, so
\(T_\#Q=\mathcal N(0,\Id_{d+1})\).
Let $\delta=K\alpha$.

Now condition on the planted direction \(u\), and let \(g\defeq \langle G,u\rangle\).
Then
\[
y=\frac{Y}{s}
=
\frac{\delta g + Z}{s}
=
\theta g + \sqrt{1-\theta^2}\,\xi,
\]
where \(\xi\sim\mathcal N(0,1)\).  Hence \(y\sim \mathcal N(0,1)\).

In the clean branch \(B=0\), we have \(x=G\), so \(t=g\).  Therefore
\[
t\mid y,\,B=0 \sim \mathcal N(\theta y,\,1-\theta^2).
\]
In the outlier branch \(B=1\), we have
\[
x = G + aYu = G + as\,y\,u,
\]
so
\[
t = g + by,
\qquad
b\defeq as = -\frac{\delta}{\eta s} = -\frac{\theta}{\eta}.
\]
Hence
\[
t\mid y,\,B=1 \sim \mathcal N\Paren{\theta\Paren{1-\frac1\eta}y,\,1-\theta^2}.
\]

By Fact~4.7, if \(T\sim \mathcal N(\rho x,1-\rho^2)\), then
\[
\E[h_k(T)] = \rho^k h_k(x).
\]
Applying this with \(\rho=\theta\) in the two branches gives
\[
\E[h_k(t)\mid y,B=0] = \theta^k h_k(y),
\qquad
\E[h_k(t)\mid y,B=1]
=
\theta^k h_k\Paren{\Paren{1-\frac1\eta}y}.
\]
Averaging over \(B\sim \Ber(\eta)\) yields \eqref{eq:psi-k}, we immediately have
\begin{align}
\psi_1(y)
&=
\theta\Paren{(1-\eta)h_1(y)+\eta h_1\Paren{\Paren{1-\frac1\eta}y}}
=
0\,.\qedhere
\end{align}
\end{proof}

\noindent In preparation for the proof of \cref{lem:ldlr-xcorruption}, we prove the following two helper lemmas.

\begin{lemma}
\label{lem:scaled-hermite-bound}
Let \(Y\sim \mathcal N(0,1)\), let \(|c|\ge 1\), and let \(k\ge 0\). Then
\(
\E[h_k(cY)^2] \le 2(4c^2)^k.
\)
\end{lemma}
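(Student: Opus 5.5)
The plan is to compute $\E[h_k(cY)^2]$ exactly by expanding the rescaled Hermite polynomial in the Hermite basis of $Y$, and then bound the resulting finite sum term by term. Throughout, $h_k = \mathrm{He}_k/\sqrt{k!}$ denotes the normalized probabilist's Hermite polynomial, as in \cref{fact:hermite-props}. The key input is the classical scaling (multiplication) identity
\[
\mathrm{He}_k(cx) \;=\; \sum_{j=0}^{\lfloor k/2\rfloor} \frac{k!}{j!\,(k-2j)!}\, c^{k-2j}\Bigl(\frac{c^2-1}{2}\Bigr)^{j}\,\mathrm{He}_{k-2j}(x).
\]
One way to verify it is through the generating function $\sum_k \mathrm{He}_k(x)t^k/k! = e^{xt-t^2/2}$. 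Substituting $x\mapsto cx$ and writing $e^{cxt - t^2/2} = e^{x(ct) - (ct)^2/2}\cdot e^{(c^2-1)t^2/2}$ gives the identity after comparing coefficients of $t^k$. This is the step where I would be most careful with constants, since the sign and the factor $1/2$ in $(c^2-1)/2$ matter.

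Next, I apply orthogonality, $\E[\mathrm{He}_m(Y)\mathrm{He}_{m'}(Y)] = m!\,\mathbf 1[m=m']$, and divide by $k!$. This yields the exact formula
\[
\E[h_k(cY)^2] \;=\; \sum_{j=0}^{\lfloor k/2\rfloor} \frac{k!}{(j!)^2\,(k-2j)!}\; c^{2(k-2j)}\Bigl(\frac{c^2-1}{2}\Bigr)^{2j}.
\]
All terms are nonnegative, so it suffices to bound each one.

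To bound the terms, I use $|c|\ge 1$, which gives $0\le c^2-1\le c^2$ and hence $\bigl(\frac{c^2-1}{2}\bigr)^{2j}\le c^{4j}/4^j$. Every term is therefore at most
\[
c^{2k}\cdot \frac{k!}{(j!)^2(k-2j)!\,4^j}.
\]
I then rewrite the combinatorial factor as
\[
\frac{k!}{(j!)^2(k-2j)!} \;=\; \binom{k}{2j}\binom{2j}{j} \;\le\; \binom{k}{2j}\,4^j,
\]
so the $4^j$ cancels. This leaves
\[
\E[h_k(cY)^2] \;\le\; c^{2k}\sum_j \binom{k}{2j} \;\le\; 2^k c^{2k} \;\le\; 2(4c^2)^k .
\]
The target bound is loose by a factor $2^{k+1}$, so there is ample room. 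Only the scaling identity needs genuine care; everything after it is elementary. If the identity were in doubt, a fallback would be a direct bound $\E[\mathrm{He}_k(cY)^2]\le \bigl(\sum_i |a_i|\,|c|^i\sqrt{\E Y^{2i}}\bigr)^2$ using the explicit coefficients $a_i$ of $\mathrm{He}_k$. That route is messier, so I would use the generating-function route.
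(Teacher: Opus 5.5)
Your proof is correct, but it takes a different route from the paper's.

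\textbf{The paper's route.} The paper never expands $h_k(cY)$. It invokes Mehler's formula $\sum_{k\ge 0} r^k h_k(x)^2 = (1-r^2)^{-1/2}\exp\bigl(\tfrac{r x^2}{1+r}\bigr)$ and evaluates the expectation at $x=cY$. This gives the closed form $\sum_k r^k\,\E[h_k(cY)^2] = (1-r^2)^{-1/2}\bigl(1-\tfrac{2rc^2}{1+r}\bigr)^{-1/2}$. It then chooses $r=1/(4c^2)$, which makes the right-hand side at most $2$. Since every coefficient is nonnegative, each term satisfies $r^k\,\E[h_k(cY)^2]\le 2$. This is a one-line generating-function estimate with no combinatorics. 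It does need an interchange of series and expectation, which is harmless by monotone convergence. Its base is set by how close $r$ is taken to the singularity at $r=1/(2c^2-1)$, and the choice $r=1/(4c^2)$ gives base $4c^2$.

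\textbf{Your route.} You derive the scaling identity from $e^{xt-t^2/2}$ and apply orthogonality. This gives an exact finite formula, so no convergence questions arise. The price is the bookkeeping $\binom{k}{2j}\binom{2j}{j}\le \binom{k}{2j}4^j$. In return you get the sharper bound $\E[h_k(cY)^2]\le 2^{k-1}c^{2k}$ for $k\ge 1$. That is base $2c^2$ rather than $4c^2$, which is the correct exponential rate for large $|c|$.

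For the downstream use in \cref{lem:psi-coefficient-bound}, only a bound of the form $C^k c^{2k}$ matters, so either argument suffices.
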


\begin{proof}
By Mehler's identity, for every \(|r|<1\),
\[
\sum_{k\ge 0} r^k h_k(x)^2
=
\frac{1}{\sqrt{1-r^2}}\exp\Paren{\frac{r x^2}{1+r}}.
\]
Substituting \(x=cY\) and taking expectation over \(Y\sim \mathcal N(0,1)\) gives
\[
\sum_{k\ge 0} r^k \E[h_k(cY)^2]
=
\frac{1}{\sqrt{1-r^2}}
\cdot
\frac{1}{\sqrt{1-\frac{2rc^2}{1+r}}}.
\]
Set
\[
r = \frac{1}{4c^2}.
\]
Since \(|c|\ge 1\), we have \(0<r<1\), and
\[
1+r-2rc^2 = 1+\frac{1}{4c^2}-\frac12 \ge \frac12,
\qquad
1-r\ge \frac34.
\]
Hence the right-hand side is at most \(2\).  Since all coefficients are nonnegative,
\[
r^k \E[h_k(cY)^2]\le 2,
\]
and therefore
\[
\E[h_k(cY)^2]\le 2r^{-k}=2(4c^2)^k.
\qedhere
\]
\end{proof}

\begin{lemma}
\label{lem:psi-coefficient-bound}
Let
\(
a_k \defeq \|\psi_k\|_{L_2(\mathcal N(0,1))}^2.
\)
Then
\(
a_1=0,
\)
and for every \(k\ge 2\),
\(
a_k \le \eta^2 \Paren{\frac{C\theta^2}{\eta^2}}^k
\)
for a universal constant \(C>0\).
\end{lemma}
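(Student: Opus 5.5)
The plan is to expand $\|\psi_k\|^2$ directly from \eqref{eq:psi-k}. I will bound the two Hermite terms separately: the clean term with orthonormality, and the outlier term with \cref{lem:scaled-hermite-bound}. Throughout I write $c \defeq 1-\frac1\eta$ and treat the mixing weight $q$ in \eqref{eq:psi-k} as $\eta$, as in the averaging step of \cref{lem:whitened-xcorruption}.

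The case $a_1=0$ is immediate from $\psi_1\equiv 0$, which was shown at the end of \cref{lem:whitened-xcorruption}.

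For $k\ge 2$, apply $(A+B)^2\le 2A^2+2B^2$ under $Y\sim\mathcal N(0,1)$:
\[
a_k \le 2\theta^{2k}\Paren{(1-\eta)^2\,\E[h_k(Y)^2] + \eta^2\,\E[h_k(cY)^2]}.
\]
The first expectation equals $1$ because the $h_k$ are orthonormal. For the second, we have $|c| = \frac1\eta - 1 \ge 1$ whenever $\eta\le 1/2$, so \cref{lem:scaled-hermite-bound} applies and gives
\[
\E[h_k(cY)^2]\le 2(4c^2)^k\le 2\Paren{\frac{4}{\eta^2}}^k.
\]
The outlier contribution is therefore at most $4\eta^2(4\theta^2/\eta^2)^k$, which already has the target form with $C=4$.

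The only step that needs care, and the main obstacle, is absorbing the clean contribution $2\theta^{2k}$ into $\eta^2(C\theta^2/\eta^2)^k$. This requires
\[
\theta^{2k}\lesssim \eta^{2-2k}\theta^{2k},
\qquad\text{i.e.}\qquad \eta^{2k-2}\le 1.
\]
That inequality holds for every $k\ge 1$ because $\eta<1$, with equality only at $k=1$, where $a_1=0$ anyway. So for $k\ge 2$ we get $2\theta^{2k}\le 2\eta^2(\theta^2/\eta^2)^k$. Combining the two pieces gives
\[
a_k \le 6\,\eta^2\Paren{\frac{4\theta^2}{\eta^2}}^k,
\]
and the prefactor $6$ is absorbed by taking $C=24$, since $6\cdot 4^k\le 24^k$ for $k\ge 1$.

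No cancellation between the two branches is needed beyond the vanishing of $\psi_1$.
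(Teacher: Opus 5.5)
Your proof is correct and follows the same route as the paper's proof. You split $\psi_k$ via $(A+B)^2\le 2A^2+2B^2$, bound the clean branch by orthonormality, and bound the outlier branch by \cref{lem:scaled-hermite-bound} using $1\le|1-1/\eta|\le 1/\eta$. Your explicit check that $2\theta^{2k}\le 2\eta^2(\theta^2/\eta^2)^k$ for $k\ge 1$ spells out what the paper compresses into ``after adjusting the constant $C$''.
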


\begin{proof}
The identity \(a_1=0\) follows from \(\psi_1\equiv 0\).
For \(k\ge 2\), by \eqref{eq:psi-k},
\[
a_k
\le
2\theta^{2k}\Bigl((1-\eta)^2\|h_k\|_2^2 + \eta^2\,\E\Bigl[h_k\Bigl(\Paren{1-\frac1\eta}Y\Bigr)^2\Bigr]\Bigr).
\]
Since \(\|h_k\|_2=1\) and \(|1-1/\eta|\ge 1\), \cref{lem:scaled-hermite-bound} gives
\[
a_k
\le
2\theta^{2k}\Paren{1 + 2\eta^2 \Bigl(4\Paren{1-\frac1\eta}^2\Bigr)^k}.
\]
Because we have \(|1-1/\eta|\le 1/\eta\), so
\(
a_k \le \eta^2 \Paren{\frac{C\theta^2}{\eta^2}}^k
\)
after adjusting the constant \(C\).
\end{proof}

\begin{proof}[Proof of \cref{lem:ldlr-xcorruption}]
By \cref{lem:whitened-xcorruption}, it suffices to analyze the problem after rescaling via $T$.
For \(k\ge 2\), define
\[
\phi_{u,k}(x,y)\defeq h_k(\langle x,u\rangle)\psi_k(y).
\]
Then the one-sample likelihood ratio is
\[
\ell_u(x,y)=1+\sum_{k\ge 2}\phi_{u,k}(x,y).
\]

Let \(u,v\in \mathbb S^{d-1}\), and let \(\gamma\defeq \langle u,v\rangle\).
Under the whitened null, \(x\) and \(y\) are independent standard Gaussians, so by orthogonality,
\begin{equation}
\label{eq:phi-orthogonality}
\langle \phi_{u,k},\phi_{v,\ell}\rangle_{L_2(Q)}
=
\mathbf 1_{\{k=\ell\}}\,\gamma^k a_k.
\end{equation}
Also, for independent \(u,v\sim \mathrm{Unif}(\mathbb S^{d-1})\),
\begin{equation}
\label{eq:overlap-moment-here}
\E[|\gamma|^m] \le \Paren{\frac{Cm}{d}}^{m/2}
\end{equation}
for all integers \(m\ge 1\).
Let
\(
K\defeq \lfloor D/2\rfloor.
\)
Since each \(\phi_{u,k}\) has total degree \(2k\), the \(n\)-sample degree-\(D\) projection keeps
exactly those terms for which the sum of the chosen hidden degrees is at most \(K\).  Expanding the
\(n\)-sample likelihood ratio and using \eqref{eq:phi-orthogonality}, we obtain
\begin{equation}
\label{eq:ldlr-master}
\Adv_{\le D}^2
\le
1+\sum_{\ell\ge 1}\binom{n}{\ell}
\sum_{\substack{k_1,\dots,k_\ell\ge 2\\k_1+\cdots+k_\ell\le K}}
\Paren{\prod_{j=1}^\ell a_{k_j}}
\E\bigl[|\gamma|^{k_1+\cdots+k_\ell}\bigr].
\end{equation}

Set
\(
S \defeq n\alpha^4/d\eta^2.
\)
Since \(\alpha^2\le \eta\), we have \(s=\Theta(1)\) and hence \(\theta=\Theta(\alpha)\).  Therefore
\cref{lem:psi-coefficient-bound} implies
\begin{equation}
\label{eq:ak-bound-delta}
a_k \le \eta^2 \Paren{\frac{C\delta^2}{\eta^2}}^k
\qquad (k\ge 2).
\end{equation}

We split the sum in \eqref{eq:ldlr-master} into the \(\ell=1\) part and the \(\ell\ge 2\) part.

We first bound the contribution from terms where \(\ell=1\).
For \(m\ge 2\), the \(\ell=1\) term is
\(
n a_m \E[|\gamma|^m].
\)
For \(m=2\), using \eqref{eq:ak-bound-delta} and \eqref{eq:overlap-moment-here},
\(
n a_2 \E[\gamma^2] \le C \frac{n\alpha^4}{d\eta^2} = C S.
\)

Now fix \(m\ge 4\).  Using \eqref{eq:ak-bound-delta} and \eqref{eq:overlap-moment-here},
\begin{align*}
n a_m \E[|\gamma|^m]
&\le
n \eta^2 \Paren{\frac{C\delta^2}{\eta^2}}^m \Paren{\frac{Cm}{d}}^{m/2} \\
&=
S\cdot
\Paren{\frac{\delta}{\eta}}^{2m-4}
\cdot
(Cm)^{m/2}\,
d^{1-m/2}.
\end{align*}
When \(m\le K\le d^{10^{-4}}\), we have $(\delta/\eta)^4 \cdot m/d\leq O(1)$.
Therefore
\[
n a_m \E[|\gamma|^m]
\le
S\cdot \Paren{\Bigl(\frac{\delta}{q}\Bigr)^4 \cdot \frac{m}{d}}^{m/2-1}
\le
S\cdot d^{-0.001}\,.
\]
for every \(m\ge 4\).  Summing over \(m=4,\dots,K\), we conclude that the entire \(\ell=1\)
contribution is \(O(S)\).

For the contribution from terms where $\ell \geq 2$.
Fix \(m\in\{4,\dots,K\}\).  The number of compositions of \(m\) into \(\ell\) parts each at least \(2\) is
\[
\binom{m-\ell-1}{\ell-1}\le 2^m.
\]
Using this, together with \eqref{eq:ak-bound-delta} and \eqref{eq:overlap-moment-here}, the
contribution of all terms with total hidden degree \(m\) and \(\ell\ge 2\) is at most
\begin{align*}
&\E[|\gamma|^m]\sum_{\ell=2}^{\lfloor m/2\rfloor}
\binom{n}{\ell}\binom{m-\ell-1}{\ell-1}
\eta^{2\ell}\Paren{\frac{C\delta^2}{\eta^2}}^m \\
&\le
\Paren{\frac{Cm}{d}}^{m/2}
\cdot
2^m
\sum_{\ell=2}^{\lfloor m/2\rfloor}
(n \eta^2)^\ell
\Paren{\frac{C\delta^2}{\eta^2}}^m.
\end{align*}
Since \(n\eta^2\ge 1\) in the regime of interest, and \(\ell\le m/2\),
\(
\sum_{\ell=2}^{\lfloor m/2\rfloor}(n\eta^2)^\ell
\le
m(n\eta^2)^{m/2}.
\)
Hence the total \(\ell\ge 2\) contribution is bounded by
\[
m(Cm)^m \Paren{\frac{n\delta^4}{d\eta^2}}^{m/2}
=
m(Cm)^m S^{m/2}.
\]
If \(S\le c/K^2\) for a sufficiently small universal constant \(c>0\), then
\[
m(Cm)^m S^{m/2}\le 2^{-m}
\qquad\text{for all }m\le K.
\]
Therefore the sum over all \(m\ge 4\) and \(\ell\ge 2\) is \(O(1)\).
Combining the \(\ell=1\) and \(\ell\ge 2\) estimates, we obtain
\(
\Adv_{\le D}^2 \le 1 + O(S) + O(1).
\)
If
\(
n \le c\,\frac{d\eta^2}{D^2\delta^4},
\)
then \(S\le c/D^2\), and hence \(\Adv_{\le D}=O(1)\).
\end{proof}

\subsection{Hardness of private Bayesian regression}\label{sec:private-lr-lb}

Now we can apply the privacy-to-robustness reduction of~\cite{Georgiev2022PrivacyIR} to obtain the claimed information-computation gap for private Bayesian regression, \cref{thm:privacy-lower-bound-regression}.

\restatetheorem{thm:privacy-lower-bound-regression}

\begin{remark}\label{remark:gap}
Within the low-degree framework, we give a computational threshold for error rate $\alpha_{\mathrm{comp}}=\Paren{\frac{d\log^2(1/\beta)}{n^3 \e^2}}^{1/4}$.
In comparison, by the black-box privacy reduction of~\cite{Georgiev2022PrivacyIR} combined with our exponential-time robust algorithm, we know that exponential time algorithms can achieve error rate $\alpha\leq \tilde{O}\Paren{\frac{d+\log(1/\beta)}{n\e}}$.
Therefore, the error achievable by exponential time algorithm is given by $\alpha_{\mathrm{IT}}=\tilde{O}(d/n\epsilon)$.
As a result, even for constant failure probability $0.001$, there is a separation between $\alpha_{\mathrm{comp}}$ and $\alpha_{\mathrm{IT}}$ when 
$\epsilon^2\geq d^3/n$. 
\end{remark}

\begin{proof}[Proof of \cref{thm:privacy-lower-bound-regression}]
We argue by contradiction and apply the privacy-to-robustness reduction from
\cref{thm:privacy-robustness-reduction}.

Let $\mu_{\post} \coloneqq \E[\wnull \mid \Xnull, \ynull]$ denote the posterior mean for Bayesian linear regression under the stated model.
Assume there exists a polynomial-time $\epsilon$-DP algorithm $\mathcal{A}$ which, given $n$ i.i.d.\ samples,
outputs an estimate $\widehat{\theta}$ satisfying
\[
\Pr\bigl[\ \|\widehat{\theta}-\mu_{\post}\|\le O(\alpha)\ \bigr]\;\ge\;1-\beta.
\]
Applying \cref{thm:privacy-robustness-reduction} to the target estimator $\mu_{\post}$, we obtain a
polynomial-time $\eta$-robust algorithm $\mathcal{A}_{\mathrm{rob}}$ that achieves error $\alpha$ with probability
at least $1-\beta$ against an $\eta$-fraction of corruptions, where
\(
\eta \;=\; O\left(\log(1/\beta)/{n\epsilon}\right).
\)
By \cref{lem:reduction-xcorruption}, if there exists an efficient
$\eta$-robust regression algorithm with error $\alpha$ using
\(
n \;=\; \tilde{O}\left(\frac{d\,\eta^2}{\alpha^4 D}\right)
\)
samples, then one can efficiently solve the distinguishing problem $\Pi$
from \cref{def:variance-matched-xcorruption-test} in as many samples. But \cref{lem:ldlr-xcorruption} implies that no degree-$D$ estimator can do this in that many samples. Hence, for any such efficient
$\epsilon$-DP algorithm $\mathcal{A}$, it must be that
$n = \tilde{\Omega}\left(\frac{d\,\eta^2}{\alpha^4 D}\right)$.

Substituting $\eta = O(\log(1/\beta)/(n\epsilon))$ and rearranging gives $n = \tilde{\Omega}(\frac{d}{\alpha^4 D}\cdot \frac{\log^2(1/\beta)}{n^2\epsilon^2})$, from which 
the claimed lower bound follows.
\end{proof}

%% file: content/appendix-mean-estimation.tex
\section{Deferred Proofs From \cref{sec:gaussian-posterior-mean}}

\subsection{From empirical mean to posterior mean}
\label{app:bayesian-me-from-emp-me}

In this section we fill in the details for how to deduce \cref{thm:mean_estimation_upper_informal} (private posterior mean estimation) from \cref{thm:anisotropic-private-est} (private empirical mean estimation).

\begin{proof}[Proof of the Inefficient Upper Bound in \Cref{thm:mean_estimation_upper_informal}]
    We will apply the inefficient algorithm in \Cref{thm:anisotropic-private-est} with
    \[
    R = \sqrt{\Bigr(\|\Sigma\|_\text{op} + \frac{1}{n}\Bigl)\Bigl(d + \log \frac{1}{\beta}\Bigr)}
    \]
    so that the antecedent $\|\bar x\|_2 \le R$ holds with probability $1-O(\beta)$. Then if we require
    \[
        n \ge \frac{d \log \frac{R}{\alpha}}{\varepsilon} = \tilde \Theta\left(\frac{d \log \frac{\|\Sigma\|_\text{op}}{\alpha}}{\varepsilon}\right)
    \]
    and apply the rescaling reduction, we get the desired error
    \[
        \alpha \le \tilde O\left(\frac{\text{tr}(\Lambda) + \|\Lambda\|_\text{op} \log \frac{1}{\beta}}{\varepsilon n}\right) = \tilde O\left(\frac{\text{tr}\bigl(I + \frac{1}{n} \Sigma^{-1}\bigr)^{-1} +\bigl \|\bigl(I + \frac{1}{n} \Sigma^{-1}\bigr)^{-1}\bigr\|_\text{op} \log \frac{1}{\beta}}{\varepsilon n}\right) \qedhere
    \]
\end{proof}

\begin{proof}[Proof of the Efficient Upper Bound in \Cref{thm:mean_estimation_upper_informal}]
    Along the same lines as the previous proof, we apply the efficient algorithm in \Cref{thm:anisotropic-private-est} with the same $R$. Then assuming $n \ge \tilde \Theta\left(\frac{d \log \|\Sigma\|_\text{op}}{\varepsilon}\right)$ we have
    \begin{align*}
        \alpha &\le \tilde O\left(\left(\frac{\text{tr}(\Lambda^{4/3}) + \|\Lambda^{4/3}\|_\text{op} \log \frac{1}{\beta}}{n \varepsilon^{2/3}}\right)^{3/4} + \frac{\text{tr}(\Lambda)+ \left\|\Lambda\right\|_\text{op} \log \frac{1}{\beta}}{\varepsilon n}\right)
    \end{align*}
    as desired.
\end{proof}

\subsection{From isotropic to anisotropic via bucketing}
\label{app:aniso-bucketing-argument}

Here we will deduce \Cref{thm:anisotropic-private-est} from the special case of isotropic prior that was proven in Section~\ref{sec:gaussian-posterior-mean}.

\begin{proof}[Proof of the Inefficient Variant of \Cref{thm:anisotropic-private-est}]
    Our plan is to break $\Lambda^2$ apart by eigenvalue into buckets of for $\sigma^2_i =  2^{-i} \cdot \|\Lambda^2\|_\text{op}$, and run the isotropic estimator of \cref{thm:anisotropic-private-est} with privacy parameter $\frac{\varepsilon}{M+1}$ on the first $M+1$ buckets and then on all of the remaining buckets output $0$. Formally, let $\alpha_i$ be the error on bucket $\sigma^2_i$, $\Lambda^2_i$ be the covariance matrix restricted to bucket $i$, and $d_i$ be the size of bucket $i$.
    \begin{equation*}
        \alpha \le \sum_{i=0}^\infty \alpha_i \le \sum_{i=1}^{M} \alpha_i + 2^{-M/2} \cdot d \cdot \left\|\Lambda^2\right\|_\text{op} \cdot \frac{d + \log \frac{1}{\beta}}{\varepsilon n}
    \end{equation*}
    Let's ignore the second term for now (and set $M$ large enough later to make it negligible). The first term, using a QM-AM inequality, is at most (up to constant factors)
    \begin{align*}
        \sum_{i=0}^M \alpha_i &\le \sum_{i=1}^M \sigma_i (M+1) \cdot \frac{d_i + \log \frac{1}{\beta}}{\varepsilon n} \\
        &\le (M+1) \sum_{i=1}^M \sigma_i \cdot \frac{d_i + \log \frac{1}{\beta}}{\varepsilon n} \\
        &\le  (M+1) \cdot \frac{\text{tr}(\Lambda) +\left\|\Lambda\right\|_\text{op} \log \frac{1}{\beta}}{\varepsilon n}
    \end{align*}
    where we used the inequalities $\sum_{i=0}^M \sigma_i^2 \le 2 \|\Lambda^2\|_\text{op}$ and $\sum_{i=0}^M \sigma_i \le 2 \|\Lambda\|_\text{op}$. Finally, to bound the remaining buckets it suffices to set $M=\Omega(\text{polylog}(n, d, \frac{1}{\varepsilon}, \log \frac{1}{\beta}))$, completing the argument.
\end{proof}

\begin{proof}[Proof of the Efficient Variant of \Cref{thm:anisotropic-private-est}]
    We use a similar approach as our proof for the efficient case above. Analogous to before, 
    \begin{equation*}
        \alpha \le \sum_{i=0}^\infty \alpha_i \le \sum_{i=1}^{M} \alpha_i + 2^{-M/2} \cdot d \cdot \left\|\Lambda^2\right\|_\text{op} \cdot \left(\left(\frac{d + \log \frac{1}{\beta}}{n \varepsilon^{2/3}}\right)^{3/4} + \frac{d + \log \frac{1}{\beta}}{\varepsilon n}\right)\,.
    \end{equation*}
    Let's ignore the second term for now (and set $M$ large enough later to make it negligible). The first term, using the concavity of $f(x)=x^{3/4}$, is at most (up to constant factors)
    \begin{align*}
        \sum_{i=0}^M \alpha_i &\le \sum_{i=1}^M \left[\sigma_i \left((M+1)^{2/3}\frac{d_i + \log \frac{1}{\beta}}{n \varepsilon^{2/3}}\right)^{3/4} + \sigma_i (M+1) \cdot \frac{d_i + \log \frac{1}{\beta}}{\varepsilon n}\right] \\
        &\le M^{1/4} \sqrt {M+1} \cdot \left(\sum_{i=1}^M \sigma_i^{4/3} \cdot \frac{d_i + \log \frac{1}{\beta}}{n \varepsilon^{2/3}}\right)^{3/4} + (M+1) \sum_{i=1}^M \sigma_i \cdot \frac{d_i + \log \frac{1}{\beta}}{\varepsilon n} \\
        &\le M^{1/4} \sqrt{M+1} \cdot \left(\frac{\text{tr}(\Lambda^{4/3}) + \|\Lambda^{4/3}\|_\text{op} \log \frac{1}{\beta}}{n \varepsilon^{2/3}}\right)^{3/4} + (M+1) \cdot \frac{\text{tr}(\Lambda) +\left\|\Lambda\right\|_\text{op} \log \frac{1}{\beta}}{\varepsilon n}
    \end{align*}
    where we used the inequalities $\sum_{i=0}^M \sigma_i^{4/3} \le 2 \|\Lambda^{4/3}\|_\text{op}$ and $\sum_{i=0}^M \sigma_i \le 2 \|\Lambda\|_\text{op}$. Finally, to bound the remaining buckets it suffices to set $M=\Omega(\text{polylog}(n, d, \frac{1}{\varepsilon}, \log \frac{1}{\beta}))$, completing the argument.
\end{proof}

\subsection{Proof of \cref{lem:stability-Gaussian}: resilience bound for feasibility}
\label{sec:mean_concentration}

In this section we prove the resilience bound needed to show that the ground truth is feasible for the main program $\cA$ in \cref{sec:apply_kmz}:

\restatelemma{lem:stability-Gaussian}
\begin{proof}
Fix $b\in[0,1]^n$ satisfying $\E_i b_i \ge 1-\eta$. Let $v\in \mathbb{S}^{d-1}$ be arbitrary.
Using $\bar{x}=\E_i x_i$, we have
\begin{align*}
\E_i b_i \langle x_i-\bar{x},v\rangle
&=
\E_i b_i \langle x_i,v\rangle
-
(\E_i b_i)\,\E_j \langle x_j,v\rangle \\
&=
\E_i \big(b_i-(\E_j b_j)\big)\langle x_i,v\rangle.
\end{align*}
Decompose $x_i=\mu+(x_i-\mu)$. Since $\E_i\big(b_i-(\E_j b_j)\big)=0$, the $\mu$ term cancels and
\[
\E_i b_i \langle x_i-\bar{x},v\rangle
=
\E_i \big(b_i-(\E_j b_j)\big)\langle x_i-\mu,v\rangle.
\]
Now note that $u = \E_i (b_i - (\E_j b_j)) (x_i - \mu)$ is distributed as a Gaussian centered at zero and variance $\E_i (b_i - (\E_j b_j))^2 I_d$ (in Loewner order). Now note that we can write
\begin{align*}
\E_i (b_i - \E_j b_j)^2 
&= 
\E_i b_i^2 - (\E_i b_i)^2 \\
&\le 1 - (1-\eta)^2 \\
&= 2 \eta - \eta^2 \\
&\le 2\eta \mper
\end{align*}
Therefore, from the concentration of the $\ell_2$ norm of a Gaussian centered at $0$ with variance at most $O(\eta) I_d$,  namely $u$,we conclude that with probability at $1-\beta$, for all $v \in \mathbb{S}^{d-1}$ we have that
\begin{equation*}
\abs{\E_i b_i \iprod{x_i - \bar{x}, v}} \le O\Paren{\sqrt{\frac{\eta (d + \log(1/\beta))}{n}}} \mper
\end{equation*}
Now in order to extend this to all $b$, it suffices to take a union bound over all integral points $b \in \{0,1\}^n$ such that $\E_i b_i \ge 1-\eta$, since those can be the extreme points. Taking a union bound over the ${n \choose\eta n} \le (e/\eta)^{\eta n}\le \exp(O(\eta \log(1/\eta) n))$ many such $b$'s taking a union bound over them gives us that with probability $1-\beta$, for all vectors $v \in \mathbb{S}^{d-1}$ and $b \in [0,1]^n$ such that $\E_i b_i \ge 1-\eta$, we have 
\begin{equation*}
\abs{\E_i b_i \iprod{x_i - \bar{x}, v}} \le O\Paren{\sqrt{\frac{\eta (d + \log(1/\beta))}{n}} 
+ \eta \sqrt{\log(1/\eta)}
} \mper
\end{equation*}
as desired.

Now we prove the second part of the lemma. Fix $b \in \set{0,1}^n$ satisfying $\E_i b_i \ge 1-\eta$. We can write
\begin{equation*}
\Abs{\E_i b_i \iprod{x_i - \bar{x}, v}^2 - 1} \le \Abs{\E_i \iprod{x_i - \bar{x}, v}^2 - 1} + \E_i (1-b_i) \iprod{x_i - \bar{x}, v}^2 \mper 
\end{equation*}
We bound each term separately.
For the first term we can write
\begin{align*}
\Abs{\E_i \iprod{x_i - \bar{x}, v}^2 - 1}
&= \Abs{\E_i \iprod{x_i - \mu, v}^2 - \iprod{\bar{x} - \mu, v}^2 -1} \\
&\le 
\Abs{\E_i \iprod{x_i - \mu, v}^2 - 1} +\iprod{\bar{x} - \mu, v}^2 \\
&\le 
\Normop{\E_i (x_i - \mu)\transpose{(x_i - \mu)} - I_d} + \Snormt{\bar{x} - \mu} \mper
\end{align*}
From standard concentration bounds of a standard Gaussian we know that with probability $1-\beta$,
\begin{equation*}
\Normop{\E_i (x_i - \mu)\transpose{(x_i - \mu)} - I_d} \le O\Paren{\sqrt{\frac{d+ \log(1/\beta)}{n}} + \frac{d + \log(1/\beta)}{n}}, \quad \Snormt{\bar{x} - \mu} \le O\Paren{\frac{d + \log(1/\beta)}{n}} \mper
\end{equation*}
Therefore, with probability $1-\beta$, for all $v \in \mathbb{S}^{d-1}$ we obtain
\begin{equation*}
\Abs{\E_i \iprod{x_i - \bar{x}, v}^2 - 1} \le 
O\Paren{\sqrt{\frac{d+ \log(1/\beta)}{n}}
+
\frac{d + \log(1/\beta)}{n}
} \mper
\end{equation*}
Now we bound the second term. Adding and subtracting $\mu$, we can write
\begin{align*}
\E_i (1- b_i) \iprod{x_i - \bar{x}, v}^2 &\le
2 \E_i (1-b_i) \iprod{x_i - \mu, v}^2 + 2 \E_i (1-b_i) \iprod{\mu - \bar{x},v}^2 \mper
\end{align*}
The second term we can show is bounded from above by $O\Paren{\frac{\eta (d + \log(1/\beta))}{n}}$, with probability $1-\beta$, by applying standard bounds on concentration of a Gaussian and the assumption that $\E_i b_i \ge 1-\eta$. It remains to bound the first term. Again applying the standard concentration bound for the empirical covariance we can write that with probability $1-\beta$
\begin{equation*}
\E_i (1-b_i) \iprod{x_i - \mu, v}^2 \le \eta \cdot \Brac{1 + O\Paren{\sqrt{\frac{d+ \log(1/\beta)}{n}}
+
\frac{d + \log(1/\beta)}{n}}} \mper
\end{equation*}
Overall for a fixed $b$ we obtain that with probability $1-\beta$,
\begin{equation*}
\Abs{\E_i b_i \iprod{x_i - \bar{x}, v}^2 - 1} \le 
O\Paren{\sqrt{\frac{d+ \log(1/\beta)}{n}}
+
\frac{d + \log(1/\beta)}{n} + \eta} \mper
\end{equation*}
Taking a union bound over the choice of $\beta$ gives us the desired bound.

\end{proof}

\input{content/kmz-lemma}

%% file: content/kmz-lemma.tex
\subsection{Isotropic estimation lemma}
\label{sec:isotropic-estimation-lemma}
Here we will state and prove the following lemma, which is an adaption of Lemma~4.1 in \cite{kothari2022polynomial} and which is a crucial ingredient in the proof in \cref{sec:apply_kmz}. As we will only require the lemma in the case where $V(\mu_0, v) = 1$, that's the setting where we state and prove here as well. We will follow their strategy exactly, the only difference is that we adapt it to the setting where we do not necessarily have the resilience guarantees with $O(\eta)$ error and instead might have weaker as is the case in the $n \ll d / \eta^2$ regime for Gaussian mean estimation.

\begin{lemma}
\label{lem:technicallem-unit-variance}
Let \(x_1,\dots,x_n\in\mathbb{R}^d\), and fix a reference mean \(\mu_0\in\mathbb{R}^d\). Assume the following \emph{robust moment bounds}: for every \(v\in S\) and every weight vector \(a=(a_1,\dots,a_n)\in[0,1]^n\) with \(\sum_{i=1}^n a_i \ge (1-\eta)n\),
\begin{equation}
\left|\frac{1}{n}\sum_{i=1}^n a_i  \langle x_i-\mu_0,\ v\rangle\right| \ \le\ \alpha_1,
\qquad
\left|\frac{1}{n}\sum_{i=1}^n a_i\big(\langle x_i-\mu_0,\ v\rangle^2-1\big)\right| \ \le\ \alpha_2.
\label{eq:robust-assumptions-unit}
\end{equation}

Let \(y_1,\dots,y_n\) be an \(\eta\)-corruption of \(x_1,\dots,x_n\). Let \(  \pE\) be a pseudo-expectation of sufficiently high degree over formal variables \(x'_1,\dots,x'_n\in\mathbb{R}^d\) and \(w_1,\dots,w_n\), and let \(\mu' := \frac{1}{n}\sum_{i=1}^n x'_i\). Suppose:
\begin{enumerate}
\item \(  \pE[w_i^2=w_i]\) for all \(i\in[n]\),
\item \(  \pE\left[\sum_{i=1}^n w_i\right] = (1-\eta)n\),
\item \(  \pE[  w_i x'_i = w_i y_i  ]\) for all \(i\in[n]\),
\item for every \(v\in S\), \(\displaystyle   \pE \left[\frac{1}{n}\sum_{i=1}^n \langle x'_i-\mu',\ v\rangle^2\right] \le 1+\alpha_0\).
\end{enumerate}
Define \(w'_i := w_i  \mathbf{1}\{x_i=y_i\}\) and set \(a_i :=   \pE[w'_i]\) for \(i\in[n]\). Then \(a_i\in[0,1]\) and \(\frac{1}{n}\sum_i a_i \ge 1-2\eta\). For every \(v\in S\), writing \(\widehat{\mu} :=   \pE[\mu']\), the following hold:
\begin{align}
  \pE\big[\langle \mu'-\mu_0,\ v\rangle^2\big]
&\ \le\ \frac{4\eta(\alpha_0+\alpha_2)+2\alpha_1^2}{1-4\eta}
\ \ \le\ 8\eta(\alpha_0+\alpha_2)+4\alpha_1^2 \quad (\eta\le\tfrac18),
\label{eq:key-second-moment-unit}
\\[6pt]
\Big|\langle \widehat{\mu}-\mu_0,\ v\rangle\Big|
&\ \le\ \alpha_1\ +\ \sqrt{  2\eta\Big(\alpha_0+\alpha_2+  \pE\big[\langle \mu'-\mu_0,\ v\rangle^2\big]\Big)}.
\label{eq:key-bias-unit}
\end{align}
In particular, combining \eqref{eq:key-second-moment-unit} and \eqref{eq:key-bias-unit} gives the bound
\begin{equation}
\Big|\langle \widehat{\mu}-\mu_0,\ v\rangle\Big|
\ \le\ \alpha_1\ +\ \sqrt{  C_1  \eta(\alpha_0+\alpha_2)\ +\ C_2  \eta  \alpha_1^2\ +\ C_3  \eta^2(\alpha_0+\alpha_2)  }
\ \le\ C\Big(\alpha_1+\sqrt{\eta(\alpha_0+\alpha_2)}+\eta\Big),
\label{eq:clean-rate-unit}
\end{equation}
for universal constants \(C_1,C_2,C_3,C>0\). Thus the rate
\[
\alpha_1\ +\ \sqrt{  \eta(\alpha_0+\alpha_2)\ +\ \eta^2  }
\]
is valid up to universal constant factors.
\end{lemma}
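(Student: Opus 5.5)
The plan follows the proof of Lemma~4.1 in \cite{kothari2022polynomial}, tracking the error terms $\alpha_0,\alpha_1,\alpha_2$ instead of $O(\eta)$. Fix $v\in S$ and write $z_i := \langle x_i-\mu_0, v\rangle$ for the clean data and $z'_i := \langle x'_i-\mu_0,v\rangle$ for the program variables. Set $\Delta := \langle \mu'-\mu_0, v\rangle = \frac1n\sum_i z'_i$.

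\textbf{Step 1: the weights $a_i$.} Since $w'_i = w_i\mathbf 1\{x_i=y_i\}$ and $w_i^2=w_i$, the term $\pE[w'_i]=\pE[w_i^2]\mathbf 1\{x_i=y_i\}$ lies in $[0,1]$; here $\pE[1-w_i]=\pE[(1-w_i)^2]\ge 0$ gives the upper bound. Moreover $\sum_i a_i \ge \pE\sum_i w_i - \eta n = (1-2\eta)n$. Hence the hypotheses \eqref{eq:robust-assumptions-unit} apply to the weights $a$, with $\eta$ replaced by $2\eta$, which only costs constants.

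\textbf{Step 2: the second-moment bound \eqref{eq:key-second-moment-unit}.} On the support of $w'_i$ we have $z'_i=z_i$, so $w'_i z'_i = w'_i z_i$ as a polynomial identity. Expand the variance constraint about $\mu_0$:
\[
\frac1n\sum_i (z'_i)^2 - \Delta^2 = \frac1n\sum_i \langle x'_i-\mu',v\rangle^2 \le 1+\alpha_0 .
\]
Split $\frac1n\sum_i (z'_i)^2 = \frac1n\sum_i w'_i z_i^2 + \frac1n\sum_i (1-w'_i)(z'_i)^2$. Taking pseudo-expectations, the first piece is at least $1-\alpha_2$ by \eqref{eq:robust-assumptions-unit}. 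Therefore
\[
\pE\Bigl[\tfrac1n\sum_i (1-w'_i)(z'_i)^2\Bigr] \le \alpha_0+\alpha_2+\pE[\Delta^2].
\]
Now write
\[
\Delta = \frac1n\sum_i w'_i z_i + \frac1n\sum_i (1-w'_i) z'_i .
\]
The first term is linear in the $w'$ and has pseudo-expectation of absolute value at most $\alpha_1$. For its square, use the SoS selector lemma (\cref{lem:sos-subset-sum}) in the form in which KMZ treat it, or square inside $\pE$ via the linear bound for all weight vectors, to get a bound of order $\alpha_1^2$. The second term is controlled by SoS Cauchy--Schwarz using $(1-w'_i)^2=1-w'_i$:
\[
\Bigl(\frac1n\sum_i (1-w'_i)z'_i\Bigr)^2 \le \Bigl(\frac1n\sum_i (1-w'_i)\Bigr)\Bigl(\frac1n\sum_i (1-w'_i)(z'_i)^2\Bigr).
\]
Here $\pE\frac1n\sum_i(1-w'_i)\le 2\eta$. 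Combining via $(a+b)^2\le 2a^2+2b^2$ gives
\[
\pE\Delta^2 \le 2\alpha_1^2 + 4\eta\bigl(\alpha_0+\alpha_2+\pE\Delta^2\bigr),
\]
and rearranging yields \eqref{eq:key-second-moment-unit}.

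\textbf{Main obstacle.} The delicate point is that the factor $\frac1n\sum_i(1-w'_i)$ is itself a pseudo-random quantity, not the scalar $2\eta$. KMZ handle this by noting that $\frac1n\sum_i(1-w'_i)\le 2\eta$ holds as a polynomial inequality, up to constraints involving $\sum_i w_i$. So I would impose $\sum_i w_i\ge(1-\eta)n$ as a constraint, as in $\cA$, rather than only in expectation, and multiply the resulting SoS inequality by a sum of squares. Similarly, the $\alpha_1^2$ term needs $\bigl(\frac1n\sum_i w'_i z_i\bigr)^2\le \alpha_1^2$ proved in SoS over the box $0\le w'\le1$, $\sum_i(1-w'_i)\le 2\eta n$. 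The absolute-value form of \cref{lem:sos-subset-sum}, together with \cref{lem:preliminary-SOS-abs-to-square}, gives this after recentering at $1-w'$.

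\textbf{Step 3: the bias bound \eqref{eq:key-bias-unit} and the conclusion \eqref{eq:clean-rate-unit}.} We have
\[
\langle\widehat\mu-\mu_0,v\rangle = \pE\Delta = \frac1n\sum_i a_i z_i + \pE\Bigl[\frac1n\sum_i(1-w'_i)z'_i\Bigr].
\]
The first term is at most $\alpha_1$ by Step~1. For the second term, apply pseudo-expectation Cauchy--Schwarz exactly as above, now unsquared; this gives
\[
\sqrt{2\eta\bigl(\alpha_0+\alpha_2+\pE\Delta^2\bigr)},
\]
which is \eqref{eq:key-bias-unit}. Substituting \eqref{eq:key-second-moment-unit} gives \eqref{eq:clean-rate-unit}, since
\[
\eta\alpha_1^2\le\alpha_1^2 \quad\text{and}\quad \sqrt{\eta^2(\alpha_0+\alpha_2)}\lesssim\eta+\sqrt{\eta(\alpha_0+\alpha_2)}.
\]
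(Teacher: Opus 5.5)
Your argument is correct and follows the paper's skeleton. Both use the same split of $\Delta$ into a $w'$-part and a $(1-w')$-part. Both prove \eqref{eq:key-bias-unit} by the same pseudo-expectation Cauchy--Schwarz, using $(1-w'_i)^2=1-w'_i$ and only $\pE[\frac{1}{n}\sum_i(1-w'_i)]\le 2\eta$. Both combine constraint~4 with the $\alpha_2$ lower bound to get $\pE[\frac{1}{n}\sum_i(1-w'_i)(z'_i)^2]\le\alpha_0+\alpha_2+\pE[\Delta^2]$.

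The two routes differ at \eqref{eq:key-second-moment-unit}. The paper writes $\Delta=\frac{1}{n}\sum_i(1-w'_i)z'_i+\frac{1}{n}\sum_i a_iz_i$ with the scalars $a_i=\pE[w'_i]$. It then bounds $\pE[\Delta^2]$ by $2\,\text{(A)}^2+2(\frac{1}{n}\sum_i a_iz_i)^2$. That identity only holds after applying $\pE$, and $\text{(A)}^2$ is a squared pseudo-expectation, so as written it controls $(\pE\Delta)^2$, not $\pE[\Delta^2]$.

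You instead keep $w'_i$ as a variable and apply SoS Cauchy--Schwarz as a polynomial inequality. This needs two further ingredients:
\begin{enumerate}
\item $\frac{1}{n}\sum_i(1-w'_i)\le 2\eta$ as a genuine constraint;
\item an SoS certificate for $(\frac{1}{n}\sum_i w'_iz_i)^2\le\alpha_1^2$, via \cref{lem:sos-subset-sum} in the variables $1-w'_i$ and then \cref{lem:preliminary-SOS-abs-to-square}.
\end{enumerate}
This is what actually proves the second-moment bound. Of your two suggestions for the $\alpha_1^2$ term, only this one works: for $L=\frac{1}{n}\sum_i w'_iz_i$, knowing $|\pE L|\le\alpha_1$ says nothing about $\pE[L^2]$.

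The price is upgrading hypothesis~2 from $\pE[\sum_i w_i]=(1-\eta)n$ to the constraint $\sum_i w_i\ge(1-\eta)n$, and this upgrade cannot be avoided. Take $y=x$, a fixed unit $v$, and the actual distribution that:
\begin{enumerate}
\item with probability $1-\eta$ sets $w=\mathbf{1}$ and $x'=x$;
\item with probability $\eta$ sets $w=0$ and $x'_i=x_i+cv$.
\end{enumerate}
It satisfies hypotheses~1--4 whenever $\alpha_0\ge\alpha_2$. Yet $\pE[\Delta^2]\approx\eta c^2$ and $\langle\widehat\mu-\mu_0,v\rangle\approx\eta c$, so \eqref{eq:key-second-moment-unit} and \eqref{eq:clean-rate-unit} fail for large $c$, while \eqref{eq:key-bias-unit} survives. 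The program $\cA$ in the proof of \cref{thm:robust-empirical-mean-efficient} does impose $\sum_i w_i\ge(1-\eta)n$ as a constraint, so your version is the one the application needs.

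Two minor points.
\begin{enumerate}
\item The hypothesis controls $\frac{1}{n}\sum_i a_i(z_i^2-1)$, so you only get $\frac{1}{n}\sum_i a_iz_i^2\ge 1-2\eta-\alpha_2$, not $1-\alpha_2$; the paper makes the same slip. This adds $2\eta$ next to $\alpha_0+\alpha_2$, which the $+\eta$ term in \eqref{eq:clean-rate-unit} absorbs.
\item Your remark that passing from mass $(1-\eta)n$ to $(1-2\eta)n$ costs only constants is right. Writing $a=2(\mathbf{1}-b/2)-\mathbf{1}$ with $b=\mathbf{1}-a$ loses a factor $3$ in $\alpha_1$ and $\alpha_2$.
\end{enumerate}
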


\begin{proof}[Proof of Lemma~\ref{lem:technicallem-unit-variance}]
Define \(w'_i := w_i  \mathbf{1}\{x_i=y_i\}\). Then \(  \pE[{w'_i}^2=w'_i]\), \(  \pE[w'_i x'_i = w'_i x_i]\), and \(\frac{1}{n}\sum_{i=1}^n   \pE[1-w'_i] \le 2\eta\). Set \(a_i:=  \pE[w'_i]\); then \(a_i\in[0,1]\) and \(\frac{1}{n}\sum_i a_i\ge 1-2\eta\).

\medskip
\noindent\emph{Step 1: Decompose the estimation error.}
For any \(v\in S\),
\begin{align*}
\langle \widehat{\mu}-\mu_0,\ v\rangle
&=   \pE\left[\frac{1}{n}\sum_{i=1}^n \langle x'_i-\mu_0,\ v\rangle\right]
\\
&=   \pE\left[\frac{1}{n}\sum_{i=1}^n (1-w'_i)  \langle x'_i-\mu_0,\ v\rangle\right]
  +   \frac{1}{n}\sum_{i=1}^n   \pE[w'_i]\ \langle x_i-\mu_0,\ v\rangle.
\end{align*}
Hence,
\begin{equation}
\Big|\langle \widehat{\mu}-\mu_0,\ v\rangle\Big|
\ \le\ \underbrace{\left|  \pE\left[\frac{1}{n}\sum_{i=1}^n (1-w'_i)  \langle x'_i-\mu_0,\ v\rangle\right]\right|}_{\text{(A)}}  +  \underbrace{\left|\frac{1}{n}\sum_{i=1}^n a_i \langle x_i-\mu_0,\ v\rangle\right|}_{\le\ \alpha_1\ \text{ by \eqref{eq:robust-assumptions-unit}}}.
\label{eq:first-decomp}
\end{equation}
Applying Cauchy--Schwarz to (A) and \({w'_i}^2=w'_i\), we get
\begin{align}
\text{(A)}^2
&\ \le\   \pE \left[\left(\frac{1}{n}\sum_{i=1}^n (1-w'_i)  \langle x'_i-\mu_0,\ v\rangle\right)^{ 2}\right]
\ \le\   \pE \left[\frac{1}{n}\sum_{i=1}^n (1-w'_i)\right]\cdot
  \pE \left[\frac{1}{n}\sum_{i=1}^n (1-w'_i)  \langle x'_i-\mu_0,\ v\rangle^2\right]
\nonumber\\
&\ \le\ 2\eta\cdot   \pE \left[\frac{1}{n}\sum_{i=1}^n (1-w'_i)  \langle x'_i-\mu_0,\ v\rangle^2\right].
\label{eq:A-CS}
\end{align}

\noindent Substituting \(w'_i x'_i = w'_i x_i\),
\begin{align}
  \pE \left[\frac{1}{n}\sum_{i=1}^n (1-w'_i)  \langle x'_i-\mu_0,\ v\rangle^2\right]
&=   \pE \left[\frac{1}{n}\sum_{i=1}^n \langle x'_i-\mu_0,\ v\rangle^2\right]
  -   \frac{1}{n}\sum_{i=1}^n a_i  \langle x_i-\mu_0,\ v\rangle^2
\nonumber\\
&\le\ \underbrace{  \pE \left[\frac{1}{n}\sum_{i=1}^n \langle x'_i-\mu',\ v\rangle^2\right]}_{\le\ 1+\alpha_0}
  +     \pE \left[\langle \mu'-\mu_0,\ v\rangle^2\right]
  -   \underbrace{\frac{1}{n}\sum_{i=1}^n a_i  \langle x_i-\mu_0,\ v\rangle^2}_{\ge\ 1-\alpha_2}
\nonumber\\
&\le\ \alpha_0+\alpha_2\ +\   \pE \left[\langle \mu'-\mu_0,\ v\rangle^2\right].
\label{eq:error-mass-second-moment}
\end{align}
Combining \eqref{eq:A-CS} and \eqref{eq:error-mass-second-moment},
\begin{equation}
\text{(A)}^2\ \le\ 2\eta\Big(\alpha_0+\alpha_2+  \pE\big[\langle \mu'-\mu_0,\ v\rangle^2\big]\Big).
\label{eq:A-square-final}
\end{equation}
Inserting \eqref{eq:A-square-final} into \eqref{eq:first-decomp} yields \eqref{eq:key-bias-unit}.

\medskip
\noindent Now we bound \(  \pE[\langle \mu'-\mu_0,\ v\rangle^2]\).
Decompose
\[
\langle \mu'-\mu_0,\ v\rangle
= \frac{1}{n}\sum_{i=1}^n (1-w'_i)  \langle x'_i-\mu_0,\ v\rangle
  +   \frac{1}{n}\sum_{i=1}^n a_i  \langle x_i-\mu_0,\ v\rangle.
\]
By \((x+y)^2\le 2x^2+2y^2\), \eqref{eq:A-square-final}, and \eqref{eq:robust-assumptions-unit},
\begin{align*}
  \pE\big[\langle \mu'-\mu_0,\ v\rangle^2\big]
&\le 2  \text{(A)}^2 + 2\left(\frac{1}{n}\sum_{i=1}^n a_i  \langle x_i-\mu_0,\ v\rangle\right)^{ 2}
\\
&\le 4\eta\Big(\alpha_0+\alpha_2+  \pE\big[\langle \mu'-\mu_0,\ v\rangle^2\big]\Big) + 2\alpha_1^2.
\end{align*}
Rearranging gives \eqref{eq:key-second-moment-unit}. Substituting \eqref{eq:key-second-moment-unit} into \eqref{eq:key-bias-unit} and simplifying gives us \eqref{eq:clean-rate-unit}.
\end{proof}

%% file: content/appendix-frequentist-implications.tex
\section{Frequentist Implications}\label{appendix:frequentist-implications}

In this section we prove the frequentist consequences of our main results, as stated in \cref{thm:mean_estimation_upper_informal_frequentist} and \cref{thm:lr-freq}.

\subsection{Frequentist mean estimation}

\begin{proof}[Proof of \Cref{{thm:mean_estimation_upper_informal_frequentist}}]
      Except with probability $\frac{\beta}{3}$, the sample mean $\bar{x}$ satisfies
    \begin{equation*}
        \|\bar{x}\|_2 
        \le O\Bigl(R + \sqrt{\text{Tr}(\Lambda^2) + \log 1/\beta}\Bigr).
    \end{equation*}
    Condition on the above. By \Cref{thm:anisotropic-private-est}, whenever 
    \begin{equation*}
        n 
        \ge \tilde \Omega\left(\frac{d}{\varepsilon} \log\Bigl(\frac{R + \|\Lambda\|_F}{\alpha}\Bigr)\right)\,,
    \end{equation*}
    where the $\tilde \Omega$ hides factors of $\log \log \frac{1}{\beta}$, except with probability $\frac{\beta}{3}$ our inefficient estimator outputs $\hat \mu$ such that
    \begin{equation*}
        \|\bar x - \hat \mu\| 
        \le \tilde O\biggl(\frac{\tr(\Lambda) + \norm{\Lambda}_{\sf op} \log(1/\beta)}{\epsilon n}\biggr)\,,
    \end{equation*}
    and our efficient estimator outputs $\hat \mu$ such that
    \begin{equation*}
        \|\bar x - \hat \mu\| 
        \le \tilde O\biggl(\frac{\norm{\Lambda}_{4/3} + \norm{\Lambda}_{\sf op} \log^{3/4} (1/\beta)}{\epsilon^{1/2} n^{3/4}}+\frac{\tr(\Lambda) + \norm{\Lambda}_{\sf op} \log(1/\beta)}{\epsilon n}\biggr)
    \end{equation*}
    Finally, except with probability $\frac{\beta}{3}$, %
    \begin{equation*}
        \|\bar x - \mu\|_2 \le \sqrt{\frac{\|\Lambda\|_F^2 + \|\Lambda\|_\text{op}^2 \log \frac{3}{\beta}}{n}} = (1+o(1)) \sqrt{\frac{\|\Lambda\|_F^2 + \|\Lambda\|_{\text{op}}^2 \log \frac{1}{\beta}}{n}}.
    \end{equation*}
    where the $o(1)$ holds in regimes where the effective dimension $\|\Lambda\|_F/\|\Lambda\|_\text{op} \to \infty$ or $\beta \to 0$. The theorem follows by a union bound over all three events and a triangle inequality.
\end{proof}

\subsection{Frequentist linear regression}
In this part, we give the proof of \Cref{thm:lr-freq}.
We first give a lemma for the error rate achieved by ordinary least square estimator. 
\begin{lemma}\label{lem:ols_concentration_frequentist}
Let
\(
X := [x_1,\ldots,x_n] \in \mathbb{R}^{d\times n}\),
\(
y = X^\top w + \xi\),
\(\xi \sim \mathcal{N}(0,\Id_n).
\),
and let
\[
w_{\mathrm{OLS}} := (XX^\top)^{-1}Xy.
\]
For any $\beta \in (0,1/2)$, define 
\[
r_\beta := C\sqrt{\frac{d+\log(4/\beta)}{n}},
\]
where $C>0$ is a sufficiently large absolute constant. If $r_\beta \le 1/2$, then with
probability at least $1-\beta$,
\[
\|w_{\mathrm{OLS}} - w\|_2^2
\le
\frac{1}{(1-r_\beta)n}
\left(
d + 2\sqrt{d\log\frac{4}{\beta}} + 2\log\frac{4}{\beta}
\right).
\]
Consequently, if $r_\beta = o(1)$ then for some universal constant $c$,
\[
\|w_{\mathrm{OLS}} - w\|_2
\le
(1+o(1))\sqrt{\frac{d+c\log(1/\beta)}{n}}.
\]
\end{lemma}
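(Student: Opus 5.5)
The plan is to write the OLS error exactly and then control it with two concentration events of probability $1-\beta/2$ each. Since $y = X^\top w + \xi$, we have
\[
w_{\mathrm{OLS}} - w = (XX^\top)^{-1}X\xi ,
\]
so that
\[
\|w_{\mathrm{OLS}}-w\|_2^2 = \xi^\top X^\top (XX^\top)^{-2} X \xi .
\]
Write $P := X^\top (XX^\top)^{-1}X$ for the orthogonal projection onto the row space of $X$, which has rank $d$ almost surely. Then
\[
\|w_{\mathrm{OLS}}-w\|_2^2 \le \normop{(XX^\top)^{-1}} \cdot \xi^\top P \xi ,
\]
because $(XX^\top)^{-1/2}X\xi$ has squared norm $\xi^\top P\xi$. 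The two factors are then handled separately.

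\textbf{Design matrix factor.} By \cref{thm:covariance-concentration}, with probability $1-\beta/2$ we have $\normop{\frac1n XX^\top - \Id} \le r_\beta$, provided the constant $C$ is large enough. On this event $\lambda_{\min}(XX^\top) \ge (1-r_\beta)n$, so
\[
\normop{(XX^\top)^{-1}} \le \frac{1}{(1-r_\beta)n}.
\]

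\textbf{Noise factor.} Condition on $X$. Since $\xi \sim \mathcal N(0,\Id_n)$ is independent of $X$ and $P$ is a rank-$d$ projection, $\xi^\top P\xi$ is exactly $\chi^2_d$-distributed. The Laurent--Massart bound then gives, with probability $1-\beta/4 \ge 1-\beta/2$,
\[
\xi^\top P\xi \le d + 2\sqrt{d\log(4/\beta)} + 2\log(4/\beta).
\]
A union bound over the two events yields the first display of the lemma.

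\textbf{Consequence.} Use $2\sqrt{d\log(4/\beta)} \le \gamma d + \gamma^{-1}\log(4/\beta)$ with a suitable $\gamma$, or argue directly. When $d/\log(1/\beta)\to\infty$ the cross term is $o(d)$. Otherwise it is $O(d+\log(1/\beta))$, and it is absorbed into $c\log(1/\beta)$ with $c$ a universal constant, for instance $c=4$ after bounding $2\sqrt{ab}\le a+b$. Since $(1-r_\beta)^{-1/2} = 1+o(1)$, taking square roots gives the stated form.

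The only delicate point is the $(1+o(1))$ bookkeeping in this last step, namely ensuring that the constant in front of $d$ is $1+o(1)$ rather than $2$. The first inequality itself is routine. The way to handle it is to split on whether $\log(1/\beta) \le d$ or not. In the first case, pick $\gamma = \sqrt{\log(4/\beta)/d}$, or simply note that $\sqrt{d\log(4/\beta)} \le \gamma d + \log(4/\beta)/\gamma$ with $\gamma\to 0$ slowly. This lets the cross term be charged to $c\log(1/\beta)$ with $c$ allowed to absorb $1/\gamma$. The claim should be read as holding for some universal $c$ in the regimes the paper has in mind, and the $o(1)$ is also allowed to absorb the ratio $\log(4/\beta)/\log(1/\beta)\to 1$.
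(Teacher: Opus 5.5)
Your proof of the first display is correct. It follows the same skeleton as the paper: a covariance event of probability $1-\beta/2$ from \cref{thm:covariance-concentration}, a Gaussian tail bound for the noise, and a union bound. The middle step differs. The paper conditions on $X$ and treats $(XX^\top)^{-1}X\xi$ as $\mathcal N(0,\Sigma_X)$ with $\Sigma_X=(XX^\top)^{-1}$. It then applies the weighted bound $\tr\Sigma_X+2\sqrt{\tr(\Sigma_X^2)\log(2/\beta)}+2\normop{\Sigma_X}\log(2/\beta)$ and bounds the three spectral quantities separately. You instead factor out $\normop{(XX^\top)^{-1}}$ and use that $\xi^\top P\xi\sim\chi^2_d$ exactly and independently of $X$. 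The paper's intermediate bound is slightly sharper (traces rather than $d$ times operator norms). Yours needs only the unweighted Laurent--Massart inequality. After the eigenvalue bounds, both give the identical display.

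The \emph{consequence} step is where your proposal breaks, and no argument can rescue it as stated. Your choice $\gamma=\sqrt{\log(4/\beta)/d}$ is the equality case of AM--GM, so $\gamma d+\gamma^{-1}\log(4/\beta)=2\sqrt{d\log(4/\beta)}$ and nothing is gained. Letting $\gamma\to 0$ instead forces $c\gtrsim 1/\gamma\to\infty$, so $c$ is no longer universal.

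Here is a counterexample for any fixed $c$. Take $\log(1/\beta)=\kappa d$ with $\kappa=1/c^2$, and take $n=d^3$ so that $r_\beta\to 0$.
\begin{itemize}
\item Off an event of probability $e^{-\Omega(d^2)}$, we have $\|w_{\mathrm{OLS}}-w\|_2^2\ge \xi^\top P\xi/\lambda_{\max}(XX^\top)\ge (1-o(1))\chi^2_d/n$.
\item The target then requires $\chi^2_d\le d(1+t)$ with $t=1/c+o(1)$.
\item Since $t-\log(1+t)\le t^2/2$, we get $\Pr[\chi^2_d\ge d(1+t)]\ge e^{-dt^2/4-O(\log d)}\approx e^{-d/(4c^2)}$.
\item This is much larger than $e^{-d/c^2}=\beta$.
\end{itemize}
So the displayed consequence is false when $\log(1/\beta)\asymp d$ with a small ratio.

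What is true, and what you should state and prove, is the restricted form. If $\log(1/\beta)=o(d)$ or $\log(1/\beta)=\omega(d)$, then the cross term $2\sqrt{d\log(4/\beta)}$ is $o\bigl(d+\log(1/\beta)\bigr)$. Using $\log(4/\beta)\le 3\log(1/\beta)$ for $\beta<1/2$, this gives $\|w_{\mathrm{OLS}}-w\|_2\le(1+o(1))\sqrt{(d+6\log(1/\beta))/n}$. The paper's ``follows immediately'' glosses over the same issue. Its only use, in the proof of \cref{thm:lr-freq}, assumes $\log(1/\beta)=o(d)$, so the restricted statement is all that is needed.
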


\begin{proof}
Consider the random event that
\[
\mathcal{E}_{\mathrm{cov}}
:=
\left\{
\left\|
\frac{1}{n}XX^\top - \Id
\right\|_{\sf op}
\le r_\beta
\right\}.
\]
By Theorem C.1, after enlarging $C$ if necessary,
\(
\Pr(\mathcal{E}_{\mathrm{cov}}) \ge 1-\beta/2.
\)
On $\mathcal{E}_{\mathrm{cov}}$, in particular $XX^\top$ is invertible and
\(
(1-r_\beta)n\Id \preceq XX^\top \preceq (1+r_\beta)n\Id.
\)

Also,
\(
w_{\mathrm{OLS}} - w
=
(XX^\top)^{-1}X\xi.
\)
Conditional on $X$, this is a centered Gaussian in $\mathbb{R}^d$ with covariance
\(
\Sigma_X := (XX^\top)^{-1}.
\)
Therefore, by the standard Gaussian tail bound, conditional on $X$, with
probability at least $1-\beta/2$,
\[
\|w_{\mathrm{OLS}} - w\|_2^2
\le
\tr(\Sigma_X)
+
2\sqrt{\tr(\Sigma_X^2)\log\frac{2}{\beta}}
+
2\|\Sigma_X\|_{\sf op}\log\frac{2}{\beta}.
\]

On $\mathcal{E}_{\mathrm{cov}}$ we have
\[
\tr(\Sigma_X) \le \frac{d}{(1-r_\beta)n},
\qquad
\tr(\Sigma_X^2) \le \frac{d}{(1-r_\beta)^2n^2},
\qquad
\|\Sigma_X\|_{\sf op} \le \frac{1}{(1-r_\beta)n}.
\]
Substituting these bounds yields that on $\mathcal{E}_{\mathrm{cov}}$, with conditional
probability at least $1-\delta/2$,
\[
\|w_{\mathrm{OLS}} - w\|_2^2
\le
\frac{1}{(1-r_\beta)n}
\left(
d + 2\sqrt{d\log\frac{2}{\beta}} + 2\log\frac{2}{\beta}
\right).
\]
A union bound proves the first claim, and the second follows immediately from
$r_\beta=o(1)$.
\end{proof}
\begin{proof}[Proof of \Cref{thm:lr-freq}]
    Following similar proofs as \cref{thm:private-approximation-least-square} and \cref{lem:exponential-bayesian-error-regression}, with $\eta = 0$ and the radius bound $R$, whenever the assumptions of \Cref{thm:lr-freq} hold,
    with probability at least $1-\frac{\beta}{3}$ our inefficient estimator outputs $\hat w$
    such that
    \[
    \|\hat w - w_{\mathrm{OLS}}\|_2
    \le
    \widetilde O\left(
    \frac{d + \log(3/\beta)}{\epsilon n}
    \right)
    =
    \widetilde O\left(
    \frac{d + \log(1/\beta)}{\epsilon n}
    \right),
    \]
    and except with probability $\frac{\beta}{3}$ our efficient estimator outputs $\hat w$
    such that
    \[
    \|\hat w - w_{\mathrm{OLS}}\|_2
    \le
    \widetilde O\left(
    \frac{(d + \log(3/\beta))^{3/4}}{\epsilon^{1/2}n^{3/4}}
    +
    \frac{d + \log(3/\beta)}{\epsilon n}
    \right)
    =
    \widetilde O\left(
    \frac{(d + \log(1/\beta))^{3/4}}{\epsilon^{1/2}n^{3/4}}
    +
    \frac{d + \log(1/\beta)}{\epsilon n}
    \right).
    \]

    Finally, applying \Cref{lem:ols_concentration_frequentist} with
    $\delta = \beta/3$, except with probability $\beta/3$,
    \[
    \|w_{\mathrm{OLS}} - w\|_2^2
    \le
    \frac{1}{(1-r_\beta)n}
    \left(
    d + 2\sqrt{d\log\frac{12}{\beta}}
    + 2\log\frac{12}{\beta}
    \right),
    \]
    where
    \[
    r_\beta := C\sqrt{\frac{d+\log(12/\beta)}{n}} = o(1).
    \]
    In particular, if $\log(1/\beta)=o(d)$, then
    \[
    \|w_{\mathrm{OLS}} - w\|_2
    \le
    (1+o(1))\sqrt{\frac{d+\log(1/\beta)}{n}}.
    \]

    The theorem follows by a union bound over these three events and the triangle inequality
    between $w$, $w_{\mathrm{OLS}}$, and $\hat w$.
\end{proof}

%% file: content/appendix_LR.tex
\section{Deferred Proofs From \cref{sec:linear-regression}}
\label{app:regression}

\subsection{Concentration inequalities for feasibility}\label{sec:concentration}

Here we prove concentration inequalities that were used in the proof that the ground truth is feasible for the program $\cA$ in \cref{algo:posterior-mean-regression}.

\begin{theorem}\label{thm:covariance-concentration}
    Let $x_1,x_2,\ldots,x_n$ be i.i.d sampled from Gaussian distribution $N(0,\Sigma)$. 
    Then with probability at least $1-\beta$, we have 
    \begin{equation*}
        \Normop{\frac{1}{n}\sum_{i\in [n]} x_ix_i^\top-\Sigma}\leq \sqrt{\frac{\Tr(\Sigma)+\log(1/\beta)\normop{\Sigma}}{n}}\,.
    \end{equation*}
\end{theorem}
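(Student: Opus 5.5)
Since $\tfrac1n\sum_i x_ix_i^\top-\Sigma$ is determined by $\Sigma^{1/2}$ applied to standard Gaussians, I would write $x_i=\Sigma^{1/2}g_i$ with $g_i\sim\mathcal N(0,\Id)$ and study the process
\[
Z\;\defeq\;\sup_{v\in\mathbb S^{d-1}}\Bigl|\frac1n\sum_{i=1}^n\bigl(\langle g_i,\Sigma^{1/2}v\rangle^2-\|\Sigma^{1/2}v\|^2\bigr)\Bigr|,
\]
which equals the operator norm in question. I read the statement in the regime $n\gtrsim \Tr(\Sigma)/\normop{\Sigma}+\log(1/\beta)$ and up to an absolute constant; this is the regime where it is used, e.g.\ $n\ge d+\log(1/\beta)$ with $\Sigma=\Id$. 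Outside it, a term linear in $\tfrac{\Tr(\Sigma)+\log(1/\beta)\normop{\Sigma}}{n}$ also appears. I would prove the general two-term bound and then note that the square-root term dominates in this regime.

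\textbf{Step 1 (expectation).} Let $r\defeq\Tr(\Sigma)/\normop{\Sigma}$ be the effective rank. I would show $\E Z\lesssim\normop{\Sigma}\bigl(\sqrt{r/n}+r/n\bigr)$. A naive $\epsilon$-net over $\mathbb S^{d-1}$ would give $d$ in place of $r$, so this needs a dimension-free argument. The cleanest route is to invoke the Koltchinskii--Lounici bound for Gaussian sample covariance, which is proved by generic chaining and majorizing measures. A self-contained alternative is to decouple the quadratic process and apply Gaussian comparison (Chevet/Gordon type) to the bilinear form $\sup_{u,v}\frac1n\sum_i\langle g_i,\Sigma^{1/2}u\rangle\langle g'_i,\Sigma^{1/2}v\rangle$. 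Its Gaussian width is governed by $\E\|\Sigma^{1/2}g\|\le\sqrt{\Tr\Sigma}$ rather than by $\sqrt d$.

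\textbf{Step 2 (tail).} To pass from expectation to probability $1-\beta$, I would use a Talagrand/Bernstein-type deviation inequality for suprema of sub-exponential empirical processes. The weak variance of the process is $\sup_v\mathrm{Var}\langle g,\Sigma^{1/2}v\rangle^2\lesssim\normop{\Sigma}^2$. This gives
\[
Z\le C\,\E Z+C\normop{\Sigma}\Bigl(\sqrt{\tfrac{\log(1/\beta)}{n}}+\tfrac{\log(1/\beta)}{n}\Bigr)
\]
with probability at least $1-\beta$. (Alternatively, the tail version of Koltchinskii--Lounici can be cited directly.)

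\textbf{Step 3 (combine).} Adding the two bounds and using $\sqrt{a}+\sqrt{b}\le\sqrt{2(a+b)}$ gives
\[
Z\lesssim\sqrt{\tfrac{\Tr(\Sigma)\normop{\Sigma}+\normop{\Sigma}^2\log(1/\beta)}{n}}+\tfrac{\Tr(\Sigma)+\normop{\Sigma}\log(1/\beta)}{n},
\]
and in the stated regime the linear term is absorbed into the first. The form written in the statement, $\sqrt{(\Tr\Sigma+\log(1/\beta)\normop{\Sigma})/n}$, is what this yields after normalizing $\normop{\Sigma}\le1$, which covers the uses with $\Sigma=\Id$ and with the whitened design.

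\textbf{Main obstacle.} The main obstacle is Step 1: getting the dependence on $\Tr\Sigma$ rather than on $d$. I would first try to cite Koltchinskii--Lounici verbatim. If a self-contained proof is wanted, I would fall back on dyadic bucketing of the eigenvalues of $\Sigma$, as in \cref{app:aniso-bucketing-argument}, with a net argument inside each bucket. That costs only a logarithmic factor, which is harmless for every application in the paper.
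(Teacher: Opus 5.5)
The paper does not prove this statement. It is quoted as a standard Gaussian covariance concentration fact and is only ever applied with $\Sigma=I_d$ (e.g.\ in \cref{lem:feasibility-proof-isotropic-regression}, \cref{lem:Lipshitz-least-square} and \cref{lem:ols_concentration_frequentist}). So there is no argument to compare against, and your main line is a correct proof of what the statement is meant to say. That line is: the Koltchinskii--Lounici effective-rank bound $\normop{\Sigma}\bigl(\sqrt{r/n}+r/n\bigr)$ in expectation, a deviation term $\normop{\Sigma}\bigl(\sqrt{t/n}+t/n\bigr)$ with $t=\log(1/\beta)$, and absorption of the linear terms when $n\gtrsim r+\log(1/\beta)$. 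Your reinterpretation of the statement is necessary, for three reasons:
\begin{itemize}
\item An absolute constant is required. For $d=1$ and $\Sigma=1$ the CLT gives $\sqrt n\bigl(\frac1n\sum_i g_i^2-1\bigr)\to\mathcal N(0,2)$, so the displayed bound without a constant already fails for $\beta=1/10$ and large $n$ (the failure probability tends to about $0.2$).
\item The regime restriction is required. For $\Sigma=I_d$ and $d\gg n$ the error is of order $d/n\gg\sqrt{d/n}$.
\item The normalization is required. The right-hand side scales like $\normop{\Sigma}^{1/2}$ rather than $\normop{\Sigma}$, so the bound is false once $\normop{\Sigma}>1$.
\end{itemize}

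The one step I would not write as you did is Step 2. A generic Talagrand/Bernstein inequality for unbounded sub-exponential classes (e.g.\ Adamczak's) depends not only on the weak variance but also on the envelope $\bigl\|\max_i\|\Sigma^{1/2}g_i\|_2^2\bigr\|_{\psi_1}\approx\Tr(\Sigma)+\normop{\Sigma}\log n$. It would therefore produce a term of order $\normop{\Sigma}(r+\log n)\log(1/\beta)/n$ rather than $\normop{\Sigma}\log(1/\beta)/n$, and that term is not absorbed even when $n\gtrsim r+\log(1/\beta)$.

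The clean deviation bound is Gaussian-specific. The map $G\mapsto\sup_{v\in\mathbb S^{d-1}}\bigl|\|G\Sigma^{1/2}v\|_2-\sqrt n\,\|\Sigma^{1/2}v\|_2\bigr|$ is $\normop{\Sigma}^{1/2}$-Lipschitz in Frobenius norm. Combining Gaussian concentration, the matrix deviation inequality (whose expectation is $\lesssim\sqrt{\Tr\Sigma}$), and $a^2-b^2=(a-b)(a+b)$ gives Steps 1 and 2 together. Alternatively, cite the tail form of Koltchinskii--Lounici directly, as you allow.

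For the paper's actual uses ($\Sigma=I_d$, so $r=d$), neither the effective-rank machinery nor the bucketing fallback is needed. Take a $\tfrac14$-net of $\mathbb S^{d-1}$ of size $9^d$, apply Bernstein's inequality to $\frac1n\sum_i(\langle g_i,v\rangle^2-1)$, and take a union bound. This gives $\normop{\frac1n\sum_i g_ig_i^\top-I_d}\lesssim\sqrt{(d+\log(1/\beta))/n}+(d+\log(1/\beta))/n$ directly.
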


\begin{theorem}[Maximal subset of Gaussian coordinates]\label{thm:maximum-subset-sum-gaussian-correct}
Let $v = (v_1,\dots,v_d) \sim N(0, \Id_d)$, let $\eta \in (0,1)$, and let $\beta \in (0,1)$. 
Set $m := \eta d$. Then with probability at least $1-\beta$ (over the draw of $v$),
\begin{equation}\label{eq:max-subset-gaussian}
    \max_{\substack{S \subseteq [d]\\ |S| = m}} \;\sum_{i \in S} v_i^2 
    \;\le\;
    C \Bigl( \eta d \log \frac{e}{\eta} + \log \frac{1}{\beta} \Bigr),
\end{equation}
where $C>0$ is an absolute constant.
In words: simultaneously for every subset $S$ of coordinates of size $\eta d$, the sum of squares of those coordinates is at most the right-hand side.
\end{theorem}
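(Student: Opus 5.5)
The plan is a union bound over subsets combined with a chi-square (Laplace transform) tail bound. Fix a subset $S\subseteq[d]$ with $|S|=m=\eta d$. The sum $\sum_{i\in S} v_i^2$ is a $\chi^2_m$ random variable. By the Laurent--Massart inequality (or directly via the moment generating function $\E e^{\lambda v_i^2}=(1-2\lambda)^{-1/2}$ with $\lambda=1/4$), for every $t>0$,
\[
\Pr\Bigl[\sum_{i\in S} v_i^2 \ge m + 2\sqrt{mt}+2t\Bigr]\le e^{-t}\,.
\]
In the cruder form obtained from Markov's inequality at $\lambda=1/4$, this reads $\Pr[\sum_{i\in S}v_i^2\ge s]\le 2^{m/2}e^{-s/4}$.

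Next we take a union bound over all $\binom{d}{m}\le (e/\eta)^{\eta d}=\exp(\eta d\log(e/\eta))$ subsets of size $m$. We set $t=\eta d\log(e/\eta)+\log(1/\beta)$, so that the total failure probability is at most $\binom{d}{m}e^{-t}\le\beta$. On the complementary event, every subset of size $m$ satisfies
\[
\sum_{i\in S}v_i^2\le m+2\sqrt{mt}+2t\le 2m+3t\le C\Bigl(\eta d\log\frac{e}{\eta}+\log\frac1\beta\Bigr)\,,
\]
where the middle step uses AM--GM, and the last uses $m=\eta d\le\eta d\log(e/\eta)$, since $\log(e/\eta)\ge1$ for $\eta<1$.

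The only mildly delicate points are bookkeeping. First, $\eta d$ need not be an integer, so we either round $m$ down, which only helps, or assume integrality as the paper does elsewhere. Second, the binomial estimate $\binom{d}{m}\le(ed/m)^m$ must be matched against the $\log(e/\eta)$ factor, and it is exactly this entropy term that produces the $\eta d\log(e/\eta)$ in the bound. Neither step poses a real obstacle; the argument is a standard Gaussian order-statistic computation.
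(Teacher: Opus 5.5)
Your proposal is correct and follows the paper's proof essentially step for step. Both use the Laurent--Massart $\chi^2_m$ tail, the bound $\binom{d}{m}\le(e/\eta)^{\eta d}$, the same choice $t=\eta d\log(e/\eta)+\log(1/\beta)$, and AM--GM to collapse $m+2\sqrt{mt}+2t$ into $C(\eta d\log(e/\eta)+\log(1/\beta))$. Your remark that rounding $m$ down is harmless, because $m\log(ed/m)$ is nondecreasing for $m\le d$, is a small point the paper skips.
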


\begin{proof}
Fix any subset $S \subseteq [d]$ of size $m = \eta d$. Since $v \sim N(0,I_d)$, the random variable
\[
    X_S := \sum_{i \in S} v_i^2
\]
has the $\chi^2$ distribution with $m$ degrees of freedom, i.e.\ $X_S \sim \chi^2_m$.
We will use the standard Laurent--Massart tail bound for the chi-square distribution: if $X \sim \chi^2_m$, then for every $x>0$,
\begin{equation}\label{eq:laurent-massart}
    \Pr\bigl( X \ge m + 2\sqrt{mx} + 2x \bigr) \le e^{-x}.
\end{equation}

Apply \eqref{eq:laurent-massart} to $X_S$ for a parameter $x>0$ to be chosen later:
\begin{equation}\label{eq:for-fixed-S}
    \Pr\bigl( X_S \ge m + 2\sqrt{mx} + 2x \bigr) \le e^{-x}.
\end{equation}

We want the upper bound to hold \emph{simultaneously} for every subset $S$ of size $m$. 
The total number of such subsets is
\begin{equation}\label{eq:count-subsets}
    \binom{d}{m} \le \Bigl(\frac{e d}{m}\Bigr)^m = \Bigl(\frac{e}{\eta}\Bigr)^{\eta d}.
\end{equation}
By the union bound, combining \eqref{eq:for-fixed-S} and \eqref{eq:count-subsets}, we obtain
\[
    \Pr\Bigl( \exists S \subseteq [d], |S|=m : X_S \ge m + 2\sqrt{mx} + 2x \Bigr)
    \;\le\;
    \binom{d}{m} e^{-x}
    \;\le\;
    \Bigl(\frac{e}{\eta}\Bigr)^{\eta d} e^{-x}.
\]
Choose
\begin{equation}\label{eq:choice-of-x}
    x := \eta d \log \frac{e}{\eta} + \log \frac{1}{\beta}.
\end{equation}
Then
\[
    \Bigl(\frac{e}{\eta}\Bigr)^{\eta d} e^{-x}
    =
    \exp\bigl( \eta d \log(e/\eta) \bigr)\;
    \exp\bigl( - \eta d \log(e/\eta) - \log(1/\beta) \bigr)
    =
    \beta.
\]
Hence, with probability at least $1-\beta$,
\begin{equation}\label{eq:all-subsets-good}
    \forall S \subseteq [d], |S|=m:\quad
    \sum_{i \in S} v_i^2
    \;\le\;
    m + 2\sqrt{m x} + 2x,
\end{equation}
where $x$ is given by \eqref{eq:choice-of-x}. 

It remains to simplify the right-hand side.
Recall that $m = \eta d$ and
\[
    x = \eta d \log\frac{e}{\eta} + \log\frac{1}{\beta}.
\]
Then
\[
    m + 2x \;\le\; 3\eta d \log\frac{e}{\eta} + 2 \log\frac{1}{\beta},
\]
and
\[
    2\sqrt{mx}
    = 2 \sqrt{\eta d \bigl( \eta d \log(e/\eta) + \log(1/\beta) \bigr)}
    \;\le\;
    \eta d \log\frac{e}{\eta} + \log\frac{1}{\beta},
\]
after enlarging constants (this is a standard inequality of the form $2\sqrt{ab} \le a+b$).
Therefore, for a universal constant $C>0$,
\[
    m + 2\sqrt{mx} + 2x
    \;\le\;
    C \Bigl( \eta d \log \frac{e}{\eta} + \log \frac{1}{\beta} \Bigr).
\]
Plugging this back into \eqref{eq:all-subsets-good} gives exactly \eqref{eq:max-subset-gaussian}.
\end{proof}

\begin{theorem}[Order statistic of a Gaussian vector with sharp $\beta$-dependence]\label{thm:gaussian-order-stat-sharp}
Let $v = (v_1,\dots,v_d) \sim N(0, \Id_d)$, and let $\eta, \beta \in (0,1)$.
Order the squared coordinates in nonincreasing order:
\[
    v_{(1)}^2 \ge v_{(2)}^2 \ge \dots \ge v_{(d)}^2.
\]
Then, with probability at least $1 - \beta$,
\begin{equation}\label{eq:gaussian-order-sharp}
    v_{(\lceil \eta d \rceil)}^2 
    \;\le\; 
    2 \log \frac{e}{\eta}
    \;+\;
    \frac{2}{\eta d} \log \frac{1}{\beta}.
\end{equation}
Equivalently: with probability at least $1-\beta$, \emph{at most} $\eta d$ coordinates of $v$ have squared value exceeding the right-hand side of \eqref{eq:gaussian-order-sharp}.
\end{theorem}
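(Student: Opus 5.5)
The plan is to reduce the order-statistic bound to a union bound over subsets, as in the proof of \cref{thm:maximum-subset-sum-gaussian-correct}. Write $k \coloneqq \lceil \eta d\rceil$ and let $t$ denote the right-hand side of \eqref{eq:gaussian-order-sharp}. The event $v_{(k)}^2 > t$ occurs exactly when there is a set $S\subseteq[d]$ with $|S| = k$ such that $v_i^2 > t$ for every $i\in S$. For a fixed $S$, independence of the coordinates gives
\[
\Pr\bigl[\forall i\in S:\ v_i^2>t\bigr] = \Pr\bigl[|g|>\sqrt t\bigr]^{k},
\]
where $g\sim N(0,1)$. The standard tail bound $\Pr[|g|>s]\le e^{-s^2/2}$ for $s \ge 0$ then bounds this by $e^{-kt/2}$.

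Next I will take a union bound over the $\binom{d}{k}\le (ed/k)^k \le (e/\eta)^k$ choices of $S$, which gives
\[
\Pr\bigl[v_{(k)}^2>t\bigr]\le \exp\Bigl(k\log\tfrac{e}{\eta} - \tfrac{kt}{2}\Bigr).
\]
Substituting $t = 2\log(e/\eta) + \frac{2}{\eta d}\log(1/\beta)$, the exponent becomes
\[
k\log\tfrac{e}{\eta} - k\log\tfrac{e}{\eta} - \tfrac{k}{\eta d}\log\tfrac1\beta \le -\log\tfrac1\beta,
\]
where the last step uses $k \ge \eta d$. Hence the failure probability is at most $\beta$.

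The equivalent formulation is a restatement: $v_{(k)}^2\le t$ means fewer than $k$ coordinates have $v_i^2>t$, so at most $k-1 \le \eta d$ of them do.

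I expect no real obstacle here. The only points needing care are the rounding $\lceil\eta d\rceil$ versus $\eta d$ in the binomial estimate $(ed/k)^k\le(e/\eta)^k$, which holds because $k\ge \eta d$, and the use of the clean Gaussian tail constant $1$ in $e^{-s^2/2}$. Together these make the constant $2$ in \eqref{eq:gaussian-order-sharp} come out exactly, with no slack required.
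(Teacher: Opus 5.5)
Your proposal is correct and follows the paper's proof essentially step for step. Both arguments bound the all-coordinates-exceed-threshold event for a fixed set of size $\lceil \eta d\rceil$ by $e^{-T\lceil \eta d\rceil/2}$ using independence, union bound over the $\binom{d}{\lceil \eta d\rceil}\le (e/\eta)^{\lceil \eta d\rceil}$ subsets, and use $\lceil \eta d\rceil \ge \eta d$ to reach failure probability $\beta$. Your handling of strict versus non-strict inequalities, with the event $v_{(k)}^2>t$ as the exact complement, is slightly cleaner than the paper's.
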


\begin{proof}
Fix $\eta \in (0,1)$ and set $m := \lceil \eta d \rceil$.
Let $T>0$ be a threshold to be chosen.
We want to bound the probability that there exist $m$ coordinates of $v$ all having squared value at least $T$.

For a single coordinate $v_i \sim N(0,1)$ we have
\[
    \Pr(v_i^2 \ge T) = \Pr(|v_i| \ge \sqrt{T}) \le e^{-T/2}.
\]
Now fix a subset $S \subseteq [d]$ with $|S| = m$. By independence,
\[
    \Pr\bigl( v_i^2 \ge T \text{ for all } i \in S \bigr)
    \;\le\;
    \bigl( e^{-T/2} \bigr)^{m}
    = e^{-(T/2) m}.
\]

There are $\binom{d}{m}$ such subsets $S$. By the union bound,
\begin{equation}\label{eq:union}
    \Pr\Bigl( \exists S \subseteq [d], |S|=m : v_i^2 \ge T \ \forall i \in S \Bigr)
    \;\le\;
    \binom{d}{m} \, e^{-(T/2) m}.
\end{equation}
We will upper bound the binomial coefficient by the standard entropy-like bound
\[
    \binom{d}{m} \le \Bigl( \frac{e d}{m} \Bigr)^m.
\]
Since $m = \lceil \eta d \rceil \ge \eta d$, we have
\[
    \binom{d}{m} \le \Bigl( \frac{e d}{\eta d} \Bigr)^m = \Bigl( \frac{e}{\eta} \Bigr)^m.
\]
Plugging into \eqref{eq:union} gives
\[
    \Pr\Bigl( \exists S: |S|=m, \ v_i^2 \ge T \ \forall i \in S \Bigr)
    \;\le\;
    \Bigl( \frac{e}{\eta} \Bigr)^m \, e^{-(T/2) m}
    = \exp\Bigl( m \bigl( \log \tfrac{e}{\eta} - \tfrac{T}{2} \bigr) \Bigr).
\]

We want the right-hand side to be at most $\beta$. Since $m \ge \eta d$, it suffices to require
\[
    \exp\Bigl( \eta d \bigl( \log \tfrac{e}{\eta} - \tfrac{T}{2} \bigr) \Bigr) \le \beta.
\]
Taking logs,
\[
    \eta d \bigl( \log \tfrac{e}{\eta} - \tfrac{T}{2} \bigr) \le \log \beta
    \quad\Longleftrightarrow\quad
    - \tfrac{T}{2} \le \frac{1}{\eta d} \log \beta - \log \tfrac{e}{\eta}.
\]
Multiply by $-1$ (and flip the inequality):
\[
    \frac{T}{2} \ge \log \tfrac{e}{\eta} + \frac{1}{\eta d} \log \frac{1}{\beta}.
\]
So it is enough to choose
\[
    T := 2 \log \frac{e}{\eta} + \frac{2}{\eta d} \log \frac{1}{\beta}.
\]
With this choice of $T$, the probability in \eqref{eq:union} is at most $\beta$.

But the event in \eqref{eq:union} is exactly: 
there exist $m$ coordinates all $\ge T$. 
Then the complement is given by at most $m-1 \le \eta d$ coordinates exceed $T$. 
Equivalently, in that complement event we have
\[
    v_{(m)}^2 \le T,
\]
i.e.
\[
    v_{(\lceil \eta d \rceil)}^2 \le 2 \log \frac{e}{\eta} + \frac{2}{\eta d} \log \frac{1}{\beta},
\]
with probability at least $1-\beta$, which is exactly \eqref{eq:gaussian-order-sharp}.
\end{proof}

\noindent Finally, we use the largest sparse singular values of the Gaussian submatrices.
\begin{lemma}[Sparse spectral norm of Gaussian submatrices]\label{lem:sparse-spectral-norm}
Let $X\in\R^{d\times n}$ have i.i.d.\ columns $x_1,\dots,x_n\sim\mathcal N(0,I_d)$.
For an index set $S\subseteq[n]$, let $X_S\in\R^{d\times |S|}$ denote the submatrix formed by the
columns indexed by $S$.  
For $k\in[n]$, recall the definition of $k$-sparse spectral norm(\cref{def:sparse-spectral-norm})
\[
\|X\|_{\mathrm{sp}}(k)\;\defeq\;\max_{\substack{S\subseteq[n]\\ |S|\le k}}\ \|X_S\|_{\op}\,.
\]
Then there exists a universal constant $C>0$ such that for every $k\in[n]$ and $\beta\in(0,1)$,
with probability at least $1-\beta$,
\[
\|X\|_{\mathrm{sp}}(k)
\;\le\;
C\left(
\sqrt d\;+\;\sqrt{k\log\frac{en}{k}}\;+\;\sqrt{\log\frac1\beta}
\right).
\]
Equivalently, on the same event, for all $S\subseteq[n]$ with $|S|\le k$,
\[
\|X_S\|_{\op}^2
\;\le\;
C^2\left(
d\;+\;k\log\frac{en}{k}\;+\;\log\frac1\beta
\right).
\]
\end{lemma}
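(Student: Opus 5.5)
Since $\|X_S\|_{\op}$ is monotone in $S$, it suffices to treat $|S|=k$ exactly. I will fix $S$, prove a tail bound for $\|X_S\|_{\op}$, and then take a union bound over the $\binom{n}{k}\le (en/k)^k$ subsets.

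For a fixed $S$ with $|S|=k$, the matrix $X_S\in\R^{d\times k}$ has i.i.d.\ $\mathcal N(0,1)$ entries. I will use the standard Gaussian matrix bound (Gordon/Slepian for the expectation, and Gaussian concentration of the $1$-Lipschitz function $G\mapsto\|G\|_{\op}$ in Frobenius norm):
\[
\Pr\bigl[\|X_S\|_{\op}\ge \sqrt d+\sqrt k+t\bigr]\le e^{-t^2/2}.
\]
Alternatively, a $1/4$-net argument over $\mathbb S^{d-1}\times\mathbb S^{k-1}$ gives the same bound up to constants, with net size $9^{d+k}$ and a scalar Gaussian tail for each pair.

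Next I take
\[
t=\sqrt{2\bigl(k\log(en/k)+\log(1/\beta)\bigr)}.
\]
The union bound then gives total failure probability at most
\[
(en/k)^k\exp\bigl(-k\log(en/k)-\log(1/\beta)\bigr)=\beta.
\]
On the complementary event, every $S$ of size $k$ satisfies
\[
\|X_S\|_{\op}\le \sqrt d+\sqrt k+\sqrt{2k\log(en/k)}+\sqrt{2\log(1/\beta)}.
\]
Since $\log(en/k)\ge 1$, the $\sqrt k$ term is absorbed into $\sqrt{k\log(en/k)}$. This gives the claimed bound with $C=3$.

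The squared form follows from $(a+b+c)^2\le 3(a^2+b^2+c^2)$. If the statement is meant to hold simultaneously for all $k$, it is enough to apply the fixed-$k$ bound; otherwise one can union bound over $k\in[n]$ with $\beta/n$, losing only a $\log n$ term that can be absorbed.

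The only real obstacle is justifying the Lipschitz concentration for a fixed $S$ cleanly. I will cite the standard result (e.g.\ Vershynin, Theorem 7.3.3) rather than reprove it.
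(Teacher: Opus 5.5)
Your proposal is correct and follows essentially the same route as the paper: a Gaussian operator-norm tail bound for each fixed $X_S$, a union bound over the at most $(en/k)^k$ subsets, and absorption of $\sqrt{k}$ into $\sqrt{k\log(en/k)}$. The one difference is that you reduce to $|S|=k$ via monotonicity of $\|X_S\|_{\mathrm{op}}$ under column deletion, whereas the paper union-bounds separately over each size $s\le k$ with failure probability $\beta/k$ and then absorbs the extra $\log(2k)$. Two minor slips do not affect correctness. First, your closing remark on simultaneity in $k$ has its two cases swapped: the lemma is stated for a fixed $k$, so the fixed-$k$ bound suffices, and it is uniformity over all $k$ that needs the extra union bound, where $\log n\le k\log(en/k)$ is absorbed. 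Second, the squared form actually gives the constant $3C^2$, so $C$ has to be enlarged.
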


\begin{proof}
Fix $s\in\{1,\dots,k\}$ and a set $S\subseteq[n]$ with $|S|=s$.  By a standard operator-norm tail
bound for a $d\times s$ Gaussian matrix, for all $t\ge 0$,
\begin{equation}\label{eq:gaussian-op-tail}
\Pr\left[\|X_S\|_{\op}\ge \sqrt d+\sqrt s+t\right]\le 2e^{-t^2/2}.
\end{equation}
There are at most $\binom{n}{s}\le (en/s)^s$ such subsets.  Set
\[
t_s \;\defeq\; \sqrt{2\left(s\log\frac{en}{s}+\log\frac{2k}{\beta}\right)}.
\]
By a union bound and \eqref{eq:gaussian-op-tail},
\[
\Pr\left[\max_{|S|=s}\|X_S\|_{\op}\ge \sqrt d+\sqrt s+t_s\right]
\le
\binom{n}{s}\cdot 2e^{-t_s^2/2}
\le
\frac{\beta}{k}.
\]
Taking a union bound over $s=1,\dots,k$ yields that with probability at least $1-\beta$, for all
$S$ with $|S|\le k$,
\[
\|X_S\|_{\op}
\le
\sqrt d+\sqrt{|S|}+t_{|S|}
\le
\sqrt d+\sqrt k+\sqrt{2k\log\frac{en}{k}}+\sqrt{2\log\frac{2k}{\beta}}.
\]
Absorbing $\sqrt k$ and $\sqrt{\log(2k/\beta)}$ into the displayed expression (by enlarging the
universal constant $C$) gives the stated bound.
\end{proof}

\subsection{Proof of \cref{lem:exponential-bayesian-error-regression}: Inefficient algorithm for robust and private Bayesian regression}
\label{sec:inefficientreg}

In this section, we first give an inefficient robust algorithm for Bayesian regression.
For simplicity, we base our algorithm on our previous robust and efficient linear regression estimator.
As in empirical mean estimation, we can exploit better resilience bound for approximating the posterior mean estimator(or equivalently ridge regression estimator) robustly.
Finally we use the black-box robustness to privacy reduction\cite{hopkins2023robustness, asi2023robustness} for developing a differentially private algorithm with optimal error guarantees.

\subsubsection{Inefficient algorithm for robust regression}
\label{app:exponential-regression}

\begin{theorem}\label{thm:robust-inefficient-linear-regression}
 Let $\Sigma=\sigma^2 \Id_d$ for known parameter $\sigma > 0$.
 Let $\beta \in [0,1]$.
 Then there is a computationally inefficient algorithm that, given $\eta$-corrupted samples $(\Xinput,\yinput)$ generated as in \cref{def:bayesian-regression-under-gaussian-prior}, outputs an estimator $\hat{w}$ such that with probability at least $1-\beta$,
 \begin{equation*}
     \norm{\hat{w}-\E[w^*\mid \Xnull,\ynull]}_2 \leq O\biggl(\eta\log(1/\eta)\biggr) \,,
 \end{equation*}
 where $(\Xnull,\ynull)$ are the uncorrupted samples, under the condition that $n\geq (d+\log(1/\beta))/\eta$..
 \end{theorem}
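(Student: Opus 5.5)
We follow the template of \cref{thm:robust-empirical-mean-stat}: a brute-force search over completions, with resilience enforced by quantifying over all small subsets, followed by a stability argument comparing any two feasible completions. We treat the critical regime $\sigma=\Theta(1/\sqrt n)$ first, where the target is $w_{\rm simp}=(n+1/\sigma^2)^{-1}\Xnull\ynull$; \cref{lem:Lipshitz-least-square} shows this is within $O((d+\log(1/\beta))/n)\le O(\eta)$ of the posterior mean. The weak-prior regime is handled the same way after recentering, described below.

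\textbf{The search.} The estimator looks for $(X,y)$ agreeing with $(\Xinput,\yinput)$ on all but $\eta n$ indices. We impose $\normop{\frac1n XX^\top-\Id}\lesssim\sqrt{(d+\log(1/\beta))/n}$. We also impose the \emph{sparse} resilience constraint that, for every $S$ with $|S|\le 2\eta n$,
\[
\|X_S\|_{\op}^2\lesssim \eta n\log(1/\eta)
\qquad\text{and}\qquad
\sum_{i\in S}y_i^2\lesssim \eta n\log(1/\eta).
\]
The estimator outputs $(n+1/\sigma^2)^{-1}Xy$. Inefficiency lets us use these sparse-norm constraints directly, in place of the SoS-friendly short-flat relaxation.

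\textbf{Feasibility of the ground truth.} We need $(\Xnull,\ynull)$ to satisfy all constraints with probability $1-\beta$. The operator norm bound is \cref{thm:covariance-concentration}. The $X_S$ bound is \cref{lem:sparse-spectral-norm} with $k=2\eta n$, using $n\ge(d+\log(1/\beta))/\eta$ so that $d+\log(1/\beta)\le\eta n$. The $y$ bound is \cref{thm:maximum-subset-sum-gaussian-correct}, since $\ynull$ has i.i.d.\ entries of variance $1+\sigma^2\|\wnull\|^2=O(1)$ in this regime.

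\textbf{Identifiability.} Take two feasible completions $(X,y)$ and $(X',y')$, and let $S$ be the set where they differ, so $|S|\le2\eta n$. Then
\[
Xy-X'y'=X_Sy_S-X'_Sy'_S,
\]
and each term is bounded by $\|X_S\|_{\op}\|y_S\|_2\lesssim\eta n\log(1/\eta)$. Dividing by $n+1/\sigma^2\asymp n$ gives $\|w-w'\|\lesssim\eta\log(1/\eta)$. Since the ground truth is one feasible completion, the output is within this distance of $w_{\rm simp}$.

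The point is that the term of order $\sqrt{\eta d/n}$, which was dominant in the mean case, is absent here. It is absorbed because the sparse operator norm of $X$, rather than a union bound on Gaussian sums, controls the cross term. The condition $n\ge d/\eta$ is exactly what makes $\|X_S\|_{\op}^2\lesssim\eta n\log(1/\eta)$ hold.

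\textbf{Weak-prior regime (expected main obstacle).} When $\sigma\gg1/\sqrt n$, the labels are no longer $O(1)$, so the $y$ constraint fails for the ground truth. We would instead impose the subset-sum constraint on residuals $y-X^\top w$ for a variable $w$, and constrain $w_{\LS}=\frac1nXy$ close to $w$, as in \cref{sec:stage2}. A first identifiability pass gives $\|w-w'\|\lesssim\sqrt{\eta\log(1/\eta)}+\sqrt{d/n}$.

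We would then recenter at any feasible $w$ and rerun the argument. This works because $\|X_S^\top(w-w')\|^2\le\|X_S\|_{\op}^2\|w-w'\|^2$, so for $|S|\le2\eta n$ the recentered labels still have subset sums $O(\eta n\log(1/\eta))$, by the same reasoning as \cref{lem:program-satisfiability-fine-regression}.

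Finally we must relate the result to the exact posterior mean via $(XX^\top+\sigma^{-2})^{-1}$, using the operator-norm constraint. The resulting error is $\sqrt{d/n}\cdot(\sqrt{\eta\log(1/\eta)}+\sqrt{d/n})$, which is $O(\eta\log(1/\eta))$ when $n\ge d/\eta$. This bookkeeping is the delicate step.
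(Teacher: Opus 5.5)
Your critical-regime argument is correct. It is essentially the paper's proof run with center $\tilde w=0$. The paper takes a rough estimate $\tilde w$ from the polynomial-time algorithm, brute-forces a resilient replacement of the products $x_i(y_i-\langle x_i,\tilde w\rangle)$, and pays $\|\tfrac1n X^\circ X^{\circ\top}-I\|_{\mathrm{op}}\,\|w_{\rm post}-\tilde w\|_2$ through a resolvent identity. You instead constrain $\|X_S\|_{\mathrm{op}}$ and $\|y_S\|_2$ separately, which works equally well and covers $\sigma\lesssim 1/\sqrt n$ too. (Conditionally on $w^*$ the label variance is $1+\|w^*\|^2$, not $1+\sigma^2\|w^*\|^2$, but it is $O(1)$ either way.) Your weak-prior endgame also matches the paper and is routine: recenter at $w_1$, redo identifiability, and pay $\sqrt{d/n}\,(\sqrt{\eta\log(1/\eta)}+\sqrt{d/n})$.

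The genuine gap is the first pass that is supposed to produce $w_1$ when $\sigma\gg1/\sqrt n$. The constraint $\|w-\tfrac1n Xy\|_2^2\lesssim(d+\log(1/\beta))/n$ is violated by the ground truth as soon as $\|w^*\|\gg1$ (i.e.\ $\sigma\gg 1/\sqrt d$, which the theorem allows), since
\[
\tfrac1n X^\circ y^\circ-w^*=\bigl(\tfrac1n X^\circ X^{\circ\top}-I\bigr)w^*+\tfrac1n X^\circ\bigl(y^\circ-X^{\circ\top}w^*\bigr)
\]
has norm of order $\sqrt{d/n}\,\|w^*\|$. Identifiability fails for the same reason. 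With the subset-sum constraint only on $r=y-X^\top w$, you have $y_S=X_S^\top w+r_S$. So $\tfrac1n(X_Sy_S-X'_Sy'_S)$ contains $\tfrac1n(X_SX_S^\top-X'_SX_S'^\top)w'$, which you can only bound by $O(\eta\log(1/\eta))\|w'\|$. You flagged the final bookkeeping as the delicate step, but this is where the argument actually breaks.

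What is missing is a preliminary center. The Stage~2 analysis you cite works with recentered labels $y-X^\top w_{\rm init}$, where the Stage~1 estimate satisfies $\|w_{\rm init}-w^*\|=O(1)$. The paper's proof of this theorem simply imports $\tilde w$, with $\|\tilde w-w^*\|\lesssim\sqrt{d/n}+\eta$, from the polynomial-time algorithm.

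If you want to stay fully brute-force, replace the least-squares constraint by the normal-equation constraint $\|\tfrac1n X(y-X^\top w)\|_2\lesssim\sqrt{(d+\log(1/\beta))/n}$. Keep your sparse constraints on $r=y-X^\top w$ and on $X$. Ground-truth feasibility then involves only $X^\circ$ and the noise, independently of $\|w^*\|$. For two feasible solutions differing on $S$,
\[
\tfrac1n X_{S^c}X_{S^c}^\top(w'-w)=\tfrac1n\bigl(Xr-X'r'\bigr)-\tfrac1n\bigl(X_Sr_S-X'_Sr'_S\bigr).
\]
The global operator-norm constraint minus the sparse one gives $\tfrac1n X_{S^c}X_{S^c}^\top\succeq\tfrac12 I$. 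This yields $\|w-w'\|_2\lesssim\sqrt{(d+\log(1/\beta))/n}+\eta\log(1/\eta)$, which is more than enough for your recentering step.
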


Throughout, $(\Xnull,\ynull)$ denotes the uncorrupted sample, and $(\Xinput,\yinput)$ denotes the
$\eta$-corrupted sample.  Let $I^\star\subseteq[n]$ be the (unknown) set of uncorrupted indices, so
$|I^\star|\ge (1-\eta)n$.

Assume $\Sigma=\sigma^2 \Id_d$ and define
\[
\lambda \defeq \frac{1}{n\sigma^2},\qquad
A^\star \defeq \frac{1}{n}\Xnull\Xnull^\top,\qquad
b^\star \defeq \frac{1}{n}\Xnull\ynull.
\]
Then the posterior mean is
\[
w_{\rm post}\defeq \E[w\mid \Xnull,\ynull] = (A^\star+\lambda \Id_d)^{-1} b^\star.
\]
For any $u\in\R^d$ define the (clean) residual vector
\[
r^\star(u)\defeq \ynull-\Xnull^\top u\in\R^n,
\qquad
g^\star(u)\defeq \frac{1}{n}\Xnull\, r^\star(u)\in\R^d.
\]
Note $g^\star(u)=b^\star-A^\star u$, and hence the following identity holds for every $u$:
\begin{equation}\label{eq:posterior-identity}
w_{\rm post}
=
u + (A^\star+\lambda \Id_d)^{-1}\bigl(g^\star(u)-\lambda u\bigr).
\end{equation}
Define failure probabilities $\beta_1=\beta_2=\beta_3=\beta/3$.

For a sequence $\{z_i\}_{i=1}^n\subset\R^d$ and $\eta\in(0,1)$ define the \emph{resilience}
\[
\mathrm{Res}_\eta(z)\defeq \sup_{\substack{S\subseteq[n]\\ |S|\le \eta n}}
\Bigl\|\frac{1}{n}\sum_{i\in S} z_i\Bigr\|_2.
\]

We define our algorithm as \cref{algo:two-stage-robust-posterior-mean}.
\begin{algorithmbox}[Two-stage robust posterior mean estimation]\label{algo:two-stage-robust-posterior-mean}
    \mbox{}\\
    \textbf{Input:} $\eta$-corrupted design matrix $\Xinput\in\R^{d\times n}$ (columns $x_i$),
    responses $\yinput\in\R^n$, corruption fraction $\eta\in(0,1/2)$, failure probability $\beta\in(0,1)$,
    prior variance $\sigma>0$ (so $\Sigma=\sigma^2 I_d$). \\
    \textbf{Output:} An estimator $\hat w\in\R^d$ of the posterior mean $w_{\rm post}=\E[w\mid \Xnull,\ynull]$.
\begin{enumerate}
    \item Assume $n \gtrsim (d+\log(1/\beta))/\eta$.
    Set $\beta_1=\beta_2=\beta_3=\beta/3$ and $\lambda \gets (n\sigma^2)^{-1}$.
    
    \item (\textbf{Stage 1: rough robust regression})
    Run the polynomial time robust regression algorithm from \cref{thm:bayesian-error-regression} to obtain an estimator $\tilde w$. 
    
    \item (\textbf{Stage 2: residualization})
    Compute residuals $r \gets \yinput - \Xinput^\top \tilde w$ and vectors
    $v \gets \Xinput r$.
    
    \item Set
    \[
        \delta_0 \gets C_1\Big(\sqrt{\tfrac{d+\log(1/\beta_1)}{n}}+\eta\Big),
    \]
    and
    \[
        \tau \gets 2\eta\,\delta_0
        + C_2(1+\delta_0)\Big(
        \sqrt{\eta\log\tfrac1\eta}\sqrt{\tfrac{d+\log(1/\beta_2)}{n}}
        + \eta\log\tfrac1\eta
        \Big).
    \]
    
    \item Run \cref{algo:completion-mean} on $(\{v_i\}_{i=1}^n,\eta,\tau)$ to obtain $\hat g$.
    
    \item Return
    \[
        \hat w \gets \tilde w + \frac{1}{1+\lambda}\big(\hat g-\lambda \tilde w\big).
    \]
\end{enumerate}
\end{algorithmbox}

\begin{algorithmbox}[Robust mean estimation by replacement]\label{algo:completion-mean}
    \mbox{}\\
    \textbf{Input:} Vectors $v_1,\dots,v_n\in\R^d$, corruption fraction $\eta\in(0,1/2)$,
    resilience bound $\tau>0$. \\
    \textbf{Output:} A robust estimate $\hat g$ of the clean mean $\frac1n\sum_{i=1}^n v_i^\star$.
\begin{enumerate}
    \item Find vectors $v'_1,\dots,v'_n\in\R^d$ and an index set $J\subseteq[n]$ with $|J|\le \eta n$
    such that:
    \begin{enumerate}
        \item $v'_i = v_i$ for all $i\notin J$;
        \item
        \[
            \sup_{\substack{S\subseteq[n]\\ |S|\le \eta n}}
            \left\|\frac{1}{n}\sum_{i\in S} v'_i\right\|_2 \le \tau .
        \]
    \end{enumerate}
    
    \item Output
    \[
        \hat g \gets \frac{1}{n}\sum_{i=1}^n v'_i .
    \]
\end{enumerate}
\end{algorithmbox}

\begin{lemma}[Certificate $\Rightarrow$ closeness to the clean mean]\label{lem:completion-close}
Let $\{z_i\}$ be $\eta$-corrupted from an unknown clean sequence $\{z_i^\star\}$, i.e.\ they differ on
at most $\eta n$ indices.  Suppose there exists $\tau$ such that
$\mathrm{Res}_{2\eta}(z^\star)\le\tau$ and
\cref{algo:completion-mean} outputs a replacement $z'$ satisfying
$\mathrm{Res}_{2\eta}(z')\le\tau$.  Then
\[
\Bigl\|\frac{1}{n}\sum_{i=1}^n z'_i - \frac{1}{n}\sum_{i=1}^n z_i^\star\Bigr\|_2 \le 2\tau.
\]
\end{lemma}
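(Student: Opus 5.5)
The plan is a direct resilience argument, mirroring the proof of \cref{thm:robust-empirical-mean-stat}. Let $J^\star$ be the set of indices where the observed $z_i$ differ from $z_i^\star$, with $|J^\star|\le \eta n$. Let $J$ be the set the algorithm replaced, so $z'_i=z_i$ for $i\notin J$ and $|J|\le\eta n$. Put $S\defeq J\cup J^\star$. Then $|S|\le 2\eta n$, and for every $i\notin S$ we have $z'_i=z_i=z_i^\star$. The difference of the two empirical means is therefore supported on $S$:
\[
\frac1n\sum_{i=1}^n z'_i-\frac1n\sum_{i=1}^n z_i^\star=\frac1n\sum_{i\in S}z'_i-\frac1n\sum_{i\in S}z_i^\star .
\]

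Next we apply the triangle inequality and use each resilience bound once. Since $|S|\le 2\eta n$, the definition of $\mathrm{Res}_{2\eta}$ gives $\|\frac1n\sum_{i\in S}z'_i\|_2\le \mathrm{Res}_{2\eta}(z')\le\tau$. Likewise $\|\frac1n\sum_{i\in S}z^\star_i\|_2\le\mathrm{Res}_{2\eta}(z^\star)\le\tau$. Summing the two yields the claimed $2\tau$.

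I expect no real obstacle here. Unlike the empirical-mean proof, there is no recentring around $\bar y$, so no rearrangement step is needed. The one point that needs care is the resilience level. \cref{algo:completion-mean} certifies resilience only at level $\eta$, but the symmetric difference $S$ can have size up to $2\eta n$. The lemma's hypothesis is therefore stated at level $2\eta$, and we use exactly that hypothesis. Alternatively, one can split $S$ into $J\setminus J^\star$ and $J^\star$, each of size at most $\eta n$, and use $\eta$-level resilience on each piece, which gives $4\tau$; but the stated $2\eta$ assumption yields $2\tau$ directly.
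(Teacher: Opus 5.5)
Your proof is correct and is essentially the paper's argument. The paper takes $\Delta=\{i: z'_i\neq z_i^\star\}$, which is contained in your $S=J\cup J^\star$, notes $|\Delta|\le 2\eta n$, and bounds the two restricted sums by $\mathrm{Res}_{2\eta}(z')$ and $\mathrm{Res}_{2\eta}(z^\star)$ using the triangle inequality, exactly as you do.
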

\begin{proof}
Let $\Delta\defeq\{i: z'_i\neq z_i^\star\}$.  Since $z$ differs from $z^\star$ on at most $\eta n$
indices and $z'$ differs from $z$ on at most $\eta n$ indices, we have $|\Delta|\le 2\eta n$.
Then
\[
\frac{1}{n}\sum_{i=1}^n (z'_i-z_i^\star)=\frac{1}{n}\sum_{i\in\Delta}(z'_i-z_i^\star),
\]
so by triangle inequality,
\[
\Bigl\|\frac{1}{n}\sum_{i=1}^n (z'_i-z_i^\star)\Bigr\|_2
\le
\Bigl\|\frac{1}{n}\sum_{i\in\Delta} z'_i\Bigr\|_2
+
\Bigl\|\frac{1}{n}\sum_{i\in\Delta} z_i^\star\Bigr\|_2
\le
\mathrm{Res}_{2\eta}(z')+\mathrm{Res}_{2\eta}(z^\star)\le 2\tau.
\]
\end{proof}

Now we provide the resilience bound.
Define the (clean) residuals at $\tilde w$:
\[
r_i^\star \defeq \ynull(i) - \langle \Xnull(\cdot,i),\tilde w\rangle
= \langle \Xnull(\cdot,i), w-\tilde w\rangle + \xi_i,
\qquad \xi_i\sim\mathcal{N}(0,1),
\]
and the clean vectors
\[
v_i^\star \defeq \Xnull(\cdot,i)\, r_i^\star \in \R^d.
\]
Write $\Delta\defeq w-\tilde w$ and note $\E[v_i^\star\mid \Delta]=\Delta$.

Define the clean residuals at $\tilde w$ by
\[
r_i^\star \defeq \ynull(i) - \langle \Xnull(\cdot,i),\tilde w\rangle
= \langle \Xnull(\cdot,i), w-\tilde w\rangle + \xi_i,
\qquad \xi_i\sim\mathcal{N}(0,1),
\]
and the clean vectors
\[
v_i^\star \defeq \Xnull(\cdot,i)\, r_i^\star \in \R^d.
\]
Write $\Delta\defeq w-\tilde w$.
Note that for every $i$,
\[
v_i^\star
=
\Xnull(\cdot,i)\Xnull(\cdot,i)^\top \Delta
+
\Xnull(\cdot,i)\,\xi_i.
\]

\begin{lemma}[Small-set resilience bound for $\{v_i^\star\}$]\label{lem:resilience-vstar}
Assume $\|\Delta\|_2\le \delta$, and let $k\defeq \lceil 2\eta n\rceil$.
Under the assumption that
\(
n\gtrsim \frac{d+\log(1/\beta_2)}{\eta},
\)
there exists an absolute constant $C>0$ such that with probability at least $1-\beta_2$ over the clean draw $(\Xnull,\ynull)$,

\[
\mathrm{Res}_{2\eta}(v^\star)
\le
C(1+\delta)\left(
\sqrt{\eta\log\frac{e}{\eta}}\sqrt{\frac{d+\log(1/\beta_2)}{n}}
+
\eta\log\frac{e}{\eta}
\right).
\]
\end{lemma}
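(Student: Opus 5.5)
Fix $S\subseteq[n]$ with $|S|\le k$ and use the decomposition $v_i^\star = x_ix_i^\top\Delta + x_i\xi_i$, writing $x_i=\Xnull(\cdot,i)$. Then
\[
\frac1n\sum_{i\in S} v_i^\star \;=\; \frac1n X_S X_S^\top \Delta \;+\; \frac1n X_S\xi_S ,
\]
and we bound the two pieces separately, uniformly over all $S$. We treat $\Delta$ as a fixed vector with $\|\Delta\|_2\le\delta$ and bound the first term by $\|\Delta\|_2\cdot\frac1n\|X_S\|_{\op}^2$. This is uniform over $\Delta$, so it sidesteps the dependence of $\tilde w$ on the data.

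\emph{First term.} By \cref{lem:sparse-spectral-norm} with sparsity $k=\lceil 2\eta n\rceil$, on an event of probability $1-\beta_2/2$ we have $\|X_S\|_{\op}^2\lesssim d+\eta n\log(e/\eta)+\log(1/\beta_2)$ for all $|S|\le k$. Hence
\[
\frac1n\Norm{X_SX_S^\top\Delta}_2 \lesssim \delta\Paren{\frac{d+\log(1/\beta_2)}{n}+\eta\log\frac e\eta}.
\]
Under $n\gtrsim (d+\log(1/\beta_2))/\eta$ we have $\frac{d+\log(1/\beta_2)}{n}\le \sqrt{\eta}\sqrt{\frac{d+\log(1/\beta_2)}{n}}\le \sqrt{\eta\log(e/\eta)}\sqrt{\frac{d+\log(1/\beta_2)}{n}}$. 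So this term is at most $\delta$ times the claimed expression.

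\emph{Second term.} Write $\frac1n X_S\xi_S=\frac1n X_S(\xi\odot\mathbf 1_S)$. Then
\[
\frac1n\|X_S\xi_S\|_2\le \frac1n\|X_S\|_{\op}\,\|\xi_S\|_2 .
\]
By \cref{thm:maximum-subset-sum-gaussian-correct} applied to $\xi\in\R^n$ with subset size $2\eta n$, we get $\|\xi_S\|_2^2\lesssim \eta n\log(e/\eta)+\log(1/\beta_2)$ for all $S$ simultaneously. Combining with the sparse spectral bound gives
\[
\frac1n\sqrt{\Paren{d+\eta n\log\tfrac e\eta+\log\tfrac1{\beta_2}}\Paren{\eta n\log\tfrac e\eta+\log\tfrac1{\beta_2}}}.
\]
Using $\log(1/\beta_2)\le \eta n$ (from the sample size assumption) and $\sqrt{a+b}\le\sqrt a+\sqrt b$, this is
\[
\lesssim \sqrt{\eta\log\tfrac e\eta}\sqrt{\tfrac{d+\log(1/\beta_2)}{n}}+\eta\log\tfrac e\eta .
\]
This is exactly the claimed bound with the factor $1$ in $(1+\delta)$.

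\emph{Conclusion.} A union bound over the two events, each of probability $1-\beta_2/2$, and adding the two estimates yields the lemma.

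\emph{Main obstacle.} The main subtlety is that $\tilde w$, and hence $\Delta$, is computed from the same data, possibly even the corrupted data. The plan handles this by using only bounds that hold uniformly over $\Delta$: the operator-norm bound is deterministic given the event, and the noise term does not involve $\Delta$ at all. A sharper Cauchy–Schwarz bound on the noise term via $\sup_u\sum_{i\in S}\langle x_i,u\rangle\xi_i$ is unnecessary, since the crude product bound already matches the target rate.
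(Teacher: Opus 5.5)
Your proposal is correct and follows the paper's proof essentially step for step. You use the same split $\frac1n X_SX_S^\top\Delta+\frac1n X_S\xi_S$, the same sparse-spectral-norm and top-$k$ chi-square bounds at confidence $\beta_2/2$ each, and the same absorption of lower-order terms via $n\gtrsim (d+\log(1/\beta_2))/\eta$. The only deviation is that you absorb $\delta(d+\log(1/\beta_2))/n$ into the cross term, whereas the paper absorbs it into $\delta\eta\log(e/\eta)$; both are valid. Your remark that the bound holds uniformly in $\Delta$, so that the data-dependence of $\tilde w$ does no harm, is also correct; the paper relies on this only implicitly.
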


\begin{proof}
Let $x_i\defeq \Xnull(\cdot,i)$, and for $S\subseteq[n]$ let $\Xnull_S$ be the submatrix formed by the
columns indexed by $S$. Also let $\xi_S\in\R^{|S|}$ denote the restriction of $(\xi_1,\dots,\xi_n)$ to $S$.

For every $S\subseteq[n]$ with $|S|\le k$, using
\[
v_i^\star = x_i x_i^\top \Delta + x_i\xi_i,
\]
we have
\[
\frac1n\sum_{i\in S} v_i^\star
=
\frac1n\,\Xnull_S \Xnull_S^\top \Delta
+
\frac1n\,\Xnull_S \xi_S.
\]
Therefore,
\[
\Bigl\|\frac1n\sum_{i\in S} v_i^\star\Bigr\|_2
\le
\frac1n \|\Xnull_S\|_{\op}^2\,\|\Delta\|_2
+
\frac1n \|\Xnull_S\|_{\op}\,\|\xi_S\|_2.
\]
Taking the supremum over all $|S|\le k$ yields
\begin{equation}\label{eq:resilience-decomp}
\mathrm{Res}_{2\eta}(v^\star)
\le
\frac{\delta}{n}\,\|\Xnull\|_{\mathrm{sp}}(k)^2
+
\frac{1}{n}\,\|\Xnull\|_{\mathrm{sp}}(k)\,
\max_{|S|\le k}\|\xi_S\|_2.
\end{equation}

We now bound the two factors on the right-hand side.

First, by Lemma~\ref{lem:sparse-spectral-norm}, with probability at least $1-\beta_2/2$,
\[
\|\Xnull\|_{\mathrm{sp}}(k)
\le
C_1\left(
\sqrt d + \sqrt{k\log\frac{en}{k}} + \sqrt{\log\frac{2}{\beta_2}}
\right).
\]
Hence on this event,
\begin{equation}\label{eq:sp2-bound}
\frac{\delta}{n}\,\|\Xnull\|_{\mathrm{sp}}(k)^2
\le
C\delta\left(
\frac{d+\log(1/\beta_2)}{n}
+
\frac{k}{n}\log\frac{en}{k}
\right).
\end{equation}

Second, since $(\xi_1,\dots,\xi_n)\sim N(0,I_n)$ and the coordinates are nonnegative after squaring,
\[
\max_{|S|\le k}\|\xi_S\|_2^2
=
\max_{|S|=k}\sum_{i\in S}\xi_i^2.
\]
Applying Theorem~\ref{thm:maximum-subset-sum-gaussian-correct} in dimension $n$ with subset size $k$,
we get that with probability at least $1-\beta_2/2$,
\begin{equation}\label{eq:xi-topk}
\max_{|S|\le k}\|\xi_S\|_2^2
\le
C_2\left(
k\log\frac{en}{k}
+
\log\frac{2}{\beta_2}
\right).
\end{equation}

On the intersection of the events \eqref{eq:sp2-bound} and \eqref{eq:xi-topk}, which has probability at least $1-\beta_2$, the second term in \eqref{eq:resilience-decomp} is bounded by
\begin{align*}
\frac{1}{n}\,\|\Xnull\|_{\mathrm{sp}}(k)\,\max_{|S|\le k}\|\xi_S\|_2
&\le
\frac{C}{n}
\left(
\sqrt d + \sqrt{k\log\frac{en}{k}} + \sqrt{\log\frac{1}{\beta_2}}
\right)
\left(
\sqrt{k\log\frac{en}{k}} + \sqrt{\log\frac{1}{\beta_2}}
\right) \\
&\le
C\left(
\sqrt{\frac{d+\log(1/\beta_2)}{n}}\,
\sqrt{\frac{k\log(en/k)+\log(1/\beta_2)}{n}}
+
\frac{k\log(en/k)+\log(1/\beta_2)}{n}
\right).
\end{align*}
Since $k=\lceil 2\eta n\rceil$, we have $k/n\asymp \eta$ and
\[
\log\frac{en}{k}\asymp \log\frac{e}{\eta}.
\]
Substituting these into the two bounds above and combining with \eqref{eq:resilience-decomp} proves
\[
\mathrm{Res}_{2\eta}(v^\star)
\le
C\delta\left(
\frac{d+\log(1/\beta_2)}{n}
+
\eta\log\frac{e}{\eta}
\right)
+
C\left(
\sqrt{\eta\log\frac{e}{\eta}}\sqrt{\frac{d+\log(1/\beta_2)}{n}}
+
\eta\log\frac{e}{\eta}
+
\frac{\log(1/\beta_2)}{n}
\right).
\]
Finally, under the assumption that
\[
n\gtrsim \frac{d+\log(1/\beta_2)}{\eta},
\]
the terms $\delta(d+\log(1/\beta_2))/n$ and $\log(1/\beta_2)/n$ are absorbed into
\[
C(1+\delta)\,\eta\log\frac{e}{\eta},
\]
which gives the final error bound.
\end{proof}
\begin{proof}[Proof of \cref{thm:robust-inefficient-linear-regression}]
Conditioned on $(\Xnull,\ynull)$, the posterior is Gaussian with covariance
\[
\mathrm{Cov}(w^*\mid \Xnull,\ynull)=(\Xnull\Xnull^\top+\sigma^{-2}\Id_d)^{-1}
= \frac{1}{n}(A^\star+\lambda \Id_d)^{-1}.
\]
On the event $\lambda_{\min}(A^\star)\ge 1/2$, we have
$(A^\star+\lambda \Id_d)^{-1}\preceq 2\Id_d$, hence
$\mathrm{Cov}(w\mid \Xnull,\ynull)\preceq \frac{2}{n}\Id_d$.
Therefore, with probability at least $1-\beta_3$ (possibly after adjusting constants),
\begin{equation}\label{eq:w-minus-wpost}
\|\wnull-w_{\rm post}\|_2 \le C_5\sqrt{\frac{d+\log(1/\beta_3)}{n}}.
\end{equation}

Apply Lemma~\ref{lem:resilience-vstar} with $\delta\defeq \|\Delta\|_2$ (given the rough estimator we have $\delta \le C_1(\sqrt{(d+\log(1/\beta_1))/n}+\eta)$).
Let $\tau$ be the RHS of Lemma~\ref{lem:resilience-vstar}.  The observed vectors
$v_i=x_i r_i$ are $\eta$-corrupted from $v_i^\star$, hence the completion-by-certificate procedure
(Definition~\ref{algo:completion-mean}) produces $\hat g$ satisfying
\begin{equation}\label{eq:g-est}
\|\hat g-g^\star(\tilde w)\|_2 \le 2\tau
\end{equation}
by Lemma~\ref{lem:completion-close}, with probability at least $1-\beta_2$.

Now we relate $\hat w$ to $w_{\rm post}$ and control the error from $A^\star-\Id_d$ .
Using \eqref{eq:posterior-identity} with $u=\tilde w$,
\[
w_{\rm post}=\tilde w + (A^\star+\lambda\Id_d)^{-1}\bigl(g^\star(\tilde w)-\lambda\tilde w\bigr).
\]
Recall $\hat w=\tilde w+\frac{1}{1+\lambda}(\hat g-\lambda\tilde w)$.
Subtracting yields
\begin{align*}
\hat w-w_{\rm post}
&=
\frac{1}{1+\lambda}(\hat g-g^\star(\tilde w))
+
\Big(\frac{1}{1+\lambda}\Id_d-(A^\star+\lambda\Id_d)^{-1}\Big)\bigl(g^\star(\tilde w)-\lambda\tilde w\bigr).
\end{align*}
Using the resolvent identity
\[
(A^\star+\lambda\Id_d)^{-1}-\frac{1}{1+\lambda}\Id_d
=
(A^\star+\lambda\Id_d)^{-1}(\Id_d-A^\star)\frac{1}{1+\lambda}\Id_d
\]
and $\frac{1}{1+\lambda}\le 1$, we get
\begin{equation}\label{eq:basic-decomp}
\|\hat w-w_{\rm post}\|_2
\le
\|\hat g-g^\star(\tilde w)\|_2
+
\|A^\star-\Id_d\|_{\rm op}\cdot \|w_{\rm post}-\tilde w\|_2,
\end{equation}
where we used that
$(A^\star+\lambda\Id_d)^{-1}(g^\star(\tilde w)-\lambda\tilde w)=w_{\rm post}-\tilde w$
by \eqref{eq:posterior-identity}.

Next,
\[
\|w_{\rm post}-\tilde w\|_2 \le \|w_{\rm post}-w\|_2+\|w-\tilde w\|_2.
\]
On the intersection of events of rough estimation error guarantee and \eqref{eq:w-minus-wpost},
\begin{equation}\label{eq:wpost-tildew}
\|w_{\rm post}-\tilde w\|_2
\le
C_6\Big(\sqrt{\frac{d+\log(1/\beta)}{n}}+\eta\Big).
\end{equation}
By the concentration of Wishart matrix, we have
\begin{equation}\label{eq:Astar-I}
\|A^\star-\Id_d\|_{\rm op}\le C_4\sqrt{\frac{d+\log(1/\beta)}{n}}.
\end{equation}
Combining \eqref{eq:wpost-tildew} and \eqref{eq:Astar-I},
\[
\|A^\star-\Id_d\|_{\rm op}\cdot \|w_{\rm post}-\tilde w\|_2
\le
C_7\left(\frac{d+\log(1/\beta)}{n}+\eta\sqrt{\frac{d+\log(1/\beta)}{n}}\right).
\]
Under the assumed regime $n\gtrsim (d+\log(1/\beta))/\eta$, we have
$\frac{d+\log(1/\beta)}{n}\lesssim \eta$ and $\eta\sqrt{\frac{d+\log(1/\beta)}{n}}\lesssim \eta$.
Hence,
\begin{equation}\label{eq:A-approx-cost}
\|A^\star-\Id_d\|_{\rm op}\cdot \|w_{\rm post}-\tilde w\|_2 \lesssim \eta.
\end{equation}

Finally,
plugging \eqref{eq:g-est} and \eqref{eq:A-approx-cost} into \eqref{eq:basic-decomp},
\[
\|\hat w-w_{\rm post}\|_2
\le
2\tau + O(\eta).
\]
Finally, using $\delta=\|w^*-\tilde w\|_2\lesssim \sqrt{\frac{d+\log(1/\beta)}{n}}+\eta$ inside
Lemma~\ref{lem:resilience-vstar} and absorbing the lower-order $O(\eta)$ term into
$\eta\log(1/\eta)$ (for $\eta\le 1/2$), we obtain
\[
\|\hat w-w_{\rm post}\|_2
\le
O\!\left(
\sqrt{\eta\log\frac1\eta}\sqrt{\frac{d+\log(1/\beta)}{n}}
+
\eta\log\frac1\eta
\right)
\]
with probability at least $1-\beta$ after a union bound over the events used above.
\end{proof}

\subsubsection{Robustness-to-privacy reduction}\label{sec:privacy-reduction-linear-regression}
Now using the reduction from privacy to robustness for exponential time algorithms~\cite{hopkins2023robustness}, we obtain \cref{lem:exponential-bayesian-error-regression}, restated here for convenience:

\restatetheorem{lem:exponential-bayesian-error-regression}

\begin{proof}
By \cref{thm:robust-inefficient-linear-regression}, we have an inefficient algorithm which achieves error rate $\eta\log(1/\eta)$ when $n\geq \tilde{\Omega}(d/\eta)$.
Using the blackbox privacy-robustness reduction (\cite[Lemma 2.1]{hopkins2023robustness}), we have a private algorithm which achieves error $\alpha$ when 
\begin{equation*}
    n\geq \tilde{\Omega}\Paren{\frac{d+\log(1/\beta)}{\alpha\epsilon}+\frac{d\log(\sigma\sqrt{d})}{\epsilon}+\frac{(d+\log(1/\beta))\eta}{\alpha^2}}\,.
\end{equation*}
under the condition that $\alpha\geq \eta \log(1/\eta)$. 
Now taking $\alpha=\eta \log(1/\eta)+\frac{d+\log(1/\beta)}{n\e} \log\Paren{\frac{d+\log(1/\beta)}{n\e}}$, we have the claim of \cref{lem:exponential-bayesian-error-regression}.
\end{proof}

\subsection{Proof of \cref{lem:robust-regression-constant-error}: Robust regression with constant error rate}\label{sec:robust-regression-constant-error}

Here we provide the details of the rough estimation procedure for robust regression to constant error in the weak prior regime. The result is stated below for convenience:

\restatelemma{lem:robust-regression-constant-error}

\noindent Recall the definition of the program $\cA = \cA_{\rm corr} \cup \cAres'$, where $\cA_{\rm corr}$ is defined in \cref{eq:corruption_constraints} and $\cAres'$ is defined in \cref{eq:resilience_constraints_rough}. Formally, our algorithm is described in \cref{algo:constant-error-regression}.
\begin{algorithmbox}[Robust algorithm for regression with constant error]\label{algo:constant-error-regression}
    \mbox{}\\
    \textbf{Input:} $\eta$ corrupted matrix $X_{\text{input}}\in \R^{n\times m},y_{\text{input}}\in \R^n$  \\
    \textbf{Output:} a polynomial time estimator $\hat{w}\in \R^d$
\begin{enumerate}
  \item Solve the degree-$O(1)$ SoS relaxation of the constraints $\cA(X,y,\xi,w;\Xinput,\yinput)=\cA_{\mathrm{corr}}(X,y,\xi)\cup \cAres'(X,y,w)$, where $t = O(1)$,
    obtaining a pseudo-expectation operator.
  \item Round the pseudo-distribution using $\tE[w]$.
\end{enumerate}
\end{algorithmbox}

\begin{proof}[Proof of \cref{lem:robust-regression-constant-error}]
    First, as proved by \cref{thm:gaussian-order-stat-sharp}, the constraints in the sum-of-squares program are satisfied with probability at least $1-\beta$.
    Next, we give the sum-of-squares identifiability proof. 
    Let $r\in \Set{0,1}^n$ be the set of corruptions. Let $v = 1 - (1 - \xi)\odot (1 - r)$ denote the indicator for the points which were corrupted and/or were marked by $\xi$ as corrupted.
    Then we have 
    \begin{equation*}
        (y-X^\top w) \odot (1-v) = (\ynull-\Xnull^\top w) \odot (1 - v)\,,
    \end{equation*}
    so
    \begin{equation*}
       ( \ynull-\Xnull^\top w)-(y-X^\top w)=
       (\ynull-\Xnull^\top w)\odot v - (y-X^\top w)\odot v\,.
    \end{equation*}
    We can bound the second term on the right-hand side using the short-flat decomposition of $y - X^\top w$:
    \begin{equation*}
     \norm{(y-X^\top w)\odot v}^2\leq \norm{z_1}^2+2\eta n\norm{z_2}_\infty^2\leq \eta n \log(1/\eta)+\log(1/\beta)\,.
    \end{equation*}
    On the other hand, we have 
    \begin{equation*}
        \ynull-\Xnull^\top w=\ynull-\Xnull^\top \wnull+\Xnull^\top(\wnull-w)
    \end{equation*}
    As the ground truth is feasible, $\ynull - \Xnull^\top w^*$ also admits a short-flat decomposition, so 
    \begin{equation*}
        \norm{(\ynull-\Xnull^\top \wnull)\odot v}^2 \leq \eta n\log(1/\eta)+\log(1/\beta)\,.
    \end{equation*}

    Therefore it remains to bound
    $\norm{\paren{\Xnull^\top (w-\wnull)}\odot v}^2$.
    We have the sum-of-squares upper bound
    \begin{equation*}
        \norm{\paren{\Xnull^\top (w-\wnull)}\odot v}^2=\sum_{i\in [n]} v_i \iprod{\Xnull(\cdot,i),w-\wnull}^2\leq \Norm{w-\wnull}^2\,,
    \end{equation*}
    where the second step follows by the selector lemma (\cref{lem:sos-subset-sum}).
    By triangle inequality, we then have
    \begin{equation*}
        \Norm{\ynull-\Xnull^\top w}^2\leq 2\eta n \log(1/\eta)\norm{w-\wnull}^2+2n\,.
    \end{equation*}
     We then have 
    \begin{equation*}
    \Norm{\Xnull^\top w-\Xnull^\top \wnull}^2\leq 2\eta n\log(1/\eta)\norm{w-\wnull}^2+4n \,,
    \end{equation*}
    Since we have $\Normop{\frac{1}{n}\Xnull\Xnull^\top-\Id}\leq \sqrt{\frac{d+\log(1/\beta)}{n}}$, the pseudo-distribution satisfies
    \begin{equation*}
        \Norm{w-\wnull}^2\leq O(1)\,.
    \end{equation*}
    By standard rounding, we obtain an estimator $\tilde{\E}[w]$ which achieves constant error rate.
\end{proof}

\subsection{Privatizing robust regression algorithms under weak prior}
\label{sec:privatize}

\subsubsection{Privatizing the rough estimator from Stage 1}
In this part, we privatize our algorithms for \cref{algo:constant-error-regression}.
We prove the following theorem. 
\begin{theorem}\label{thm:private-regression-constant-error}
There is a polynomial-time $\varepsilon$-DP algorithm that, given an $\eta$-corrupted sample $(X_{\mathrm{in}},y_{\mathrm{in}})$, outputs $\widehat w$ with
\[
\E\bigl[\|\widehat w-\wnull\|_2^2\bigr]\ \le\ O(1)
\]
as long as
\(
n \ \ge\ \widetilde{\Omega}\!\left(
\frac{d\,\log(\sigma\sqrt d)}{\varepsilon}
\right).
\)
Moreover, via concentration of $\wpost$ around the true Bayesian posterior mean $\E[\wnull\mid \Xnull,\ynull]$, the same bound holds (up to lower-order terms) for $\E\|\widehat w-\E[\wnull\mid \Xnull,\ynull]\|_2^2$.
\end{theorem}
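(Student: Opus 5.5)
We privatize \cref{algo:constant-error-regression} with the efficient robustness-to-privacy reduction, \cref{thm:efficient-reduction-meta-pure}, following the template of \cref{sec:robust_to_private_LR}. The score function is analogous to \cref{def:regression-score-1}. A candidate $\widetilde w$ in the ball $\ball{d}{2R+n\tau+\alpha}$ with $R=\Theta(\sigma\sqrt d)$ gets score equal to the least $T$ such that there is a degree-$O(1)$ linear functional $L$ satisfying three conditions: it $\tau$-approximately satisfies $\cA_{\rm corr}\cup\cAres'$ with the budget $\|\xi\|_1\le T$, it has $\|L[w]-\widetilde w\|\le\alpha_{\rm c}$ for a fixed constant accuracy $\alpha_{\rm c}$, and it has $\|L[w]\|\le 2R+\tau T$.

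Bounded sensitivity, quasi-convexity and efficient computability hold for the same reasons as in \cite{hopkins2023robustness}. The constraints are linear in the pseudo-moments, so a convex combination of witnesses is a witness; this is the argument of \cref{lem:quasi-convexity-regression-1}. Well-definedness, meaning every point has score at most $n$, follows as in \cref{lem:regression-satisfiability-parameter}: with all points marked corrupted we can take $X_i$ so that $\frac1n XX^\top=\Id$, put $w=\widetilde w$ (or its projection), and choose $y=X^\top w$, so the residual is $0$ and the short-flat decomposition is trivial.

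For utility we sandwich the low-score sets as in \cref{lem:vol}. For completeness, on the event that the ground truth is feasible (which holds with probability $1-\beta$ by \cref{thm:gaussian-order-stat-sharp}, \cref{thm:maximum-subset-sum-gaussian-correct} and \cref{thm:covariance-concentration}), the point mass on the clean data certifies every $\widetilde w$ within $\alpha_{\rm c}$ of $\wnull$ with score at most $\eta n$. For soundness, at any $\eta'\le\eta_\star$ with $\eta_\star$ a small constant, the identifiability argument in the proof of \cref{lem:robust-regression-constant-error} gives $\|L[w]-\wnull\|=O(1)$. That argument is a degree-$O(1)$ SoS proof, so it transfers to $\tau$-approximate functionals with error $\poly(n)\tau$, which is negligible for $\tau=2^{-\poly(n)}$. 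So every point with score at most $\eta_\star n$ lies in an $O(1)$-ball around $\wnull$.

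Both radii are $\Theta(1)$, so the local volume ratio is only $\exp(O(d))$. The global ratio is $(R+n\tau)^d/\Theta(1)^d$, giving $\log(V_{1}/V_{\eta})\lesssim d\log(\sigma\sqrt d)$. Plugging these into \cref{thm:efficient-reduction-meta-pure} with $\eta$ a constant, the requirement becomes $n\gtrsim (d\log(\sigma\sqrt d)+\log(1/\beta))/\varepsilon$. In that case the output has score at most $2\eta n$, hence lies within $O(1)$ of $\wnull$, with probability $1-\beta$.

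To convert this into the stated bound on $\E\|\widehat w-\wnull\|^2$, note that on the failure event the output still lies in the ball of radius $O(R+n\tau)$. Choosing $\beta\le 1/(\sigma^2 d\,\poly(n))$ makes the failure contribution $O(1)$, at only a logarithmic cost that the $\widetilde\Omega$ absorbs. The posterior-mean version follows from $\|\wnull-\E[\wnull\mid\Xnull,\ynull]\|\lesssim\sqrt{(d+\log(1/\beta))/n}$, which is \eqref{eq:w-minus-wpost}.

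The main obstacle is soundness when $\wnull$ is not bounded a priori: the identifiability proof must not use any bound on $\|w\|$. It relies only on the constraint $\norm{z_1+z_2}^2\le 2n$ and the selector lemma, so I would check that this step does go through uniformly. I would also check that \cref{thm:efficient-reduction-meta-pure} needs the ball radius $R$ to contain $\wnull$ only with high probability under the prior, which gives $R=\Theta(\sigma\sqrt{d+\log(1/\beta)})$.
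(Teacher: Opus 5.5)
Your proposal follows the paper's proof. Like the paper, you privatize the Stage-1 program of \cref{algo:constant-error-regression} through \cref{thm:efficient-reduction-meta-pure} with the certifiable-regressor score of \cref{def:cert2}, check quasi-convexity and well-definedness, take soundness from the Stage-1 identifiability argument (as the paper does via \cref{lem:low-score-volume-regression-constant}), and find that the global volume term $d\log(\sigma\sqrt d)/\varepsilon$ dominates. Your version is slightly more careful: your well-definedness witness ($\frac1n XX^\top=\Id$, $y=X^\top w$, zero residual) actually satisfies the covariance constraint, which the paper's choice $X=0$ does not, and you explicitly convert the high-probability guarantee into the stated bound on $\E\|\widehat w-\wnull\|_2^2$, a step the paper omits.
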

\begin{definition}[(\(\alpha,\tau,T\))-certifiable  regressor ]\label{def:cert2}
Fix $R>0$ and $\alpha,\tau>0$. For an input $(X_{\mathrm{in}},y_{\mathrm{in}})$, a vector $\widetilde w\in\R^d$
is \emph{$(\alpha,\tau,T)$-certifiable} if there exists a degree-$O(1)$ linear functional $L$ (a pseudoexpectation) over indeterminates $(X,y,\xi,w)$ such that:
\begin{enumerate}[leftmargin=2.1em]
\item \textbf{Closeness/boundedness:}
$\norm{L[w]-\widetilde w}_2\le \alpha$ and $\norm{L[w]}_2\le 2R+\tau T$.
\item \textbf{Approximate feasibility (degree-6):} $L$ $\tau$-approximately satisfies the union of constraint families
\[
\mathcal A \;=\; \underbrace{\mathcal A_{\mathrm{corr}}}_{\text{corruption}}\ \cup\ 
\underbrace{\mathcal A_{\mathrm{res}}}_{\text{design/residual stability}}
\]
concretely:
\begin{itemize}[leftmargin=1.5em]
\item $\mathcal A_{\mathrm{corr}}$: $\xi\odot\xi=\xi$, $\|\xi\|_1\le \eta n$ and $(1-\xi(i))X(i,\cdot)=(1-\xi(i))X_{\mathrm{in}}(i,\cdot)$, $(1-\xi(i))y(i)=(1-\xi(i))y_{\mathrm{in}}(i)$ for all $i$.
\item $\mathcal A_{\mathrm{res}}$: degree-6 SoS surrogates for
$\big\|\frac{1}{n}XX^\top-I\big\|_{\op}\lesssim \sqrt{(d+\log(1/\beta))/n}$ and a decomposition $y-X^\top w=z_1+z_2$ with
$\|z_1\|_2^2\lesssim \eta n\log(1/\eta)+\log(1/\beta)$ and $\|z_2\|_\infty\lesssim \log(1/\eta)$.
\end{itemize}
\end{enumerate}
We choose $R=\Theta(\sigma\sqrt d)$.
\end{definition}

\begin{lemma}[Low-score sets localize near $w$]\label{lem:low-score-volume-regression-constant}
Let $\eta_\star=R$.
Let
\[
\alpha(\eta)^2\ \lesssim\ \eta^2\log\!\tfrac{1}{\eta}\ +\ \eta\cdot\frac{d+\log(1/\beta)}{n}\,.
\]
Then the following hold simultaneously. 
\begin{enumerate}[leftmargin=2.1em]
\item \textbf{Completeness (inner ball).} If $\norm{\widetilde w-\wpost}_2\le \alpha(\eta)$ then $\mathcal{S}(\widetilde w;\cdot)\le \eta n$.
\item \textbf{Soundness (outer ball).} If $\mathcal{S}(\widetilde w;\cdot)\le \eta_\star n$ then $\norm{\widetilde w-\wpost}_2\lesssim \alpha(\eta_\star)$.
\end{enumerate}
\end{lemma}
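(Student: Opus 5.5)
The proof will mirror \cref{lem:vol} and \cref{lem:volume-ratio-bound}: completeness comes from a point-mass witness, and soundness from the identifiability argument behind \cref{lem:robust-regression-constant-error}, used together with the closeness constraint in \cref{def:cert2}. Throughout I work on the event of probability $1-\beta$ on which the uncorrupted sample satisfies $\cA_{\mathrm{corr}}\cup\cAres'$. These are the covariance bound of \cref{thm:covariance-concentration} and the short-flat decomposition of $\ynull-\Xnull^\top\wnull$ from \cref{thm:gaussian-order-stat-sharp} and \cref{thm:maximum-subset-sum-gaussian-correct}. Since the score function here certifies $L[w]$ rather than a posterior variable, I read $\wpost$ in the statement as the center to which the certified regressor converges, namely $\wnull$ up to lower-order terms. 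Accordingly I will compare to $\wnull$, and move to $\E[\wnull\mid\Xnull,\ynull]$ afterwards using the concentration bound \eqref{eq:w-minus-wpost}, i.e.\ an $O(\sqrt{(d+\log(1/\beta))/n})$ correction.

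\textbf{Completeness.} Let $L$ be the point mass on the uncorrupted assignment: $X=\Xnull$, $y=\ynull$, $w=\wnull$, $\xi=r$ (the true corruption indicator), and $z_1,z_2$ the top-$\eta n$ / remaining split of the residual. This $L$ satisfies every constraint exactly, so it $\tau$-approximately satisfies them for every $\tau$. The number of points it marks is $\|r\|_1\le\eta n$, so $T=\eta n$ is admissible. Closeness requires $\|L[w]-\widetilde w\|=\|\wnull-\widetilde w\|\le\alpha(\eta)$, which is exactly the hypothesis once $\wpost$ is identified with $\wnull$. For boundedness, $\|\wnull\|\le 2R$ holds with probability $1-\beta$ because $R=\Theta(\sigma\sqrt d)$ and $\wnull\sim\mathcal N(0,\sigma^2\Id)$. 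Hence $\mathcal S(\widetilde w)\le\eta n$.

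\textbf{Soundness.} Suppose $\mathcal S(\widetilde w)\le\eta_\star n$, and let $L$ be the witnessing functional, which approximately satisfies the program with corruption budget $\eta_\star n$. Following the proof of \cref{lem:robust-regression-constant-error}, let $v=1-(1-\xi)\odot(1-r)$. The split
\[
(\ynull-\Xnull^\top w)-(y-X^\top w)=(\ynull-\Xnull^\top w)\odot v-(y-X^\top w)\odot v
\]
is bounded as follows:
\begin{itemize}
\item The first term is controlled by the short-flat decomposition of $\ynull-\Xnull^\top\wnull$, together with the selector lemma \cref{lem:sos-subset-sum} and the sparse spectral norm bound \cref{lem:sparse-spectral-norm} applied to $\Xnull^\top(w-\wnull)$.
\item The second term is controlled by the program's own decomposition $y-X^\top w=z_1+z_2$.
\end{itemize}
Combining these with the spectral constraint $\tfrac1n\Xnull\Xnull^\top\approx\Id$ gives a degree-$O(1)$ SoS proof that $\|w-\wnull\|^2\lesssim\alpha(\eta_\star)^2$. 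Here $\alpha(\eta_\star)$ is at least a constant, which is why this stage only yields constant error. Taking pseudo-expectations and applying Cauchy--Schwarz gives $\|L[w]-\wnull\|\lesssim\alpha(\eta_\star)$. The closeness constraint adds at most $\alpha\le\alpha(\eta_\star)$, so $\|\widetilde w-\wnull\|\lesssim\alpha(\eta_\star)$ by the triangle inequality.

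\textbf{Main obstacle.} The delicate step is making soundness hold for the approximate functional, with the budget $\eta_\star$ possibly large, as the statement's unusual choice $\eta_\star=R$ suggests. When $\eta_\star$ exceeds a small constant, the identifiability argument degenerates. In that regime I would instead use the boundedness constraint $\|L[w]\|\le 2R+\tau T$ directly: it confines every low-score point to a ball of radius $O(R+n\tau)$ around $\wnull$, and this is all the global volume term needs. For $\eta_\star$ below a small constant I would use the SoS argument above. The $\tau$-approximate satisfaction errors I would absorb as in \cite{hopkins2023robustness}, by taking $\tau\le 2^{-\poly(n)}$ so that they are dominated by $\alpha(\eta)$.
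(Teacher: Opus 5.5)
The paper gives no separate proof of this lemma. The proof that follows it is for \cref{thm:private-regression-constant-error}, and it only uses the crude bound $d\log((R+n\tau)/\alpha(\eta))$. The intended argument is the two-line sketch of \cref{lem:vol}, and your completeness step is that argument.

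Your soundness step, however, does not reach the radius the statement asserts. The identifiability argument of \cref{lem:robust-regression-constant-error} only certifies $\|w-\wnull\|_2^2\le O(1)$. That constant comes from $\|z_1+z_2\|_2^2\le 2n$ and $\|\ynull-\Xnull^\top\wnull\|_2^2\approx n$, and it does not shrink with $\eta_\star$. No sharper SoS argument can fix this. Take any $w$ within a small constant of $\wnull$ that also satisfies the boundedness constraint, and split $\Xnull^\top(\wnull-w)$ via \cref{lem:sparse-spectral-norm} as in the proof of \cref{lem:program-satisfiability-fine-regression}. This shows the clean point mass certifies $w$ with $T\le\eta n$.

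So identifiability yields radius $\alpha(\eta_\star)$ only if $\alpha(\eta_\star)=\Omega(1)$, i.e.\ if $\eta_\star$ is a fixed constant as in \cref{lem:vol}. Your sentence ``$\alpha(\eta_\star)$ is at least a constant'' simply assumes this. Here $\eta_\star=R$, and for $R=o(1)$ the displayed formula gives $\alpha(R)=o(1)$. So the regime you hand to identifiability is exactly where its radius is too large.

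In that regime the tool you keep as a fallback is the one that works. Closeness, $\|L[w]\|_2\le 2R+\tau T$ and $\|\wnull\|_2\lesssim R$ give $\|\widetilde w-\wnull\|_2\lesssim R+\alpha+\tau n$. Reading the display as the definition of $\alpha$, we have $\alpha(R)\gtrsim R\sqrt{\log(1/R)}\ge R$ once $R\le 1/e$. For $R\ge 1$ the hypothesis $\mathcal S\le Rn$ is vacuous, since scores lie in $[0,n]$, and the formula for $\alpha(R)$ degenerates; only containment in the domain can be said. Identifiability is what the Stage~1 accuracy guarantee needs (score at most $2\eta n$ implies $O(1)$ distance to $\wnull$), not this lemma as stated.

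Two further points. First, in your first bullet you invoke the selector lemma (\cref{lem:sos-subset-sum}), which is stated for fixed coefficients $a_i$. Here $\langle\Xnull(\cdot,i),w-\wnull\rangle^2$ depends on the indeterminate $w$. What you actually need is a low-degree certificate of $\sum_i v_i\langle\Xnull(\cdot,i),u\rangle^2\le c\,n\|u\|_2^2$ jointly in $(v,u)$, which is a certified sparse-singular-value bound. \cref{lem:sparse-spectral-norm} is a statement about actual subsets and does not supply this. The paper's proof of \cref{lem:robust-regression-constant-error} takes the same step, so you inherit this problem rather than introduce it, but as written the step is unjustified.

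Second, on completeness, reading the center as $\wnull$ is defensible given the lemma's title. If the center is the Bayesian posterior mean, though, the shift $\|\wnull-\E[\wnull\mid\Xnull,\ynull]\|_2\asymp\sqrt{d/n}$ is not lower order relative to $\alpha(\eta)$. For example, when $d/n\asymp\eta$ it is $\sqrt\eta\gg\eta\sqrt{\log(1/\eta)}\asymp\alpha(\eta)$. Instead, place the witness's $w$ at the center itself; this is feasible by the same sparse-spectral-norm splitting of $\Xnull^\top(\wnull-w)$.
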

\begin{proof}[Proof of \cref{thm:private-regression-constant-error}]
It is easy to verify the quasi-convexity of the score function. 
To show that the score function is well-defined for $\tilde{w}\in \ball{d}{2R+n\tau+\alpha}$, we note that we can set $y=0$ and $X=0$. 
Then we can set $w$ be the projection of $\tilde{w}$ 
into the ball of radius $2R+\tau T$.
Now we apply the reduction in \cref{thm:efficient-reduction-meta-pure}.
Particularly, this leads to sample complexity
\[
n\ \gtrsim\ 
\max_{\eta\le\eta'\le 1}\frac{d\log\!\big((R+n\tau)/\alpha(\eta)\big)+\log(1/\beta_\eta)}{\varepsilon}.
\]
which concludes the proof.
\end{proof}

\subsubsection{Privatizing the refined estimator from Stage 2}\label{sec:private-least-square}
Finally we prove the following theorem for privately approximating the least square estimator. 
\begin{theorem}
There is a polynomial-time $\varepsilon$-DP algorithm that, given an $\eta$-corrupted sample $(X_{\mathrm{in}},y_{\mathrm{in}})$, outputs $\widehat w$ with
\[
\E\bigl[\|\widehat w-\wnull\|_2^2\bigr]\ \le\ O(\alpha^2)
\]
as long as
\(
n \ \ge\ \widetilde{\Omega}\left(
\frac{d\,\log(\sigma\sqrt d)}{\varepsilon}
\right).
\)
Moreover, via concentration of $\wpost$ around the true Bayesian posterior mean $\E[\wnull\mid \Xnull,\ynull]$, the same bound holds (up to lower-order terms) for $\E\|\widehat w-\E[\wnull\mid \Xnull,\ynull]\|_2^2$.
\end{theorem}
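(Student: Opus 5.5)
We privatize the two-program refinement stage of \cref{algo:private-least-square-regression} with the same template as \cref{thm:private-regression-constant-error} and \cref{sec:robust_to_private_LR}. That is, we define a score function via certifiable linear functionals and feed it into \cref{thm:efficient-reduction-meta-pure}. Privacy must hold for the whole pipeline, so we use basic composition over the three stages. Each stage gets budget $\varepsilon/3$: Stage 1 (already private by \cref{thm:private-regression-constant-error}), the first refinement program producing $\hat w_1$, and the final program producing $\hat w_2$. Stages 2 and 3 depend on the earlier outputs only through $w_\init$ and $\hat w_1$, which are privately released. Conditioned on them, each later mechanism is a fresh $\varepsilon/3$-DP mechanism on the same dataset, so composition applies regardless of the data reuse.

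For each refinement program we define an $(\alpha,\tau,T)$-certifiable estimate as in \cref{def:cert}, parametrized by the previously released vector. For the first program, the linear functional $L$ must $\tau$-approximately satisfy $\cA_{\rm corr}\cup\cAres''(\cdot;w_\init)\cup\cA_\LS$, with closeness measured on $L[w]$. For the second, it must satisfy $\cA_{\rm corr}\cup\cAres''(\cdot;\hat w_1)$, with closeness measured on $L[(n+\sigma^{-2})^{-1}X(y-X^\top\hat w_1)]$ shifted by the public affine term $(1+\frac{1}{n\sigma^2})^{-1}\hat w_1$. The score is the minimal $T$, taken over the ball $\ball{d}{2R+n\tau+\alpha}$ with $R=\Theta(\sigma\sqrt d)$. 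Bounded sensitivity, quasi-convexity (all constraints are convex in $L$) and efficient computability carry over verbatim from \cite{hopkins2023robustness}, exactly as argued in \cref{lem:quasi-convexity-regression-1}. Well-definedness with score at most $n$ follows by the construction of \cref{lem:regression-satisfiability-parameter}: set all points to be corrupted and choose $X,y$ so that the decomposition constraints hold trivially, with $b_1,b_2$ absorbing $X^\top(w-w_\init)$ when $X=0$.

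The volume step mirrors \cref{lem:vol}. For completeness, the ground-truth point-mass witness is feasible by \cref{lem:program-satisfiability-fine-regression}, which holds \emph{for every} $w_\init,\hat w_1$ within $O(1)$ of $\wnull$ via the sparse spectral norm bound \cref{lem:sparse-spectral-norm}. Hence every point within $\alpha(\eta)$ of the target has score at most $\eta n$. For soundness, the identifiability proofs \cref{lem:regression-sos-identifiability-certifying} and the SoS argument in the proof of \cref{lem:robust-regression-least-square} localize all score-$\le\eta_\star n$ points to an $O(\alpha(\eta_\star))$ ball, where $\alpha(\eta)^2\asymp\eta^2\log^2(1/\eta)+\eta\log(1/\eta)\sqrt{(d+\log(1/\beta))/n}$. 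Plugging the resulting ratios $d\log(\alpha(\eta')/\alpha(\eta))$ and $d\log((R+n\tau)/\alpha(\eta_\star))$ into \cref{thm:efficient-reduction-meta-pure}, and optimizing exactly as in the proof of \cref{thm:private-bayesian-linear-regression}, gives the rates $\frac{d+\log(1/\beta)}{\alpha^{4/3}\varepsilon^{2/3}}+\frac{d+\log(1/\beta)}{\alpha\varepsilon}+\frac{d\log(\sigma\sqrt d)}{\varepsilon}$. The $\frac{d\eta^2}{\alpha^4}$ term is inherited from the robust guarantee. The high-probability bound converts to the stated expectation bound by truncating to the bounded domain, where failure events of probability $\beta\le 1/\poly(n,R)$ contribute negligibly.

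The main obstacle is the feasibility of the second-stage programs when $\hat w_1$ is the \emph{private} output rather than the non-private $\tilde\E[w]$. The private $\hat w_1$ is only guaranteed to lie within $O(\alpha(\eta_\star))$ of the target with probability $1-\beta$, and we need its error small enough, $O(\sqrt{\eta\log 1/\eta}+\sqrt{(d+\log 1/\beta)/n})$, for the $v_1$ versus $v_2$ comparison in \cref{lem:robust-regression-least-square}. I would handle this by running the first refinement at an effective corruption level $\eta'=\eta+\frac{d+\log(1/\beta)}{n\varepsilon}$, so that its private error already meets that threshold. I would then check that the short-flat constraint on $X^\top(w-\hat w_1)$ remains feasible uniformly over such $\hat w_1$, again via \cref{lem:sparse-spectral-norm}, since $\hat w_1$ is data-dependent.
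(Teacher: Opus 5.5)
Your route is the paper's: a score built from $\tau$-approximately feasible pseudo-expectations, \cref{thm:efficient-reduction-meta-pure}, volume ratios from completeness and soundness, and the same optimization over $\eta$. Your $\varepsilon/3$ composition over the three stages and your handling of the private $\hat w_1$ at the effective level $\eta+\frac{d+\log(1/\beta)}{n\varepsilon}$ are sound. The last program only needs $\|\hat w_1-\wnull\|_2=O(1)$ for feasibility and $\sqrt{d/n}\,\|\hat w_1-\wnull\|_2$ to be lower order. These additions make explicit what the paper's sketch skips.

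The step that fails as written is well-definedness: the claim that every point of $\ball{d}{2R+n\tau+\alpha}$ with $R=\Theta(\sigma\sqrt d)$ has score at most $n$. In both refinement programs the constraints alone confine the certified quantity near the released estimate. This holds independently of the data, hence even at $T=n$.
\begin{itemize}
\item In the second program, $y-X^\top\hat w_1=z+b_1+b_2$. So for each fixed unit $u$ the constraints give $\frac1n\langle X^\top u,z+b_1+b_2\rangle\le\frac1n\|X^\top u\|_2\,\|z+b_1+b_2\|_2=O(\sqrt{\log(1/\eta)})$. The output therefore stays within $\alpha+O(\sqrt{\log(1/\eta)})$ of $(1+\frac{1}{n\sigma^2})^{-1}\hat w_1$.
\item In the first program, the lower half of the covariance constraint and the decomposition of $X^\top(w-w_{\init})$ give $\Omega(n)\,\|w-w_{\init}\|_2^2\le\|b_1+b_2\|_2^2=O(n\log(1/\eta))$.
\end{itemize}
So whenever $\sigma\sqrt d\gg\sqrt{\log(1/\eta)}$, most of the ball has no certificate for any $T\le n$. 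Your $X=0$ witness does not help. It violates $\|\frac1nXX^\top-\Id\|_{\rm op}\le\chi\sqrt{(d+\log(1/\beta))/n}$. Even setting that aside, it pins $w$ near $w_{\LS}=0$ in the first program and pins the output to the shift in the second. The paper's sketch makes the same domain choice.

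The repair is easy, and there are two options.
\begin{itemize}
\item Cap the score at $n$. Certifiability at $T=n$ does not depend on the data (substitute $\xi\equiv 1$), so capping preserves sensitivity and quasi-convexity, and it keeps your $d\log(\sigma\sqrt d)/\varepsilon$ term.
\item Better, take $\Theta$ to be a constant-radius ball around the public $w_{\init}$, respectively around $(1+\frac{1}{n\sigma^2})^{-1}\hat w_1$. Every point of it is certifiable at $T=n$ by a fixed well-conditioned design with small sparse spectral norm and noiseless labels, as in \cref{lem:program-satisfiability-fine-regression} (using the recentred $\cA_{\LS}$ below). These stages then contribute only $O(d\log(1/\alpha)/\varepsilon)$, and $d\log(\sigma\sqrt d)/\varepsilon$ is paid in Stage~1 alone.
\end{itemize}

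Two inherited points also need attention. First, as literally defined, $\cA_{\LS}$ with $w_{\LS}=\frac1nXy$ is not satisfied by the ground truth once $\|\wnull\|_2\gg1$. The term $(\Id-\frac1n\Xnull\Xnull^\top)\wnull$ has norm of order $\sqrt{d/n}\,\|\wnull\|_2$, and \cref{lem:program-satisfiability-fine-regression}, on which your completeness step rests, never checks $\cA_{\LS}$. You need the recentred $w_{\LS}=w_{\init}+\frac1nX(y-X^\top w_{\init})$, which is what the proof of \cref{lem:regression-sos-identifiability-certifying} implicitly works with. Second, read literally against $\wnull$, the bound $\E\|\widehat w-\wnull\|_2^2\le O(\alpha^2)$ cannot hold unless $\alpha^2\gtrsim d/n$, because of the MMSE term in \cref{fact:error-decomposition}. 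What your truncation argument delivers, like the paper's proof, is the bound against the posterior-mean target, i.e.\ the ``moreover'' clause.
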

In this section, we privatize our \cref{algo:posterior-mean-regression}.
We make the following definition for certifiable posterior mean for regression.
\begin{definition}[(\(\alpha,\tau,T\))-certifiable posterior mean for regression]
Fix $R>0$ and $\alpha,\tau>0$. For an input $(X_{\mathrm{in}},y_{\mathrm{in}})$, a vector $\widetilde w\in\R^d$
is \emph{$(\alpha,\tau,T)$-certifiable} if there exists a degree-$O(1)$ linear functional $L$ (a pseudoexpectation) over indeterminates $(X,y,\xi,w,\wpost,M)$ such that:
\begin{enumerate}[leftmargin=2.1em]
\item \textbf{Closeness/boundedness:}
$\norm{L[\wpost]-\widetilde w}_2\le \alpha$ and $\norm{L[\wpost]}_2\le 2R+\tau T$.
\item \textbf{Approximate feasibility (degree-6):} $L$ $\tau$-approximately satisfies the union of constraint families
\[
\mathcal A \;=\; \underbrace{\mathcal A_{\mathrm{corr}}}_{\text{corruption}}\ \cup\ 
\underbrace{\mathcal A_{\mathrm{res}}''}_{\text{design/residual stability}}\ \cup\ 
\underbrace{\mathcal A_{\mathrm{LS}}}_{w_{\LS}=\frac{1}{n}Xy}\!,
\]
concretely:
\begin{itemize}[leftmargin=1.5em]
\item $\mathcal A_{\mathrm{corr}}$: $\xi\odot\xi=\xi$, $\|\xi\|_1\le \eta n$ and $(1-\xi(i))X(i,\cdot)=(1-\xi(i))X_{\mathrm{in}}(i,\cdot)$, $(1-\xi(i))y(i)=(1-\xi(i))y_{\mathrm{in}}(i)$ for all $i$.
\item $\mathcal A_{\mathrm{res}}''$: degree-6 SoS surrogates for
$\big\|\frac{1}{n}XX^\top-I\big\|_{\op}\lesssim \sqrt{(d+\log(1/\beta))/n}$ and a decomposition $y-X^\top w=z_1+z_2$ with
$\|z_1\|_2^2\lesssim \eta n\log(1/\eta)+\log(1/\beta)$ and $\|z_2\|_\infty^2\lesssim \log(1/\eta)+\log(1/\beta)/\eta n$, also we have $X^\top(w^*- w_\init)=b_1+b_2$ such that $\|b_1\|_2^2\lesssim \eta n\log(1/\eta)+\log(1/\beta)$ and $\|b_2\|_\infty^2\lesssim \log(1/\eta)+\log(1/\beta)/\eta n$.
\item $\mathcal A_{\mathrm{LS}}$: $w_\LS=\frac{1}{n}Xy$.
\end{itemize}
\end{enumerate}
We choose $R=\Theta(\sigma\sqrt d)$.
\end{definition}

\restatetheorem{thm:private-approximation-least-square}

\begin{proof}
Apply the robustness-to-privacy reduction to the score $\mathcal{S}$ on the domain $\ball{d}{\,2R+n\tau+\alpha\,}$.
Let $V_\eta$ denote the volume of $\{\widetilde w:\ \mathcal{S}(\widetilde w)\le \eta n\}$ inside this domain.
By \cref{lem:vol}:
\begin{itemize}
    \item \emph{Local regime} $\eta\le \eta'\le \eta_\star$: low-score sets are sandwiched between $\ell_2$-balls around $\wpost$ with radii $\alpha(\eta)$ and $\alpha(\eta')$, hence
    \[
    \log\!\frac{V_{\eta'}}{V_\eta}\ \lesssim\ d\cdot \log\!\frac{\alpha(\eta')}{\alpha(\eta)}.
    \]
    \item \emph{Global regime} $\eta_\star\le \eta'\le 1$: the score-$\le \eta_\star n$ set lies in an $O(\alpha(\eta_\star))$-ball while the ambient radius is $O(R+n\tau)$, so
    \[
    \log\!\frac{V_{\eta'}}{V_\eta}\ \lesssim\ d\cdot \log\!\frac{R+n\tau}{\alpha(\eta_\star)}.
    \]
\end{itemize}

Finally it is easy to verify that the score function is quasi-convex and can be efficiently evaluated.

The reduction in \cref{thm:efficient-reduction-meta-pure} then yields a sufficient condition
\[
n\ \gtrsim\ 
\max_{\eta\le\eta'\le\eta_\star}\frac{d\log(\alpha(\eta')/\alpha(\eta))+\log(1/\beta_\eta)}{\varepsilon \eta'}
\;+\;
\max_{\eta_\star\le\eta'\le 1}\frac{d\log\!\big((R+n\tau)/\alpha(\eta_\star)\big)+\log(1/\beta_\eta)}{\varepsilon}.
\]
Optimizing over $\eta$ using $\alpha(\eta)^2\asymp \eta\cdot\frac{d+\log(1/\beta)}{n}$ in the small-$\eta$ regime yields the two local terms
$\tfrac{d+\log(1/\beta)}{\alpha^{4/3}\varepsilon^{2/3}}$ and $\tfrac{d+\log(1/\beta)}{\alpha\varepsilon}$.
For the global term, take $R=\Theta(\sigma\sqrt d)$ so that the contribution is $\tfrac{d\log(\sigma\sqrt d)}{\varepsilon}$ (polylog factors suppressed).
\end{proof}